\renewcommand{\theequation}{\thesection\arabic{equation}}
\newtheorem{theorem}{Theorem}
\newtheorem{proposition}{Proposition}
\theoremstyle{definition}
\def\bV{{\mathbf V}}
\def\bZ{{\mathbf{Z}}}
\def\bbeta{{\boldsymbol{\beta}}}
\def\boldeta{{\boldsymbol{\eta}}}
\def\boldeta{{\boldsymbol{\eta}}}
\def\blambda{{\boldsymbol{\lambda}}}
\def\bzero{\boldsymbol{0}}
\newtheorem{condition}{Condition}
\begin{document}
	
	%%%%%%%%%%%%%%%%%%%%%%%%%%%%%%%%%%%%%%%%%%%%%%%%%%%%%%%%%%%%%%%%%%%%%%%%%%%%%%%%%%%%%%%%%%%%%%%%%%%%%%%%%%%%%%%%%%%%%%%%%%%%
	%%%%%%%%%%%%%%%%%%%%%%%%%%%%%%%%%%%%%%%%%%%%%%%%%%%%%%%%%%%%%%%%%%%%%%%%%%%%%%%%%%%%%%%%%%%%%%%%%%%%%%%%%%%%%%%%%%%%%%%%%%%%
	
	\renewcommand{\baselinestretch}{2}
	
	\markright{ \hbox{\footnotesize\rm Statistica Sinica
			%{\footnotesize\bf 24} (201?), 000-000
		}\hfill\\[-13pt]
		\hbox{\footnotesize\rm
			%\href{http://dx.doi.org/10.5705/ss.20??.???}{doi:http://dx.doi.org/10.5705/ss.20??.???}
		}\hfill }
	
	\markboth{\hfill{\footnotesize\rm FIRSTNAME1 LASTNAME1 AND FIRSTNAME2 LASTNAME2} \hfill}
	{\hfill {\footnotesize\rm Feature selection in high dimension with interval-censored data} \hfill}
	
	\renewcommand{\thefootnote}{}
	$\ $\par
	
	%%%%%%%%%%%%%%%%%%%%%%%%%%%%%%%%%%%%%%%%%%%%%%%%%%%%%%%%%%%%%%%%%%%%%%%%%%%%%%%%%%%%%%%%%%%%%%%%%%%%%%%%%%%%%%%%%%%%%%%%%%%%
	
	\fontsize{12}{14pt plus.8pt minus .6pt}\selectfont \vspace{0.8pc}
	\centerline{\large\bf  Variable Selection in Ultra-high Dimensional Feature}
	\vspace{2pt}
	\centerline{\large\bf Space for the Cox Model with Interval-Censored Data}
    \vspace{2pt}
	\centerline{\large\bf}
	\vspace{.4cm}
	\centerline{Daewoo Pak$^{1\#}$\footnote{$^{\#}$Contributed equally to this work}, Jianrui Zhang$^{2\#}$, Di Wu$^2$, Haolei Weng$^2$ and Chenxi Li$^{2*}\footnote{$^{*}$Corresponding author}$}
	\vspace{.4cm}
	\centerline{\it $^1$Yonsei University and $^2$Michigan State University}
	\vspace{.55cm} \fontsize{9}{11.5pt plus.8pt minus.6pt}\selectfont
	
	%%%%%%%%%%%%%%%%%%%%%%%%%%%%%%%%%%%%%%%%%%%%%%%%%%%%%%%%%%%%%%%%%%%%%%%%%%%%%%%%%%%%%%%%%%%%%%%%%%%%%%%%%%%%%%%%%%%%%%%%%%%%
	
	\begin{quotation}
		\noindent {\it Abstract:} We develop a set of variable selection methods for the Cox model under interval censoring, in the ultra-high dimensional setting where the dimensionality can grow exponentially with the sample size. The methods
 select covariates via a penalized nonparametric maximum likelihood estimation with some popular penalty functions, including lasso, adaptive lasso, SCAD, and MCP. We prove that our penalized variable selection methods with folded concave penalties or adaptive lasso penalty enjoy the oracle property. Extensive numerical experiments show that the proposed methods have satisfactory empirical performance under various scenarios.  The utility of the methods is illustrated through an application to a genome-wide association study of age to early childhood caries.

		\vspace{9pt}
		\noindent {\it Key words and phrases:}
		Cox model, interval censoring, oracle property, ultra-high dimension, variable selection.
		\par
	\end{quotation}\par

	\def\thefigure{\arabic{figure}}
	\def\thetable{\arabic{table}}
	
	\renewcommand{\theequation}{\thesection.\arabic{equation}}

	\fontsize{12}{14pt plus.8pt minus .6pt}\selectfont
	
\section{Introduction}
In the field of biomedical research, a large number of genetic, environmental, and clinical factors are often gathered to investigate their links to a disease. For example, genome-wide association studies (GWAS) genotype hundreds of thousands of genetic variants across the genomes of individuals to discover genetic factors that are associated with a disease phenotype. Such high-dimensional data usually have a much larger number of variables ($p$) than the sample size ($n$). Regression analysis with numerous predictors may lead to statistical problems such as over-fitting and non-uniqueness of parameter estimates, and thus variable selection has become an important topic in high-dimensional regression.

Since the seminal paper of lasso \citet{tibshirani1996regression}, penalized variable selection has been studied extensively, because it can select relevant predictors and estimate their effects on the response simultaneously. Most of the penalized variable selection methods were proposed for linear or generalized linear models. In addition to the lasso, other popular methods include the adaptive lasso \citep{Zou2006}, the smoothly clipped absolute deviation (SCAD)  \citep{fanli2001}, and MC+, which employs a minimax concave penalty (MCP)  \citep{zhang2010}, among others. While the lasso is widely used, it leads to biased parameter estimators \citep{fanli2001}. In contrast, the other three methods enjoy the so-called oracle property in both low-dimensional ($p<n$) and high-dimensional ($p>n$) settings \citep{fanli2001,Zou2006,huang2008adaptive,fan2011nonconcave}, which means that these methods can identify the true model with probability tending to one and their estimators of nonzero coefficients have the same asymptotic distribution as the most efficient estimator one could obtain if the set of relevant predictors were known. There have been also many penalized variable selection methods for time-to-event outcomes subject to right censoring, e.g.,  \citet{tibshirani1997lasso, fan2002variable, zhang2007adaptive, bradic2011regularization} for the Cox model, \citet{huang2006regularized, huang2010variable, hu2013adjusted} for the accelerated failure time model, \citet{ma2007additive, LinLv2013} for the additive hazards model, and \citet{LiuZeng2013} for semiparametric transformation models.

% Maybe it is better to mention Bradic et al.(2011), Lin and Lv (2013) and Ni et al.(2016) in the context of right censored data.

Recently, there has been a surge in statistical research in variable selection with interval-censored time-to-event outcomes, which are frequently encountered in longitudinal studies of chronic conditions like diabetes, dental caries and Alzheimer's disease. In the framework of parametric survival regression,
 \citet{wu2015penalized} developed algorithms for lasso-, adaptive lasso-, and SCAD-based penalized likelihood estimation of a proportional hazards model with a piecewise constant baseline hazard function, and \citet{scolas2016variable} proposed a penalized likelihood estimation with double adaptive lasso penalties for interval-censored data with a cure fraction based on a class of parametric accelerated failure time mixture cure models. In the framework of semiparametric survival regression,  \citet{zhao2019simultaneous} proposed a broken adaptive ridge regression (BAR) based on the Cox model, \citet{li2020adaptive} developed an adaptive lasso procedure for the Cox model, which can also deal with left truncation, \citet{li2020penalized} generalized BAR and adaptive lasso to semiparametric transformation models and considered other penalties like SCAD and MCP as well, and \citet{wu2020variable} developed a penalized sieve maximum likelihood estimation for the partially linear Cox model with various penalties such as lasso, SCAD, MCP and BAR.  We refer interested readers to \citet{du2022variable} for a more detailed review of variable selection methods with interval-censored data. Despite so many methods, only the methods of \citet{wu2020variable} and \citet{TianSun} were devised for high-dimensional feature space, but they do not have any theoretical guarantee. To the best of our knowledge, the oracle property of penalized variable selection with interval censored data has only been established in fixed-dimensional settings \citep[e.g.,][]{li2020adaptive} and the settings where $p$ diverges at a slower rate than $n$ \citep{zhao2019simultaneous,WuZhSun}.

In this paper, we develop the first set of high-dimensional variable selection methods for interval-censored data that have the oracle property.
The methods are devised for the Cox proportional hazards model and perform variable selection via a penalized nonparametric maximum likelihood estimation with popular penalties like SCAD, MCP and adaptive lasso that were proved to possess the oracle property in (generalized) linear models. We also develop a lasso-based method, although not investigating its theoretical properties. There are three major differences between our methods and the only two existing high-dimensional variable selection methods for interval-censored data \citep{wu2020variable,TianSun}. First, our methods with adaptive lasso or a folded concave penalty \citet{fan2011nonconcave} (e.g., SCAD and MCP) are proved to have the oracle property in ultra-high dimensional settings, meaning $\log p=O(n^{\delta})$ for some $\delta\in(0,1)$. Second, our methods select the tuning parameter based on the generalized information criterion \citep{FaTa2013}, which was proved to lead to consistent model selection in penalized generalized linear regression under ultra-high dimensional scenarios \citep{FaTa2013}, whereas \citet{wu2020variable} and \citet{TianSun} employ $k$-fold cross-validation and the extended Bayesian information criterion \citep{chen2008extended} respectively, both of which have not been shown to consistently select covariates in any ultra-high dimensional regression. Third,
our methods model the baseline hazard function of the Cox model nonparametrically, while \citet{wu2020variable} and \citet{TianSun} use a sieve approach to approximate that function. The nonparametric modeling of the baseline hazard enables us to extend the EM algorithm in \citet{wang2016flexible} to a penalized EM algorithm for implementing the proposed penalized maximum likelihood estimation. Compared to the sieve-based algorithms in \citet{wu2020variable} and \citet{TianSun}, the penalized EM algorithm has an explicit updating formula for the baseline hazard estimator, thereby accelerating the variable selection procedure.

We would like to also point out that our proofs of the oracle property are not an extension of the proofs in \citet{zhao2019simultaneous} and \citet{WuZhSun} from the setting of $p$ diverging with $n$ to that of $p>>n$. Their proofs require
a good initial estimator and the consistency of the negative hessian of the log likelihood as an estimator of the efficient information matrix; see Conditions C4 and C8 in \citet{zhao2019simultaneous} and Conditions 4 and 8 in \citet{WuZhSun}. These are still open problems in the existing literature. In contrast, our proofs for the folded concave penalties do not need those two conditions, and our proof for adaptive lasso does not require the negative hessian of the log likelihood to be consistent for the efficient information matrix.

% In some situations, fitting models with penalty functions for SCAD and MCP may be computationally demanding due to their nonconvex nature. Therefore, in the proposed algorithm, we also explore utilizing a local linear approximation (LLA) to these penalties, as described in \cite{zou2008one}.
% and simulation studies and demonstrate that the methods are readily applicable to left truncated and interval censored data.

The remainder of the paper is organized as follows. In Section \ref{sec:method}, we describe the algorithms of the proposed penalized variable selection methods with the lasso, adaptive lasso, SCAD, and MCP penalties. In Section \ref{sec:asymp}, we provide the theoretical properties, including the oracle property, of the methods with folded concave penalties and the adaptive lasso. The proofs of these properties are relegated to the supplementary material. Extensive numerical experiments are conducted to evaluate the finite-sample performance of the proposed methods in Section \ref{sec:simul}. In Section \ref{sec:appl}, we illustrate the utility of the proposed methods with an application to a genome‐wide association study of age to early childhood caries. We give some concluding remarks in Section \ref{sec:disc}.

For convenience, we define some generic notations used throughout the paper here. We let $I(\cdot)$ denote an indicator function. For a number $a$, we define $(a)_+=\max(a,0)$. For two numbers $a$ and $b$, the notation $a\vee b$ means $\max(a,b)$.
We use $O_p(\cdot)$ and $o_p(\cdot)$ to denote probabilistic order relations, and $O(\cdot)$ and $o(\cdot)$ to denote deterministic order relations. For two sequences $\{a_n\}$ and $\{b_n\}$, we use $a_n\gg b_n$ to denote $b_n=o(a_n)$. For a symmetric matrix $H$, we use $\lambda_{\min}(H)$ to denote its minimum eigenvalue. For a set $A$, we use $|A|$ to denote the cardinality of $A$. Let $\|\cdot\|_q$ be the $L_q$ norm of a vector.

% The hyperlink of asymptotic property section need to be modified?

\section{Methodology}\label{sec:method}
\subsection{Set-up}
We adopt the commonly-used Cox proportional hazards model to investigate the association between the time to a failure event, denoted by $T$, and a $p$-dimensional vector of covariates, denoted by $\bZ$. Specifically, the hazard function of $T$ given $\bZ$ is assumed to be
\begin{align} \label{coxmodel}
\lambda(t|\bZ) = \lambda(t) \exp(\bbeta^\top \bZ),
\end{align}
where $\lambda(t)$ is an unspecified baseline hazard function and $\bbeta$ is a vector of regression parameters. In health science research, $\bZ$ often represents high-dimensional covariates, including, e.g., Single Nucleotide Polymorphisms (SNPs) and gene expressions. The number of these covariates typically surpass the number of subjects $n$ in a random sample selected from the target population to fit the Cox model, i.e., $p > n$. The objective of this paper is to develop statistical methods to identify the covariates truly associated with $T$ among $\bZ$ when $T$ is subject to interval censoring and $p$ can be of exponential order of $n$.

% Maybe it should be p>n instead of n>p?

Denote the sequence of inspection times for subject $i$'s failure event ($i = 1, \ldots, n$) by $\bV_i = (V_{i1}, \ldots, V_{iK_i})^\top$. We assume that the number of inspection times, $K_i$, and the times themselves, $\bV_i$, are random variables that are independent of $T_i$ given $\bZ_i$, which is known as mixed case interval censoring \citep{schick2000consistency}. The censoring interval for $T_i$, denoted by $(L_i, R_i]$, can be formed by selecting two inspection times bracketing $T_i$ with no other inspection times falling between them. We set $L_i=0$ when $T_i$ is smaller than $V_{i1}$ and $R_i=\infty$ when $T_i$ is larger than $V_{iK_i}$, and thus $0 \le L_i < T_i \le R_i \le \infty$. Then the observed data can be represented by $\mathcal{O} \equiv \lbrace \mathcal{O}_i: i = 1, \ldots, n\rbrace$, where $\mathcal{O}_i = (L_i, R_i, \bZ_i^\top)^\top$.

%Then, censored interval for $i$-th subject, denoted by $(L_i, R_i]$, can be formed by selecting the last inspection time before $T_i$ and the first inspection time after $T_i$, in conjunction with $L_i = 0$ if $T_i$ is smaller than $V_{i1}$ and $R_i = \infty$ if $T_i$ is larger than $V_{iK_i}$, i.e. $0 \le L_i < T_i \le R_i \le \infty$.

\subsection{Likelihood}\label{subsec1}
Under the Cox model (\ref{coxmodel}), the logarithm of the likelihood function for the observed interval-censored data is
\begin{align}\label{loglik0}
l_{n}(\bbeta,\Lambda)=\sum_{i=1}^{n}\log\left[\exp\{-\Lambda(L_i|\bZ_i; \bbeta)\}-\exp\{-\Lambda(R_i|\bZ_i; \bbeta)\}\right],
\end{align}
where $\Lambda(t|\bZ_i; \bbeta) = \Lambda(t)\exp(\bbeta^\top\bZ_i)$, $\Lambda(t)$ is the cumulative baseline hazard function, i.e., $\Lambda(t)=\int_0^{t}\lambda(s)ds$, and we set a convention that $\exp\{-\Lambda(\infty|\bZ; \bbeta)\} = 0$. The nonparametric maximum likelihood estimator (NPMLE) for $\Lambda(t)$, denoted by $\tilde{\Lambda}(t)$, increase only within a particular support set, which is referred to as the maximal intersections \citep{alioum1996proportional}. Specifically, the maximal intersections are a set of disjoint intervals of the form $(l,u]$, where $l \in \lbrace L_i: i = 1, \ldots, n\rbrace$, $u \in \lbrace R_i: i = 1, \ldots, n\rbrace$ and there is no $L_i$ or $R_i$ in $(l,u)$ for any $i$. In addition, only the jump sizes of $\Lambda(t)$ over $(l,u]$'s $(u<\infty)$ affect the log likelihood \eqref{loglik0}, and given those jump sizes, the log likelihood is indifferent to how $\Lambda(t)$ increases within a maximal intersection. Denote the maximal intersections with a finite right endpoint by $ \cup_{k = 1}^m (l_k, u_k]$. Then the NPMLE $\tilde{\Lambda}(t)$ can be represented solely by its jump sizes $\lambda_k = \Lambda(u_k) - \Lambda(l_k)$ over $(l_k, u_k]$ $(k=1,\ldots,m)$. Hence, just for the ease of computing $\tilde{\Lambda}(t)$, we can assume that the mass of $\Lambda(t)$ on  $(l_k, u_k]$ concentrates at $u_k$ with a magnitude of $\lambda_k$ $(k=1,\ldots,m)$ and $\Lambda(t)$ remains flat on $\overline{\cup_{j = 1}^m (l_k, u_k]}$, the complement of $\cup_{j = 1}^m (l_k, u_k]$ on $[0,\infty)$. Let $\blambda=(\lambda_1,\ldots,\lambda_m)^\top$. Then we can easily show that $l_{n}(\bbeta,\Lambda)$ can be expressed as $l_n(\bbeta,\blambda)$, where \begin{align}\label{loglik}
l_n(\bbeta,\blambda)=\sum_{i=1}^{n}\log\left(\exp\{-\sum_{u_k\le L_i}\lambda_k\exp(\bbeta^\top\bZ_i)\}\left[1-\exp\{-\sum_{L_i<u_k\le R_i}\lambda_k\exp(\bbeta^\top\bZ_i)\}\right]^{I(R_i<\infty)}\right),
\end{align}
by replacing $\Lambda(t)$ with $\sum_{u_k\le t}\lambda_k$ in (\ref{loglik0}).

% equation 2.2 looks weird to me...Should it be Z;\beta instead of Z:\beta ? And on the left-hand side, maybe it should be l_n(\beta,\Lambda)?

% In equation 2.4 below, again I guess it should be l_n(\beta,\Lambda)? Also in the summation below we should start with i=1 instead of i=i?

\subsection{Variable selection}
Without loss of generality, we assume that every covariate $Z_j$ has been standardized such that $\sum_{i = 1}^n Z_{ij} = 0$ and $\sum_{i = 1}^n Z^2_{ij}/n = 1$ $(j=1,\ldots,p)$. We perform variable selection by maximizing
\begin{align}\label{penlik}
\frac{1}{n}l_n(\bbeta,\Lambda)- \sum_{j = 1}^p p_{\theta, \alpha}(|\beta_j|),
\end{align}
with respect to $(\bbeta, \Lambda)$, where $p_{\theta, \alpha}(\cdot)$ is a penalty function penalizing the magnitude of $\beta_j$, $\theta$ is a thresholding parameter that controls the trade-off between the likelihood and $|\beta_j|$, and $\alpha$ is a tuning parameter that shapes the penalty. In this article, we consider four penalty functions, lasso, adaptive lasso, SCAD and MCP. Since the penalty term in equation (\ref{penlik}) does not depend on $\Lambda$, the penalized nonparametric maximum likelihood estimator (PNPMLE) $\hat{\Lambda}(t)$ has the same support set as the NPMLE $\tilde{\Lambda}(t)$. Therefore, we can write \eqref{penlik} as
\begin{align}\label{penlik_with_blambda}
\frac{1}{n}l_n(\bbeta,\blambda)- \sum_{j = 1}^p p_{\theta, \alpha}(|\beta_j|).
\end{align}
The corresponding penalized maximum likelihood estimators for $\bbeta$ and $\blambda$ are denoted by $\hat{\bbeta}$ and $\hat{\blambda}$.

 Directly maximizing (\ref{penlik_with_blambda}) poses challenges because the estimator of $\blambda$ given $\bbeta$ does not have an analytic form and thus an explicit profile likelihood in terms of $\bbeta$ is not available. Following a computation technique in \cite{wang2016flexible}, we consider a set of independent latent variables, $W_{ik}$ $(i = 1, \ldots, n; k = 1, \ldots, m)$, each of which follows a Poisson distribution with a mean of $\lambda_k \exp(\bbeta^\top\bZ_i)$, and define $A_i=\sum_{u_k\le L_i}W_{ik}$, $B_i=I(R_i<\infty)\sum_{L_i<u_k\le R_i}W_{ik}$
  and
 $\tilde{\mathcal{O}}\equiv\{(\mathcal{O}_i,A_i=0); 1\le i\le n, R_i=\infty\}\cup\{(\mathcal{O}_i,A_i=0,B_i>0); 1\le i\le n, R_i<\infty\}$, where $A_i=0$ means that $A_i$ is observed to be zero and $B_i>0$ means that $B_i$ is observed to be positive. The log likelihood of $\tilde{\mathcal{O}}$ takes the form
\begin{align*}
&\sum_{i=1}^{n}\log\left[\left\{\prod_{u_k\le L_i}\mbox{pr}(W_{ik}=0)\right\}\left\{1-\mbox{pr}\left(\sum_{L_i<u_k\le R_i}W_{ik}=0\right)\right\}^{I(R_i<\infty)}\right]&\\
&=\sum_{i=1}^{n}\log\left(\exp\{-\sum_{u_k\le L_i}\lambda_k\exp(\bbeta^\top\bZ_i)\}\left[1-\exp\{-\sum_{L_i<u_k\le R_i}\lambda_k\exp(\bbeta^\top\bZ_i)\}\right]^{I(R_i<\infty)}\right),
\end{align*}
which is the same as $l_n(\bbeta,\blambda)$.
Thus, the maximization of (\ref{penlik_with_blambda}) can be achieved using a penalized EM algorithm \citep{Green1990}, wherein we treat $\tilde{\mathcal{O}}$ as the observed data and  $\lbrace (\mathcal{O}_i,W_{ik}); i=1,\ldots, n, u_k\le R_i^*\rbrace$ with $R_i^*=L_iI(R_i=\infty)+R_iI(R_i<\infty)$ as the complete data. The log likelihood of the complete data is
\begin{align}\label{loglik_comp}
\ell^c(\bbeta,\blambda)=\sum_{i=1}^{n}\sum_{k=1}^{m}I(u_k\le R_i^*)\left[W_{ik}\log\{\lambda_k\exp(\bbeta^\top\bZ_i)\}-\lambda_k\exp(\bbeta^\top\bZ_i)-\log W_{ik}!\right].
\end{align}

In the E-step of the $(s+1)$-th EM iteration $(s\ge 0)$, we compute the expectation $\hat{E}(W_{ik})$ of $W_{ik}$ given the observed data and the $s$-th updates of $\hat{\bbeta}$ and $\hat{\blambda}$, $({\hat{\bbeta}}^{(s)},{\hat{\blambda}}^{(s)})$, as follows: for $u_k\le L_i$, $\hat{E}(W_{ik})=0$ due to $A_i = 0$, and for $L_i<u_k\le R_i$ with $R_i<\infty$,
\begin{align*}
\hat{E}(W_{ik}) &=E(W_{ik}|\sum_{L_i<u_r\le R_i}W_{ir}>0)\\
&=\frac{\hat{\lambda}_k^{(s)}\exp(\hat{\bbeta}^{(s)\top}\bZ_i)}{1-\exp\{-\sum_{L_i<u_r\le R_i}\hat{\lambda}_r^{(s)}\exp(\hat{\bbeta}^{(s)\top}\bZ_i)\}}.
\end{align*}
In the M-step, we first get the updates for $\lambda_k$'s by maximizing the expected complete-data log likelihood with respect to $\blambda$ given $\bbeta$. It is easy to show that such updates are
\begin{align}\label{mlambda}
\hat{\lambda}_k^{(s+1)}(\bbeta) = \frac{\sum_{i=1}^{n}I(u_k\le R_i^*)\hat{E}(W_{ik})}{\sum_{i=1}^{n}I(u_k\le R_i^*)\exp(\bbeta^\top\bZ_i)}\quad (k=1,\ldots,m).
\end{align}
Plugging in \eqref{mlambda} into the conditional expectation of (\ref{loglik_comp}), we get a profile expected complete-data log likelihood,
\begin{align*}
\hat{E}\{\ell^c(\bbeta,\hat{\blambda}^{(s+1)}(\bbeta))\}=\sum_{i=1}^{n}\sum_{k=1}^{m}I(u_k\le R_i^*)\hat{E}(W_{ik})\left[-\log\left\{\sum_{l=1}^{n}I(u_k\le R_l^*)\exp(\bbeta^\top\bZ_l)\right\}+\bbeta^\top\bZ_i\right].
\end{align*}
The update of $\hat{\bbeta}$ can be obtained by maximizing
\begin{align}\label{Q}
\frac{1}{n}\hat{E}\{\ell^c(\bbeta,\hat{\blambda}^{(s+1)}(\bbeta))\}-\sum_{j=1}^p p_{\theta, \alpha}(|\beta_j|)
\end{align}
with respect to $\bbeta$. However, maximizing \eqref{Q} is not an easy task due to the non-differentiability of the penalty at zero, the high dimensionality of $\bbeta$, and the nonconcavity of the whole objective function if a nonconvex penalty like SCAD or MCP is used. The remaining of this subsection is devoted to our proposed approaches to maximizing $\eqref{Q}$ with the lasso, adaptive lasso, SCAD and MCP penalties respectively.

With a little bit of notation abuse, let $\bZ$ denote the $n \times p$ matrix of $(\bZ_1, \ldots, \bZ_n)^\top$ and $\boldeta = \bZ\bbeta$. We can write $\hat{E}\{\ell^c(\bbeta,\hat{\blambda}^{(s+1)}(\bbeta))\}$ as $Q(\boldeta)$ since the former depends on $\bbeta$ through $\boldeta$. In the spirit of \cite{simon2011regularization}, we approximate $-Q(\boldeta)$ with a second-order Taylor expansion around $\hat{\boldeta}^{(s)}=\bZ\hat{\bbeta}^{(s)}$ and reformulate the problem of maximizing \eqref{Q} as a penalized weighted least squares problem. Let $Q'(\boldeta)$ and $Q''(\boldeta)$ be the first and second derivatives of $Q(\boldeta)$ with respect to $\boldeta$, respectively. Then,
\begin{align*}
-Q(\boldeta) \approx (e(\hat{\boldeta}^{(s)}) - \bZ\bbeta)^\top W(\hat{\boldeta}^{(s)}) (e(\hat{\boldeta}^{(s)}) - \bZ\bbeta)/2+ C(\hat{\boldeta}^{(s)}),
\end{align*}
where $e(\hat{\boldeta}^{(s)}) = \hat{\boldeta}^{(s)} - \left[Q''(\hat{\boldeta}^{(s)})\right]^{-}Q'(\hat{\boldeta}^{(s)})$, $\left[Q''(\hat{\boldeta}^{(s)})\right]^{-}$ is a generalized inverse of $\left[Q''(\hat{\boldeta}^{(s)})\right]$, $W(\hat{\boldeta}^{(s)}) = -Q''(\hat{\boldeta}^{(s)})$, and $C(\hat{\boldeta}^{(s)})$ is the term that does not depend on $\bbeta$. Thus, the maximization of (\ref{Q}) with respect to $\bbeta$ becomes a penalized weighted least squares problem, which is to minimize
\begin{align*}
M(\bbeta) =\frac{1}{2n}(e(\hat{\boldeta}^{(s)}) - \bZ\bbeta)^\top W(\hat{\boldeta}^{(s)}) (e(\hat{\boldeta}^{(s)}) - \bZ\bbeta) +  \sum_{j=1}^p p_{\theta, \alpha}(|\beta_j|).
\end{align*}
We adopt a coordinate descent algorithm similar to that in \citet{breheny2011coordinate} to minimize $M(\bbeta)$. The algorithm minimizes $M(\bbeta)$ iteratively
with respect to each component of $\bbeta$ while fixing the other components at the current updates, and it cycles through $\bbeta$ for only one time. Therefore, the minimization boils down to a series of univariate penalized least squares problems, each of which has analytical solutions for the lasso, adaptive lasso, SCAD and MCP penalties, as described below.

In each univariate penalized least squares problem, we write the objective function as $M(\beta_j;\hat{\bbeta}^{(s+1,s)}_{-j})$, where $ \hat{\bbeta}^{(s+1,s)}_{-j}$ denotes the subvector of the current update of $\hat{\bbeta}$ excluding the $j$-th component $(j=1,\ldots,p)$. By simple algebra, we have
\begin{equation}\label{Mtitle}
    M(\beta_j;\hat{\bbeta}^{(s+1,s)}_{-j}) = \frac{1}{2}v_j(\frac{y_j}{v_j}-\beta_j)^2+p_{\theta, \alpha}(|\beta_j|)+D,
\end{equation}
where $y_j = \frac{1}{n} \tilde{\bZ}^\top_{j}W(\hat{\boldeta}^{(s)}) (e(\hat{\boldeta}^{(s)}) - \bZ_{-j}\hat{\bbeta}^{(s+1,s)}_{-j})$, $\tilde{\bZ}_{j}$ is the $j$-th column of $\bZ$, $\bZ_{-j}$ is the submatrix of $\bZ$ excluding $\tilde{\bZ}_{j}$, $v_j = \frac{1}{n}\tilde{\bZ}^\top_{j}W(\hat{\boldeta}^{(s)}) \tilde{\bZ}_j$, and $D$ is a constant that does not depend on $\beta_j$.
Define a soft-thresholding operator $S(x, \theta)$ as in \citep{donoho1994ideal}, that is,
\begin{align*}
S(x, \theta) =   \left\{ \begin{array}{ll}
    x-\theta,& \text{if } x > \theta,\\
    0,              & \text{if } |x| \le \theta,\\
    x+\theta ,& \text{if } x < -\theta.
\end{array}\right.
\end{align*}
Denote the solution to the univariate penalized least squares problem of minimizing \eqref{Mtitle} by $\hat{\beta}^{(s+1)}_j$, which is a function of $(y_j, v_j, \theta, \alpha)$, denoted by $f(y_j, v_j, \theta, \alpha)$. In the case of the lasso and adaptive lasso penalties, defined respectively as $p_{\theta, \alpha}(|\beta_j|) = \theta |\beta_j|$ and $p_{\theta, \alpha}(|\beta_j|) = \theta |\beta_j|/|\tilde{\beta}_j|$, where $\tilde{\beta}_j$  is an initial estimator of $\beta_j$ (see detailed discussions in Section \ref{sec:asymp}), the updates of $\hat{\beta}_j$ for the lasso and the adaptive lasso are respectively $\hat{\beta}^{(s+1)}_j =S\left(y_j, \theta\right)/v_j$ and $\hat{\beta}^{(s+1)}_j = S\left(y_j, \theta/|\tilde{\beta}_j|\right)/v_j$. In our implementation of adaptive lasso, we set $\tilde{\beta}_j$ to be the lasso estimator for $\beta_j$. The SCAD penalty and its first derivative are, respectively,
\begin{align*}
p_{\theta, \alpha}(|\beta_j|) =   \left\{ \begin{array}{ll}
    \theta |\beta_j|,& \text{if } |\beta_j| \le \theta,\\
    -\frac{|\beta_j|^2 - 2\alpha\theta |\beta_j| + \theta^2}{2(\alpha-1)},  & \text{if } \theta < |\beta_j| \le \alpha\theta,\\
    \frac{(\alpha+1)\theta^2}{2} ,& \text{if } |\beta_j| > \alpha\theta,
\end{array}\right.
\end{align*}
and
\begin{align*}
p^\prime_{\theta, \alpha}(|\beta_j|)  =\theta I(|\beta_j|\le \theta) + \frac{(\alpha\theta - |\beta_j|)_+}{\alpha-1} I(|\beta_j| >\theta),
\end{align*}
for some $\alpha > 2$. The solution to the univariate penalized least squares problem with the SCAD penalty is thoroughly elucidated in Appendix B of \citet{fan2011nonconcave}.
The MCP penalty function is
\begin{align*}
p_{\theta, \alpha}(|\beta_j|) =   \left\{ \begin{array}{ll}
    \theta |\beta_j| - \frac{\beta^2_j}{2\alpha},& \text{if } |\beta_j| \le \alpha\theta,\\
    \frac{1}{2}\alpha \theta^2,  & \text{if } |\beta_j| > \alpha\theta,
\end{array}\right.
\end{align*}
for some $\alpha > 1$, and its first derivative is $p^\prime_{\theta, \alpha}(|\beta_j|) = (\theta - |\beta_j|/\alpha)I(|\beta_j| \le \alpha\theta)$. Following similar arguments to Appendix B of \citet{fan2011nonconcave}, we derived the solution to the univariate penalized least squares problem with the MCP penalty as follows:
\begin{align*}
\hat{\beta}^{(s+1)}_j=   \left\{
\begin{array}{ll}
   \frac{S(y_j, \theta)}{v_j - \alpha^{-1}}, & \text{if } |y_j| \le v_j\alpha\theta,\\
    \frac{y_j}{v_j}I(|y_j| > \sqrt{v_j \alpha}\theta),  & \text{if } |y_j| > v_j\alpha\theta.
\end{array}\right.
\end{align*}
Because our numerical experiments and real data application all use minor allele counts of SNPs as covariates, we follow \citet{breheny2011coordinate} to choose $\alpha=2.5$ for SCAD and $\alpha=1.5$ for MCP.

% \begin{align*}
% \hat{\beta}_j =   \left\{ \begin{array}{ll}
%     \frac{S(y_i, \theta)}{v_j - 1/a},& \text{if } |y_j| \le v_j\alpha\theta,\\
%     \frac{y_j}{v_j}\I(|y_j| > \sqrt{a v_j}\theta),  & \text{if } |y_j| < v_j\alpha\theta
% \end{array}\right.
% \end{align*}

% The solution with SCAD is then, for $\alpha > 1 + 1/v_j$,
% \begin{align*}
% \hat{\beta}_j =   \left\{ \begin{array}{ll}
%     \frac{S(y_j, \theta)}{v_j},& \text{if } |y_j| \le \theta (v_j + 1),\\
%     \frac{S(y_j, \alpha\theta/(\alpha - 1))}{v_j - 1/(\alpha-1)},  & \text{if } \theta(v_j + 1)<|y_j| \le v_j \alpha\theta,\\
%     \frac{y_j}{v_j,}, & \text{if } |y_j| > v_j \alpha\theta.
% \end{array}\right.
% \end{align*}
% The corresponding solution for $\alpha > 1/v_j$ is
% \begin{align*}
% \hat{\beta}_j =   \left\{ \begin{array}{ll}
%     \frac{S(y_j, \theta)}{v_j - 1/\alpha},& \text{if } |y_j| \le v_j\alpha\theta,\\
%     \frac{y_j}{v_j},  & \text{if } |y_j| > v_j\alpha\theta.
% \end{array}\right.
% \end{align*}

\subsection{Thresholding parameter tuning}
We have described our penalized EM algorithm for maximizing (\ref{penlik}) given a fixed value of $\theta$. Similar to other variable selection methods using regularization, the choice of the thresholding parameter $\theta$ is crucial for the performance of our estimators for $(\bbeta, \Lambda)$. We seek the optimal $\theta$ along a path of its values as follows. We initiate the path by starting with a sufficiently large value of $\theta$ that results in $\hat{\beta}_j = 0$ $(j=1,\ldots,p)$. To do so, we first choose $\hat{\bbeta}^{(0)}$ to be zero and $\hat{\blambda}^{(0)}$ to be a vector of $1/n$'s and calculate the corresponding $y_j$ and $v_j$ $(j=1,\ldots,p)$. Then, our first value of $\theta$ can be set as $\theta_{\max} = \mbox{max}_j|y_j|$ for the lasso penalty, $\theta_{\max} = \mbox{max}_j\left\lbrace |y_j||\tilde{\beta}_j|\right\rbrace$ for the adaptive lasso penalty, $\theta_{\max} = \mbox{max}_j\left\lbrace|y_j|\vee |y_j|/v_j \right\rbrace$ for SCAD, and $\theta_{\max} = \mbox{max}_j\left\lbrace|y_j|\vee |y_j|/(v_j \alpha)\right\rbrace$ for MCP. We do not seek the optimal $\theta$ all the way to zero, because the parameter estimates behave poorly in terms of any model selection criterion when $\theta$ is near zero. Thus, following the approach used in \citet{simon2011regularization}, we set the last value of $\theta$ in the path to be $\theta_{\text{min}} = \epsilon\theta_{\text{max}}$, where $\epsilon$ is chosen to be 0.05 for the lasso, SCAD and MCP penalties and 0.0001 for the adaptive lasso based on our simulation experiences. Also following \citet{simon2011regularization}, we generate a grid of 101 values of $\theta$ over the interval $[\theta_{\text{min}},\theta_{\text{max}}]$ by setting $\theta_{r} = \theta_{\text{max}}(\theta_{\text{min}}/\theta_{\text{max}})^{(r-1)/100}$ for $r = 1, \ldots, 101$. We then compute the solution $(\hat{\bbeta}_{\theta_r},\hat{\blambda}_{\theta_r})$ that maximizes \eqref{penlik_with_blambda} with $\theta=\theta_r$ $(r=1,\ldots,101)$  using the above penalized EM algorithm. To speed up the convergence, we employ warm starts along the solution path, that is, set $(\hat{\bbeta}_{\theta_r},\hat{\blambda}_{\theta_r})$ as the initial value for computing $(\hat{\bbeta}_{\theta_{r+1}},\hat{\blambda}_{\theta_{r+1}})$ with the EM algorithm. Finally, we select the $\theta_r$ that minimizes the following generalized information criterion (GIC), defined as
\begin{align*}
\mbox{GIC}(\theta) = -2 l_n(\hat{\bbeta}_{\theta},\hat{\blambda}_{\theta}) + \log(\log(n)) \log(p) \|\hat{\bbeta}_{\theta}\|_0
\end{align*}
This GIC has been shown to lead to the true model with asymptotic probability one in the penalized generalized linear regression \citep{FaTa2013}. So we conjecture that it is adequate for consistently identifying the true Cox model with interval-censored data. Now we summarize the entire penalized variable selection algorithm as follows.

\begin{algorithm}[H]
        \caption*{\textsc{Penalized variable selection algorithm.}}
\begin{algorithmic}[1]
   \State Set $r = 1$, and let $\hat{\bbeta}_{\theta_0} = \mathbf{0}$ and  $\hat{\blambda}_{\theta_0}$ be a vector of $1/n$'s.
   \State Set $s = 0$,  $\hat{\bbeta}^{(s)}_{\theta_r} = \hat{\bbeta}_{\theta_{r-1}}$ and $\hat{\blambda}_{\theta_r}^{(s)} = \hat{\blambda}_{\theta_{r-1}}$
   \State Compute $\hat{E}(W_{ik})$ with $(\hat{\blambda}_{\theta_r}^{(s)}, \hat{\bbeta}^{(s)}_{\theta_r})$. \Comment{E-step}
    \State Maximize the expected complete-data log likelihood. \Comment{M-step}
    \begin{enumerate}
        \item[(a)] Calculate ${\hat{\boldeta}^{(s)}}= \bZ\hat{\bbeta}^{(s)}_{\theta_r}$
        \item[(b)] Successively for $j \in (1, \ldots, p)$, calculate the $j$-th component of $\hat{\bbeta}^{(s+1)}_{\theta_r}$ as $\hat{\beta}^{(s+1)}_j = f(y_j, v_j, \theta_{r}, \alpha)$, the solution to the univariate penalized least squares problem \eqref{Mtitle} with $\hat{\bbeta}_{-j}^{(s+1,s)} = (\hat{\beta}^{(s+1)}_{\theta_{r},1}, \ldots, \hat{\beta}^{(s+1)}_{\theta_{r},j-1}, \hat{\beta}^{(s)}_{\theta_{r},j+1}, \ldots, \hat{\beta}^{(s)}_{\theta_{r},p})^\top$.
        \item[(c)] Update the $k$-th component of $\hat{\blambda}_{\theta_r}^{(s+1)}$ as $\hat{\lambda}_k^{(s+1)}(\hat{\bbeta}_{\theta_r}^{(s+1)})$ in (\ref{mlambda}) $(k=1,\ldots,m)$. Set $s \leftarrow s + 1$.
    \end{enumerate}
    \State Repeat Steps 3-4 until convergence is attained, defined by the relative distance between two successive updates of $(\hat{\bbeta}_{\theta_r}, \hat{\blambda}_{\theta_r})$ being less than 0.01, or $s = 101$. Set $r \leftarrow r + 1.$
    \State Repeat Steps 2-5 until $r = 101$. Select the estimates from $(\hat{\blambda}_{\theta_1}, \hat{\bbeta}_{\theta_1}), \ldots, (\hat{\blambda}_{\theta_{101}}, \hat{\bbeta}_{\theta_{101}})$ that minimizes the GIC.
\end{algorithmic}
\end{algorithm}

\section{Theoretical properties} \label{sec:asymp}

Let $l_{n}(\bbeta,\Lambda)$ be the logarithm of the likelihood function for the observed data evaluated at $(\bbeta,\Lambda)$ and $l_{pn}(\bbeta)=\sup_{\Lambda} l_n(\bbeta, \Lambda)/n$ be the logarithm of the profile likelihood function evaluated at $\bbeta$. Define
\[
\dot{l}_{pn}(\mathbf{\boldsymbol{\beta}})=\frac{\partial l_{pn}(\mathbf{\boldsymbol{\beta}})}{\partial\mathbf{\boldsymbol{\beta}}}\quad\mbox{and}\quad\ddot{l}_{pn}(\mathbf{\boldsymbol{\beta}})=\frac{\partial^{2}l_{pn}(\mathbf{\boldsymbol{\beta}})}{\partial\mathbf{\boldsymbol{\beta}}\partial\mathbf{\boldsymbol{\beta}}^{\top}}.
\]
We aim to derive the theoretical properties of the variable selection method that works by maximizing
\begin{equation}\label{penloglik}
   \frac{1}{n}l_{n}(\bbeta,\Lambda)-\sum_{j=1}^{p}p_{\theta_{n},\alpha}(|\beta_{j}|)
\end{equation}
with respect to $(\beta,\Lambda)$, where the tuning parameter $\theta_n$  changes with $n$. Maximizing \eqref{penloglik} is equivalent to maximizing
\begin{equation}\label{penpl}
   \mathcal{C}(\bbeta)=l_{pn}(\bbeta)-\sum_{j=1}^{p}p_{\theta_{n},\alpha}(|\beta_{j}|).
\end{equation}
Compared to \eqref{penloglik}, the objective function \eqref{penpl} only involves $\bbeta$ and thus is easier to study for deriving the theoretical properties of the variable selection method. Therefore, we use \eqref{penpl} in place of \eqref{penloglik} in the sequel.

 Without loss of generality, we write the true regression coefficients  as $\bbeta_{0}=(\bbeta_{01}^{\top},0^{\top})^{\top}$ and let $s=\|\bbeta_{01}\|_{0}$. Denote the $j$th component of $\bbeta_{0}$ by $\beta_{0j}$ and let $\mathcal{M}_{\ast}=\left\{ j:\beta_{0j}\neq0\right\} =\left\{ 1,\dots,s\right\} $ be the index set of relevant features. For any $\boldsymbol{\beta}\in\mathbb{R}^{p}$,
it can be decomposed as $\boldsymbol{\beta}=(\mathbf{\boldsymbol{\beta}}_{1}^{\top},\bbeta_{2}^{\top})^{\top}$ accordingly. Let $\mathcal{I}_{11}$ denote the efficient information matrix for $\bbeta_1$ in the oracle model $l_n(\bbeta_1,\Lambda)$ evaluated at $(\mathbf{\boldsymbol{\beta}}_{01},\Lambda_{0})$, where $\Lambda_0$ is the true cumulative baseline hazard. Thus $\mathcal{I}_{11}$ is an $s\times s$ matrix. We also decompose $\bZ$ as $\bZ=(\bZ_1,\bZ_2)$ accordingly. Furthermore, we write \[
\dot{l}_{pn}(\mathbf{\boldsymbol{\beta}})=\begin{pmatrix}\dot{l}_{pn1}(\mathbf{\boldsymbol{\beta}})\\
\dot{l}_{pn2}(\mathbf{\boldsymbol{\beta}})
\end{pmatrix}\quad\text{and}\quad\ddot{l}_{pn}(\mathbf{\boldsymbol{\beta}})=\begin{pmatrix}\ddot{l}_{pn11}(\mathbf{\boldsymbol{\beta}}) & \ddot{l}_{pn12}(\mathbf{\boldsymbol{\beta}})\\
\ddot{l}_{pn21}(\mathbf{\boldsymbol{\beta}}) & \ddot{l}_{pn22}(\mathbf{\boldsymbol{\beta}})
\end{pmatrix},
\]
where $\dot{l}_{pn1}(\mathbf{\boldsymbol{\beta}})$ is a $s\times1$ vector and $\ddot{l}_{pn11}(\mathbf{\boldsymbol{\beta}})$ is a $s\times s$ matrix. Define a coordinate subspace of $\mathbb{R}^{p}$ to be a linear space spanned by a subset of the natural basis $\{e_1,\ldots,e_p\}$, where every $e_j$ is a $p$-dimensional vector with the $j$th component being 1 and the others 0. Let $\mathbb{S}_s$ be the union of all $s$-dimensional coordinate subspaces of $\mathbb{R}^p$.

We assume that $\alpha$ is fixed throughout the analysis. Let $\rho(t;\theta_{n})=\theta_{n}^{-1}p_{\theta_{n},\alpha}(t)$ and write $\rho(t;\theta_{n})$ as $\rho(t)$ for simplicity if no confusion will be caused. Following \cite{fan2011nonconcave}, we define the \textquotedblleft local concavity\textquotedblright{} of the penalty function $\rho(\cdot)$ at $\boldsymbol{v}=(v_{1},\dots,v_{q})^{\top}\in\mathbb{R}^{q}$ with $\|\boldsymbol{v}\|_{0}=q$ to be
\[
\kappa(\rho,\boldsymbol{v})=\lim_{\epsilon\to0+}\max_{1\le j\le q}\sup_{t_{1}<t_{2}\in(|v_{j}|-\epsilon,|v_{j}|+\epsilon)}-\frac{\rho^{\prime}(t_{2})-\rho^{\prime}(t_{1})}{t_{2}-t_{1}}.
\]
For such a $\boldsymbol{v}$, we also define $\rho^{\prime}(\boldsymbol{v})=\left(\rho^{\prime}(v_{1}),\dots,\rho^{\prime}(v_{q})\right)^{\top}$ and $\text{sgn}(\boldsymbol{v})=\left(\text{sgn}(v_{1}),\dots,\text{sgn}(v_{q})\right)^{\top}$.

Following \cite{huang2008adaptive}, we say that the initial estimator
$\tilde{\boldsymbol{\beta}}$ used in adaptive lasso is \textit{zero-consistent with rate $r_{n}$} if $r_{n}\max_{j\in\mathcal{M}_{\ast}^{c}}|\tilde{\beta}_{j}|=O_{P}(1)$, where $r_{n}\to\infty$, and there exists a positive constant $\xi_{b}>0$ such that for any $\epsilon>0$,
\[
\text{P}\left(\min_{j\in\mathcal{M}_{\ast}}|\tilde{\beta}_{j}|>\xi_{b}d_{n}^{\ast}\right)>1-\epsilon
\]
for sufficiently large $n$, where $d_{n}^{\ast}=0.5\min\left\{ |\beta_{0j}|,j\in\mathcal{M}_{\ast}\right\} $
is half of the minimum signal strength.

The following regularity conditions are assumed in deriving the theoretical properties of our variable selection methods. Conditions 1 and 6 are only for the nonconcave penalized likelihood methods.

\begin{condition}
\label{condition.Fan.1}
$\rho(t;\theta_{n})$ is increasing and concave in $t\in[0,\infty)$ and has a continuous derivative $\rho^{\prime}(t;\theta_{n})$ with $\rho^{\prime}(0+;\theta_{n})>0$. In addition, $\rho^{\prime}(t;\theta_{n})$ is increasing in $\theta_{n}\in(0,\infty)$ and $\rho^{\prime}(0+;\theta_{n})$ is independent of $\theta_{n}$.
\end{condition}

\begin{condition}
\label{condition.D1}
The components of $\mathbf{\boldsymbol{\beta}}_{01}$ are bounded by a known positive constant $C_{\beta}$. The union of the supports of inspection times $(V_{1},\dots,V_{K})$ is a finite interval $[\zeta,\tau]$, where $0<\zeta<\tau<\infty$. $\Lambda_{0}$ is strictly increasing and continuously differentiable on $[\zeta,\tau]$, and there exists a positive constant $C_{\Lambda}$ such that $C_{\Lambda}^{-1}<\Lambda_{0}(\zeta)<\Lambda_{0}(\tau)<C_{\Lambda}$.
\end{condition}

\begin{condition}
\label{condition.D23}
Each component of the covariate vector $\bZ$ is bounded almost surely by a positive constant $C_{\bZ}$ and the covariance matrix of $\bZ_1$ is positive definite.
\end{condition}

\begin{condition}
\label{condition.D4}
The number of inspection times, $K$, is positive almost surely, and $E(K)<\infty$. Additionally, $\text{P}(V_{j+1}-V_{j}\ge\eta|\bZ,K)=1$ ($j=1,\dots,K-1$) for some positive constant $\eta$. Furthermore, the conditional densities of $(V_{j},V_{j+1})$ given $(\bZ,K)$, denoted by $f_{j}(s,t|\bZ,K)$ ($j=1,\dots,K-1$), have continuous second-order derivatives with respect to $s$ and $t$ when $t-s>\eta$ and are continuously differentiable with respect to $\bZ$.
\end{condition}

\begin{condition}
\label{condition.p}
The full dimensionality $p$ satisfy $\log p=O(n^{\delta})$ for some $\delta\in(0,1)$. The effective dimensionality $s$ and the nonzero regression coefficients $\bbeta_{01}$ do not change with $n$.
\end{condition}

\begin{condition}
\label{condition.Fan.7}
Let $\kappa_{0}=\max_{\boldsymbol{v}\in\mathcal{N}_{0}}\kappa(\rho,\boldsymbol{v})$, where $\mathcal{N}_{0}=\left\{ \boldsymbol{v}\in\mathbb{R}^{s}:\|\boldsymbol{v}-\mathbf{\boldsymbol{\beta}}_{01}\|_{\infty}\le d_{n}^{\ast}\right\} $. The tuning parameter $\theta_{n}$ and the half minimal signal strength $d_{n}^{\ast}$ satisfy (i) $\theta_{n}\rho^{\prime}(d_{n}^{\ast})=o(1)$ and (ii) $\lambda_{\min}(\mathcal{I}_{11})>\theta_{n}\kappa_{0}$.
\end{condition}

Condition \ref{condition.Fan.1} defines the folded concave penalties introduced in \cite{fan2011nonconcave}, which include SCAD and MCP among others. Conditions \ref{condition.D1}--\ref{condition.D4} are the regularity conditions on the covariates and inspection times. They are adapted from Conditions C1--C4 of \cite{li2020adaptive}, which are a special case of the conditions in \cite{zeng2017maximum} in the context of low dimensional Cox models with interval censored data. The first part of Condition \ref{condition.p} indicates that our methods allow the full dimensionality $p$ to grow exponentially faster than the sample size. The second part of Condition \ref{condition.p} requires the nonzero part of the true regression coefficient vector $\bbeta$ to be constant when $n$ and $p$ increase, which is stronger than the counterparts for generalized linear models \citet{fan2011nonconcave} and the Cox model under right censoring \citet{bradic2011regularization} but is reasonable in the GWAS application presented later where the SNPs associated with early childhood caries are thought to be sparse relative to the sample size according to the literature. This requirement is due to the lack of closed-form expression of the hessian matrix of the log profile likelihood $l_{pn}(\bbeta)$---unlike the case of the Cox model with right censored data---making it difficult to prove with a changing $\bbeta_{01}$ the consistency of $-\ddot{l}_{pn11}(\bbeta)$ for $\mathcal{I}_{11}$ (Lemma 3 in the supplementary material),  a crucial result for establishing the asymptotic properties of this section. Condition \ref{condition.Fan.7} is assumed to ensure that the penalized oracle estimator in Lemma 1 of the supplementary material is a local maximizer of $\mathcal{C}(\beta)$ over $\mathbb{R}^p$ with probability tending to one, and is akin to Condition 7 in \cite{bradic2011regularization}. Under Condition \ref{condition.p}, $d_n^{\ast}$ is constant. So Condition \ref{condition.Fan.7}(i) implies $d_{n}^{\ast}\gg n^{-1/2}+\theta_{n}\rho^{\prime}(d_{n}^{\ast})$, which is similar to Condition 7(i) of \cite{bradic2011regularization}. Condition \ref{condition.Fan.7}(ii) is easily attainable if $\theta_n\to0$ and $\kappa_0$ is bounded. As discussed in \cite{bradic2011regularization}, Condition \ref{condition.Fan.7}(ii) is satisfied for SCAD and MCP penalties when $d_{n}^{\ast} \gg  \theta_n$, since this implies $\kappa_0=0$ for sufficiently large $n$.

\subsection{Properties of the nonconcave penalized likelihood methods}
We now list our main theoretical results for the penalized variable selection methods with folded concave penalties. The proofs of the following theorems and proposition are provided in the supplementary material.

\begin{theorem}
\label{thm.oracle}
Under Conditions \ref{condition.Fan.1}--\ref{condition.Fan.7} and assuming that $\theta_{n}\gg n^{\max\{-1/3,(\delta-1)/2\}}$ and $\theta_{n}\rho^{\prime}(d_{n}^{\ast})=O(n^{-1/3})$, with probability tending to one,
there exists a local maximizer $\hat{\bbeta}=(\hat{\bbeta}_1^{\top},\hat{\bbeta}_2^{\top})^{\top}$ of $\mathcal{C}(\bbeta)$ such that $\hat{\bbeta}_2=0$ and \[
\|\hat{\bbeta}-\bbeta_{0}\|_{2}=O_P\left(n^{-1/2}+\theta_{n}\rho^{\prime}(d_{n}^{\ast})\right).
\]
\end{theorem}

\begin{theorem} \label{thm.strong}
Under Conditions \ref{condition.Fan.1}--\ref{condition.Fan.7} and assuming that $\theta_{n}\gg n^{\max\{-1/3,(\delta-1)/2\}}$ and $\theta_{n}\rho^{\prime}(d_{n}^{\ast})=o(n^{-1/2})$, the penalized profile likelihood estimator for $\bbeta_1$ from the local maximizer in Theorem \ref{thm.oracle} satisfies
\[
\sqrt{n}(\hat{\bbeta}_{1}-\bbeta_{01})\stackrel{d}{\longrightarrow}N(0,\mathcal{I}_{11}^{-1}).
\]
\end{theorem}

\begin{proposition} \label{prop.global}
 Let $\hat{\bbeta}$ be a strict local maximizer of the nonconcave penalized log profile likelihood $\mathcal{C}(\bbeta)$ such that $\|\hat{\bbeta}\|_0\le s$ and $l_{pn}(\hat{\bbeta})\ge c$, where $c$ is a constant such that $c<l_{pn}(\bzero)$. For each $A\subset \left\{1,2,\dots,p\right\}$, we use $\ddot{l}_{pn,A}(\bbeta)$ to denote the corresponding $|A|\times |A|$ submatrix of $\ddot{l}_{pn}(\bbeta)$ and let $\mathcal{L}_{c}^{A}=\left\{ \bbeta\in\mathbb{R}^{p}: \beta_j=0\ \mbox{for}\ j\notin A\ \mbox{and}\ l_{pn}(\bbeta)\ge c\right\} $.

Assume that for each $A\subset \left\{1,2,\dots,p\right\}$ such that $|A|=2s$,
\[
\min_{\bbeta\in\mathcal{L}_{c}^{A}}\lambda_{\min}\left(-\ddot{l}_{pn,A}(\bbeta)\right)>\kappa(p_{\theta_{n}}),    \label{eq.global}
\]
where
\[
\kappa(p_{\theta_{n}})=\sup_{t_{1}<t_{2}\in(0,\infty)}-\frac{p_{\theta_{n}}^{\prime}(t_{2})-p_{\theta_{n}}^{\prime}(t_{1})}{t_{2}-t_{1}}.
\]
Then $\hat{\bbeta}$ is a global maximizer of $\mathcal{C}(\bbeta)$ on $\mathbb{S}_s$.
\end{proposition}

Theorem \ref{thm.oracle} and \ref{thm.strong} establish the oracle property \citep{fanli2001} of the penalized variable selection method with a folded concave penalty. Theorem \ref{thm.oracle} states that one can obtain a penalized profile likelihood estimator for $\bbeta$ with the same convergence rate as the counterpart under right censoring  \citep[][Theorem 4.2]{bradic2011regularization}. The rates of $\theta_n$ and $\rho^{\prime}(d_{n}^{\ast})$ are specified to ensure in conjunction with Condition 6 that the penalized oracle estimator in Lemma 1 of the supplementary material is a local maximizer of $\mathcal{C}(\beta)$ over $\mathbb{R}^p$ with probability tending to one. For SCAD and MCP, $\rho^{\prime}(d_{n}^{\ast})=0$ as long as $d_{n}^{\ast}>\alpha\theta_n$, and thus the requirement $\theta_{n}\rho^{\prime}(d_{n}^{\ast})=O(n^{-1/3})$ can be removed in this case. For lasso, the two conditions, $\theta_n \gg n^{-1/3}$ and $\theta_{n}\rho^{\prime}(d_{n}^{\ast})=O(n^{-1/3})$, cannot be satisfied simultaneously since $\rho^{\prime}(d_{n}^{\ast})=1$. That $n^{-1/3}$ appears in Theorem \ref{thm.oracle} is because it is the best bound we can obtain for the $L_{\infty}$ norm of the profile score vector, which stems from the convergence rate of $\hat{\Lambda}$. Theorem \ref{thm.strong} establishes the asymptotic normality of $\hat{\bbeta}_1$ with the same additional assumption $\theta_{n}\rho^{\prime}(d_{n}^{\ast})=o_{P}(n^{-1/2})$ as in Theorem 4.6 of \cite{bradic2011regularization}. Note that we do not impose conditions on the Hessian matrix of log profile likelihood function compared to \cite{bradic2011regularization} or require a consistent estimator of the $p$-dimensional efficient information matrix in contrast with \cite{zhao2019simultaneous} and \cite{WuZhSun}. This is by virtue of the fixed nonzero coefficient vector $\bbeta_{01}$ in Condition 5. While Theorem \ref{thm.oracle} and \ref{thm.strong} only exhibit the oracle property for one of the local maximizers of the penalized log profile likelihood $\mathcal{C}(\beta)$ instead of the estimator obtained from the optimization algorithm, Proposition \ref{prop.global} explores the conditions under which a local maximizer $\hat{\bbeta}$ becomes the global optimum on $\mathbb{S}_s$, which consists of all $p$-dimensional vectors with at most $s$ nonzero components. The proposition is analogous to Proposition 3(a) of \cite{fan2011nonconcave}.

\subsection{Properties of the adaptive lasso}
The asymptotic properties of the adaptive lasso method are presented below with the proofs relegated to the supplementary material.

\begin{theorem}
\label{thm.oracle.alasso}
Under Conditions \ref{condition.D1}--\ref{condition.p} and assuming that the initial estimator $\tilde{\boldsymbol{\beta}}$ is zero-consistent with rate $r_n$, $\theta_{n}=O(n^{-1/3})$ and $\theta_n r_n\gg n^{\max\{-1/3,(\delta-1)/2\}}$, with probability tending to one,
there exists a maximizer $\hat{\bbeta}=(\hat{\bbeta}_1^{\top},\hat{\bbeta}_2^{\top})^{\top}$ of $\mathcal{C}(\bbeta)$ with the adaptive
lasso penalty $p_{\theta_{n},\alpha}(|\beta_{j}|)=\theta_{n}|\beta_{j}|/|\tilde{\beta}_{j}|$ such that $\hat{\bbeta}_2=0$ and \[
\|\hat{\bbeta}-\bbeta_{0}\|_{2}=O_{P}\left(n^{-1/2}+\theta_{n}\right).
\]

\end{theorem}

\begin{theorem} \label{thm.strong.alasso}
Under Conditions \ref{condition.D1}--\ref{condition.p} and  assuming that the initial estimator $\tilde{\boldsymbol{\beta}}$ is zero-consistent with rate $r_n$, $\theta_{n}=o(n^{-1/2})$ and $\theta_n r_n\gg n^{\max\{-1/3,(\delta-1)/2\}}$, the penalized profile likelihood estimator $\hat{\bbeta}_{1}$ from Theorem \ref{thm.oracle.alasso} satisfies
\[
\sqrt{n}(\hat{\bbeta}_{1}-\bbeta_{01})\stackrel{d}{\longrightarrow}N(0,\mathcal{I}_{11}^{-1}).
\]
\end{theorem}

Analogous to Theorems \ref{thm.oracle} and \ref{thm.strong}, Theorems \ref{thm.oracle.alasso} and \ref{thm.strong.alasso} show that the adaptive lasso estimator has the oracle property \citep{fanli2001} as long as a reasonable initial estimator $\tilde{\bbeta}$ is available. Since the $L_2$ norm is always larger than the $L_{\infty}$ norm, an $L_2$-consistent initial estimator is zero-consistent with some rate $r_n$. Thus there always exists some $\theta_n=O(n^{-1/3})$ such that $\theta_n r_n \gg n^{\max\{-1/3,(\delta-1)/2\}}$ in Theorem \ref{thm.oracle.alasso} as long as $\delta\le 1/3$. For Theorem \ref{thm.strong.alasso}, if the initial estimator has a convergence rate faster than $n^{\min\{-1/6,-\delta/2\}}$ under the $L_2$ norm, we can have an $r_n$ of a larger order than $n^{\max\{1/6,\delta/2\}}$ and choose a $\theta_n$ such that $\theta_n = o(n^{-1/2})$ and $\theta_n r_n\gg n^{\max\{-1/3,(\delta-1)/2\}}$. In the subsequent simulation studies and real application, the lasso estimator was used as the initial estimator, but we did not show that it satisfies the aforementioned conditions due to the technical challenge of no closed-form expression of the profile likelihood for the Cox model with interval censored data. This is different from the case of the Cox model under right censoring where the partial likelihood can be employed to derive oracle inequalities for the lasso estimator \citep{huang2013oracle}. To the best of our knowledge, investigating the consistency of lasso in high-dimensional Cox models with interval censored data remains an open problem.

\section{Numerical Studies}\label{sec:simul}
We conducted simulation studies to evaluate the finite sample performance of the proposed variable selection methods employing the lasso, adaptive lasso, SCAD, and MCP penalties, respectively. We also compared their performance with that of the SCAD- and MCP-penalized Cox regressions with right-censored data, which were developed based on \citet{breheny2011coordinate} and \citet{simon2011regularization} and implemented by the function ``ncvsurv" in the R package ``ncvreg". To apply the``ncvsurv" function to interval censored data, we imputed the event times using the mid-points of the event-bracketing time intervals. To demonstrate the oracle property of our adaptive lasso, MCP and SCAD methods, we ran simulations of fitting the true model to the simulated data (the oracle procedure) as well. In all the numerical experiments, we simulated SNP data as features. We considered two numbers of SNPs, 1) $p = 3,000$ and 2) $p = 10,000$, and two scenarios about the correlations between SNPs, 1) linkage equilibrium, where SNPs at different loci are independent of each other, and 2) linkage disequilibrium, where there is a correlation between SNPs at different loci. The sample size $n$ was set to $500$ and $1,000$, and thus $n < p$ for all the scenarios.

We simulated SNPs as follows. First, we generated the minor allele frequency (MAF) of each SNP from a uniform distribution, $U(0.05, 0.20)$. Then, we generated a standard normal random variable for each SNP per subject and determined the minor allele count (0, 1 or 2) of the SNP, which served as a feature, by using cutoff values to trichotomize the random variable. The cutoff values were determined to make the distribution of the minor allele count satisfy the Hardy–Weinberg equilibrium with the generated MAF. To introduce linkage disequilibrium between the loci, the normal random variables were generated from a $p-$dimensional multivariate normal distribution with a zero mean vector and a covariance matrix whose $ij$-element is $\rho^{|i-j|}$. We set $\rho = 0$ for the scenario of linkage equilibrium and $\rho = 0.8$ for the scenario of linkage disequilibrium.

The time-to-event outcome of each subject was generated under a Cox model with a Weibull-type baseline hazard and the minor allele counts of SNPs as the covariates. Specifically, $\lambda(t|\bZ) = \kappa \eta (\eta t)^{\kappa -1} \exp(\bbeta^\top \bZ)$, where $\eta = 1.2$, $\kappa = 1.5$ and $\bZ$ denotes the vector of minor allele counts. We considered two values for the  regression coefficient vector: one with six nonzero coefficients, $\bbeta^{(6)} = (-1.40, -0.83, -1.64,  0.69,  1.39,  1.65)^\top$, and the other with 12 nonzero coefficients, $\bbeta^{(12)} = (\bbeta^{(6)\top}, -0.52,  0.86, -1.23,  1.18, -1.97, -1.68)^\top$. Note that, in each case, the set of nonzero parameters consist of equal numbers of small, medium, and large effects, defined by their absolute values falling within the respective ranges of $(0.5, 1.0)$, $(1.0, 1.5)$, and $(1.5, 2.0)$. For each subject, the inspection times $V_1,\ldots,V_6$ were generated from $V_t = V_{t-1} + U(0.1, (2 + t)/10)$ with $V_0 = 0$ for $t = 1, \cdots, 6$. With these inspection times, the proportion of subjects experiencing the event after their last inspection times ranged from 0.21 to 0.34 for the scenarios with $\bbeta^{(6)}$ and from 0.11 to 0.21 for the scenarios with $\bbeta^{(12)}$. Lastly, two hundred Monte Carlos runs were carried out in each simulation setting to assess the performance of the methods.

The simulation study was conducted using R 4.0.2 in the servers of MSU High Performance Computing Center (HPCC), each equipped with a 2.4 GHz 14-core Intel Xeon E5-2680 v4 processor and 128 GB RAM. The average computation time for completing a single simulation under the most time-demanding simulation setting (12 nonzero coefficients in the scenario of linkage disequilibrium with $p = 10000$ and $n = 1000$) was approximately 24 minutes for Lasso, 25 minutes for adaptive Lasso, 16 minutes for MCP, and 15 minutes for SCAD, respectively.

% Table generated by Excel2LaTeX from sheet 'Sheet3'
\begin{table}[htbp]
\renewcommand{\arraystretch}{0.8}
  \centering
  \caption{The performance metrics of $\hat{\bbeta}$ with $\bbeta^{(6)}$ and $\rho = 0$}
  \footnotesize
    \begin{tabular}{cccccccccc}
               \hline
    Method & $L_1$ error    & $L_2$ error    & FP    & FN    &       & $L_1$ error    & $L_2$ error    & FP    & FN  \vspace{0.2cm}\\
              \hline
          & \multicolumn{4}{l}{\textit{(n = 500; p = 3,000)}} &       & \multicolumn{4}{l}{\textit{(n = 1,000; p = 3,000)}} \\
    Oracle & 0.70  & 0.35  & 0.00  & 0.00  &       & 0.49  & 0.24  & 0.00  & 0.00 \\
    Lasso & 4.26  & 1.79  & 0.74  & 0.05  &       & 3.13  & 1.32  & 0.62  & 0.00 \\
    Adaptive lasso & 0.90  & 0.44  & 0.30  & 0.05  &       & 0.58  & 0.29  & 0.27  & 0.00 \\
    MCP   & 0.81  & 0.40  & 0.21  & 0.01  &       & 0.54  & 0.27  & 0.15  & 0.00 \\
    SCAD  & 0.99  & 0.51  & 0.34  & 0.03  &       & 0.49  & 0.24  & 0.00  & 0.00 \\
    MCP+mid-point & 553.38 & 53.68 & 184.89 & 0.00  &       & 697.54 & 45.43 & 403.77 & 0.00 \\
    SCAD+mid-point & 732.23 & 70.52 & 217.56 & 0.00  &       & 995.00 & 63.29 & 474.05 & 0.00 \vspace{0.2cm}\\
          & \multicolumn{4}{l}{\textit{(n = 500; p = 10,000)}} &       & \multicolumn{4}{l}{\textit{(n = 1,000; p = 10,000)}} \\
    Oracle & 0.70  & 0.35  & 0.00  & 0.00  &       & 0.46  & 0.23  & 0.00  & 0.00 \\
    Lasso & 4.61  & 1.94  & 0.58  & 0.12  &       & 3.39  & 1.43  & 0.56  & 0.00 \\
    Adaptive lasso & 0.91  & 0.46  & 0.21  & 0.12  &       & 0.55  & 0.27  & 0.25  & 0.00 \\
    MCP   & 0.74  & 0.37  & 0.08  & 0.00  &       & 0.48  & 0.24  & 0.05  & 0.00 \\
    SCAD  & 1.18  & 0.60  & 0.54  & 0.04  &       & 0.46  & 0.23  & 0.07  & 0.00 \\
    MCP+mid-point & 443.35 & 48.65 & 150.18 & 0.01  &       & 568.95 & 43.38 & 320.08 & 0.00 \\
    SCAD+mid-point & 590.18 & 64.37 & 192.98 & 0.01  &       & 759.70 & 57.43 & 401.32 & 0.00 \\
               \hline
    \end{tabular}%
  \label{table1}%
\end{table}%

\begin{table}[htbp]
\renewcommand{\arraystretch}{0.8}
  \centering
  \caption{The performance metrics of $\hat{\bbeta}$ with $\bbeta^{(6)}$ and $\rho = 0.8$}
  \footnotesize
    \begin{tabular}{cccccccccc}
               \hline
    Method & $L_1$ error    & $L_2$ error    & FP    & FN    &       & $L_1$ error    & $L_2$ error    & FP    & FN  \vspace{0.2cm}\\
              \hline
          & \multicolumn{4}{l}{\textit{(n = 500; p = 3,000)}} &       & \multicolumn{4}{l}{\textit{(n = 1,000; p = 3,000)}} \\
    Oracle & 0.69  & 0.34  & 0.00  & 0.00  &       & 0.48  & 0.24  & 0.00  & 0.00 \\
    Lasso & 4.32  & 1.80  & 0.93  & 0.04  &       & 3.19  & 1.33  & 0.95  & 0.00 \\
    Adaptive lasso & 0.86  & 0.42  & 0.26  & 0.04  &       & 0.52  & 0.26  & 0.14  & 0.00 \\
    MCP   & 0.75  & 0.37  & 0.12  & 0.01  &       & 0.50  & 0.25  & 0.07  & 0.00 \\
    SCAD  & 0.94  & 0.48  & 0.44  & 0.02  &       & 0.47  & 0.23  & 0.00  & 0.00 \\
    MCP+mid-point & 571.60 & 53.24 & 190.67 & 0.01  &       & 586.36 & 37.42 & 384.41 & 0.00 \\
    SCAD+mid-point & 761.06 & 70.03 & 221.30 & 0.01  &       & 911.87 & 55.95 & 462.20 & 0.00 \vspace{0.2cm} \\
          & \multicolumn{4}{l}{\textit{(n = 500; p = 10,000)}} &       & \multicolumn{4}{l}{\textit{(n = 1,000; p = 10,000)}} \\
    Oracle & 0.73  & 0.36  & 0.00  & 0.00  &       & 0.49  & 0.24  & 0.00  & 0.00 \\
    Lasso & 4.65  & 1.95  & 0.72  & 0.09  &       & 3.42  & 1.44  & 0.73  & 0.00 \\
    Adaptive lasso & 0.93  & 0.46  & 0.22  & 0.09  &       & 0.54  & 0.27  & 0.15  & 0.00 \\
    MCP   & 0.80  & 0.39  & 0.14  & 0.01  &       & 0.53  & 0.26  & 0.09  & 0.00 \\
    SCAD  & 1.17  & 0.59  & 0.53  & 0.03  &       & 0.50  & 0.25  & 0.02  & 0.00 \\
    MCP+mid-point & 457.11 & 48.97 & 154.33 & 0.01  &       & 559.09 & 41.51 & 325.07 & 0.00 \\
    SCAD+mid-point & 592.61 & 63.56 & 194.42 & 0.01  &       & 735.38 & 54.15 & 399.99 & 0.00 \\
               \hline
    \end{tabular}%
  \label{table2}%
\end{table}%

% Table generated by Excel2LaTeX from sheet 'Sheet3'
\begin{table}[htbp]
\renewcommand{\arraystretch}{0.8}
  \centering
  \caption{The performance metrics of $\hat{\bbeta}$ with $\bbeta^{(12)}$ and $\rho = 0$}
  \footnotesize
    \begin{tabular}{cccccccccc}
               \hline
    Method & $L_1$ error    & $L_2$ error    & FP    & FN    &       & $L_1$ error    & $L_2$ error    & FP    & FN  \vspace{0.2cm}\\
              \hline
          & \multicolumn{4}{l}{\textit{(n = 500; p = 3,000)}} &       & \multicolumn{4}{l}{\textit{(n = 1,000; p = 3,000)}} \\
    Oracle & 1.51  & 0.55  & 0.00  & 0.00  &       & 1.00  & 0.36  & 0.00  & 0.00 \\
    Lasso & 10.12 & 3.04  & 1.68  & 0.54  &       & 7.75  & 2.33  & 1.73  & 0.01 \\
    Adaptive lasso & 2.19  & 0.79  & 0.60  & 0.57  &       & 1.24  & 0.44  & 0.50  & 0.01 \\
    MCP   & 1.72  & 0.63  & 0.33  & 0.17  &       & 1.06  & 0.39  & 0.19  & 0.00 \\
    SCAD  & 2.08  & 0.76  & 0.67  & 0.22  &       & 1.07  & 0.40  & 0.09  & 0.01 \\
    MCP+mid-point & 523.46 & 55.44 & 152.07 & 0.15  &       & 656.38 & 47.98 & 338.90 & 0.00 \\
    SCAD+mid-point & 674.02 & 70.93 & 180.19 & 0.16  &       & 913.69 & 65.76 & 399.37 & 0.01  \vspace{0.2cm}\\
          & \multicolumn{4}{l}{\textit{(n = 500; p = 10,000)}} &       & \multicolumn{4}{l}{\textit{(n = 1,000; p = 10,000)}} \\
    Oracle & 1.49  & 0.54  & 0.00  & 0.00  &       & 1.02  & 0.37  & 0.00  & 0.00 \\
    Lasso & 10.99 & 3.31  & 1.28  & 1.00  &       & 8.36  & 2.52  & 1.44  & 0.03 \\
    Adaptive lasso & 2.55  & 0.92  & 0.55  & 1.01  &       & 1.22  & 0.43  & 0.31  & 0.03 \\
    MCP   & 1.72  & 0.64  & 0.27  & 0.25  &       & 1.06  & 0.38  & 0.12  & 0.01 \\
    SCAD  & 2.36  & 0.86  & 1.02  & 0.27  &       & 1.15  & 0.43  & 0.15  & 0.03 \\
    MCP+mid-point & 430.82 & 50.87 & 124.21 & 0.21  &       & 532.49 & 45.28 & 268.58 & 0.01 \\
    SCAD+mid-point & 555.28 & 65.38 & 160.03 & 0.24  &       & 707.47 & 59.68 & 338.66 & 0.02 \\
               \hline
    \end{tabular}%
  \label{table3}%
\end{table}%

\begin{table}[htbp]
\renewcommand{\arraystretch}{0.8}
  \centering
  \caption{The performance metrics of $\hat{\bbeta}$ with $\bbeta^{(12)}$ and $\rho = 0.8$}
  \footnotesize
    \begin{tabular}{cccccccccc}
               \hline
    Method & $L_1$ error    & $L_2$ error    & FP    & FN    &       & $L_1$ error    & $L_2$ error    & FP    & FN  \vspace{0.2cm}\\
              \hline
          & \multicolumn{4}{l}{\textit{(n = 500; p = 3,000)}} &       & \multicolumn{4}{l}{\textit{(n = 1,000; p = 3,000)}} \\
    Oracle & 1.54  & 0.56  & 0.00  & 0.00  &       & 0.99  & 0.36  & 0.00  & 0.00 \\
    Lasso & 10.49 & 3.14  & 2.23  & 0.80  &       & 7.93  & 2.38  & 2.42  & 0.00 \\
    Adaptive lasso & 2.47  & 0.88  & 0.71  & 0.83  &       & 1.22  & 0.43  & 0.33  & 0.01 \\
    MCP   & 1.83  & 0.67  & 0.44  & 0.18  &       & 1.04  & 0.38  & 0.15  & 0.01 \\
    SCAD  & 2.00  & 0.74  & 0.86  & 0.18  &       & 1.03  & 0.38  & 0.09  & 0.01 \\
    MCP+mid-point & 512.00 & 52.63 & 155.49 & 0.19  &       & 592.83 & 41.71 & 332.41 & 0.01 \\
    SCAD+mid-point & 678.54 & 69.09 & 181.36 & 0.23  &       & 863.78 & 59.12 & 395.21 & 0.01 \vspace{0.2cm}\\
          & \multicolumn{4}{l}{\textit{(n = 500; p = 10,000)}} &       & \multicolumn{4}{l}{\textit{(n = 1,000; p = 10,000)}} \\
    Oracle & 1.47  & 0.54  & 0.00  & 0.00  &       & 0.99  & 0.36  & 0.00  & 0.00 \\
    Lasso & 11.28 & 3.39  & 1.51  & 1.33  &       & 8.53  & 2.56  & 1.84  & 0.02 \\
    Adaptive lasso & 2.91  & 1.04  & 0.48  & 1.34  &       & 1.27  & 0.45  & 0.34  & 0.02 \\
    MCP   & 1.81  & 0.67  & 0.36  & 0.28  &       & 1.05  & 0.39  & 0.14  & 0.02 \\
    SCAD  & 2.56  & 0.93  & 1.13  & 0.32  &       & 1.12  & 0.42  & 0.19  & 0.03 \\
    MCP+mid-point & 431.73 & 50.03 & 127.02 & 0.30  &       & 529.11 & 43.79 & 272.98 & 0.02 \\
    SCAD+mid-point & 572.58 & 66.26 & 162.13 & 0.30  &       & 695.59 & 57.01 & 340.31 & 0.02 \\
               \hline
    \end{tabular}%
  \label{table4}%
\end{table}%

Tables \ref{table1} to \ref{table4} summarize the performance in estimating $\bbeta$ of the proposed methods as well as the MCP- and SCAD-penalized Cox regressions with mid-point imputation. We calculated four performance metrics: the estimation accuracy was evaluated using the $L_1 \equiv E(\lVert \hat{\bbeta} - \bbeta \rVert_1)$ and $L_2 \equiv E(\lVert \hat{\bbeta} - \bbeta \rVert_2)$ errors, and the variable selection performance was evaluated using the expected numbers of false positives (FP) and  false negatives (FN), where ``positives'' and ``negatives'' mean nonzero and zero estimated coefficients, respectively.

The penalized Cox regressions for right censored data performed the worst across all the simulation scenarios, having a much larger estimation error and a much higher number of false positives than the other methods, probably due to the mid-point imputation of even times. The proposed methods except the one with the lasso penalty performed very well in terms of variable selection. Their expected numbers of false positives and false negatives are below one except that the adaptive lasso's FN and the SCAD's FP are a bit above one in the setting with the smallest $n$, the largest $p$ and the lowest sparsity. In addition, the coefficient estimation errors of the adaptive lasso, MCP and SCAD were comparable to that of the oracle procedure, especially when $n=1000$. The MCP performed better in terms of estimation error and variable selection accuracy than the adaptive lasso and SCAD when $n=500$, but there was no significant performance difference between the three methods when $n=1000$, except that the adpative lasso had a bit higher FP. The lasso had much larger estimation errors than the adaptive lasso, MCP and SCAD, but the errors decreased as the sample size went up, implying estimation consistency. Interestingly, the lasso's FP did not decrease with the sample size except in the setting with the highest sparsity and independent features, although that FP was already very low. The lasso's FN decreased as the sample size grew. The lower sparsity worsened every performance metric of the proposed methods, but the higher correlation between the features did not make everything worse.

% and thus they are relegated to Table A1 of Supplementary Material.

% \begin{sidewaysfigure}[htbp]
%     \centering
% \includegraphics[width=\textwidth]{plot_est.eps}
% \caption{Estimation results when $\rho = 0.8$} \label{fig:1}
% \end{sidewaysfigure}

% Figure \ref{fig:1} presents the mean estimates with empirical standard errors for the scenarios with six nonzero beta coeﬀicients corresponding to $\bbeta^{(6)}$ and $\rho = 0.8$, along with those of the oracle models fitted using only the covariates with the nonzero coeﬀicients. Note that, in the elements of $\bbeta^{(6)}$, $\beta_1$ and $\beta_6$ represent the small effects; $\beta_2$ and $\beta_3$ represent the median effects; and $\beta_4$ and $\beta_5$ represent the large effects. The estimation results of \citeauthor{breheny2011coordinate}'s methods are very deviant from the true values; hence, we do not display their results in Figure \ref{fig:1}. The results for all scenarios including \citeauthor{breheny2011coordinate}'s methods can be found in the Supplementary Material.

The estimation results of individual nonzero coefficients and the baseline cumulative hazard function are relegated to Supplementary Material. The mean estimates obtained by the proposed methods with all the penalties except lasso were close to the true values. The empirical standard error of every coefficient estimator obtained by the adaptive lasso, MCP and SCAD decreased as the sample size increased, and it became similar to that of the oracle estimator when $n = 1,000$. Furthermore, the average estimates of $\Lambda(t)$ are reasonably close to the true baseline cumulative hazard function for the methods that possess the oracle property. The asymptotic normality of the estimators for the non-zero coefficients was demonstrated by the Normal Q-Q plots in Supplementary Material.

% \begin{figure}[htbp]
%     \centering
% \includegraphics[width=\textwidth]{rplot_baseline2.pdf}
% \caption{The estimates for the baseline cumulative hazard function in the scenario of six nonzero coefficients with $n = 1,000$, $p = 10,000$, and $\rho = 0.8$} \label{fig:2}
% \end{figure}

% \begin{figure}[htbp]
% \centering
% \includegraphics[width=\textwidth]{figure3.pdf}
% \caption{Q-Q plots for the nonzero estimates in the scenario of six nonzero coefficients with $n = 1,000$, $p = 10,000$, and $\rho = 0.8$} \label{fig:2}
% \end{figure}

\section{Application}\label{sec:appl}
We applied the proposed methods to a dbGaP data set (dbGaP accession: phs000095.v3.p1) titled Dental Caries: Whole Genome Association and Gene x Environment Studies, which contains caries assessments and whole-genome genotyping information of 5418 subjects across four sites (PITT, DRDR, IOWA and GEIRS) from University of Pittsburgh and University of Iowa. Each subject's caries assessment was from only one time point. We aimed to identify the typed SNPs that are independently associated with age to early child caries (ECC). Thus, we chose the subjects who were under 6 years of age at the caries assessment as the analysis cohort.
The phenotype, age to ECC, is subject to case 1 interval censoring due to the single caries assessment.

We performed both the SNP- and the subject-level filtering of the genotype data using PLINK 1.9. Specifically, a SNP was excluded if it met any of the following three criteria: (1) its minor allele frequency (MAF) was less than 0.01, (2) it had a p-value below $10^{-6}$ in the Hardy-Weinberg Equilibrium (HWE) test, or (3) the missing data rate exceeded 2$\%$. A subject was removed from the analysis if his/her genotype missing rate was above $2\%$. The quality control led to a final analysis data set of 1118 subjects across three study sites (PITT, IOWA, and GEIRS). The study site DRDR was excluded from the analysis data set because all the participants at that site were over the age of 6. Each subject has 539,436 SNPs in the final data set. The missing genotypes of a SNP were imputed by sampling from a binomial distribution, $Bin(2,\widehat{\mbox{MAF}})$, where $\widehat{\mbox{MAF}}$ was the sample minor allele frequency estimated using the non-missing genotypes.

It is a formidably time-consuming task to apply the proposed variable selection methods directly to the 539,436 SNPs to identify the ones that are independently associated with age to ECC. To reduce the computational burden, we first selected the top 10,000 SNPs in terms of the strength of association with the phenotype. The association strength of a SNP was assessed by the p-value of the Wald test for the SNP effect based on a Cox model of age to ECC adjusted for sex, study site, self-reported race, and the top 20 principal components of the SNP data. Then, we applied the proposed variable selection methods to the 10,000 SNPs and the interval-censored age-to-ECC data to search for the SNPs independently associated with age to ECC. In the Cox model for variable selection, we also considered sex and study site as unpenalized covariates to adjust for potential confounding.

%%%%Table 5 as of April 24, 2024%%%%%
\begin{threeparttable}[H]
  \centering
  \scriptsize
  \caption{The SNPs identified to be independently associated with age to ECC by the proposed variable selection methods and the corresponding interrogation results from IPA.}
    \begin{tabular}{cccc}
    \toprule
     SNP &  Gene & Dental/oral diseases or functions & Coefficient estimate \\
    \midrule
    rs6000495 & CSF2RB &    -   & 0.77(MCP) \\
    rs7959625 & CUX2  &     -  & -0.55(MCP) \\
    rs7643642 & GK5   & oral squamous cell carcinoma & -0.73(MCP) \\
    rs17826057 & IRAK3 &    -   & 0.87(MCP) \\
    rs2638500 & KRT79 &    -   & -0.3(MCP) \\
    rs2325610 & LOC105370259 &     -  & -0.36(MCP) \\
    rs980561 & MPPED2 & oral squamous cell carcinoma & 0.42(MCP) \\
rs10987080\tnote{$\dagger$} & PBX3  &   oral squamous cell carcinoma (tongue)    & 0.01(Lasso), -0.35(Adaptive lasso), -0.56(MCP) \\
    rs4624889 & PHACTR2 &    -   & -0.52(MCP) \\
    rs11868735 & RAB11FIP4 &    -   & 0.62(MCP) \\
    rs6657332 & RYR2  & oral squamous cell carcinoma (tongue) & -0.43(MCP) \\
    rs7224155 & SKAP1 &   -    & -0.47(MCP) \\
    rs2331766 & SLC7A7 &  -     & 2(MCP) \\
    rs2291306 & TTN   & oral squamous cell carcinoma & -0.56(MCP) \\
    rs515028 &    -   &    -   & 0.61(MCP) \\
    rs1019595 &    -   &   -    & 0.99(MCP) \\
    rs1126967 &    -   &    -   & -0.48(MCP) \\
    rs1862986 &    -   &   -    & 0.89(MCP) \\
    rs10800364 &    -   &  -     & -0.52(MCP) \\
    rs11670677 &   -    &  -     & 0.46(MCP) \\
    rs2928719 &   -    &   -    & -0.46(MCP) \\
    rs7012400 &   -    &   -    & 0.54(MCP) \\
    rs1032673 &   -    &   -    & 0.59(MCP) \\
    rs6561237 &   -    &   -    & -0.51(MCP) \\
    rs9320811 &   -    &   -    & 1.61(MCP) \\
    \bottomrule
    \end{tabular}%
  \begin{tablenotes}
    \item[$\dagger$] This SNP was linked to PBX3 in IPA according to \citet{wang2013association}; however, it is no longer present in the current version of IPA (Version 111725566).
  \end{tablenotes}
  \label{analysis}
\end{threeparttable}

Table \ref{analysis} shows the SNPs that were identified to be independently associated with age to ECC by any of the proposed variable selection methods. Both lasso and adaptive lasso identified only one SNP, while MCP found a total of 25 SNPs. SCAD, however, did not detect any. We used the Ingenuity Pathway Analysis software (IPA, Version 111725566)
to map the detected SNPs to genes. Additionally, we acquired the dental- and oral-related phenotype annotations of those genes by searching the IPA using ``dental'' and ``oral'' respectively as the keyword in the ``Diseases and Functions" category. The SNP identified by lasso, adaptive lasso, and MCP, \textit{rs10987080}, was reported in another genetic study of primary tooth caries \citep{wang2013association}. They mapped this SNP to the \textit{PBX3} gene using IPA, although this mapping is no longer present in the current version of the software. According to IPA, among the genes listed in Table \ref{analysis}, \textit{GK5}, \textit{MPPED2}, \textit{PBX3}, \textit{RYR2} and \textit{TTN} have been previously reported in oral disease genetics literature \citep{hedberg2016genetic, HAYES2016106, cui2022construction}. All the five genes were found to be associated with squamous cell carcinoma in human oral cavity. Particularly, \textit{PBX3} and \textit{RYR2} were associated with oral squamous cell carcinoma of tongue \citep{HAYES2016106}. In addition, the \textit{RYR2} gene was known to be mutated during transformation of oral dysplasia to cancer \citep{singh2020study}, and it emerged as one of the most commonly mutated genes in a whole-exome sequencing study that characterized the mutational patterns of oral squamous cell carcinoma tumors \citep{patel2021whole}. The \textit{MPPED2} gene was previously reported to have a possible association with the age to ECC in \cite{wu2021}, where a multi‐marker genetic association test for interval-censored data was developed and applied to the same data set we analyzed. The associations between \textit{MPPED2} and other ECC phenotypeslike dft have also been implicated in  \citet{shaffer2011genome} and \citet{stanley2014genetic}, although its functional role in caries etiology remains unclear.
%The coefficient estimate of sex (coded as 1 for male and 0 for female) was -0.03 for lasso, -0.06 for adaptive lasso, 0.04 for MCP, and -0.03 for SCAD. The coefficient estimates of the dummy variables indicating PITT and GEIRS, with IOWA as the reference, were respectively 0.67 and 0.33 for lasso, 0.72 and 0.40 for adaptive lasso, 1.15 and 0.45 for MCP, and 0.67 and 0.33 for SCAD.

\section{Discussion}\label{sec:disc}
We developed a set of penalized variable selection methods for high-dimensional Cox models with interval-censored data. The methods with adaptive lasso or folded concave penalties were proven to enjoy the oracle property, even when the dimensionality grows exponentially with the sample size.

The proposed variable selection methods can be extended to survival data subject to both interval censoring and left truncation. The left truncation problem arises when a random sample is collected from the individuals whose failure times are greater than their study entry times. Let $V_{i0}$ be the study entry time, also called left truncation time, for the $i$-th subject. Then, the log likelihood function of left-truncated and interval-censored data under the Cox model is
\begin{align}\label{loglikLTIC}
l_{n}(\bbeta,\Lambda)=\sum_{i=1}^{n}\log\left[\exp\{-\Lambda(L_i|\bZ_i; \bbeta)+\Lambda(V_{i0}|\bZ_i; \bbeta)\}-\exp\{-\Lambda(R_i|\bZ_i; \bbeta)+\Lambda(V_{i0}|\bZ_i; \bbeta)\}\right].
\end{align}
The associated NPMLE $\tilde{\Lambda}(t)$ of the cumulative baseline hazard function increases only within a new set of maximal intersections, which are disjoint intervals of the form $(l,u]$, where $l \in \lbrace L_i: i = 1, \ldots, n\rbrace$, $u \in \lbrace R_i, V_{i0}: i = 1, \ldots, n, R_i < \infty\rbrace$, and $(l, u)$ contains no $L_i$, $R_i$ or $V_{i0}$ for all $i$'s \citep{alioum1996proportional}. Moreover, the log likelihood \eqref{loglikLTIC} is indifferent to how $\tilde{\Lambda}(t)$ increases within those intervals, as only the overall jump sizes of $\tilde{\Lambda}(t)$ over $(l,u]$'s with $l>0$ and $u<\infty$ affect it. Write the intervals $(l,u]$'s with $l>0$ and $u<\infty$ as $(l_1,u_1],\ldots,(l_m,u_m]$, where $u_k<u_{k+1}$ $(k=1,\ldots,m-1)$. One can perform variable selection with left-truncated and interval-censored data using the same EM algorithm as for interval censored data, except with a modification of substituting the summation index $u_k \le L_i$ with $V_{i0} < u_k \le L_i$ throughout the algorithm.

Four future research topics are worth pursuing. First, it is desirable to extend the proposed methods to other survival models, since the proportion hazards assumption made by the Cox model might not hold in some real applications, especially those with long follow-ups. \citet{zeng2016maximum} developed a nonparametric maximum likelihood estimation method for a class of semiparametric transformation models, which includes the Cox model as a member, under interval censoring. Their method uses the same EM algorithm as ours except without any penalty and having an additional latent variable that induces the transformation model. So, computationally, it is straightforward to extending our methods to the semiparametric transformation models. We conjecture that the theoretical results and their proofs in this paper are also generalizable to those models. The other research direction meriting investigation is grouped feature selection with interval-censored data. This is motivated by our real application, where the SNPs can be naturally grouped into genes. To perform grouped feature selection, one can change the penalty in our methods to group lasso \citep{YuLi2006} for group selection or group bridge \citep{Huangetal2009} for bi-level selection. Other group penalties like group SCAD and group MCP \citep{breheny2015group} can also be used. The penalized estimation can still be carried out under the EM framework, but the M step needs to use a group descent algorithm \citep{simon2012standardization,breheny2015group} instead of the regular coordinate descent, and one needs to orthonormalize the groups upfront for better group identification \citep{simon2012standardization,breheny2015group}. The theory of the grouped feature selection is expected to require more effort to derive. The other two topics are to allow the number of nonzero coefficients to increase with $n$ and to prove that the lasso estimator, as the initial estimator for the adaptive lasso, satisfies the conditions in Theorem \ref{thm.strong.alasso} (or find such an initial estimator), respectively.

	\section*{Supplementary Material}
	
	The online Supplementary Material contains proofs of all the theorems and propositions as well as additional simulation results.
	\par
	%%%%%%%%%%%%%%%%%%%%%%%%%%%%%%%%%%%%%%%%%%%%%%%%%%%%%%%%%%%%%%%%%%%%%%%%%%%%%%%%%%%%%%%%%%%%%%%%%%%%%%%%%%%%%%%%%%%%%%%%%%%%
	\section*{Acknowledgements}
	
	This work was partially supported by the National Institute of Dental and Craniofacial Research (Award No. R56DE030437 and R03DE032357), Natural Science Foundation (DMS-1915099), and the National Research Foundation of Korea (NRF) grant (2021R1G1A1009269). Funding support for the study entitled Dental Caries: Whole Genome Association and Gene x Environment
Studies was provided by the National Institute of Dental and Craniofacial Research (NIDCR, grant number
U01-DE018903). This genome-wide association study is part of the Gene Environment Association Studies
(GENEVA) program of the trans-NIH Genes, Environment and Health Initiative (GEI). Genotyping services were
provided by the Center for Inherited Disease Research (CIDR). CIDR is fully funded through a federal contract
from the National Institutes of Health (NIH) to The Johns Hopkins University, contract number
HHSN268200782096C. Funds for this project's genotyping were provided by the NIDCR through CIDRs NIH
contract. Assistance with phenotype harmonization and genotype cleaning, as well as with general study
coordination, was provided by the GENEVA Coordinating Center (U01-HG004446) and by the National Center for
Biotechnology Information (NCBI). Data and samples were provided by: (1) the Center for Oral Health Research
in Appalachia (a collaboration of the University of Pittsburgh and West Virginia University funded by NIDCR
R01-DE 014899); (2) the University of Pittsburgh School of Dental Medicine (SDM) DNA Bank and Research
Registry, supported by the SDM and by the University of Pittsburgh Clinical and Translational Sciences Institute
(funded by NIH/NCRR/CTSA Grant UL1-RR024153); (3) the Iowa Fluoride Study and the Iowa Bone
Development Study, funded by NIDCR (R01-DE09551and R01-DE12101, respectively); and (4) the Iowa
Comprehensive Program to Investigate Craniofacial and Dental Anomalies (funded by NIDCR, P60-DE-013076).
The datasets used for the analyses described in this manuscript were obtained from dbGaP at
\url{https://www.ncbi.nlm.nih.gov/projects/gap/cgi-bin/study.cgi?study_id=phs000095.v3.p1}    through dbGaP accession
number phs000095.v3.p1.
	\par
	
	%%%%%%%%%%%%%%%%%%%%%%%%%%%%%%%%%%%%%%%%%%%%%%%%%%%%%%%%%%%%%%%%%%%%%%%%%%%%%%%%%%%%%%%%%%

%\iffalse
\bibhang=1.7pc
\bibsep=2pt
\fontsize{9}{14pt plus.8pt minus .6pt}\selectfont
\renewcommand\bibname{\large \bf References}
%\begin{thebibliography}{11}
\expandafter\ifx\csname
natexlab\endcsname\relax\def\natexlab#1{#1}\fi
\expandafter\ifx\csname url\endcsname\relax
  \def\url#1{\texttt{#1}}\fi
\expandafter\ifx\csname urlprefix\endcsname\relax\def\urlprefix{URL}\fi
%\fi

\bibliographystyle{chicago}      % Chicago style, author-year citations
\bibliography{ref}   % name your BibTeX data base

%-------------------------------------------
\vskip .65cm
\noindent
Daewoo Pak\\
Division of Data Science, Yonsei University, Wonju, Korea.
\vskip 2pt
\noindent
E-mail: dpak@yonsei.ac.kr
\vskip 2pt

\noindent
Jianrui Zhang \\
Department of Statistics and Probability, Michigan State University, East Lansing, MI 48824, USA.
\vskip 2pt
\noindent
E-mail: zhan1783@msu.edu

\vskip 2pt

\noindent
Di Wu \\
Department of Epidemiology and Biostatistics, Michigan State University, East Lansing, MI 48824, USA.
\vskip 2pt
\noindent
E-mail: wudi14@msu.edu

\vskip 2pt

\noindent
Haolei Weng \\
Department of Statistics and Probability, Michigan State University, East Lansing, MI 48824, USA.
\vskip 2pt
\noindent
E-mail: wenghaol@msu.edu

\vskip 2pt

\noindent
Chenxi Li \\
Department of Epidemiology and Biostatistics, Michigan State University, East Lansing, MI 48824, USA.
\vskip 2pt
\noindent
E-mail: cli@msu.edu

\end{document}

% --- supplement: supplementary.tex ---

	%%%%%%%%%%%%%%%%%%%%%%%%%%%%%%%%%%%%%%%%%%%%%%%%%%%%%%%%%%%%%%%%%%%%%%%%%%%%%%%%%%%%%%%%%%%%%%%%%%%%%%%%%%%%%%%%%%%%%%%%%%%%
	%%%%%%%%%%%%%%%%%%%%%%%%%%%%%%%%%%%%%%%%%%%%%%%%%%%%%%%%%%%%%%%%%%%%%%%%%%%%%%%%%%%%%%%%%%%%%%%%%%%%%%%%%%%%%%%%%%%%%%%%%%%%
	
	\renewcommand{\baselinestretch}{2}
	
	\markright{ \hbox{\footnotesize\rm Statistica Sinica
			%{\footnotesize\bf 24} (201?), 000-000
		}\hfill\\[-13pt]
		\hbox{\footnotesize\rm
			%\href{http://dx.doi.org/10.5705/ss.20??.???}{doi:http://dx.doi.org/10.5705/ss.20??.???}
		}\hfill }
	
	%\markboth{\hfill{\footnotesize\rm FIRSTNAME1 LASTNAME1 AND FIRSTNAME2 LASTNAME2} \hfill}
	%{\hfill {\footnotesize\rm FILL IN A SHORT RUNNING TITLE} \hfill}
	
	\renewcommand{\thefootnote}{}
	$\ $\par
	
	%%%%%%%%%%%%%%%%%%%%%%%%%%%%%%%%%%%%%%%%%%%%%%%%%%%%%%%%%%%%%%%%%%%%%%%%%%%%%%%%%%%%%%%%%%%%%%%%%%%%%%%%%%%%%%%%%%%%%%%%%%%%
	
	\fontsize{12}{14pt plus.8pt minus .6pt}\selectfont \vspace{0.8pc}
	\centerline{\large\bf Supplementary Material to ``Variable Selection in Ultra-high }
	\vspace{2pt}
	\centerline{\large\bf  Dimensional Feature Space for the Cox Model}
 \vspace{2pt}
        \centerline{\large\bf with Interval-Censored Data''}

	%%%%%%%%%%%%%%%%%%%%%%%%%%%%%%%%%%%%%%%%%%%%%%%%%%%%%%%%%%%%%%%%%%%%%%%%%%%%%%%%%%%%%%%%%%%%%%%%%%%%%%%%%%%%%%%%%%%%%%%%%%%%

\section{Proofs of the Theoretical Results}
We define some notations used throughout this section here. We use the empirical process notation. Let $\mathbb{P}_{n}$ be the empirical measure for $n$ independent subjects and $P$ be the true probability measure. The corresponding empirical process is $\mathbb{G}_{n}=\sqrt{n}(\mathbb{P}_{n}-P)$. Also define
\[
\dot{l}_{pn}(\mathbf{\boldsymbol{\beta}})=\frac{\partial l_{pn}(\mathbf{\boldsymbol{\beta}})}{\partial\mathbf{\boldsymbol{\beta}}}\quad\text{and}\quad\ddot{l}_{pn}(\mathbf{\boldsymbol{\beta}})=\frac{\partial^{2}l_{pn}(\mathbf{\boldsymbol{\beta}})}{\partial\mathbf{\boldsymbol{\beta}}\partial\mathbf{\boldsymbol{\beta}}^{\top}},
\]
and we can write
\[
\dot{l}_{pn}(\mathbf{\boldsymbol{\beta}})=\begin{pmatrix}\dot{l}_{pn1}(\mathbf{\boldsymbol{\beta}})\\
\dot{l}_{pn2}(\mathbf{\boldsymbol{\beta}})
\end{pmatrix}\quad\text{and}\quad\ddot{l}_{pn}(\mathbf{\boldsymbol{\beta}})=\begin{pmatrix}\ddot{l}_{pn11}(\mathbf{\boldsymbol{\beta}}) & \ddot{l}_{pn12}(\mathbf{\boldsymbol{\beta}})\\
\ddot{l}_{pn21}(\mathbf{\boldsymbol{\beta}}) & \ddot{l}_{pn22}(\mathbf{\boldsymbol{\beta}})
\end{pmatrix},
\]
where $\dot{l}_{pn1}(\mathbf{\boldsymbol{\beta}})$ is an $s\times1$ vector and $\ddot{l}_{pn11}(\mathbf{\boldsymbol{\beta}})$ is an $s\times s$ matrix.

Let $l(\mathbf{\boldsymbol{\beta}},\Lambda)$ be the log likelihood for a single subject and $l_{\boldsymbol{\beta}}(\mathbf{\boldsymbol{\beta}},\Lambda)$ denote the score function of $l(\mathbf{\boldsymbol{\beta}},\Lambda)$ with respect to $\mathbf{\boldsymbol{\beta}}$. To obtain the score function of $l(\mathbf{\boldsymbol{\beta}},\Lambda)$ with respect to $\Lambda$, we define $l_{\Lambda}(\mathbf{\boldsymbol{\beta}},\Lambda)(h)=\partial l(\mathbf{\boldsymbol{\beta}},\Lambda_{\epsilon,h})/\partial\epsilon|_{\epsilon=0}$, where $d\Lambda_{\epsilon,h}=(1+\epsilon h)d\Lambda$ is the parametric submodels for $\Lambda$ utilized by \cite{zeng2016maximum}
with $h$ running through the space of bounded functions. The second-order derivatives evaluated at $(\mathbf{\boldsymbol{\beta}}_{0},\Lambda_{0})$ can be defined similarly as in \cite{zeng2016maximum}. We write
\[
l_{\bbeta}(\bbeta,\Lambda)=\begin{pmatrix}l_{\bbeta,1}(\bbeta,\Lambda)\\
l_{\bbeta,2}(\bbeta,\Lambda)
\end{pmatrix}
%\quad \mbox{and} \quad l_{\Lambda}(\bbeta,\Lambda)(\boldsymbol{h})=\begin{pmatrix}l_{\Lambda}(\bbeta,\Lambda)(\boldsymbol{h}_{1})\\
%l_{\Lambda}(\bbeta,\Lambda)(\boldsymbol{h}_{2})
%\end{pmatrix},
\]
where $l_{\bbeta,1}$ is a $s\times 1$ vector corresponding to $\bbeta_{01}$. Let $\boldsymbol{h}_{01}$ denote the $s$-dimensional least favorable direction \citep{murphy2000profile} for the Oracle $l_n(\bbeta_1,\Lambda)$ at $(\mathbf{\boldsymbol{\beta}}_{01},\Lambda_{0})$, which exists under Conditions 2-4 \citep[see the proof of Theorem 2 in][]{zeng2016maximum}. Then the efficient score function for $\bbeta_1$ in the Oracle is
% \citep[][pp. 269]{zeng2016maximum}
\[
\tilde{l}_1=l_{\boldsymbol{\beta},1}(\mathbf{\boldsymbol{\beta}}_{0},\Lambda_{0})-l_{\Lambda}(\mathbf{\boldsymbol{\beta}}_{0},\Lambda_{0})(\boldsymbol{h}_{01}).
\]

\subsection{Proof of Theorem 1}

Theorem 1 is a direct result of the following Lemma \ref{lemma.eb} and Lemma \ref{lemma.oracle}.

\begin{lemma} \label{lemma.eb}
Define a penalized oracle estimator $\hat{\bbeta}^{o}=(\hat{\bbeta}_{1}^{o\top},\bzero^{\top})^{\top}$ as a local maximizer of $\mathcal{C}(\bbeta)=\mathcal{C}(\bbeta_{1},\bzero)$ where $\text{dim}(\bbeta_{1})=s$. Under Conditions 1-5 and assuming that $\theta_{n}\rho^{\prime}( d_{n}^{\ast})=o(1)$, with probability tending to one, there exists a penalized oracle estimator $\hat{\bbeta}^{o}$ such that
\[
\|\hat{\bbeta}^{o}-\bbeta_0\|_{2}=O_{P}\left(n^{-1/2}+\theta_{n}\rho^{\prime}( d_{n}^{\ast})\right).
\]
\end{lemma}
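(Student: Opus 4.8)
The plan is to run a ball-shrinking argument in the spirit of Fan and Li (2001), adapted to the semiparametric profile likelihood. Write $\alpha_{n}=n^{-1/2}+\theta_{n}\rho'(d_{n}^{\ast})$ and decompose the oracle criterion as $\mathcal{C}(\bbeta_{1},\bzero)=l_{pn}(\bbeta_{1},\bzero)-n\sum_{j=1}^{s}p_{\theta_{n}}(|\beta_{1j}|)$, where $l_{pn}$ is the profile log-likelihood obtained after the nuisance $\Lambda$ is maximized out. It suffices to show that for every $\varepsilon>0$ there is a constant $C$ with
\[
\mathbb{P}\Big(\sup_{\|\bu_{1}\|_{2}=C}\mathcal{C}(\bbeta_{01}+\alpha_{n}\bu_{1},\bzero)<\mathcal{C}(\bbeta_{01},\bzero)\Big)\longrightarrow 1 .
\]
Since $\mathcal{C}(\cdot,\bzero)$ is continuous in its first $s$ coordinates, this boundary inequality forces a local maximizer $\hat{\bbeta}_{1}^{o}$ into the ball $\{\bbeta_{1}:\|\bbeta_{1}-\bbeta_{01}\|_{2}\le C\alpha_{n}\}$, which yields the stated rate after padding with $\bzero$.

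Along the ray $\bbeta_{1}=\bbeta_{01}+\alpha_{n}\bu_{1}$ with $\|\bu_{1}\|_{2}=C$, I would Taylor-expand the profile log-likelihood,
\[
l_{pn}(\bbeta_{01}+\alpha_{n}\bu_{1},\bzero)-l_{pn}(\bbeta_{01},\bzero)=\alpha_{n}\,\bu_{1}^{\top}\dot{l}_{pn1}(\bbeta_{01})+\tfrac{1}{2}\alpha_{n}^{2}\,\bu_{1}^{\top}\ddot{l}_{pn11}(\bar{\bbeta}_{1})\bu_{1},
\]
with $\bar{\bbeta}_{1}$ on the segment. The two inputs I import from the semiparametric theory are: (i) $\dot{l}_{pn1}(\bbeta_{01})=O_{P}(n^{1/2})$, because the profile score at the truth equals $\sqrt{n}\,\mathbb{P}_{n}\tilde{l}_{1}+o_{P}(n^{1/2})$ with $\tilde{l}_{1}$ the mean-zero efficient score defined above; and (ii) $-n^{-1}\ddot{l}_{pn11}(\bar{\bbeta}_{1})\to\tilde{I}_{1}$ in probability, uniformly over the shrinking ball, where $\tilde{I}_{1}=P(\tilde{l}_{1}\tilde{l}_{1}^{\top})$ is positive definite under Conditions 2--4. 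Hence the first term is $O_{P}(n^{1/2}\alpha_{n})\,\|\bu_{1}\|_{2}$ and the second is $-\tfrac{1}{2}n\alpha_{n}^{2}\,\bu_{1}^{\top}\tilde{I}_{1}\bu_{1}\,(1+o_{P}(1))$. For the penalty, a first-order expansion gives $n\sum_{j\le s}\{p_{\theta_{n}}(|\beta_{01j}+\alpha_{n}u_{1j}|)-p_{\theta_{n}}(|\beta_{01j}|)\}$, whose absolute value is at most $n\alpha_{n}\,\theta_{n}\rho'(d_{n}^{\ast})\sum_{j}|u_{1j}|$ plus a second-order remainder, once the signal condition bounds the penalty derivatives at the nonzero truths by $\theta_{n}\rho'(d_{n}^{\ast})$; the second-order penalty curvature is negligible relative to $\tilde{I}_{1}$ under Condition 5.

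Collecting terms and using $n^{1/2}\alpha_{n}\ge 1$ (so that $n^{1/2}\alpha_{n}\le n\alpha_{n}^{2}$) together with $\theta_{n}\rho'(d_{n}^{\ast})\le\alpha_{n}$ and $s$ fixed, both the score contribution and the penalty contribution are $O_{P}(n\alpha_{n}^{2})\,C$, i.e.\ linear in $C$, whereas the Hessian term is $-\tfrac{1}{2}\lambda_{\min}(\tilde{I}_{1})\,n\alpha_{n}^{2}\,C^{2}(1+o_{P}(1))$. Taking $C$ large therefore makes the increment strictly negative with probability tending to one, which establishes the boundary inequality and hence the lemma.

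The crux is items (i)--(ii): showing that the profiled criterion admits a quadratic (LAN-type) expansion whose curvature is the efficient information $\tilde{I}_{1}$. This is not a direct Taylor expansion of an explicit function, because $\Lambda$ has been profiled out; it requires the profile-likelihood machinery of \cite{murphy2000profile}, namely a no-bias condition and a second-order stochastic equicontinuity (Donsker) argument, with the least favorable direction $\boldsymbol{h}_{01}$ playing the role of the derivative of the maximizing $\hat{\Lambda}_{\bbeta}$. These are precisely the ingredients supplied by Conditions 2--4 and the arguments behind Theorem 2 of \cite{zeng2016maximum}; once they are in place, the rate calculation above is routine, so I expect the expansion of the profile likelihood, rather than the penalty bookkeeping, to be the main obstacle.
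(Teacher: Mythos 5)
Your proposal follows the same route as the paper's proof: a Fan--Li boundary argument on a sphere of radius proportional to $n^{-1/2}+\theta_{n}\rho^{\prime}(d_{n}^{\ast})$, a quadratic expansion of the profile log-likelihood whose curvature is the efficient information for $\bbeta_{1}$, and a first-order bound of size $\theta_{n}\rho^{\prime}(d_{n}^{\ast})$ per coordinate on the penalty increment; your final bookkeeping (linear-in-$C$ score and penalty terms versus the quadratic-in-$C$ curvature term) matches the paper's concluding display. You also correctly identify the crux: the expansion cannot be a literal Taylor expansion of the profiled criterion, and must come from the Murphy--van der Vaart profile-likelihood machinery, made applicable here through \cite{zeng2016maximum}.

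There is, however, one concrete gap between your stated inputs and what that machinery delivers. Theorem 1 of \cite{murphy2000profile} gives an asymptotic quadratic expansion of $l_{pn}$ evaluated at \emph{any consistent random sequence}; it does not give existence of a profile Hessian at intermediate points, and certainly not your input (ii), the convergence $-n^{-1}\ddot{l}_{pn11}(\bar{\bbeta}_{1})\to\tilde{I}_{1}$ \emph{uniformly over the shrinking ball}. As written, your intermediate-value expansion rests on this unproved, stronger-than-available claim, and without it you have no control of the supremum over the sphere. The paper bridges exactly this point with a device you are missing: it sets $\tilde{\bbeta}_{n1}=\argmax_{\bbeta_{1}\in\{\bbeta_{01}+\gamma_{n}\boldsymbol{u}:\|\boldsymbol{u}\|_{2}=1\}}l_{pn}(\bbeta_{1},\bzero)$, notes $\tilde{\bbeta}_{n1}=\bbeta_{01}+o_{P}(1)$ by Condition 5, bounds the supremum of the criterion increment by the increment at $\tilde{\bbeta}_{n1}$, and applies the Murphy--van der Vaart expansion at that single random point --- converting the pointwise expansion into control of the supremum with no Hessian and no uniformity needed. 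A secondary issue: the paper disposes of the penalty with the one-step bound $|\bone^{\top}p_{\theta_{n},\alpha}(\bbeta_{01}+\gamma_{n}\boldsymbol{u},\bzero)-\bone^{\top}p_{\theta_{n},\alpha}(\bbeta_{01},\bzero)|\le\sqrt{s}\,\theta_{n}\gamma_{n}\rho^{\prime}(d_{n}^{\ast})$, which uses only monotonicity of $\rho^{\prime}$ and the signal floor (as in \cite{bradic2011regularization}), so no second-order penalty term ever arises; your appeal to Condition 5 to make the penalty curvature negligible is misplaced, since penalty-curvature control is what Condition 6 supplies in this paper, and Condition 6 is not assumed in this lemma. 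Both repairs are easy, but they are exactly the steps that make the argument close.
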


\begin{proof}
We follow the proof of Theorem 4.2 in \cite{bradic2011regularization} and the proof of Theorem 1 in \cite{li2020adaptive}. It suffices to show that, for any $\epsilon>0$, there exists a large constant $B$ and
$\gamma_{n}=B\left(n^{-1/2}+\theta_{n}\rho^{\prime}( d_{n}^{\ast})\right)$ such that for sufficiently large $n$,
\[
P\left(\sup_{\|\boldsymbol{u}\|_{2}=1}\mathcal{C}(\bbeta_{01}+\gamma_{n}\boldsymbol{u},\bzero)<\mathcal{C}(\bbeta_{01},\bzero)\right)\ge1-\epsilon.
\]

Let $\tilde{\bbeta}_{n1}=\argmax_{\bbeta_1\in\{\bbeta_{01}+\gamma_{n}\boldsymbol{u}:\|\boldsymbol{u}\|_{2}=1\}}l_{pn}(\bbeta_1,\bzero)$. Then $\tilde{\bbeta}_{n1}=\bbeta_{01}+o_{P}(1)$ due to Condition 5, and
\[
\mathcal{C}(\bbeta_{01}+\gamma_{n}\boldsymbol{u},\bzero)-\mathcal{C}(\bbeta_{01},\bzero)\le l_{pn}(\tilde{\bbeta}_{n1},\bzero)- l_{pn}(\bbeta_{01},\bzero)-\bone^{\top}p_{\theta_{n},\alpha}(\bbeta_{01}+\gamma_{n}\boldsymbol{u},\bzero)+\bone^{\top}p_{\theta_{n},\alpha}(\bbeta_{01},\bzero).
\]
Note that the effective dimensionality $s$ is fixed. So, by remark A1 in \cite{zeng2016maximum}, Theorem 1 in \cite{murphy2000profile} is applicable to the Oracle $l_n(\bbeta_1,\Lambda)$, which implies that
\begin{align*}
& l_{pn}(\tilde{\bbeta}_{n1},\bzero)- l_{pn}(\bbeta_{01},\bzero)   \\
= &~(\tilde{\bbeta}_{n1}-\bbeta_{01})\mathbb{P}_{n}\tilde{l}_1-\frac{1}{2}(\tilde{\bbeta}_{n1}-\bbeta_{01})^{\top}\mathcal{I}_{11}(\tilde{\bbeta}_{n1}-\bbeta_{01})+o_{P}\left\{\left(\|\tilde{\bbeta}_{n1}-\bbeta_{01}\|+n^{-1/2}\right)^{2}\right\}.
\end{align*}
It is obvious that
\begin{align*}
|(\tilde{\bbeta}_{n1}-\bbeta_{01})\mathbb{P}_{n}\tilde{l}_1| & =O_{P}(n^{-1/2}\gamma_{n}),\\
o_{P}\left\{\left(\|\tilde{\bbeta}_{n1}-\bbeta_{01}\|+n^{-1/2}\right)^{2}\right\} & =o_{P}(\gamma_{n}^{2}).
\end{align*}
Since $\mathcal{I}_{11}$ does not change when $n$ and $p$ increase,
\[
\frac{1}{2}(\tilde{\bbeta}_{n1}-\bbeta_{01})^{\top}\mathcal{I}_{11}(\tilde{\bbeta}_{n1}-\bbeta_{01})\ge\frac{1}{2}\lambda_{\min}(\mathcal{I}_{11})\gamma_{n}^{2},
\]
where $\lambda_{\min}(\mathcal{I}_{11})>0$. By similar arguments to the proof of Theorem 4.2 in \cite{bradic2011regularization},
\[
|\bone^{\top}p_{\theta_{n},\alpha}(\bbeta_{01}+\gamma_{n}\boldsymbol{u},\bzero)-\bone^{\top}p_{\theta_{n},\alpha}(\bbeta_{01},\bzero)|\le \sqrt{s}\theta_{n}\gamma_{n}\rho^{\prime}( d_{n}^{\ast}).
\]
Combining these results leads to
\begin{equation*}
   \mathcal{C}(\bbeta_{01}+\gamma_{n}\boldsymbol{u},\bzero)-\mathcal{C}(\bbeta_{01},\bzero)\le \gamma_{n}\left[O_{P}\left(n^{-1/2}+\theta_{n}\rho^{\prime}( d_{n}^{\ast})\right)-\frac{1}{2}\lambda_{\min}(\mathcal{I}_{11})\gamma_{n}\left(1+o_{P}(1)\right)\right].
\end{equation*}
With probability tending to one, the right-hand side of the above inequality is smaller than zero when $\gamma_{n}=B\left(n^{-1/2}+\theta_{n}\rho^{\prime}( d_{n}^{\ast})\right)$ for a sufficiently large $B$. This completes the proof.
\end{proof}

\begin{lemma}  \label{lemma.oracle}
Under Conditions 1-6 and assuming that $\theta_{n}>>n^{\max\{-1/3,(\delta-1)/2\}}$ and $\theta_{n}\rho^{\prime}( d_{n}^{\ast})=O(n^{-1/3})$, with probability tending to one, there exists a local maximizer $\hat{\bbeta}$ of $\mathcal{C}(\bbeta)$ over $\mathbb{R}^p$ such that $\hat{\bbeta}=\hat{\bbeta}^{o}$, where $\hat{\bbeta}^{o}$ is the penalized oracle estimator in Lemma \ref{lemma.eb}.
\end{lemma}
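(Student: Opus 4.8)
The plan is to verify that the penalized oracle estimator $\hat{\bbeta}^{o}=(\hat{\bbeta}_1^{o\top},\bzero^\top)^\top$ supplied by Lemma \ref{lemma.eb} satisfies the first-order (subgradient) conditions for being a local maximizer of $\mathcal{C}$ over all of $\mathbb{R}^p$, not merely over the oracle subspace $\{\bbeta:\bbeta_2=\bzero\}$. Writing $\mathcal{C}(\bbeta)=l_{pn}(\bbeta)-\bone^\top p_{\theta_n,\alpha}(\bbeta)$ and perturbing as $\bbeta=\hat{\bbeta}^o+(\bdelta_1^\top,\bdelta_2^\top)^\top$, the signal block is already under control: since $\hat{\bbeta}_1^o$ is an interior local maximizer of the restricted problem (Lemma \ref{lemma.eb}), no small perturbation with $\bdelta_2=\bzero$ can increase $\mathcal{C}$. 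It remains to rule out ascent in the noise block. Because $p_{\theta_n,\alpha}(\cdot)$ vanishes at the origin with one-sided coordinatewise slope $\theta_n\rho'(0+)$, a subgradient expansion bounds the change in $\mathcal{C}$ along $\bdelta_2$ above by $\|\dot{l}_{pn2}(\hat{\bbeta}^o)\|_\infty\|\bdelta_2\|_1-\theta_n\rho'(0+)\|\bdelta_2\|_1(1+o_P(1))$. Hence it suffices to establish the uniform domination
\[
\max_{j>s}\bigl|\dot{l}_{pn,j}(\hat{\bbeta}^o)\bigr|<\theta_n\rho'(0+)
\]
with probability tending to one, where $\rho'(0+)>0$ is a fixed constant; this forces the displayed bound to be strictly negative for all small $\bdelta_2\neq\bzero$, so that $\hat{\bbeta}^o$ is a local maximizer over $\mathbb{R}^p$ and hence equals $\hat{\bbeta}$.

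The core estimate is the bound on the noise score, which I would obtain by the split $\dot{l}_{pn2}(\hat{\bbeta}^o)=\dot{l}_{pn2}(\bbeta_0)+\{\dot{l}_{pn2}(\hat{\bbeta}^o)-\dot{l}_{pn2}(\bbeta_0)\}$. Since $\hat{\bbeta}^o$ and $\bbeta_0$ share the zero noise block and differ only in the fixed-dimensional signal block, a mean-value expansion gives $\dot{l}_{pn2}(\hat{\bbeta}^o)-\dot{l}_{pn2}(\bbeta_0)=\ddot{l}_{pn21}(\bar{\bbeta})(\hat{\bbeta}_1^o-\bbeta_{01})$ for an intermediate $\bar{\bbeta}$; a uniform bound on the row-norms of the cross-Hessian block $\ddot{l}_{pn21}$ (Condition 6) together with $\|\hat{\bbeta}_1^o-\bbeta_{01}\|_2=O_P(n^{-1/2}+\theta_n\rho'(d_n^{\ast}))=O_P(n^{-1/3})$ from Lemma \ref{lemma.eb} and the assumption $\theta_n\rho'(d_n^{\ast})=O(n^{-1/3})$ render this term $O_P(n^{-1/3})$. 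For $\dot{l}_{pn2}(\bbeta_0)$ I would apply a maximal inequality to the profile score over the $p-s$ noise coordinates: the stochastic fluctuation contributes $O_P(\sqrt{\log p/n})=O_P(n^{(\delta-1)/2})$ under the scaling $\log p\asymp n^{\delta}$, while any systematic error from profiling out the infinite-dimensional $\Lambda$ (estimated only at the cube-root rate $n^{1/3}$ for interval-censored data) is absorbed into an $O_P(n^{-1/3})$ remainder. Combining,
\[
\|\dot{l}_{pn2}(\hat{\bbeta}^o)\|_\infty=O_P\!\left(n^{\max\{-1/3,(\delta-1)/2\}}\right)=o_P(\theta_n),
\]
the last step being exactly the hypothesis $\theta_n>>n^{\max\{-1/3,(\delta-1)/2\}}$; since $\rho'(0+)$ is bounded below, the required domination holds.

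The main obstacle is the maximal inequality for the term $\dot{l}_{pn2}(\bbeta_0)$: one must control the supremum of the noise-coordinate profile scores at $\bbeta_0$ uniformly across an ultra-high-dimensional index set while simultaneously absorbing the bias from the nonparametric nuisance $\Lambda$. This is delicate because the profile score is not a simple empirical mean---it carries the implicitly defined least favorable direction $\boldsymbol{h}_{01}$ and the $n^{1/3}$-rate estimate of $\Lambda$---so the concentration must be run on the efficient-score representation in the spirit of $\tilde{l}_1$ and paired with an entropy/bracketing control sharp enough to keep the $\sqrt{\log p}$ factor from inflating the nuisance bias beyond $O_P(n^{-1/3})$. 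Once this uniform score bound is secured, the remaining steps are routine, and the separation of scales enforced by $\theta_n>>n^{\max\{-1/3,(\delta-1)/2\}}$ and $\theta_n\rho'(d_n^{\ast})=O(n^{-1/3})$ delivers the oracle equality $\hat{\bbeta}=\hat{\bbeta}^o$ with probability tending to one.
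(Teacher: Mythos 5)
Your high-level strategy (verify that $\hat{\bbeta}^{o}$ remains optimal over all of $\mathbb{R}^p$, with the key step being $\|\dot{l}_{pn2}(\hat{\bbeta}^{o})\|_{\infty}<\theta_n\rho'(0+)$ with probability tending to one) matches the paper, but the core estimate is never actually proved, and the decomposition you chose makes it essentially intractable. You split the noise score as $\dot{l}_{pn2}(\bbeta_0)+\{\dot{l}_{pn2}(\hat{\bbeta}^{o})-\dot{l}_{pn2}(\bbeta_0)\}$, so your ``first term'' is the profile score at $\bbeta_0$, which still contains the estimated nuisance $\hat{\Lambda}_{\bbeta_0}$; you then concede that a maximal inequality for this object over the $p-s$ noise coordinates, with the cube-root-rate nuisance and the least favorable direction inside, is ``the main obstacle,'' and you simply assert the bound $O_P(\sqrt{\log p/n})+O_P(n^{-1/3})$. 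That assertion is the whole lemma. The paper avoids the obstacle by splitting at the \emph{true} pair $(\bbeta_0,\Lambda_0)$ rather than at the profiled score: the leading term $\frac{1}{n}\sum_{i}l_{\bbeta2}(\bZ_i,L_i,R_i;\bbeta_0,\Lambda_0)$ is then a genuine iid average of bounded, mean-zero variables $a_i(\bbeta_0,\Lambda_0)Z_{ij}$, so Hoeffding's inequality plus a Bonferroni bound over $j=s+1,\dots,p$ (with $u_n=n^{(\delta+1)/4}\sqrt{\theta_n}\gg\sqrt{\log p}$) gives $o_P(\theta_n)$ directly. All estimation error, in both $\bbeta$ and $\Lambda$, is pushed into the remainder, where the paper exploits the structural identity $l_{\bbeta}(\bZ_i,L_i,R_i;\bbeta,\Lambda)=a_i(\bbeta,\Lambda)\bZ_i$: the sup-norm of the remainder over all $p-s$ coordinates is bounded by $C_{\bZ}\cdot\frac{1}{n}\sum_i|a_i(\hat{\bbeta}^{o},\hat{\Lambda}_{\hat{\bbeta}^{o}})-a_i(\bbeta_0,\Lambda_0)|$, a single scalar, so no union bound, entropy argument, or cross-Hessian control in the ultra-high-dimensional block is needed at all. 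That scalar is then shown to be $O_P(n^{-1/3})$ via the penalized-likelihood inequality, the resulting $O_P(n^{-1/3})$ Hellinger rate, and a Donsker/asymptotic-equicontinuity step. Your alternative treatment of the remainder by a mean-value expansion $\ddot{l}_{pn21}(\bar{\bbeta})(\hat{\bbeta}_1^{o}-\bbeta_{01})$ would require uniform-in-$j$ control of the rows of the profile cross-Hessian (which involves derivatives of $\hat{\Lambda}_{\bbeta}$ with respect to $\bbeta$); this is not supplied by the paper's conditions and is nowhere established in your proposal.

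Second, your reduction to the score condition is incomplete for a nonconcave penalty. You consider only perturbations with $\bdelta_2=\bzero$ (covered by Lemma 1) and perturbations of the noise block alone (covered by the score domination); local maximality on $\mathbb{R}^p$ requires controlling joint perturbations, and for folded-concave penalties such as SCAD and MCP this is exactly why the paper imposes and verifies the additional curvature condition $\lambda_{\min}(-\ddot{l}_{pn11}(\hat{\bbeta}^{o}))>\theta_n\kappa(\rho,\hat{\bbeta}_1^{o})$, established through Lemma 3 (namely $-\ddot{l}_{pn11}(\hat{\bbeta}^{o})=\mathcal{I}_{11}+o_P(1)$ together with $\theta_n\kappa_0<\lambda_{\min}(\mathcal{I}_{11})$ from Condition 6(i)), following the local-maximizer characterization of Bradic, Fan and Jiang (2011) and Fan and Lv (2011). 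Without this condition, or an explicit telescoping argument handling $(\bdelta_1,\bdelta_2)$ simultaneously, the conclusion that $\hat{\bbeta}^{o}$ is a local maximizer over $\mathbb{R}^p$ does not follow from your two block-wise checks.
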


\begin{proof}
We follow the proof of Theorem 4.3 in \cite{bradic2011regularization}. We first note that under Condition 1, an estimate $\hat{\bbeta}\in\mathbb{R}^{p}$ with $\mbox{supp}(\hat{\bbeta})=\mathcal{M}^{\ast}$ is a strict local maximizer of the nonconcave penalized log profile likelihood $\mathcal{C}(\bbeta)$ if
\begin{align}
\dot{l}_{pn1}(\hat{\bbeta})-\theta_{n}\rho^{\prime}(|\hat{\bbeta}_{1}|)\circ\text{sgn}(\hat{\bbeta}_{1}) & =0,\label{eq.lemma.max.1}\\
\|\dot{l}_{pn2}(\hat{\bbeta})\|_{\infty} & <\theta_{n}\rho^{\prime}(0+),\\
\lambda_{\min}\left(-\ddot{\ell}_{n11}(\hat{\bbeta})\right) & >\theta_{n}\kappa(\rho,\hat{\bbeta}_{1}),\label{eq.lemma.max.3}
\end{align}
where $\circ$ is the Hadamard product. This result comes from similar arguments to the proof of Theorem 2.1 in \cite{bradic2011regularization} and the proof of Theorem 1 in \cite{fan2011nonconcave}.

Since equation (\ref{eq.lemma.max.1}) is  satisfied by $\hat{\bbeta}^{o}$ in Lemma \ref{lemma.eb}, it suffices to show that with probability tending to one,
\begin{align}
\|\dot{l}_{pn2}(\hat{\bbeta}^{o})\|_{\infty} & <\theta_{n}\rho^{\prime}(0+) \label{eq.thm.oracle.1} \\
\lambda_{\min}\left(-\ddot{l}_{pn11}(\hat{\bbeta}^{o})\right) & >\theta_{n}\kappa(\rho,\hat{\bbeta}_{1}^{o}). \label{eq.thm.oracle.2}
\end{align}

To check that (\ref{eq.thm.oracle.2}) holds with probability tending to one, first note that with probability tending to one, $\hat{\bbeta}_{1}^{o}\in\mathcal{N}_{0}$ as $n\to\infty$ by Condition 6(i) and Lemma \ref{lemma.eb}. Thus, $\theta_{n}\kappa(\rho,\hat{\bbeta}_{1}^{o})\le\theta_{n}\kappa_{0}<\lambda_{\min}(\mathcal{I}_{11})$ with probability tending to one. By Lemma \ref{lemma.Hessian} below and similar arguments to the proof of Theorem 4.3 in \cite{bradic2011regularization},
\[
\lambda_{\min}\left(-\ddot{l}_{pn11}(\hat{\bbeta}^{o})\right)=\lambda_{\min}(\mathcal{I}_{11})+o_{P}(1).
\]
So with probability tending to one,
\[
\lambda_{\min}\left(-\ddot{l}_{pn11}(\hat{\bbeta}^{o})\right)>\theta_{n}\kappa(\rho,\hat{\bbeta}_{1}^{o}).
\]

To check (\ref{eq.thm.oracle.1}), we first let
\[
q(t,\bZ_{i};\bbeta,\Lambda)=\Lambda(t)\exp\left(\bbeta^{\top}\bZ_{i}\right)\quad \mbox{and} \quad Q(t,\bZ_{i};\bbeta,\Lambda)=\exp\left\{ -q(t,\bZ_{i};\bbeta,\Lambda)\right\} .
\]
Then for the $i$th subject, its log-likelihood is
\[
l(\bZ_{i},L_{i},R_{i};\bbeta,\Lambda)=\log\left\{ Q(L_{i},\bZ_{i};\bbeta,\Lambda)-Q(R_{i},\bZ_{i};\bbeta,\Lambda)\right\}
\]
and the score function with respect to $\bbeta$ is
\begin{align*}
l_{\bbeta}(\bZ_{i},L_{i},R_{i};\bbeta,\Lambda) & =a_{i}(\bbeta,\Lambda)\bZ_{i},
\end{align*}
where
\[
a_{i}(\bbeta,\Lambda)=\frac{Q(R_{i},\bZ_{i};\bbeta,\Lambda)\cdot q(R_{i},\bZ_{i};\bbeta,\Lambda)-Q(L_{i},\bZ_{i};\bbeta,\Lambda)\cdot q(L_{i},\bZ_{i};\bbeta,\Lambda)}{Q(L_{i},\bZ_{i};\bbeta,\Lambda)-Q(R_{i},\bZ_{i};\bbeta,\Lambda)}.
\]
In addition,
\begin{align*}
\dot{l}_{pn}(\bbeta) & =\frac{1}{n}\sum_{i=1}^{n}l_{\bbeta}(\bZ_{i},L_{i},R_{i};\bbeta,\hat{\Lambda}_{\bbeta})=\frac{1}{n}\sum_{i=1}^{n}a_{i}(\bbeta,\hat{\Lambda}_{\bbeta})\bZ_{i},
\end{align*}
where $\hat{\Lambda}_{\bbeta}=\argmax_{\Lambda}l_{n}(\bbeta,\Lambda)$. We  write $l_{\bbeta}(\bbeta,\Lambda)$ and $l_{\bbeta}(\bZ_{i},L_{i},R_{i};\bbeta,\Lambda)$ as
\[
l_{\bbeta}(\bbeta,\Lambda)=\begin{pmatrix}l_{\bbeta1}(\bbeta,\Lambda)\\
l_{\bbeta2}(\bbeta,\Lambda)
\end{pmatrix}\quad \mbox{and} \quad l_{\bbeta}(\bZ_{i},L_{i},R_{i};\bbeta,\Lambda)=\begin{pmatrix}l_{\bbeta1}(\bZ_{i},L_{i},R_{i};\bbeta,\Lambda)\\
l_{\bbeta2}(\bZ_{i},L_{i},R_{i};\bbeta,\Lambda)
\end{pmatrix}
\]
such that both $l_{\bbeta1}(\bbeta,\Lambda)$ and $l_{\bbeta1}(\bZ_{i},L_{i},R_{i};\bbeta,\Lambda)$ are $s\times1$ vectors. Noting that
\begin{align*}
\|\dot{l}_{pn2}(\hat{\bbeta}^{o})\|_{\infty} & =\|\frac{1}{n}\sum_{i=1}^{n}l_{\bbeta2}(\bZ_{i},L_{i},R_{i};\hat{\bbeta}^{o},\hat{\Lambda}_{\hat{\bbeta}^{o}})\|_{\infty}\\
 & \le\|\frac{1}{n}\sum_{i=1}^{n}l_{\bbeta2}(\bZ_{i},L_{i},R_{i};\bbeta_{0},\Lambda_{0})\|_{\infty}\\
 & \quad \quad+\|\frac{1}{n}\sum_{i=1}^{n}l_{\bbeta2}(\bZ_{i},L_{i},R_{i};\hat{\bbeta}^{o},\hat{\Lambda}_{\hat{\bbeta}^{o}})-\sum_{i=1}^{n}l_{\bbeta2}(\bZ_{i},L_{i},R_{i};\bbeta_{0},\Lambda_{0})\|_{\infty},
\end{align*}
we bound the two terms on the right side separately.

To bound the first term, we derive concentration inequalities for
\[
\frac{1}{n}\sum_{i=1}^{n}a_{i}(\bbeta_0,\Lambda_0)Z_{ij},\quad j=s+1,\dots,p.
\]
Under Conditions 2--4, there exists a positive constant $A$ such that $|a_{i}(\bbeta_0,\Lambda_0)|\le A$ for all $i=1,\dots,n$, where $A$ only depends on $(C_{\bbeta},C_{\bZ},C_{\Lambda},\eta)$. Thus, $|a_{i}(\bbeta_0,\Lambda_0)Z_{ij}|\le AC_{\bZ}$ almost surely. By Hoeffding's inequality for bounded independent random variables, for all $t>0$,
\[
\text{P}\left(\frac{1}{n}\sum_{i=1}^{n}a_{i}(\bbeta_0,\Lambda_0)Z_{ij}>n^{-1/2}t\right)\le2\exp\left(-\frac{2(n^{-1/2}t)^{2}}{4\sum_{i=1}^{n}(AC_{\bZ}/n)^{2}}\right)=2\exp\left(-\frac{t^{2}}{2A^{2}C_{\bZ}^{2}}\right).
\]
Let $u_{n}=n^{(\delta+1)/4}\sqrt{\theta_{n}}>>\sqrt{\log p}$ and define
\[
\Omega_{n}=\left\{ \|\frac{1}{n}\sum_{i=1}^{n}l_{\bbeta2}(\bZ_{i},L_{i},R_{i};\bbeta_0,\Lambda_0)\|_{\infty}\le n^{-1/2}u_{n}\right\} .
\]
By Bonferroni's inequality,
\[
\text{P}(\Omega_{n})\ge1-\sum_{j=s+1}^{p}P\left(\frac{1}{n}\sum_{i=1}^{n}a_{i}(\bbeta_0,\Lambda_0)Z_{ij}>n^{-1/2}u_{n}\right)\ge1-2(p-s)\exp\left(-\frac{u_{n}^{2}}{4A^{2}C_{\bZ}^{2}}\right)\longrightarrow1,
\]
and on $\Omega_{n}$, we have
\[
\theta_{n}^{-1}\|\frac{1}{n}\sum_{i=1}^{n}l_{\bbeta2}(\bZ_{i},L_{i},R_{i};\bbeta_0,\Lambda_0)\|_{\infty}\le\theta_{n}^{-1}\cdot n^{-1/2}\cdot n^{(\delta+1)/4}\sqrt{\theta_{n}}=\frac{n^{(\delta-1)/4}}{\sqrt{\theta_{n}}}\longrightarrow0.
\]

For the second term, we aim to show that
\[
\|\frac{1}{n}\sum_{i=1}^{n}l_{\bbeta2}(\bZ_{i},L_{i},R_{i};\hat{\bbeta}^{o},\hat{\Lambda}_{\hat{\bbeta}^{o}})-\sum_{i=1}^{n}l_{\bbeta2}(\bZ_{i},L_{i},R_{i};\bbeta_0,\Lambda_0)\|_{\infty}=O_{P}(n^{-1/3}),
\]
since $\theta_n \gg n^{-1/3}$. Note that
\begin{align*}
\frac{1}{n}\sum_{i=1}^{n}l_{\bbeta2}(\bZ_{i},L_{i},R_{i};\hat{\bbeta}^{o},\hat{\Lambda}_{\hat{\bbeta}^{o}}) & -\sum_{i=1}^{n}l_{\bbeta2}(\bZ_{i},L_{i},R_{i};\bbeta_{0},\Lambda_{0})\\
 & =\frac{1}{n}\begin{pmatrix}Z_{1,s+1} & Z_{2,s+1} & \dots & Z_{n,s+1}\\
Z_{1,s+2} & Z_{2,s+2} & \dots & Z_{n,s+2}\\
\dots & \dots & \dots & \dots\\
Z_{1,p} & Z_{2,p} & \dots & Z_{n,p}
\end{pmatrix}\begin{pmatrix}a_{1}(\hat{\bbeta}^{o},\hat{\Lambda}_{\hat{\bbeta}^{o}})-a_{1}(\bbeta_{0},\Lambda_{0})\\
a_{2}(\hat{\bbeta}^{o},\hat{\Lambda}_{\hat{\bbeta}^{o}})-a_{2}(\bbeta_{0},\Lambda_{0})\\
\dots\\
a_{n}(\hat{\bbeta}^{o},\hat{\Lambda}_{\hat{\bbeta}^{o}})-a_{n}(\bbeta_{0},\Lambda_{0})
\end{pmatrix},
\end{align*}
where $Z_{i,j}$ refers to the $j$th component of $\bZ_{i}$ and $|Z_{i,j}|\le C_{\bZ}$. So it suffices to show that
\begin{equation} \label{eq.S6}
 \frac{1}{n}\sum_{i=1}^{n}|a_{i}(\hat{\bbeta}^{o},\hat{\Lambda}_{\hat{\bbeta}^{o}})-a_{i}(\bbeta_0,\Lambda_0)|=O_{P}(n^{-1/3}).
\end{equation}

As shown in Lemma \ref{lemma.eb} and Lemma \ref{lemma.Hessian}, $(\hat{\bbeta}^{o},\hat{\Lambda}_{\hat{\bbeta}^{o}})$ is consistent for $(\bbeta_0,\Lambda_0)$. Under Conditions 1--4, $q(L_{i},\bZ_{i};\hat{\bbeta}^{o},\hat{\Lambda}_{\hat{\bbeta}^{o}}),q(R_{i},\bZ_{i};\hat{\bbeta}^{o},\hat{\Lambda}_{\hat{\bbeta}^{o}}),q(L_{i},\bZ_{i};\bbeta_0,\Lambda_0),q(R_{i},\bZ_{i};\bbeta_0,\Lambda_0)$ are bounded below from $0$ and bounded above uniformly for all $i=1,\dots,n$, and the denominators of $a_{i}(\hat{\bbeta}^{o},\hat{\Lambda}_{\hat{\bbeta}^{o}})$ and $a_{i}(\bbeta_0,\Lambda_0)$ are bounded below from $0$ uniformly for all $i=1,\dots,n$. So there exists a constant $C$ such that for all $i=1,\dots,n$,
\begin{align*}
|a_{i}(\hat{\bbeta}^{o},\hat{\Lambda}_{\hat{\bbeta}^{o}})-a_{i}(\bbeta_0,\Lambda_0)| \le & C|q(L_{i},\bZ_{i};\hat{\bbeta}^{o},\hat{\Lambda}_{\hat{\bbeta}^{o}})-q(L_{i},\bZ_{i};\bbeta_0,\Lambda_0)| \\
&~+C|q(R_{i},\bZ_{i};\hat{\bbeta}^{o},\hat{\Lambda}_{\hat{\bbeta}^{o}})-q(R_{i},\bZ_{i};\bbeta_0,\Lambda_0)|.
\end{align*}
Let
\begin{align*}
B(\bbeta,\Lambda;L_{i},R_{i},\bZ_{i})=&~|q(L_{i},\bZ_{i};\bbeta,\Lambda)-q(L_{i},\bZ_{i};\bbeta_0,\Lambda_0)| \\
&~+|q(R_{i},\bZ_{i};\bbeta,\Lambda)-q(R_{i},\bZ_{i};\bbeta_0,\Lambda_0)|.
\end{align*}
To show (\ref{eq.S6}), it suffices to show that
\[
\mathbb{P}_{n}B(\hat{\bbeta}^{o},\hat{\Lambda}_{\hat{\bbeta}^{o}})=O_{P}(n^{-1/3}).
\]
Now note that since $\theta_{n}\rho^{\prime}( d_{n}^{\ast})=O(n^{-1/3})$,
\begin{align*}
\mathbb{P}_{n}l(\hat{\bbeta}^{o},\hat{\Lambda}_{\hat{\bbeta}^{o}})-\mathbb{P}_{n}l(\bbeta_{0},\Lambda_{0}) & \ge \bone^{\top}p_{\theta_{n},\alpha}(\hat{\bbeta}^{o})-\bone^{\top}p_{\theta_{n},\alpha}(\bbeta_{0})\\
 & =O_{P}\left[\left(n^{-1/2}+\theta_{n}\rho^{\prime}( d_{n}^{\ast})\right)\theta_{n}\rho^{\prime}( d_{n}^{\ast})\right]=O_{P}(n^{-2/3}).
\end{align*}
By similar arguments to the proof of Lemma A1 in \cite{zeng2016maximum}, the Hellinger distance between $(\hat{\bbeta}^{o},\hat{\Lambda}_{\hat{\bbeta}^{o}})$ and $(\bbeta_0,\Lambda_0)$ is $O_{P}(n^{-1/3})$. Then applying the mean value theorem, we have
\[
E\left[\sum_{j=1}^{K}\left\{ q(V_{j},\bZ;\hat{\bbeta}^{o},\hat{\Lambda}_{\hat{\bbeta}^{o}})-q(V_{j},\bZ;\bbeta_0,\Lambda_0)\right\} ^{2}\right]=O_{P}(n^{-2/3}),
\]
where we use $\bZ$ and $(V_{1},\dots,V_{K})$ to denote the covariates and inspection times for a general subject. This implies
\[
PB(\hat{\bbeta}^{o},\hat{\Lambda}_{\hat{\bbeta}^{o}})=O_{P}(n^{-1/3}).
\]
Since $q(L_{i},\bZ_{i};\hat{\bbeta}^{o},\hat{\Lambda}_{\hat{\bbeta}^{o}})$ and $q(R_{i},\bZ_{i};\hat{\bbeta}^{o},\hat{\Lambda}_{\hat{\bbeta}^{o}})$ belong to a Donsker class as shown in proof of Theorem 1 of \cite{zeng2016maximum}, $B(\hat{\bbeta}^{o},\hat{\Lambda}_{\hat{\bbeta}^{o}})$ also belongs to a Donsker class. By the asymptotic continuity of empirical process indexed by a Donsker class of functions \citep[Lemma 19.24,][]{van2000asymptotic}, $\sqrt{n}(\mathbb{P}_{n}-P)B(\hat{\bbeta}^{o},\hat{\Lambda}_{\hat{\bbeta}^{o}})=o_{P}(1)$. So $\mathbb{P}_{n}B(\hat{\bbeta}^{o},\hat{\Lambda}_{\hat{\bbeta}^{o}})=O_{P}(n^{-1/3})$.
\end{proof}

\begin{lemma} \label{lemma.Hessian}
Under Conditions 1--5 and assuming that $\theta_{n}\rho^{\prime}( d_{n}^{\ast})=o(1)$, $\sup_{t\in[\zeta,\tau]}|\hat{\Lambda}_{\hat{\bbeta}^{o}}(t)-\Lambda_0(t)|\to0$ almost surely and $-\ddot{l}_{pn11}(\hat{\bbeta}^{o})$ is consistent for $\mathcal{I}_{11}$, where $\hat{\bbeta}^{o}$ is a local maximizer of $\mathcal{C}(\beta)$ given by Lemma \ref{lemma.eb} and $\hat{\Lambda}_{\bbeta}=\argmax_{\Lambda}l_{n}(\bbeta,\Lambda)$.
\end{lemma}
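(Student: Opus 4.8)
The plan is to treat the two assertions in turn, first establishing the uniform consistency of the profile cumulative hazard estimator and then deducing the convergence of the profile Hessian. Throughout I would exploit that, by Lemma \ref{lemma.eb} together with the assumption $\theta_{n}\rho^{\prime}(d_{n}^{\ast})=o(1)$, the penalized oracle estimator satisfies $\hat{\bbeta}^{o}\to\bbeta_0$ and is supported on $\mathcal{M}^{\ast}$, so the pair $(\hat{\bbeta}^{o},\hat{\Lambda}_{\hat{\bbeta}^{o}})$ lives in the $s$-dimensional oracle model, where the nonparametric maximum likelihood theory of \cite{zeng2016maximum} applies essentially verbatim since $s$ is fixed.

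For the first claim I would adapt the almost-sure consistency argument for the NPMLE in \cite{zeng2016maximum}. The steps are: (i) show $\hat{\Lambda}_{\hat{\bbeta}^{o}}(\tau)$ stays bounded, ruling out escape to infinity, using Conditions 2--4 and the boundedness of the covariates and of $\Lambda_0$; (ii) invoke Helly's selection theorem to extract, from any subsequence, a further subsequence along which $\hat{\Lambda}_{\hat{\bbeta}^{o}}$ converges pointwise to a nondecreasing limit $\Lambda^{\ast}$; (iii) use the defining likelihood inequality $\mathbb{P}_{n}l(\hat{\bbeta}^{o},\hat{\Lambda}_{\hat{\bbeta}^{o}})\ge\mathbb{P}_{n}l(\bbeta_{0},\Lambda_{0}^{(n)})$ for a step-function approximation $\Lambda_{0}^{(n)}$ of $\Lambda_0$, together with the Glivenko--Cantelli property of the relevant class, to pass to the limit and conclude $Pl(\bbeta_{0},\Lambda^{\ast})\ge Pl(\bbeta_{0},\Lambda_{0})$; (iv) by the identifiability of the model (the Kullback--Leibler argument of \cite{zeng2016maximum}) deduce $\Lambda^{\ast}=\Lambda_{0}$. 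Uniqueness of the limit forces convergence of the whole sequence, and since $\Lambda_0$ is continuous on $[\zeta,\tau]$ while each $\hat{\Lambda}_{\hat{\bbeta}^{o}}$ is monotone, a P\'olya-type argument upgrades pointwise to the asserted uniform convergence; the almost-sure mode is inherited from the a.s. Glivenko--Cantelli step.

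For the second claim I would avoid differentiating the implicit map $\bbeta\mapsto\hat{\Lambda}_{\bbeta}$ directly and instead lean on the profile-likelihood machinery of \cite{murphy2000profile}, which is available here because $s$ is fixed (remark A1 of \cite{zeng2016maximum}). First I would show $-\ddot{l}_{pn11}(\bbeta_{01})\to\mathcal{I}_{11}$: the second derivative of the profile log-likelihood at the truth converges to the efficient information $\mathcal{I}_{11}=P[\tilde{l}_1\tilde{l}_1^{\top}]$ by combining the quadratic expansion of \cite{murphy2000profile} with the uniform consistency of $\hat{\Lambda}_{\bbeta_{01}}$ from the first part and a Glivenko--Cantelli argument over the class of second-derivative functions. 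Then I would bound $\|{-\ddot{l}_{pn11}(\hat{\bbeta}^{o})}-({-\ddot{l}_{pn11}(\bbeta_{01})})\|=o_{P}(1)$ using $\hat{\bbeta}^{o}\to\bbeta_{0}$, the uniform consistency $\hat{\Lambda}_{\hat{\bbeta}^{o}}\to\Lambda_{0}$ just established, and the asymptotic equicontinuity of the empirical process indexed by the Donsker class of these functions \citep[Lemma 19.24,][]{van2000asymptotic}. Together these give $-\ddot{l}_{pn11}(\hat{\bbeta}^{o})=\mathcal{I}_{11}+o_{P}(1)$.

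The main obstacle is the profile structure in the second claim: because $\hat{\Lambda}_{\bbeta}$ depends on $\bbeta$, the profile Hessian is not the naive Hessian with $\Lambda$ held fixed but carries a correction term involving $\partial\hat{\Lambda}_{\bbeta}/\partial\bbeta$, whose limit is the least favorable direction $\boldsymbol{h}_{01}$; this is precisely why the curvature equals the efficient information $\mathcal{I}_{11}$ rather than the ordinary information. Handling this rigorously calls for either the explicit second-order expansion of \cite{murphy2000profile} (my preferred route, since it encodes the correction automatically) or a uniform-in-$\bbeta$ control of $\hat{\Lambda}_{\bbeta}$ and its derivative over a neighborhood of $\bbeta_0$. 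Establishing the Donsker property and the requisite uniform consistency of $\hat{\Lambda}_{\bbeta}$ over such a neighborhood, rather than only at the single point $\hat{\bbeta}^{o}$, is the delicate technical step.
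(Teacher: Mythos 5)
Your first claim (the almost-sure uniform consistency of $\hat{\Lambda}_{\hat{\bbeta}^{o}}$) follows essentially the paper's route: both arguments reduce to the standard NPMLE consistency proof of \cite{zeng2016maximum,zeng2017maximum} (boundedness, Helly selection, likelihood inequality, identifiability), driven by a likelihood inequality at $(\hat{\bbeta}^{o},\hat{\Lambda}_{\hat{\bbeta}^{o}})$. One correction there: since $\hat{\bbeta}^{o}$ maximizes the \emph{penalized} criterion $\mathcal{C}$, the inequality $\mathbb{P}_{n}l(\hat{\bbeta}^{o},\hat{\Lambda}_{\hat{\bbeta}^{o}})\ge\mathbb{P}_{n}l(\bbeta_{0},\Lambda_{0}^{(n)})$ you call ``defining'' is not automatic; it holds only up to the penalty difference $\bone^{\top}p_{\theta_{n},\alpha}(\hat{\bbeta}^{o})-\bone^{\top}p_{\theta_{n},\alpha}(\bbeta_{0})$, which the paper bounds explicitly and shows is negligible under $\theta_{n}\rho^{\prime}(d_{n}^{\ast})=o(1)$. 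That is patchable and not the main issue.

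The second claim is where your proposal has a genuine gap. The Murphy--van der Vaart machinery you prefer controls profile log-likelihood \emph{values}: it gives a linear term plus $-\tfrac{n}{2}(\bbeta_{1}-\bbeta_{01})^{\top}\mathcal{I}_{11}(\bbeta_{1}-\bbeta_{01})$ up to an error $o_{P}\{(\|\bbeta_{1}-\bbeta_{01}\|+n^{-1/2})^{2}\}$. This pins down second \emph{difference quotients} at step sizes of order $n^{-1/2}$ (which is exactly how \cite{murphy2000profile} propose to estimate the efficient information numerically), but it says nothing about the exact Hessian $\ddot{l}_{pn11}(\hat{\bbeta}^{o})$: at step sizes $o(n^{-1/2})$ the quadratic term is swamped by the $o_{P}$ remainder, so pointwise second-derivative information cannot be extracted from the expansion. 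And the exact Hessian is what is needed downstream, since Lemma \ref{lemma.oracle} requires $\lambda_{\min}\left(-\ddot{l}_{pn11}(\hat{\bbeta}^{o})\right)>\theta_{n}\kappa(\rho,\hat{\bbeta}_{1}^{o})$ at the point $\hat{\bbeta}^{o}$ itself. Moreover, your fallback of ``a Glivenko--Cantelli argument over the class of second-derivative functions'' presupposes writing $\ddot{l}_{pn11}$ as an empirical average over a fixed function class, which is impossible without differentiating through the implicit map $\bbeta\mapsto\hat{\Lambda}_{\bbeta}$ --- precisely the step you chose to avoid. The paper does exactly that: using the stationarity condition $\mathbb{P}_{n}l_{\Lambda}(\hat{\bbeta}^{o},\hat{\Lambda}_{\hat{\bbeta}^{o}})(\boldsymbol{h}_{1})=0$ it obtains $\ddot{l}_{pn11}(\hat{\bbeta}^{o})=\mathbb{P}_{n}\{l_{\bbeta\bbeta,1}+l_{\bbeta\Lambda,1}(\dot{\Lambda}_{\hat{\bbeta}^{o}})\}$, establishes the key convergence $\|\dot{\Lambda}_{\hat{\bbeta}^{o}}+\boldsymbol{h}_{01}\|_{L_{2}(\Lambda_{0})}=o_{P}(1)$ by the arguments of Theorem 3 in \cite{zeng2017maximum}, replaces $\mathbb{P}_{n}$ by $P$ via Glivenko--Cantelli classes, and finishes with the information identity $-P\{l_{\bbeta\bbeta,1}-l_{\bbeta\Lambda,1}(\boldsymbol{h}_{01})\}=P\left[\{l_{\bbeta,1}(\bbeta_{0},\Lambda_{0})-l_{\Lambda}(\bbeta_{0},\Lambda_{0})(\boldsymbol{h}_{01})\}^{\otimes2}\right]=\mathcal{I}_{11}$. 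Your proposal contains no substitute for the ingredient $\dot{\Lambda}_{\hat{\bbeta}^{o}}\to-\boldsymbol{h}_{01}$, so the Hessian half of the lemma does not go through as written; the ``uniform-in-$\bbeta$ control of $\hat{\Lambda}_{\bbeta}$ and its derivative'' you mention only in passing is in fact the necessary argument, not an optional alternative.
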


\begin{proof}
Let $\dot{\Lambda}_{\tilde{\bbeta}}=\partial(d\hat{\Lambda}_{\bbeta})/\partial\bbeta_1\big|_{\bbeta=\tilde{\bbeta}} (d\hat{\Lambda}_{\tilde{\bbeta}})^{-1} $. We only consider the derivative of $d\hat{\Lambda}_{\bbeta}$ with respect to $\bbeta_{1}$ since  the $j$th component of $\hat{\bbeta}^{o}$ is always $0$ for those $j\notin\mathcal{M}_{\ast}$. By Lemma \ref{lemma.eb}, the assumption that $\theta_{n}\rho^{\prime}( d_{n}^{\ast})=o(1)$, and Condition 5, $\hat{\bbeta}^{o}$ is consistent for $\bbeta_0$ and
\begin{align*}
\mathbb{P}_{n}l(\hat{\bbeta}^{o},\hat{\Lambda}_{\hat{\bbeta}^{o}})-\mathbb{P}_{n}l(\bbeta_0,\Lambda_0)&\ge \bone^{\top}p_{\theta_{n},\alpha}(\hat{\bbeta}^{o})-\bone^{\top}p_{\theta_{n},\alpha}(\bbeta_{0}) \\
&=O_{P}\left[\left(n^{-1/2}+\theta_{n}\rho^{\prime}( d_{n}^{\ast})\right)\theta_{n}\rho^{\prime}( d_{n}^{\ast})\right]=o_{P}(1).
\end{align*}
Then by similar arguments to the proofs of Theorem 1 and Theorem 3 in \cite{zeng2017maximum} respectively, $\sup_{t\in[\zeta,\tau]}|\hat{\Lambda}_{\hat{\bbeta}^{o}}(t)-\Lambda_0(t)|\to0$ almost surely and $\|\dot{\Lambda}_{\hat{\bbeta}^{o}}+\boldsymbol{h}_{01}\|_{L_{2}(\Lambda_0)}=o_{P}(1)$, where $\|\boldsymbol{h}\|_{L_2(\Lambda_0)}^2=\sum_{j=1}^s\int_{\zeta}^{\tau}h_j^{2}(t)d\Lambda_0(t)$.

The following proof of the consistency of $-\ddot{l}_{pn11}(\hat{\bbeta}^{o})$ is similar to the proof of Theorem 4 in \cite{zhang2023post}.

By the definitions of $l_{pn}(\bbeta)$ and $\hat{\Lambda}_{\hat{\bbeta}^{o}}$,
\[
\dot{l}_{pn1}(\hat{\bbeta}^{o})=\mathbb{P}_{n}l_{\bbeta,1}(\hat{\bbeta}^{o},\hat{\Lambda}_{\hat{\bbeta}^{o}})+\mathbb{P}_{n}l_{\Lambda}(\hat{\bbeta}^{o},\hat{\Lambda}_{\hat{\bbeta}^{o}})(\dot{\Lambda}_{\hat{\bbeta}^{o}})
\]
and
\[
\mathbb{P}_{n}l_{\Lambda}(\hat{\bbeta}^{o},\hat{\Lambda}_{\hat{\bbeta}^{o}})(\boldsymbol{h}_{1})=0
\]
for any $\boldsymbol{h}_1$. So
\[
\ddot{l}_{pn11}(\hat{\bbeta}^{o})=\mathbb{P}_{n}\left\{ l_{\bbeta\bbeta,1}(\hat{\bbeta}^{o},\hat{\Lambda}_{\hat{\bbeta}^{o}})+l_{\bbeta\Lambda,1}(\hat{\bbeta}^{o},\hat{\Lambda}_{\hat{\bbeta}^{o}})(\dot{\Lambda}_{\hat{\bbeta}^{o}})\right\} ,
\]
where $l_{\bbeta\bbeta,1}$ is the
derivative of $l_{\bbeta,1}$ with respect to $\bbeta_{1}$, and $l_{\bbeta\Lambda,1}$ is the derivative of $l_{\bbeta,1}$ along the submodel $d\Lambda_{\epsilon,\boldsymbol{h}}=(1+\epsilon \boldsymbol{h})d\Lambda$. Due to the boundedness of $\hat{\bbeta}^{o}$ and $\hat{\Lambda}_{\hat{\bbeta}^{o}}$, it is easy to show that $l_{\bbeta\bbeta,1}(\hat{\bbeta}^{o},\hat{\Lambda}_{\hat{\bbeta}^{o}})$
and $l_{\bbeta\Lambda,1}(\hat{\bbeta}^{o},\hat{\Lambda}_{\hat{\bbeta}^{o}})(\dot{\Lambda}_{\hat{\bbeta}^{o}})$ belong to two Glivenko-Cantelli classes respectively by similar arguments to the proof of Lemma 1 in \cite{zeng2017maximum} (see also remark A1 in \cite{zeng2016maximum}). Thus,
\begin{align*}
-\ddot{l}_{pn11}(\hat{\bbeta}^{o}) & =-P\left\{ l_{\bbeta\bbeta,1}(\hat{\bbeta}^{o},\hat{\Lambda}_{\hat{\bbeta}^{o}})+l_{\bbeta\Lambda,1}(\hat{\bbeta}^{o},\hat{\Lambda}_{\hat{\bbeta}^{o}})(\dot{\Lambda}_{\hat{\bbeta}^{o}})\right\} +o_{P}(1)\\
 & =-P\left\{ l_{\bbeta\bbeta,1}(\hat{\bbeta}^{o},\hat{\Lambda}_{\hat{\bbeta}^{o}})-l_{\bbeta\Lambda,1}(\hat{\bbeta}^{o},\hat{\Lambda}_{\hat{\bbeta}^{o}})(\boldsymbol{h}_{01})\right\} +o_{P}(1)\\
 & =P\left[\left\{ l_{\bbeta,1}(\bbeta_0,\Lambda_0)-l_{\Lambda}(\bbeta_0,\Lambda_0)(\boldsymbol{h}_{01})\right\} ^{\otimes2}\right]+o_{P}(1)\\
 & =\mathcal{I}_{11}+o_{P}(1).
\end{align*}
Here we have used
\[
P\left\{ l_{\bbeta,1}(\bbeta_0,\Lambda_0)l_{\Lambda}(\bbeta_0,\Lambda_0)(\boldsymbol{h}_{01})^{\top}\right\} =P\left[\left\{ l_{\Lambda}(\bbeta_0,\Lambda_0)(\boldsymbol{h}_{01})\right\} ^{\otimes2}\right]
\]
from the proof of Theorem 2 in \cite{zeng2016maximum}.
\end{proof}

\subsection{Proof of Theorem 2}
We first show that
\begin{equation}
\mathbb{G}_{n}\tilde{l}_{1}=\sqrt{n}\theta_{n}\rho^{\prime}(|\hat{\bbeta}_{1}^{o}|)\circ\text{sgn}(\hat{\bbeta}_{1}^{o})+\sqrt{n}\mathcal{I}_{11}(\hat{\bbeta}_{1}^{o}-\bbeta_{01})+o_{P}(1) \label{lemma.oracle.2}
\end{equation}
by following the proof of Theorem 2 in \cite{zeng2016maximum} and the proof of Lemma 2 in \cite{zhang2023post}. By similar arguments to the proof of Lemma 2 of \cite{zeng2017maximum}, $l_{\bbeta,1}(\hat{\bbeta}^{o},\hat{\Lambda}_{\hat{\bbeta}^{o}})$ and $l_{\Lambda}(\hat{\bbeta}^{o},\hat{\Lambda}_{\hat{\bbeta}^{o}})(\dot{\Lambda}_{\hat{\bbeta}^{o}})$ belong to two Donsker classes respectively.

According to Lemmas \ref{lemma.eb} and \ref{lemma.Hessian}, $(\hat{\bbeta}^{o},\hat{\Lambda}_{\hat{\bbeta}^{o}})$ is consistent for $(\bbeta_0,\Lambda_0)$. In the proof of Lemma \ref{lemma.Hessian}, we have also shown $\|\dot{\Lambda}_{\hat{\bbeta}^{o}}+\boldsymbol{h}_{01}\|_{L_{2}(\Lambda_0)}=o_{P}(1)$. Then, by the asymptotic continuity of empirical process indexed by a Donsker class of functions \citep[Lemma 19.24,][]{van2000asymptotic},
\[
\mathbb{G}_{n}l_{\Lambda}(\hat{\bbeta}^{o},\hat{\Lambda}_{\hat{\bbeta}^{o}})(\dot{\Lambda}_{\hat{\bbeta}^{o}})=\mathbb{G}_{n}l_{\Lambda}(\bbeta_0,\Lambda_0)(-\boldsymbol{h}_{01})+o_{P}(1)=\mathbb{G}_{n}l_{\Lambda}(\hat{\bbeta}^{o},\hat{\Lambda}_{\hat{\bbeta}^{o}})(-\boldsymbol{h}_{01})+o_{P}(1),
\]
Also we have
\begin{align*}
\mathbb{G}_{n}l_{\bbeta,1}(\hat{\bbeta}^{o},\hat{\Lambda}_{\hat{\bbeta}^{o}}) & =n^{1/2}\mathbb{P}_{n}l_{\bbeta,1}(\hat{\bbeta}^{o},\hat{\Lambda}_{\hat{\bbeta}^{o}})-n^{1/2}\left\{ Pl_{\bbeta,1}(\hat{\bbeta}^{o},\hat{\Lambda}_{\hat{\bbeta}^{o}})-Pl_{\bbeta,1}(\bbeta_0,\Lambda_0)\right\} ,\\
\mathbb{G}_{n}l_{\Lambda}(\hat{\bbeta}^{o},\hat{\Lambda}_{\hat{\bbeta}^{o}})(-\boldsymbol{h}_{01}) & =-n^{1/2}\left\{ Pl_{\Lambda}(\hat{\bbeta}^{o},\hat{\Lambda}_{\hat{\bbeta}^{o}})(-\boldsymbol{h}_{01})-Pl_{\Lambda}(\bbeta_0,\Lambda_0)(-\boldsymbol{h}_{01})\right\} .
\end{align*}
Since $\theta_{n}\rho^{\prime}( d_{n}^{\ast})=O(n^{-1/3})$,
\begin{align*}
\mathbb{P}_{n}l(\hat{\bbeta}^{o},\hat{\Lambda}_{\hat{\bbeta}^{o}})-\mathbb{P}_{n}l(\bbeta_{0},\Lambda_{0}) & \ge \bone^{\top}p_{\theta_{n},\alpha}(\hat{\bbeta}^{o})-\bone^{\top}p_{\theta_{n},\alpha}(\bbeta_{0})\\
 & =O_{P}\left[\left(n^{-1/2}+\theta_{n}\rho^{\prime}( d_{n}^{\ast})\right)\theta_{n}\rho^{\prime}( d_{n}^{\ast})\right]=O_{P}(n^{-2/3}).
\end{align*}
Then, by the proof of Theorem 2 in \cite{zeng2016maximum}, the second-order terms of the Taylor expansions of $-n^{1/2}\left\{ Pl_{\bbeta,1}(\hat{\bbeta}^{o},\hat{\Lambda}_{\hat{\bbeta}^{o}})-Pl_{\bbeta,1}(\bbeta_0,\Lambda_0)\right\}$ and\\ $-n^{1/2}\left\{ Pl_{\Lambda}(\hat{\bbeta}^{o},\hat{\Lambda}_{\hat{\bbeta}^{o}})(-\boldsymbol{h}_{01})-Pl_{\Lambda}(\bbeta_0,\Lambda_0)(-\boldsymbol{h}_{01})\right\}$ are both $o_{P}(1)$. So
\begin{align*}
&~\mathbb{G}_{n}l_{\bbeta,1}(\hat{\bbeta}^{o},\hat{\Lambda}_{\hat{\bbeta}^{o}}) \\
=&~ \sqrt{n}\theta_{n}\rho^{\prime}(\hat{\bbeta}_{1}^{o})\circ\text{sgn}(\hat{\bbeta}_{1}^{o})-n^{1/2}\left[Pl_{\beta\beta,1}(\hat{\bbeta}^{o}-\bbeta_0)+Pl_{\beta\Lambda,1}\left\{ d(\hat{\Lambda}_{\hat{\bbeta}^{o}}-\Lambda_0)/d\Lambda_0\right\} \right] +o_{P}(1),\\
&~\mathbb{G}_{n}l_{\Lambda} (\hat{\bbeta}^{o},\hat{\Lambda}_{\hat{\bbeta}^{o}})(\hat{\Lambda}_{\hat{\bbeta}^{o}}) \\
=&~-n^{1/2}\left[Pl_{\Lambda\bbeta,1}(-\boldsymbol{h}_{01})(\hat{\bbeta}^{o}-\bbeta_0)+Pl_{\Lambda\Lambda}\left\{ -\boldsymbol{h}_{01},d(\hat{\Lambda}_{\hat{\bbeta}^{o}}-\Lambda_0)/d\Lambda_0\right\} \right] +o_{P}(1),
\end{align*}
where $l_{\Lambda\bbeta,1}(\boldsymbol{h}_1)$ is the derivative of $l_{\Lambda}(\boldsymbol{h}_1)$ with respect to $\bbeta_{1}$, and $l_{\Lambda\Lambda}\{\boldsymbol{h}_1,d(\hat{\Lambda}_{\hat{\bbeta}^{o}}-\Lambda_0)/d\Lambda_0\}$ is the derivative of $\ell_{\Lambda}(\boldsymbol{h}_1)$ along the submodel $\Lambda_{\epsilon}=\Lambda_0+\epsilon(\hat{\Lambda}_{\hat{\bbeta}^{o}}-\Lambda_0)$. By similar arguments to the proof of Theorem 2 of \cite{zeng2016maximum}, we have
\[
\mathbb{G}_{n}\left[l_{\bbeta,1}(\hat{\bbeta}^{o},\hat{\Lambda}_{\hat{\bbeta}^{o}})+l_{\Lambda}(\hat{\bbeta}^{o},\hat{\Lambda}_{\hat{\bbeta}^{o}})(\dot{\Lambda}_{\hat{\bbeta}^{o}})\right]=\sqrt{n}\theta_{n}\rho^{\prime}(\hat{\bbeta}_{1}^{o})\circ\text{sgn}(\hat{\bbeta}_{1}^{o})+\sqrt{n}\mathcal{I}_{11}(\hat{\bbeta}_{1}^{o}-\bbeta_{01})+o_{P}(1),
\]
By the asymptotic continuity,
\[
\mathbb{G}_{n}\left[l_{\bbeta,1}(\hat{\bbeta}^{o},\hat{\Lambda}_{\hat{\bbeta}^{o}})+l_{\Lambda}(\hat{\bbeta}^{o},\hat{\Lambda}_{\hat{\bbeta}^{o}})(\hat{\Lambda}_{\hat{\bbeta}^{o}})\right]=\mathbb{G}_{n}\tilde{l}_{1}+o_{P}(1).
\]
This completes the proof of equation (\ref{lemma.oracle.2}).

Thus,
\[
\sqrt{n}(\hat{\bbeta}_{1}^{o}-\bbeta_{01})=\mathcal{I}_{11}^{-1}\mathbb{G}_{n}\tilde{l}_{1}-\sqrt{n}\theta_{n}\mathcal{I}_{11}^{-1}\rho^{\prime}(|\hat{\bbeta}_{1}^{o}|)\circ\text{sgn}(\hat{\bbeta}_{1}^{o})+o_{P}(1),
\]
where $\circ$ is the Hadamard product. Since $\theta_{n}\rho^{\prime}( d_{n}^{\ast})=o(n^{-1/2})$,
\[
\sqrt{n}(\hat{\bbeta}_{1}^{o}-\bbeta_{01})=\mathcal{I}_{11}^{-1}\mathbb{G}_{n}\tilde{l}_{1}+o_{P}(1).
\]
The result follows from the central limit theorem.

\subsection{Proof of Theorem 3}
Same as in the proof of Theorem 1, we define a penalized oracle estimator $\hat{\bbeta}^{o}=(\hat{\bbeta}_{1}^{o\top},\bzero^{\top})^{\top}$ to be a local maximizer of $\mathcal{C}(\bbeta)$ constrained on the $s$-dimensional subspace $E=\left\{ \bbeta\in\mathbb{R}^{p}:\bbeta_{2}=\bzero\right\} $. For the adaptive lasso, $\rho^{\prime}( \hat{\beta}^{o}_{1j})=O_{P}(1)$ $(j=1,\ldots,s)$ by the condition that the initial estimator $\tilde{\bbeta}$ is zero-consistent with rate $r_n$. So, by similar arguments to the proof of Lemma \ref{lemma.eb}, there exists a penalized oracle estimator $\hat{\bbeta}^{o}$ such that $\|\hat{\bbeta}^{o}-\bbeta_0\|_{2}=O_{P}(n^{-1/2}+\theta_{n})$. Following the proof of Lemma \ref{lemma.oracle} and taking advantage of the convexity of adaptive lasso penalty, $\hat{\bbeta}^{o}$ is a strict local maximizer of $\mathcal{C}(\beta)$ on $\mathbb{R}^p$ as long as
\begin{equation} \label{eq.alasso.1}
|\dot{l}_{pn,j}(\hat{\bbeta}^{o})| <\theta_{n}/|\tilde{\beta}_j|,\quad \text{for all } j\in \mathcal{M}_{\ast}^c
\end{equation}
where $\dot{l}_{pn,j}(\hat{\bbeta}^{o})$ is the $j$th component of $\dot{l}_{pn}(\hat{\bbeta}^{o})$.

Since $\theta_{n}=O(n^{-1/3})$, $\theta_n r_n\gg n^{\max\{-1/3,(\delta-1)/2\}}$ and $r_{n}\max_{j\in\mathcal{M}_{\ast}^{c}}|\tilde{\beta}_{j}|=O_{p}(1)$, by the same arguments as in the proof of Lemma \ref{lemma.oracle} except choosing $u_n$ such that $n^{\delta/2}\ll u_{n}\ll n^{1/2}\theta_n r_{n}$, we can show that
\[
\frac{|\dot{l}_{pn,j}(\hat{\bbeta}^{o})|}{\theta_{n}/|\tilde{\beta}_{j}|}\lesssim\frac{\|\dot{l}_{pn2}(\hat{\bbeta}^{o})\|_{\infty}}{\theta_{n}r_{n}}=o_{p}(1).
\]
(\ref{eq.alasso.1}) then holds with probability tending to one.

\subsection{Proof of Theorem 4}
The result comes from similar arguments to the proof of Theorem 2.

\subsection{Proof of Proposition 1}
The profile log likelihood $l_{pn}(\bbeta)$ has been shown to be concave in the proof of Theorem 1 in \cite{li2020adaptive}. Then the proof of Proposition 1 can follow that of Proposition 3(a) in \cite{fan2011nonconcave}.

\section{Additional simulation results}\label{A1}

% \subsection{Results of variable selection performance}

% % Table generated by Excel2LaTeX from sheet 'Sheet3'
% \begin{table}[H]
% \renewcommand{\arraystretch}{0.8}
%   \centering
%   \caption{The performance metrics with $\bbeta^{(6)}$ and $\rho = 0$}
%   \footnotesize
%     \begin{tabular}{cccccccccc}
%                \hline
%     Method & $L_1$ loss    & $L_2$ loss    & FP    & FN    &       & $L_1$ loss    & $L_2$ loss    & FP    & FN  \vspace{0.2cm}\\
%               \hline
%           & \multicolumn{4}{l}{\textit{(n = 500; p = 3,000)}} &       & \multicolumn{4}{l}{\textit{(n = 1,000; p = 3,000)}} \\
%     Oracle & 0.70  & 0.14  & 0.00  & 0.00  &       & 0.47  & 0.06  & 0.00  & 0.00 \\
%     Lasso & 4.25  & 3.22  & 0.56  & 0.06  &       & 3.17  & 1.79  & 0.69  & 0.00 \\
%     Adaptive Lasso & 0.86  & 0.22  & 0.24  & 0.06  &       & 0.57  & 0.09  & 0.28  & 0.00 \\
%     MCP   & 0.76  & 0.17  & 0.11  & 0.01  &       & 0.49  & 0.07  & 0.07  & 0.00 \\
%     SCAD  & 1.02  & 0.32  & 0.41  & 0.04  &       & 0.47  & 0.06  & 0.01  & 0.00 \\
%     MCP+mid-point & 566.45 & 3102.36 & 185.34 & 0.00  &       & 697.68 & 2118.27 & 405.18 & 0.00 \\
%     SCAD+mid-point & 742.46 & 5234.21 & 218.10 & 0.00  &       & 1013.54 & 4247.96 & 476.74 & 0.00 \vspace{0.2cm}\\
%           & \multicolumn{4}{l}{\textit{(n = 500; p = 10,000)}} &       & \multicolumn{4}{l}{\textit{(n = 1,000; p = 10,000)}} \\
%     Oracle & 0.70  & 0.13  & 0.00  & 0.00  &       & 0.50  & 0.07  & 0.00  & 0.00 \\
%     Lasso & 4.65  & 3.84  & 0.66  & 0.10  &       & 3.36  & 2.01  & 0.49  & 0.00 \\
%     Adaptive Lasso & 0.94  & 0.26  & 0.26  & 0.10  &       & 0.59  & 0.10  & 0.22  & 0.00 \\
%     MCP   & 0.78  & 0.18  & 0.10  & 0.04  &       & 0.51  & 0.07  & 0.04  & 0.00 \\
%     SCAD  & 1.25  & 0.48  & 0.52  & 0.07  &       & 0.52  & 0.07  & 0.04  & 0.00 \\
%     MCP+mid-point & 443.21 & 2402.43 & 151.61 & 0.01  &       & 566.44 & 1945.19 & 319.30 & 0.00 \\
%     SCAD+mid-point & 584.12 & 4202.93 & 192.65 & 0.03  &       & 752.87 & 3360.71 & 400.52 & 0.00 \\
%                \hline
%     \end{tabular}%
%   \label{tab:addlabel}%
% \end{table}%

% \begin{table}[H]
% \renewcommand{\arraystretch}{0.8}
%   \centering
%   \caption{The performance metrics with $\bbeta^{(6)}$ and $\rho = 0.8$}
%   \footnotesize
%     \begin{tabular}{cccccccccc}
%                \hline
%     Method & $L_1$ loss    & $L_2$ loss    & FP    & FN    &       & $L_1$ loss    & $L_2$ loss    & FP    & FN  \vspace{0.2cm}\\
%               \hline
%           & \multicolumn{4}{l}{\textit{(n = 500; p = 3,000)}} &       & \multicolumn{4}{l}{\textit{(n = 1,000; p = 3,000)}} \\
%     Oracle & 0.68  & 0.13  & 0.00  & 0.00  &       & 0.49  & 0.07  & 0.00  & 0.00 \\
%     Lasso & 4.25  & 3.18  & 1.03  & 0.04  &       & 3.20  & 1.80  & 1.00  & 0.00 \\
%     Adaptive Lasso & 0.84  & 0.20  & 0.23  & 0.04  &       & 0.57  & 0.09  & 0.20  & 0.00 \\
%     MCP   & 0.78  & 0.18  & 0.15  & 0.01  &       & 0.50  & 0.07  & 0.04  & 0.00 \\
%     SCAD  & 0.95  & 0.28  & 0.37  & 0.01  &       & 0.50  & 0.07  & 0.00  & 0.00 \\
%     MCP+mid-point & 569.98 & 2921.50 & 191.16 & 0.01  &       & 580.88 & 1411.56 & 384.18 & 0.00 \\
%     SCAD+mid-point & 751.96 & 4957.79 & 220.40 & 0.01  &       & 924.29 & 3290.46 & 465.12 & 0.00\vspace{0.2cm}\\
%           & \multicolumn{4}{l}{\textit{(n = 500; p = 10,000)}} &       & \multicolumn{4}{l}{\textit{(n = 1,000; p = 10,000)}} \\
%     Oracle & 0.69  & 0.13  & 0.00  & 0.00  &       & 0.50  & 0.07  & 0.00  & 0.00 \\
%     Lasso & 4.67  & 3.86  & 0.70  & 0.08  &       & 3.43  & 2.09  & 0.76  & 0.00 \\
%     Adaptive Lasso & 0.86  & 0.23  & 0.22  & 0.08  &       & 0.57  & 0.09  & 0.19  & 0.00 \\
%     MCP   & 0.77  & 0.18  & 0.12  & 0.02  &       & 0.51  & 0.07  & 0.03  & 0.00 \\
%     SCAD  & 1.21  & 0.45  & 0.54  & 0.04  &       & 0.50  & 0.07  & 0.04  & 0.00 \\
%     MCP+mid-point & 458.52 & 2476.63 & 154.46 & 0.01  &       & 568.48 & 1824.74 & 326.01 & 0.00 \\
%     SCAD+mid-point & 591.92 & 4115.99 & 194.65 & 0.02  &       & 744.20 & 3072.40 & 400.50 & 0.00 \\
%                \hline
%     \end{tabular}%
%   \label{tab:addlabel}%
% \end{table}%

% % Table generated by Excel2LaTeX from sheet 'Sheet3'
% \begin{table}[H]
% \renewcommand{\arraystretch}{0.8}
%   \centering
%   \caption{The performance metrics with $\bbeta^{(12)}$ and $\rho = 0$}
%   \footnotesize
%     \begin{tabular}{cccccccccc}
%                \hline
%     Method & $L_1$ loss    & $L_2$ loss    & FP    & FN    &       & $L_1$ loss    & $L_2$ loss    & FP    & FN  \vspace{0.2cm}\\
%               \hline
%           & \multicolumn{4}{l}{\textit{(n = 500; p = 3,000)}} &       & \multicolumn{4}{l}{\textit{(n = 1,000; p = 3,000)}} \\
%     Oracle & 1.54  & 0.33  & 0.00  & 0.00  &       & 1.01  & 0.14  & 0.00  & 0.00 \\
%     Lasso & 10.18 & 9.40  & 1.73  & 0.58  &       & 7.74  & 5.44  & 1.78  & 0.00 \\
%     Adaptive Lasso & 2.19  & 0.70  & 0.59  & 0.60  &       & 1.17  & 0.19  & 0.34  & 0.01 \\
%     MCP   & 1.83  & 0.49  & 0.32  & 0.15  &       & 1.06  & 0.16  & 0.12  & 0.00 \\
%     SCAD  & 2.11  & 0.66  & 0.69  & 0.17  &       & 1.08  & 0.18  & 0.14  & 0.02 \\
%     MCP+mid-point & 513.13 & 3020.44 & 152.35 & 0.12  &       & 663.89 & 2405.45 & 341.81 & 0.00 \\
%     SCAD+mid-point & 689.18 & 5434.40 & 180.22 & 0.10  &       & 907.21 & 4346.67 & 399.56 & 0.00\vspace{0.2cm}\\
%           & \multicolumn{4}{l}{\textit{(n = 500; p = 10,000)}} &       & \multicolumn{4}{l}{\textit{(n = 1,000; p = 10,000)}} \\
%    Oracle & 1.58  & 0.36  & 0.00  & 0.00  &       & 1.03  & 0.15  & 0.00  & 0.00 \\
%     Lasso & 11.05 & 11.12 & 1.17  & 1.15  &       & 8.35  & 6.33  & 1.38  & 0.02 \\
%     Adaptive Lasso & 2.69  & 1.11  & 0.53  & 1.15  &       & 1.16  & 0.19  & 0.32  & 0.02 \\
%     MCP   & 1.96  & 0.58  & 0.34  & 0.32  &       & 1.08  & 0.17  & 0.12  & 0.00 \\
%     SCAD  & 2.53  & 0.94  & 1.34  & 0.30  &       & 1.19  & 0.21  & 0.16  & 0.01 \\
%     MCP+mid-point & 419.74 & 2558.35 & 123.40 & 0.21  &       & 514.53 & 1974.59 & 266.96 & 0.00 \\
%     SCAD+mid-point & 548.70 & 4313.82 & 161.02 & 0.23  &       & 707.91 & 3699.56 & 340.62 & 0.00 \\
%                \hline
%     \end{tabular}%
%   \label{tab:addlabel}%
% \end{table}%

% \begin{table}[H]
% \renewcommand{\arraystretch}{0.8}
%   \centering
%   \caption{The performance metrics with $\bbeta^{(12)}$ and $\rho = 0.8$}
%   \footnotesize
%     \begin{tabular}{cccccccccc}
%                \hline
%     Method & $L_1$ loss    & $L_2$ loss    & FP    & FN    &       & $L_1$ loss    & $L_2$ loss    & FP    & FN  \vspace{0.2cm}\\
%               \hline
%           & \multicolumn{4}{l}{\textit{(n = 500; p = 3,000)}} &       & \multicolumn{4}{l}{\textit{(n = 1,000; p = 3,000)}} \\
%     Oracle & 1.48  & 0.31  & 0.00  & 0.00  &       & 1.03  & 0.15  & 0.00  & 0.00 \\
%     Lasso & 10.41 & 9.78  & 2.26  & 0.70  &       & 7.88  & 5.60  & 2.54  & 0.00 \\
%     Adaptive Lasso & 2.33  & 0.78  & 0.60  & 0.72  &       & 1.21  & 0.19  & 0.32  & 0.00 \\
%     MCP   & 1.81  & 0.51  & 0.36  & 0.23  &       & 1.07  & 0.16  & 0.12  & 0.00 \\
%     SCAD  & 2.04  & 0.62  & 0.84  & 0.20  &       & 1.11  & 0.18  & 0.08  & 0.00 \\
%     MCP+mid-point & 521.50 & 2933.64 & 155.95 & 0.24  &       & 596.02 & 1806.61 & 333.74 & 0.03 \\
%     SCAD+mid-point & 693.35 & 5112.42 & 182.19 & 0.25  &       & 866.10 & 3613.94 & 395.12 & 0.03 \vspace{0.2cm}\\
%           & \multicolumn{4}{l}{\textit{(n = 500; p = 10,000)}} &       & \multicolumn{4}{l}{\textit{(n = 1,000; p = 10,000)}} \\
%     Oracle & 1.53  & 0.33  & 0.00  & 0.00  &       & 1.00  & 0.14  & 0.00  & 0.00 \\
%     Lasso & 11.23 & 11.41 & 1.64  & 1.28  &       & 8.46  & 6.46  & 2.04  & 0.03 \\
%     Adaptive Lasso & 2.85  & 1.28  & 0.62  & 1.30  &       & 1.17  & 0.20  & 0.26  & 0.04 \\
%     MCP   & 1.94  & 0.57  & 0.36  & 0.34  &       & 1.07  & 0.17  & 0.14  & 0.01 \\
%     SCAD  & 2.37  & 0.85  & 1.29  & 0.30  &       & 1.12  & 0.19  & 0.13  & 0.01 \\
%     MCP+mid-point & 430.22 & 2591.45 & 126.31 & 0.28  &       & 539.90 & 2060.85 & 274.32 & 0.01 \\
%     SCAD+mid-point & 557.27 & 4331.07 & 160.98 & 0.26  &       & 698.82 & 3397.31 & 337.95 & 0.02 \\
%                \hline
%     \end{tabular}%
%   \label{tab:addlabel}%
% \end{table}%

\subsection{Estimation results of individual nonzero coefficients}

% Table generated by Excel2LaTeX from sheet 'Sheet3'
\begin{table}[H]
\renewcommand{\arraystretch}{0.8}
\centering
\caption{Empirical mean estimates of $\bbeta^{(6)}$ with $p = 3000$ and $\rho = 0$. The values in parentheses are the empirical standard errors.}
\footnotesize
\begin{tabular}{cccccccc}
\toprule
Method & $\beta_1$    & $\beta_2$    & $\beta_3$    & $\beta_4$    & $\beta_5$   & $\beta_6$ \\
\midrule

Truth & -1.40  & -0.83 & -1.64 & 0.69  & 1.39  & 1.65 \vspace{0.2cm} \\
\multicolumn{7}{l}{\textit{(n = 500; p = 3,000)}} \\
    Oracle &  -1.42(0.17) &  -0.84(0.12) &  -1.69(0.18) &   0.70(0.10) &   1.42(0.13) &   1.69(0.19) \\
    Lasso &  -0.65(0.13) &  -0.30(0.10) &  -0.77(0.13) &   0.21(0.09) &   0.69(0.11) &   0.75(0.15) \\
    Adaptive Lasso &  -1.40(0.18) &  -0.81(0.14) &  -1.66(0.18) &   0.65(0.17) &   1.41(0.14) &   1.66(0.20) \\
    MCP   &  -1.43(0.18) &  -0.84(0.12) &  -1.69(0.18) &   0.70(0.11) &   1.43(0.13) &   1.69(0.19) \\
    SCAD  &  -1.38(0.17) &  -0.73(0.25) &  -1.64(0.19) &   0.52(0.28) &   1.38(0.14) &   1.62(0.22) \\
    MCP+mid-point & -16.55(4.04) &  -9.33(2.62) & -19.28(4.32) &   7.66(2.31) &  14.37(2.68) &  16.69(4.00) \\
    SCAD+mid-point & -21.17(4.94) & -12.27(3.69) & -25.00(5.93) &   9.89(3.18) &  18.89(3.82) &  21.43(5.12) \vspace{0.2cm}\\
\multicolumn{7}{l}{\textit{(n = 1000; p = 3,000)}} \\
    Oracle &  -1.40(0.11) &  -0.84(0.08) &  -1.67(0.14) &   0.71(0.07) &   1.41(0.09) &   1.68(0.11) \\
    Lasso &  -0.82(0.08) &  -0.44(0.06) &  -0.98(0.11) &   0.35(0.06) &   0.88(0.08) &   1.01(0.09) \\
    Adaptive Lasso &  -1.39(0.11) &  -0.83(0.08) &  -1.66(0.14) &   0.70(0.07) &   1.41(0.10) &   1.67(0.11) \\
    MCP   &  -1.40(0.11) &  -0.84(0.08) &  -1.67(0.14) &   0.70(0.07) &   1.41(0.09) &   1.68(0.11) \\
    SCAD  &  -1.39(0.11) &  -0.83(0.08) &  -1.66(0.14) &   0.70(0.08) &   1.41(0.09) &   1.67(0.11) \\
    MCP+mid-point & -13.11(2.70) &  -7.54(1.88) & -15.49(3.09) &   5.93(1.60) &  11.57(2.08) &  13.28(2.60) \\
    SCAD+mid-point & -17.58(3.38) & -10.27(2.47) & -20.91(4.12) &   8.19(2.12) &  15.76(3.06) &  18.15(3.78) \\
\bottomrule
\end{tabular}%
\label{tab:addlabel}%
\end{table}%

% Table generated by Excel2LaTeX from sheet 'Sheet3'
\begin{table}[H]
\renewcommand{\arraystretch}{0.8}
\centering
\caption{Empirical mean estimates of $\bbeta^{(6)}$ with $p = 10000$ and $\rho = 0$. The values in parentheses are the empirical standard errors.}
\footnotesize
\begin{tabular}{cccccccc}
\toprule
Method & $\beta_1$    & $\beta_2$    & $\beta_3$    & $\beta_4$    & $\beta_5$   & $\beta_6$ \\
\midrule
Truth & -1.40  & -0.83 & -1.64 & 0.69  & 1.39  & 1.65 \vspace{0.2cm} \\
\multicolumn{7}{l}{\textit{(n = 500; p = 10,000)}} \\
    Oracle &  -1.42(0.15) &  -0.84(0.12) &  -1.67(0.19) &   0.70(0.10) &   1.43(0.14) &   1.69(0.17) \\
    Lasso &  -0.59(0.11) &  -0.26(0.10) &  -0.68(0.13) &   0.17(0.09) &   0.64(0.12) &   0.68(0.17) \\
    Adaptive Lasso &  -1.39(0.16) &  -0.81(0.15) &  -1.63(0.20) &   0.62(0.24) &   1.41(0.14) &   1.66(0.17) \\
    MCP   &  -1.42(0.15) &  -0.84(0.13) &  -1.67(0.19) &   0.70(0.10) &   1.43(0.14) &   1.69(0.17) \\
    SCAD  &  -1.34(0.18) &  -0.65(0.31) &  -1.58(0.21) &   0.43(0.30) &   1.36(0.15) &   1.60(0.20) \\
    MCP+mid-point & -15.64(3.67) &  -8.89(2.67) & -18.31(4.08) &   7.30(2.34) &  14.32(3.04) &  16.33(3.53) \\
    SCAD+mid-point & -20.60(4.87) & -11.66(3.52) & -23.80(5.72) &   9.49(2.81) &  18.46(4.12) &  21.24(4.76) \vspace{0.2cm}\\
\multicolumn{7}{l}{\textit{(n = 1000; p = 10,000)}} \\
    Oracle &  -1.42(0.10) &  -0.84(0.07) &  -1.64(0.13) &   0.70(0.07) &   1.40(0.08) &   1.67(0.11) \\
    Lasso &  -0.80(0.08) &  -0.41(0.06) &  -0.91(0.10) &   0.32(0.06) &   0.83(0.07) &   0.95(0.09) \\
    Adaptive Lasso &  -1.42(0.10) &  -0.83(0.07) &  -1.63(0.13) &   0.69(0.07) &   1.40(0.09) &   1.66(0.11) \\
    MCP   &  -1.42(0.11) &  -0.84(0.07) &  -1.64(0.13) &   0.70(0.07) &   1.40(0.08) &   1.67(0.11) \\
    SCAD  &  -1.42(0.10) &  -0.83(0.07) &  -1.63(0.13) &   0.69(0.08) &   1.39(0.08) &   1.66(0.11) \\
    MCP+mid-point & -13.55(2.40) &  -8.01(1.76) & -15.86(2.95) &   6.23(1.35) &  12.09(2.08) &  13.91(2.47) \\
    SCAD+mid-point & -17.85(3.61) & -10.68(2.57) & -20.85(4.43) &   8.24(2.07) &  15.93(3.30) &  18.55(3.91) \\
\bottomrule
\end{tabular}%
\label{tab:addlabel}%
\end{table}%

% Table generated by Excel2LaTeX from sheet 'Sheet3'
\begin{table}[H]
\renewcommand{\arraystretch}{0.8}
\centering
\caption{Empirical mean estimates of $\bbeta^{(6)}$ with $p = 3000$ and $\rho = 0.8$. The values in parentheses are the empirical standard errors.}
\footnotesize
\begin{tabular}{cccccccc}
\toprule
Method & $\beta_1$    & $\beta_2$    & $\beta_3$    & $\beta_4$    & $\beta_5$   & $\beta_6$ \\
\midrule
Truth & -1.40  & -0.83 & -1.64 & 0.69  & 1.39  & 1.65 \vspace{0.2cm} \\
\multicolumn{7}{l}{\textit{(n = 500; p = 3,000)}} \\
    Oracle &  -1.41(0.16) &  -0.84(0.11) &  -1.69(0.18) &   0.71(0.11) &   1.41(0.12) &   1.71(0.17) \\
    Lasso &  -0.62(0.12) &  -0.29(0.09) &  -0.76(0.14) &   0.21(0.09) &   0.67(0.11) &   0.77(0.14) \\
    Adaptive Lasso &  -1.37(0.17) &  -0.80(0.13) &  -1.64(0.20) &   0.66(0.16) &   1.38(0.13) &   1.67(0.19) \\
    MCP   &  -1.41(0.16) &  -0.84(0.11) &  -1.69(0.18) &   0.71(0.12) &   1.41(0.12) &   1.71(0.17) \\
    SCAD  &  -1.37(0.18) &  -0.74(0.24) &  -1.64(0.20) &   0.56(0.27) &   1.36(0.13) &   1.66(0.18) \\
    MCP+mid-point & -14.86(3.80) &  -8.69(2.74) & -17.49(4.42) &   6.89(2.42) &  13.32(2.95) &  15.38(3.50) \\
    SCAD+mid-point & -19.10(4.65) & -11.31(3.42) & -23.08(5.50) &   9.10(3.16) &  17.31(3.76) &  20.19(4.87) \vspace{0.2cm}\\
\multicolumn{7}{l}{\textit{(n = 1000; p = 3,000)}} \\
    Oracle &  -1.43(0.11) &  -0.84(0.08) &  -1.66(0.12) &   0.71(0.07) &   1.41(0.10) &   1.68(0.11) \\
    Lasso &  -0.83(0.09) &  -0.43(0.07) &  -0.96(0.09) &   0.35(0.06) &   0.87(0.09) &   1.00(0.10) \\
    Adaptive Lasso &  -1.41(0.11) &  -0.82(0.08) &  -1.64(0.12) &   0.69(0.07) &   1.39(0.10) &   1.65(0.11) \\
    MCP   &  -1.43(0.11) &  -0.84(0.08) &  -1.66(0.12) &   0.71(0.07) &   1.41(0.10) &   1.68(0.11) \\
    SCAD  &  -1.42(0.11) &  -0.84(0.08) &  -1.66(0.11) &   0.70(0.07) &   1.41(0.10) &   1.67(0.11) \\
    MCP+mid-point & -10.01(1.98) &  -5.81(1.55) & -11.67(2.55) &   4.56(1.32) &   8.93(1.83) &  10.13(2.17) \\
    SCAD+mid-point & -14.31(3.03) &  -8.21(2.11) & -16.60(3.52) &   6.49(1.99) &  12.60(2.74) &  14.15(2.85) \\
\bottomrule
\end{tabular}%
\label{tab:addlabel}%
\end{table}%

% Table generated by Excel2LaTeX from sheet 'Sheet3'
\begin{table}[H]
\renewcommand{\arraystretch}{0.8}
\centering
\caption{Empirical mean estimates of $\bbeta^{(6)}$ with $p = 10000$ and $\rho = 0.8$. The values in parentheses are the empirical standard errors.}
\footnotesize
\begin{tabular}{cccccccc}
\toprule
Method & $\beta_1$    & $\beta_2$    & $\beta_3$    & $\beta_4$    & $\beta_5$   & $\beta_6$ \\
\midrule
Truth & -1.40  & -0.83 & -1.64 & 0.69  & 1.39  & 1.65 \vspace{0.2cm} \\
\multicolumn{7}{l}{\textit{(n = 500; p = 10,000)}} \\
   Oracle &  -1.44(0.17) &  -0.84(0.12) &  -1.68(0.19) &   0.71(0.11) &   1.43(0.13) &   1.71(0.16) \\
    Lasso &  -0.58(0.12) &  -0.25(0.10) &  -0.68(0.14) &   0.17(0.09) &   0.63(0.11) &   0.68(0.15) \\
    Adaptive Lasso &  -1.40(0.19) &  -0.80(0.14) &  -1.63(0.20) &   0.63(0.22) &   1.40(0.14) &   1.66(0.18) \\
    MCP   &  -1.44(0.17) &  -0.84(0.13) &  -1.68(0.19) &   0.71(0.12) &   1.43(0.13) &   1.71(0.16) \\
    SCAD  &  -1.36(0.20) &  -0.64(0.31) &  -1.59(0.21) &   0.46(0.31) &   1.36(0.14) &   1.62(0.17) \\
    MCP+mid-point & -15.50(3.64) &  -8.86(2.63) & -17.92(4.26) &   7.18(2.10) &  13.78(2.75) &  15.84(3.43) \\
    SCAD+mid-point & -19.92(4.96) & -11.20(3.18) & -23.20(5.57) &   9.00(2.88) &  17.70(3.81) &  20.19(4.59) \vspace{0.2cm}\\
\multicolumn{7}{l}{\textit{(n = 1000; p = 10,000)}} \\
    Oracle &  -1.42(0.11) &  -0.84(0.10) &  -1.65(0.13) &   0.70(0.07) &   1.40(0.09) &   1.68(0.12) \\
    Lasso &  -0.79(0.08) &  -0.41(0.08) &  -0.91(0.10) &   0.32(0.06) &   0.83(0.07) &   0.95(0.10) \\
    Adaptive Lasso &  -1.41(0.11) &  -0.82(0.10) &  -1.64(0.13) &   0.69(0.07) &   1.39(0.10) &   1.66(0.12) \\
    MCP   &  -1.42(0.10) &  -0.84(0.10) &  -1.66(0.13) &   0.70(0.07) &   1.40(0.09) &   1.68(0.12) \\
    SCAD  &  -1.41(0.11) &  -0.83(0.11) &  -1.65(0.13) &   0.69(0.09) &   1.39(0.09) &   1.67(0.12) \\
    MCP+mid-point & -12.80(2.78) &  -7.49(1.82) & -14.90(2.90) &   5.85(1.31) &  11.39(2.05) &  12.83(2.40) \\
    SCAD+mid-point & -16.06(3.01) &  -9.43(2.26) & -19.26(3.85) &   7.48(1.69) &  14.32(2.52) &  16.41(3.13) \\
\bottomrule
\end{tabular}%
\label{tab:addlabel}%
\end{table}%

% Table generated by Excel2LaTeX from sheet 'Sheet6'
\begin{table}[H]
\renewcommand{\arraystretch}{0.8}
\centering
\caption{Empirical mean estimates of $\bbeta^{(12)}$ with $p = 3000$, $n = 500$ and $\rho = 0$. The values in parentheses are the empirical standard errors.}
\footnotesize
\begin{tabular}{ccccccc}
\toprule
Method & $\beta_1$ & $\beta_2$ & $\beta_3$ & $\beta_4$ & $\beta_5$ & $\beta_6$ \\
\midrule
Truth & -1.4  & -0.83 & -1.64 & 0.69  & 1.39  & 1.65 \\
    Oracle &  -1.45(0.17) &  -0.84(0.12) &  -1.71(0.21) &   0.72(0.10) &   1.44(0.12) &   1.71(0.18) \\
    Lasso &  -0.48(0.12) &  -0.22(0.08) &  -0.59(0.13) &   0.15(0.09) &   0.56(0.10) &   0.58(0.14) \\
    Adaptive Lasso &  -1.35(0.19) &  -0.76(0.15) &  -1.60(0.22) &   0.59(0.24) &   1.35(0.15) &   1.59(0.23) \\
    MCP   &  -1.44(0.17) &  -0.83(0.12) &  -1.69(0.20) &   0.71(0.11) &   1.42(0.13) &   1.68(0.17) \\
    SCAD  &  -1.36(0.19) &  -0.76(0.20) &  -1.61(0.20) &   0.53(0.29) &   1.34(0.13) &   1.58(0.18) \\
    MCP+mid-point & -13.38(3.30) &  -7.70(2.33) & -15.99(4.01) &   6.36(2.14) &  12.60(2.58) &  14.89(3.38) \\
    SCAD+mid-point & -16.97(4.14) &  -9.86(2.79) & -20.10(4.82) &   8.17(2.67) &  16.16(3.29) &  18.94(3.88) \\
&       &       &       &       &       &  \\
& $\beta_7$ & $\beta_8$ & $\beta_9$ & $\beta_{10}$ & $\beta_{11}$ & $\beta_{12}$ \\
\midrule
Truth & -0.52 & 0.86  & -1.23 & 1.18  & -1.97 & -1.68 \\
    Oracle &  -0.53(0.12) &   0.90(0.12) &  -1.27(0.13) &   1.22(0.13) &  -2.03(0.22) &  -1.71(0.22) \\
    Lasso &  -0.06(0.06) &   0.24(0.10) &  -0.44(0.09) &   0.43(0.10) &  -0.69(0.15) &  -0.54(0.15) \\
    Adaptive Lasso &  -0.30(0.27) &   0.82(0.17) &  -1.19(0.15) &   1.14(0.14) &  -1.90(0.24) &  -1.58(0.28) \\
    MCP   &  -0.46(0.23) &   0.89(0.12) &  -1.26(0.13) &   1.21(0.12) &  -2.02(0.22) &  -1.69(0.23) \\
    SCAD  &  -0.21(0.23) &   0.80(0.22) &  -1.18(0.13) &   1.13(0.13) &  -1.91(0.22) &  -1.59(0.26) \\
    MCP+mid-point &  -4.36(2.56) &   7.94(2.00) & -11.99(2.80) &  10.78(2.49) & -19.43(5.07) & -16.29(4.38) \\
    SCAD+mid-point &  -5.68(3.31) &  10.17(2.56) & -14.86(3.41) &  13.66(2.93) & -24.30(5.65) & -20.56(5.33) \\
\bottomrule
\end{tabular}%
\label{tab:addlabel}%
\end{table}%

% Table generated by Excel2LaTeX from sheet 'Sheet6'
\begin{table}[H]
\renewcommand{\arraystretch}{0.8}
\centering
\caption{Empirical mean estimates of $\bbeta^{(12)}$ with $p = 3000$, $n = 1000$ and $\rho = 0$. The values in parentheses are the empirical standard errors.}
\footnotesize
\begin{tabular}{ccccccc}
\toprule
Method & $\beta_1$ & $\beta_2$ & $\beta_3$ & $\beta_4$ & $\beta_5$ & $\beta_6$ \\
\midrule
Truth & -1.4  & -0.83 & -1.64 & 0.69  & 1.39  & 1.65 \\
    Oracle &  -1.40(0.11) &  -0.85(0.08) &  -1.67(0.14) &   0.71(0.08) &   1.41(0.09) &   1.68(0.11) \\
    Lasso &  -0.68(0.08) &  -0.37(0.06) &  -0.83(0.10) &   0.28(0.07) &   0.75(0.07) &   0.85(0.09) \\
    Adaptive Lasso &  -1.37(0.12) &  -0.82(0.09) &  -1.62(0.15) &   0.68(0.09) &   1.37(0.09) &   1.62(0.12) \\
    MCP   &  -1.40(0.11) &  -0.85(0.08) &  -1.66(0.14) &   0.70(0.08) &   1.41(0.09) &   1.67(0.11) \\
    SCAD  &  -1.37(0.11) &  -0.83(0.08) &  -1.64(0.14) &   0.69(0.08) &   1.38(0.09) &   1.63(0.11) \\
    MCP+mid-point & -11.60(2.28) &  -6.62(1.50) & -13.49(2.61) &   5.41(1.26) &  10.81(2.01) &  12.38(2.26) \\
    SCAD+mid-point & -15.16(3.03) &  -8.82(2.27) & -17.75(3.63) &   7.26(1.80) &  14.30(2.60) &  16.41(3.30) \\
&       &       &       &       &       &  \\
& $\beta_7$ & $\beta_8$ & $\beta_9$ & $\beta_{10}$ & $\beta_{11}$ & $\beta_{12}$ \\
\midrule
Truth & -0.52 & 0.86  & -1.23 & 1.18  & -1.97 & -1.68 \\
    Oracle &  -0.52(0.08) &   0.88(0.07) &  -1.24(0.09) &   1.19(0.08) &  -1.98(0.16) &  -1.70(0.14) \\
    Lasso &  -0.16(0.06) &   0.40(0.06) &  -0.62(0.07) &   0.61(0.06) &  -0.97(0.11) &  -0.81(0.10) \\
    Adaptive Lasso &  -0.49(0.10) &   0.85(0.08) &  -1.21(0.09) &   1.15(0.09) &  -1.93(0.17) &  -1.66(0.15) \\
    MCP   &  -0.52(0.08) &   0.88(0.07) &  -1.24(0.09) &   1.18(0.08) &  -1.98(0.16) &  -1.70(0.14) \\
    SCAD  &  -0.43(0.18) &   0.86(0.07) &  -1.22(0.09) &   1.16(0.08) &  -1.94(0.16) &  -1.67(0.14) \\
    MCP+mid-point &  -4.01(1.32) &   6.74(1.32) & -10.19(1.86) &   9.17(1.92) & -16.32(3.08) & -13.74(2.73) \\
    SCAD+mid-point &  -5.50(1.76) &   8.97(2.13) & -13.44(2.50) &  12.10(2.43) & -21.63(4.26) & -18.25(3.70) \\
\bottomrule
\end{tabular}%
\label{tab:addlabel}%
\end{table}%

\begin{table}[H]
\renewcommand{\arraystretch}{0.8}
\centering
\caption{Empirical mean estimates of $\bbeta^{(12)}$ with $p = 10000$, $n = 500$ and $\rho = 0$. The values in parentheses are the empirical standard errors.}
\footnotesize
\begin{tabular}{ccccccc}
\toprule
Method & $\beta_1$ & $\beta_2$ & $\beta_3$ & $\beta_4$ & $\beta_5$ & $\beta_6$ \\
\midrule
Truth & -1.4  & -0.83 & -1.64 & 0.69  & 1.39  & 1.65 \\
    Oracle &  -1.45(0.18) &  -0.86(0.12) &  -1.70(0.21) &   0.73(0.11) &   1.47(0.13) &   1.71(0.16) \\
    Lasso &  -0.40(0.12) &  -0.17(0.08) &  -0.48(0.13) &   0.10(0.08) &   0.50(0.10) &   0.47(0.15) \\
    Adaptive Lasso &  -1.33(0.20) &  -0.74(0.24) &  -1.56(0.23) &   0.51(0.32) &   1.35(0.16) &   1.55(0.26) \\
    MCP   &  -1.43(0.17) &  -0.84(0.12) &  -1.67(0.21) &   0.70(0.16) &   1.44(0.13) &   1.67(0.17) \\
    SCAD  &  -1.33(0.19) &  -0.72(0.24) &  -1.55(0.21) &   0.43(0.32) &   1.33(0.14) &   1.53(0.20) \\
    MCP+mid-point & -13.10(3.17) &  -7.61(2.10) & -15.26(3.51) &   6.29(2.06) &  12.31(2.42) &  14.43(3.04) \\
    SCAD+mid-point & -16.26(3.85) &  -9.45(2.64) & -19.24(5.06) &   7.65(2.56) &  15.50(3.23) &  17.98(4.00) \\
&       &       &       &       &       &  \\
& $\beta_7$ & $\beta_8$ & $\beta_9$ & $\beta_{10}$ & $\beta_{11}$ & $\beta_{12}$ \\
\midrule
Truth & -0.52 & 0.86  & -1.23 & 1.18  & -1.97 & -1.68 \\
    Oracle &  -0.55(0.11) &   0.89(0.12) &  -1.28(0.13) &   1.22(0.12) &  -2.03(0.22) &  -1.75(0.21) \\
    Lasso &  -0.04(0.05) &   0.17(0.10) &  -0.37(0.09) &   0.36(0.10) &  -0.58(0.13) &  -0.47(0.13) \\
    Adaptive Lasso &  -0.25(0.29) &   0.73(0.30) &  -1.17(0.16) &   1.11(0.16) &  -1.86(0.26) &  -1.60(0.24) \\
    MCP   &  -0.45(0.25) &   0.88(0.13) &  -1.26(0.13) &   1.20(0.12) &  -2.00(0.21) &  -1.72(0.21) \\
    SCAD  &  -0.16(0.20) &   0.73(0.27) &  -1.16(0.14) &   1.10(0.12) &  -1.87(0.22) &  -1.60(0.21) \\
    MCP+mid-point &  -4.29(2.60) &   7.72(2.10) & -11.50(2.58) &  10.53(2.29) & -18.25(4.18) & -15.88(3.85) \\
    SCAD+mid-point &  -5.26(3.35) &   9.85(2.78) & -14.54(3.14) &  13.12(2.91) & -23.03(5.41) & -20.05(5.24) \\
\bottomrule
\end{tabular}%
\label{tab:addlabel}%
\end{table}%

% Table generated by Excel2LaTeX from sheet 'Sheet6'
\begin{table}[H]
\renewcommand{\arraystretch}{0.8}
\centering
\caption{Empirical mean estimates of $\bbeta^{(12)}$ with $p = 10000$, $n = 1000$ and $\rho = 0$. The values in parentheses are the empirical standard errors.}
\footnotesize
\begin{tabular}{ccccccc}
\toprule
Method & $\beta_1$ & $\beta_2$ & $\beta_3$ & $\beta_4$ & $\beta_5$ & $\beta_6$ \\
\midrule
Truth & -1.4  & -0.83 & -1.64 & 0.69  & 1.39  & 1.65 \\
    Oracle &  -1.43(0.12) &  -0.84(0.07) &  -1.65(0.13) &   0.70(0.08) &   1.40(0.09) &   1.68(0.13) \\
    Lasso &  -0.64(0.08) &  -0.33(0.06) &  -0.75(0.09) &   0.25(0.07) &   0.69(0.07) &   0.78(0.11) \\
    Adaptive Lasso &  -1.39(0.12) &  -0.81(0.08) &  -1.60(0.14) &   0.67(0.09) &   1.36(0.09) &   1.62(0.14) \\
    MCP   &  -1.42(0.12) &  -0.84(0.08) &  -1.65(0.13) &   0.70(0.08) &   1.40(0.09) &   1.67(0.13) \\
    SCAD  &  -1.39(0.12) &  -0.82(0.08) &  -1.61(0.13) &   0.68(0.10) &   1.36(0.09) &   1.62(0.13) \\
    MCP+mid-point & -11.71(2.35) &  -6.78(1.55) & -13.55(2.56) &   5.49(1.27) &  10.87(1.83) &  12.39(2.28) \\
    SCAD+mid-point & -15.01(3.01) &  -8.61(1.92) & -17.41(3.61) &   7.03(1.75) &  13.90(2.49) &  16.10(3.22) \\
&       &       &       &       &       &  \\
& $\beta_7$ & $\beta_8$ & $\beta_9$ & $\beta_{10}$ & $\beta_{11}$ & $\beta_{12}$ \\
\midrule
Truth & -0.52 & 0.86  & -1.23 & 1.18  & -1.97 & -1.68 \\
    Oracle &  -0.52(0.08) &   0.88(0.08) &  -1.25(0.10) &   1.20(0.09) &  -2.00(0.15) &  -1.70(0.13) \\
    Lasso &  -0.13(0.06) &   0.35(0.07) &  -0.58(0.07) &   0.57(0.07) &  -0.90(0.11) &  -0.74(0.09) \\
    Adaptive Lasso &  -0.47(0.13) &   0.84(0.09) &  -1.21(0.11) &   1.16(0.10) &  -1.94(0.16) &  -1.65(0.15) \\
    MCP   &  -0.52(0.09) &   0.87(0.08) &  -1.24(0.10) &   1.20(0.09) &  -1.99(0.15) &  -1.70(0.13) \\
    SCAD  &  -0.39(0.21) &   0.85(0.08) &  -1.22(0.10) &   1.16(0.09) &  -1.95(0.15) &  -1.66(0.13) \\
    MCP+mid-point &  -4.18(1.32) &   6.86(1.56) & -10.09(1.92) &   9.40(1.72) & -16.20(3.31) & -14.00(2.75) \\
    SCAD+mid-point &  -5.39(1.74) &   8.77(2.00) & -13.04(2.61) &  12.12(2.30) & -20.95(4.35) & -17.97(3.84) \\
\bottomrule
\end{tabular}%
\label{tab:addlabel}%
\end{table}%

%%%
% Table generated by Excel2LaTeX from sheet 'Sheet6'
\begin{table}[H]
\renewcommand{\arraystretch}{0.8}
\centering
\caption{Empirical mean estimates of $\bbeta^{(12)}$ with $p = 3000$, $n = 500$ and $\rho = 0.8$. The values in parentheses are the empirical standard errors.}
\footnotesize
\begin{tabular}{ccccccc}
\toprule
Method & $\beta_1$ & $\beta_2$ & $\beta_3$ & $\beta_4$ & $\beta_5$ & $\beta_6$ \\
\midrule
Truth & -1.4  & -0.83 & -1.64 & 0.69  & 1.39  & 1.65 \\
    Oracle &  -1.46(0.17) &  -0.85(0.12) &  -1.69(0.19) &   0.71(0.11) &   1.46(0.13) &   1.73(0.17) \\
    Lasso &  -0.45(0.13) &  -0.20(0.09) &  -0.55(0.11) &   0.12(0.09) &   0.53(0.11) &   0.54(0.16) \\
    Adaptive Lasso &  -1.31(0.20) &  -0.72(0.20) &  -1.53(0.22) &   0.51(0.28) &   1.33(0.15) &   1.54(0.26) \\
    MCP   &  -1.45(0.17) &  -0.84(0.12) &  -1.68(0.20) &   0.69(0.14) &   1.45(0.13) &   1.71(0.17) \\
    SCAD  &  -1.37(0.17) &  -0.76(0.18) &  -1.59(0.20) &   0.54(0.28) &   1.36(0.13) &   1.60(0.17) \\
    MCP+mid-point & -12.52(3.18) &  -7.27(2.30) & -14.83(3.47) &   5.82(2.00) &  11.80(2.54) &  13.83(3.56) \\
    SCAD+mid-point & -15.91(3.96) &  -9.18(2.97) & -18.87(4.73) &   7.48(2.75) &  15.41(3.56) &  17.80(4.31) \\
&       &       &       &       &       &  \\
& $\beta_7$ & $\beta_8$ & $\beta_9$ & $\beta_{10}$ & $\beta_{11}$ & $\beta_{12}$ \\
\midrule
Truth & -0.52 & 0.86  & -1.23 & 1.18  & -1.97 & -1.68 \\
    Oracle &  -0.54(0.11) &   0.91(0.11) &  -1.29(0.13) &   1.21(0.12) &  -2.08(0.25) &  -1.74(0.21) \\
    Lasso &  -0.05(0.06) &   0.22(0.10) &  -0.41(0.10) &   0.40(0.10) &  -0.67(0.16) &  -0.51(0.14) \\
    Adaptive Lasso &  -0.24(0.27) &   0.78(0.21) &  -1.16(0.15) &   1.09(0.13) &  -1.88(0.26) &  -1.57(0.23) \\
    MCP   &  -0.47(0.22) &   0.90(0.11) &  -1.28(0.14) &   1.20(0.11) &  -2.07(0.25) &  -1.74(0.22) \\
    SCAD  &  -0.23(0.24) &   0.82(0.19) &  -1.21(0.14) &   1.13(0.11) &  -1.96(0.25) &  -1.64(0.21) \\
    MCP+mid-point &  -3.96(2.44) &   7.41(2.30) & -10.90(2.82) &  10.07(2.47) & -17.77(4.33) & -15.14(4.16) \\
    SCAD+mid-point &  -5.00(3.36) &   9.47(2.86) & -13.93(3.38) &  12.99(3.18) & -23.32(5.53) & -19.32(5.19) \\
\bottomrule
\end{tabular}%
\label{tab:addlabel}%
\end{table}%

% Table generated by Excel2LaTeX from sheet 'Sheet6'
\begin{table}[H]
\renewcommand{\arraystretch}{0.8}
\centering
\caption{Empirical mean estimates of $\bbeta^{(12)}$ with $p = 3000$, $n = 1000$ and $\rho = 0.8$. The values in parentheses are the empirical standard errors.}
\footnotesize
\begin{tabular}{ccccccc}
\toprule
Method & $\beta_1$ & $\beta_2$ & $\beta_3$ & $\beta_4$ & $\beta_5$ & $\beta_6$ \\
\midrule
Truth & -1.4  & -0.83 & -1.64 & 0.69  & 1.39  & 1.65 \\
    Oracle &  -1.42(0.11) &  -0.84(0.08) &  -1.64(0.12) &   0.71(0.08) &   1.40(0.09) &   1.68(0.12) \\
    Lasso &  -0.68(0.08) &  -0.37(0.06) &  -0.80(0.09) &   0.28(0.06) &   0.74(0.07) &   0.82(0.10) \\
    Adaptive Lasso &  -1.36(0.12) &  -0.80(0.09) &  -1.58(0.12) &   0.66(0.09) &   1.34(0.10) &   1.59(0.12) \\
    MCP   &  -1.42(0.11) &  -0.84(0.08) &  -1.64(0.12) &   0.70(0.08) &   1.40(0.09) &   1.67(0.11) \\
    SCAD  &  -1.40(0.11) &  -0.83(0.08) &  -1.62(0.12) &   0.69(0.08) &   1.37(0.09) &   1.63(0.12) \\
    MCP+mid-point &  -9.38(2.10) &  -5.56(1.47) & -10.86(2.25) &   4.49(1.35) &   8.68(1.69) &  10.16(2.01) \\
    SCAD+mid-point & -12.85(2.78) &  -7.48(1.94) & -14.84(3.27) &   6.23(1.85) &  11.78(2.16) &  13.93(3.14) \\
&       &       &       &       &       &  \\
& $\beta_7$ & $\beta_8$ & $\beta_9$ & $\beta_{10}$ & $\beta_{11}$ & $\beta_{12}$ \\
\midrule
Truth & -0.52 & 0.86  & -1.23 & 1.18  & -1.97 & -1.68 \\
    Oracle &  -0.53(0.07) &   0.87(0.07) &  -1.24(0.10) &   1.19(0.08) &  -1.99(0.16) &  -1.70(0.14) \\
    Lasso &  -0.16(0.05) &   0.38(0.06) &  -0.60(0.07) &   0.60(0.06) &  -0.96(0.11) &  -0.80(0.10) \\
    Adaptive Lasso &  -0.48(0.09) &   0.82(0.08) &  -1.19(0.11) &   1.14(0.08) &  -1.91(0.16) &  -1.62(0.15) \\
    MCP   &  -0.53(0.09) &   0.87(0.07) &  -1.24(0.10) &   1.19(0.08) &  -1.99(0.16) &  -1.70(0.14) \\
    SCAD  &  -0.47(0.16) &   0.85(0.08) &  -1.22(0.10) &   1.17(0.08) &  -1.96(0.15) &  -1.67(0.14) \\
    MCP+mid-point &  -3.45(1.27) &   5.56(1.46) &  -8.17(1.84) &   7.57(1.62) & -13.30(2.89) & -10.94(2.55) \\
    SCAD+mid-point &  -4.70(1.55) &   7.51(1.93) & -11.01(2.26) &  10.48(2.15) & -18.41(3.91) & -15.19(3.43) \\
\bottomrule
\end{tabular}%
\label{tab:addlabel}%
\end{table}%

\begin{table}[H]
\renewcommand{\arraystretch}{0.8}
\centering
\caption{Empirical mean estimates of $\bbeta^{(12)}$ with $p = 10000$, $n = 500$ and $\rho = 0.8$. The values in parentheses are the empirical standard errors.}
\footnotesize
\begin{tabular}{ccccccc}
\toprule
Method & $\beta_1$ & $\beta_2$ & $\beta_3$ & $\beta_4$ & $\beta_5$ & $\beta_6$ \\
\midrule
Truth & -1.4  & -0.83 & -1.64 & 0.69  & 1.39  & 1.65 \\
    Oracle &  -1.43(0.17) &  -0.85(0.12) &  -1.70(0.18) &   0.71(0.12) &   1.43(0.13) &   1.71(0.19) \\
    Lasso &  -0.36(0.11) &  -0.16(0.09) &  -0.48(0.13) &   0.08(0.08) &   0.46(0.12) &   0.43(0.16) \\
    Adaptive Lasso &  -1.26(0.21) &  -0.68(0.27) &  -1.52(0.21) &   0.43(0.33) &   1.27(0.16) &   1.48(0.30) \\
    MCP   &  -1.41(0.17) &  -0.84(0.14) &  -1.68(0.18) &   0.67(0.19) &   1.40(0.14) &   1.67(0.19) \\
    SCAD  &  -1.30(0.18) &  -0.69(0.26) &  -1.54(0.19) &   0.38(0.32) &   1.28(0.14) &   1.51(0.23) \\
    MCP+mid-point & -12.69(3.18) &  -7.54(2.28) & -15.00(3.32) &   5.73(2.11) &  11.84(2.47) &  13.67(2.99) \\
    SCAD+mid-point & -16.08(4.31) &  -9.60(2.98) & -19.45(4.60) &   7.64(2.99) &  15.42(3.47) &  17.81(3.89) \\
&       &       &       &       &       &  \\
& $\beta_7$ & $\beta_8$ & $\beta_9$ & $\beta_{10}$ & $\beta_{11}$ & $\beta_{12}$ \\
\midrule
Truth & -0.52 & 0.86  & -1.23 & 1.18  & -1.97 & -1.68 \\
    Oracle &  -0.54(0.11) &   0.89(0.12) &  -1.27(0.15) &   1.23(0.11) &  -2.03(0.23) &  -1.72(0.20) \\
    Lasso &  -0.03(0.04) &   0.15(0.10) &  -0.34(0.11) &   0.35(0.10) &  -0.56(0.14) &  -0.42(0.13) \\
    Adaptive Lasso &  -0.16(0.26) &   0.67(0.31) &  -1.13(0.19) &   1.09(0.15) &  -1.81(0.24) &  -1.52(0.25) \\
    MCP   &  -0.44(0.25) &   0.87(0.12) &  -1.25(0.15) &   1.20(0.12) &  -2.01(0.24) &  -1.70(0.20) \\
    SCAD  &  -0.15(0.19) &   0.69(0.29) &  -1.14(0.16) &   1.10(0.12) &  -1.84(0.23) &  -1.55(0.26) \\
    MCP+mid-point &  -3.74(2.65) &   7.44(1.90) & -11.12(2.44) &  10.34(2.13) & -17.98(4.18) & -15.23(3.65) \\
    SCAD+mid-point &  -5.04(3.36) &   9.62(2.66) & -14.54(3.56) &  13.61(3.29) & -23.56(5.58) & -19.47(4.88) \\
\bottomrule
\end{tabular}%
\label{tab:addlabel}%
\end{table}%

% Table generated by Excel2LaTeX from sheet 'Sheet6'
\begin{table}[H]
\renewcommand{\arraystretch}{0.8}
\centering
\caption{Empirical mean estimates of $\bbeta^{(12)}$ with $p = 10000$, $n = 1000$ and $\rho = 0.8$. The values in parentheses are the empirical standard errors.}
\footnotesize
\begin{tabular}{ccccccc}
\toprule
Method & $\beta_1$ & $\beta_2$ & $\beta_3$ & $\beta_4$ & $\beta_5$ & $\beta_6$ \\
\midrule
Truth & -1.4  & -0.83 & -1.64 & 0.69  & 1.39  & 1.65 \\
    Oracle &  -1.42(0.12) &  -0.84(0.08) &  -1.65(0.12) &   0.70(0.07) &   1.41(0.09) &   1.67(0.11) \\
    Lasso &  -0.62(0.09) &  -0.32(0.06) &  -0.74(0.08) &   0.24(0.06) &   0.69(0.07) &   0.74(0.09) \\
    Adaptive Lasso &  -1.36(0.13) &  -0.80(0.09) &  -1.58(0.13) &   0.66(0.08) &   1.35(0.10) &   1.59(0.13) \\
    MCP   &  -1.41(0.12) &  -0.84(0.08) &  -1.64(0.12) &   0.70(0.07) &   1.40(0.08) &   1.65(0.11) \\
    SCAD  &  -1.38(0.12) &  -0.82(0.08) &  -1.60(0.12) &   0.67(0.08) &   1.37(0.08) &   1.61(0.11) \\
    MCP+mid-point & -11.13(2.24) &  -6.48(1.51) & -12.79(2.29) &   5.30(1.44) &  10.21(1.81) &  12.01(2.13) \\
    SCAD+mid-point & -13.97(3.10) &  -8.21(1.99) & -16.11(3.38) &   6.68(1.66) &  13.01(2.64) &  14.91(3.08) \\
&       &       &       &       &       &  \\
& $\beta_7$ & $\beta_8$ & $\beta_9$ & $\beta_{10}$ & $\beta_{11}$ & $\beta_{12}$ \\
\midrule
Truth & -0.52 & 0.86  & -1.23 & 1.18  & -1.97 & -1.68 \\
    Oracle &  -0.52(0.08) &   0.88(0.08) &  -1.25(0.09) &   1.20(0.09) &  -2.00(0.16) &  -1.71(0.15) \\
    Lasso &  -0.12(0.06) &   0.34(0.07) &  -0.56(0.07) &   0.56(0.07) &  -0.89(0.11) &  -0.74(0.10) \\
    Adaptive Lasso &  -0.46(0.12) &   0.83(0.09) &  -1.20(0.09) &   1.15(0.10) &  -1.93(0.17) &  -1.65(0.16) \\
    MCP   &  -0.50(0.11) &   0.87(0.08) &  -1.24(0.09) &   1.19(0.08) &  -2.00(0.16) &  -1.71(0.15) \\
    SCAD  &  -0.38(0.21) &   0.85(0.08) &  -1.21(0.09) &   1.16(0.08) &  -1.95(0.16) &  -1.67(0.15) \\
    MCP+mid-point &  -3.95(1.38) &   6.47(1.48) &  -9.70(1.98) &   8.81(1.48) & -15.66(3.04) & -13.40(2.69) \\
    SCAD+mid-point &  -5.09(1.66) &   8.13(2.03) & -12.22(2.66) &  11.23(2.44) & -20.05(4.37) & -17.06(3.65) \\
\bottomrule
\end{tabular}%
\label{tab:addlabel}%
\end{table}%

\subsection{Estimation results of baseline cumulative hazard functions}
\begin{figure}[H]
     \centering
 \includegraphics[width=\textwidth]{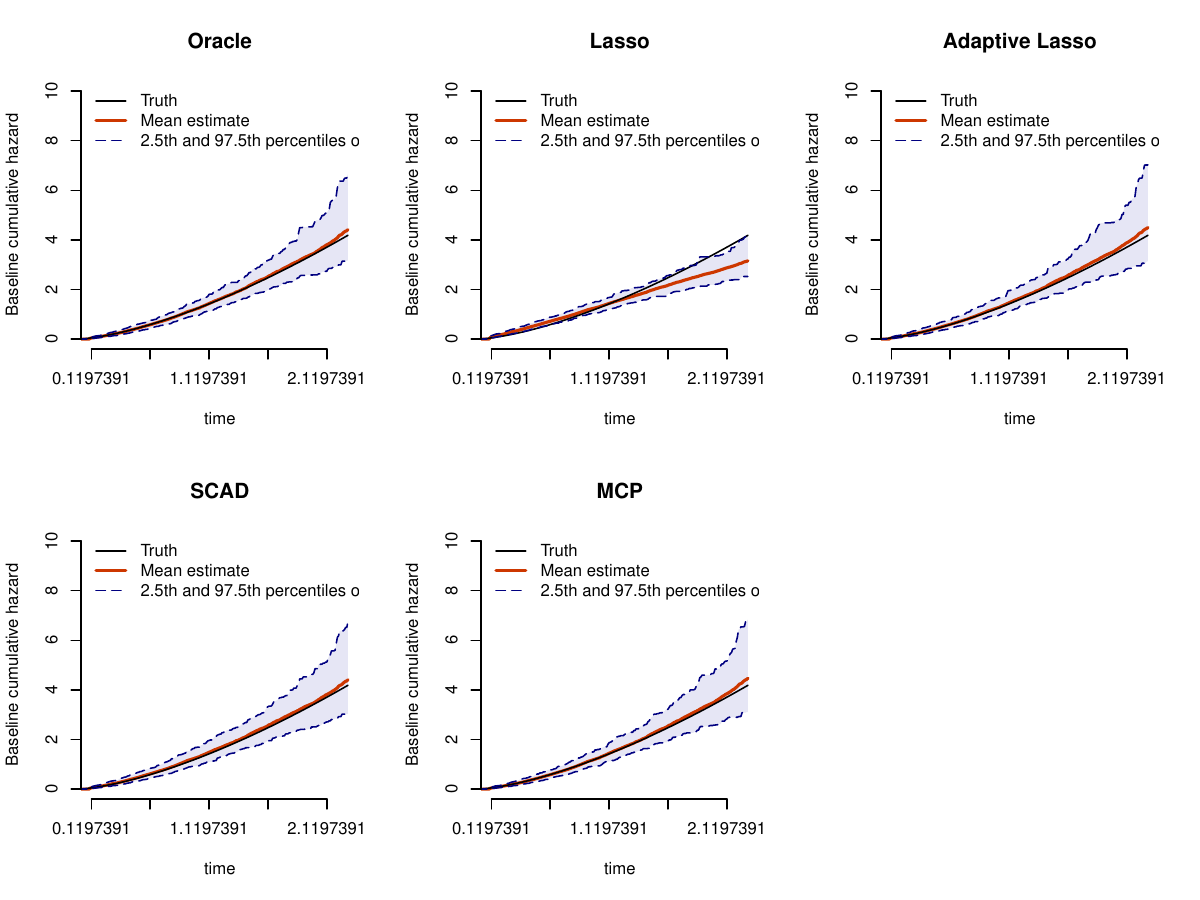}
 \caption{Empirical mean estimates for the baseline cumulative hazard function in the scenario of six nonzero coefficients with $p = 3000$, $n = 500$, and $\rho = 0$} \label{fig:2}
 \end{figure}

\begin{figure}[H]
     \centering
 \includegraphics[width=\textwidth]{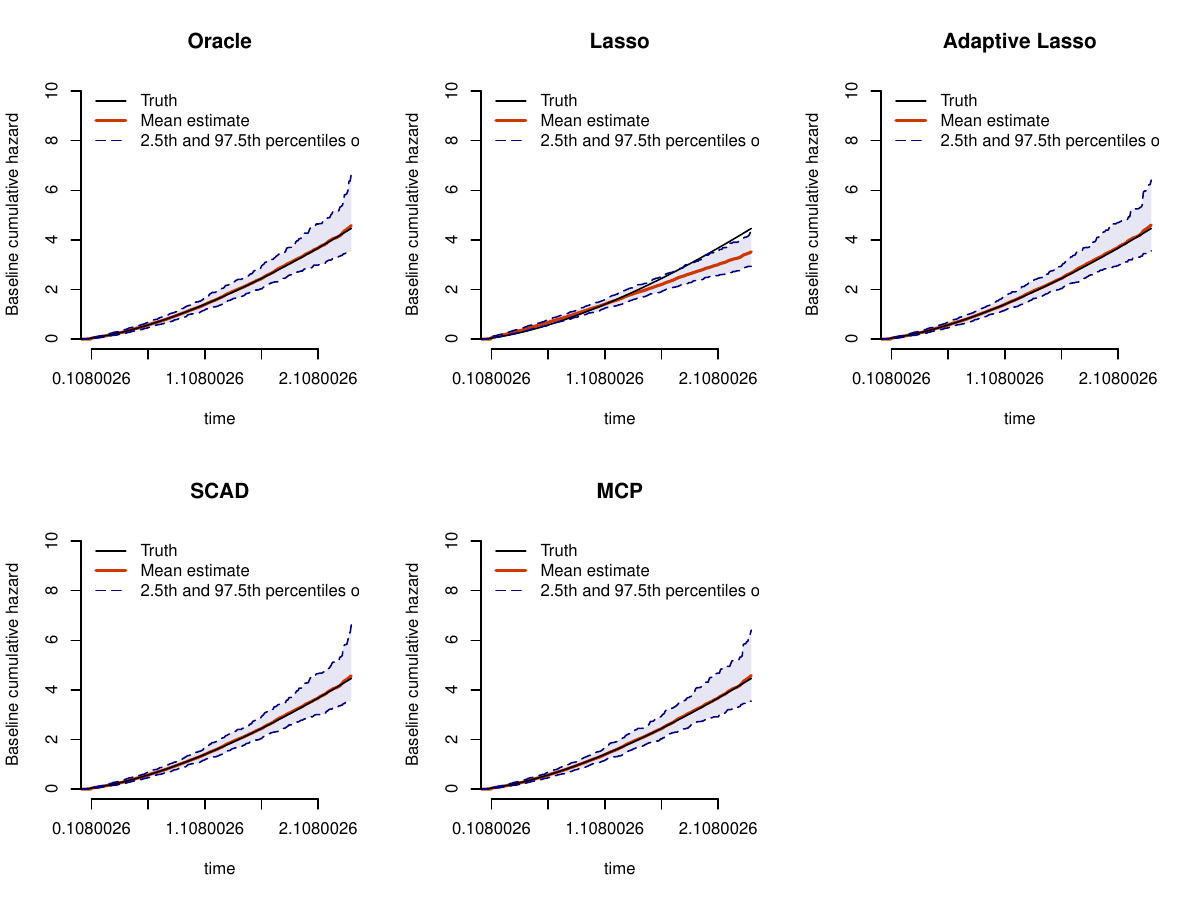}
 \caption{Empirical mean estimates for the baseline cumulative hazard function in the scenario of six nonzero coefficients with $p = 3000$, $n = 1000$, and $\rho = 0$} \label{fig:2}
 \end{figure}

\begin{figure}[H]
     \centering
 \includegraphics[width=\textwidth]{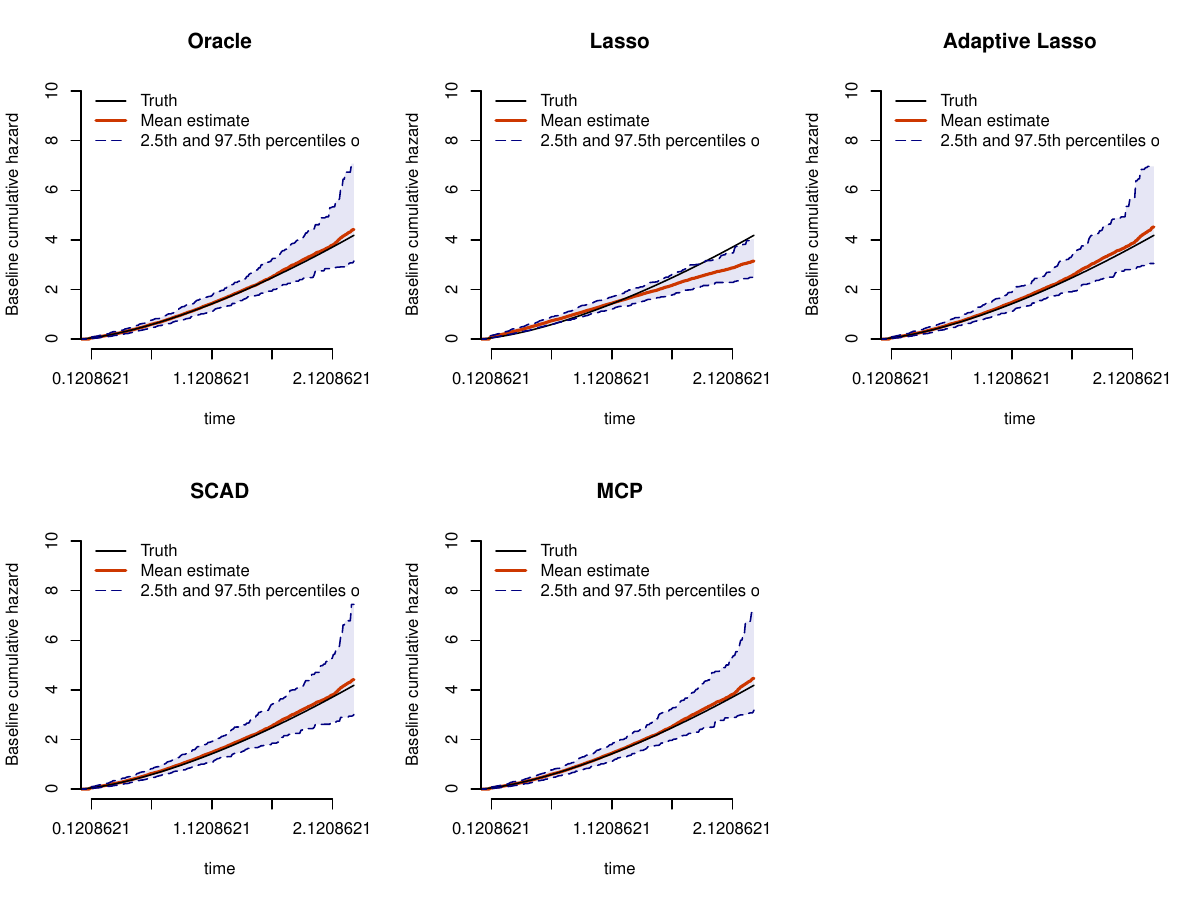}
 \caption{Empirical mean estimates for the baseline cumulative hazard function in the scenario of six nonzero coefficients with $p = 3000$, $n = 500$, and $\rho = 0.8$} \label{fig:2}
 \end{figure}

\begin{figure}[H]
     \centering
 \includegraphics[width=\textwidth]{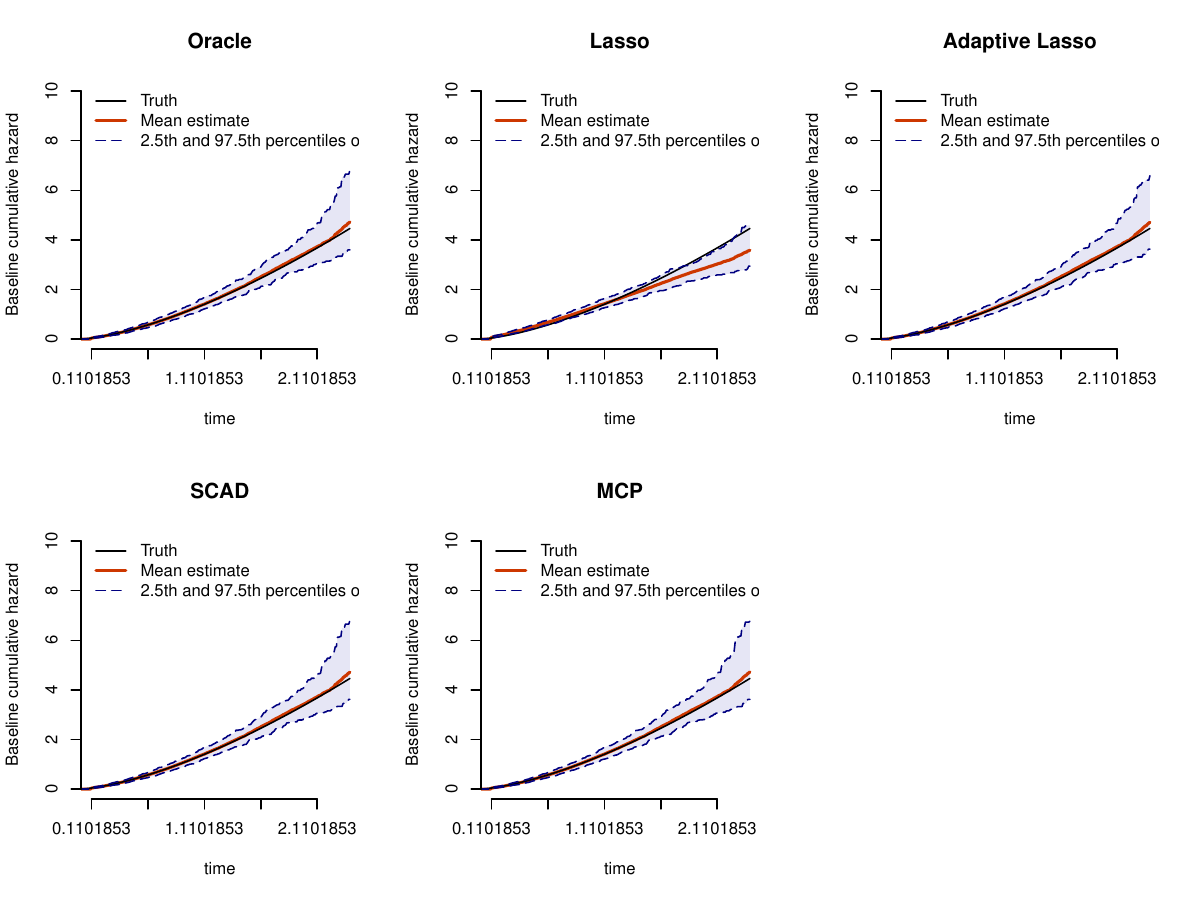}
 \caption{Empirical mean estimates for the baseline cumulative hazard function in the scenario of six nonzero coefficients with $p = 3000$, $n = 1000$, and $\rho = 0.8$} \label{fig:2}
 \end{figure}

\begin{figure}[H]
 \includegraphics[width=\textwidth]{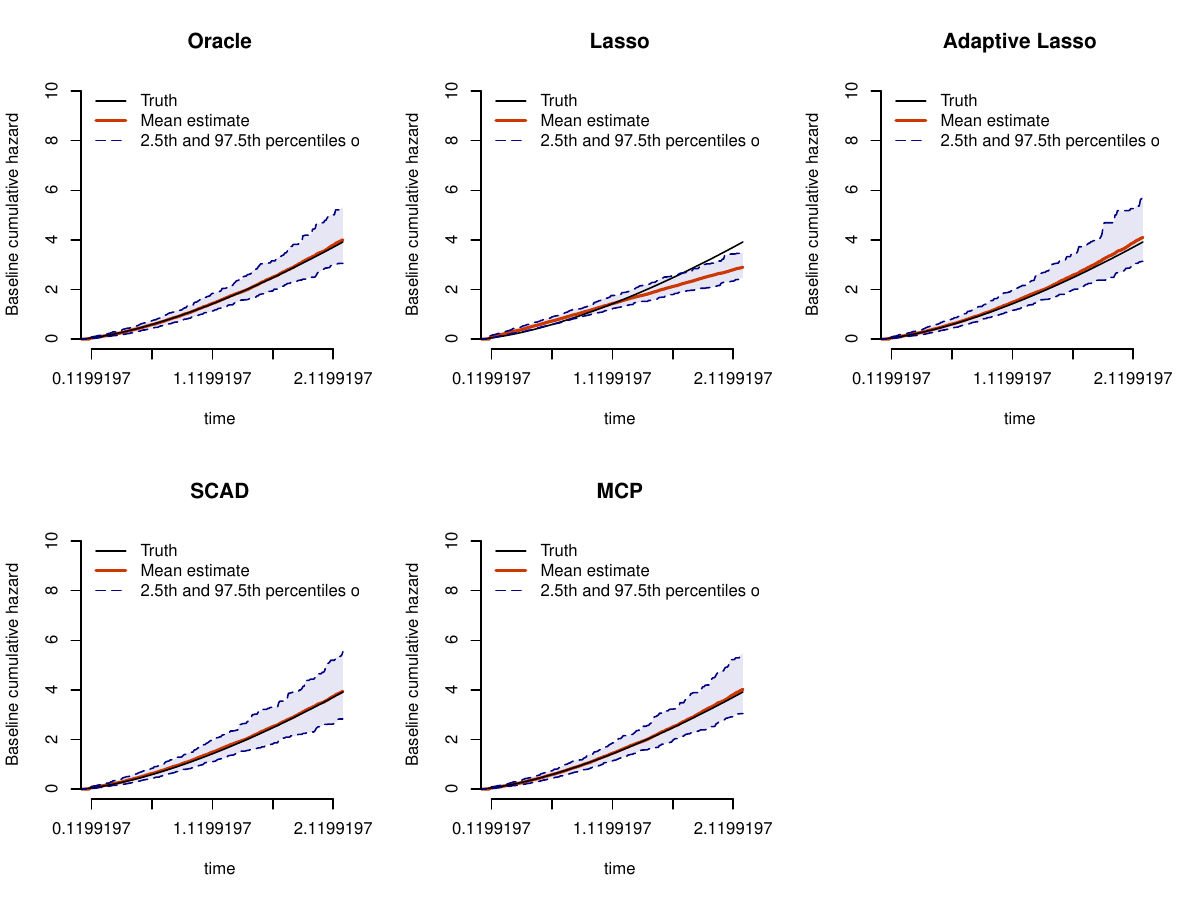}
 \caption{Empirical mean estimates for the baseline cumulative hazard function in the scenario of six nonzero coefficients with $p = 10000$, $n = 500$, and $\rho = 0$} \label{fig:2}
 \end{figure}

\begin{figure}[H]
     \centering
 \includegraphics[width=\textwidth]{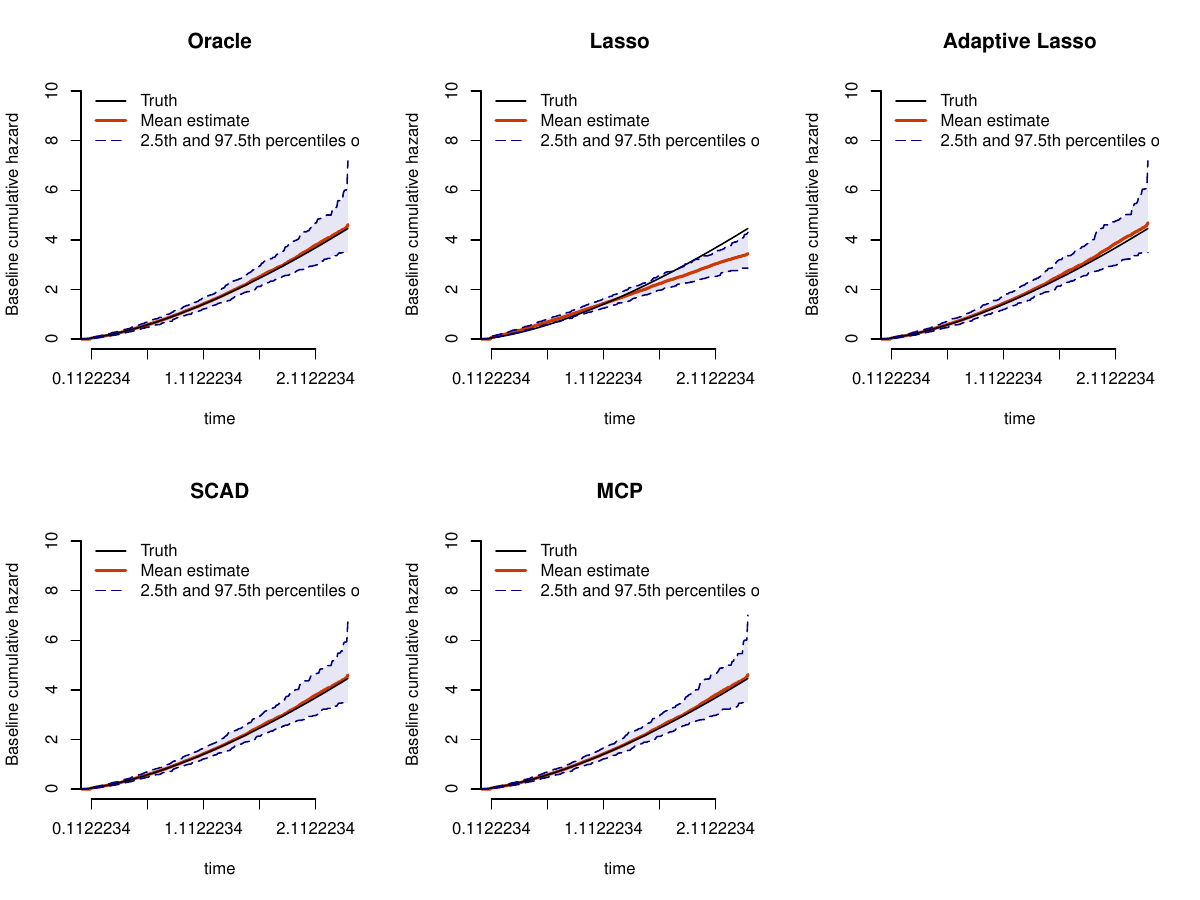}
 \caption{Empirical mean estimates for the baseline cumulative hazard function in the scenario of six nonzero coefficients with $p = 10000$, $n = 1000$, and $\rho = 0$} \label{fig:2}
 \end{figure}

\begin{figure}[H]
     \centering
 \includegraphics[width=\textwidth]{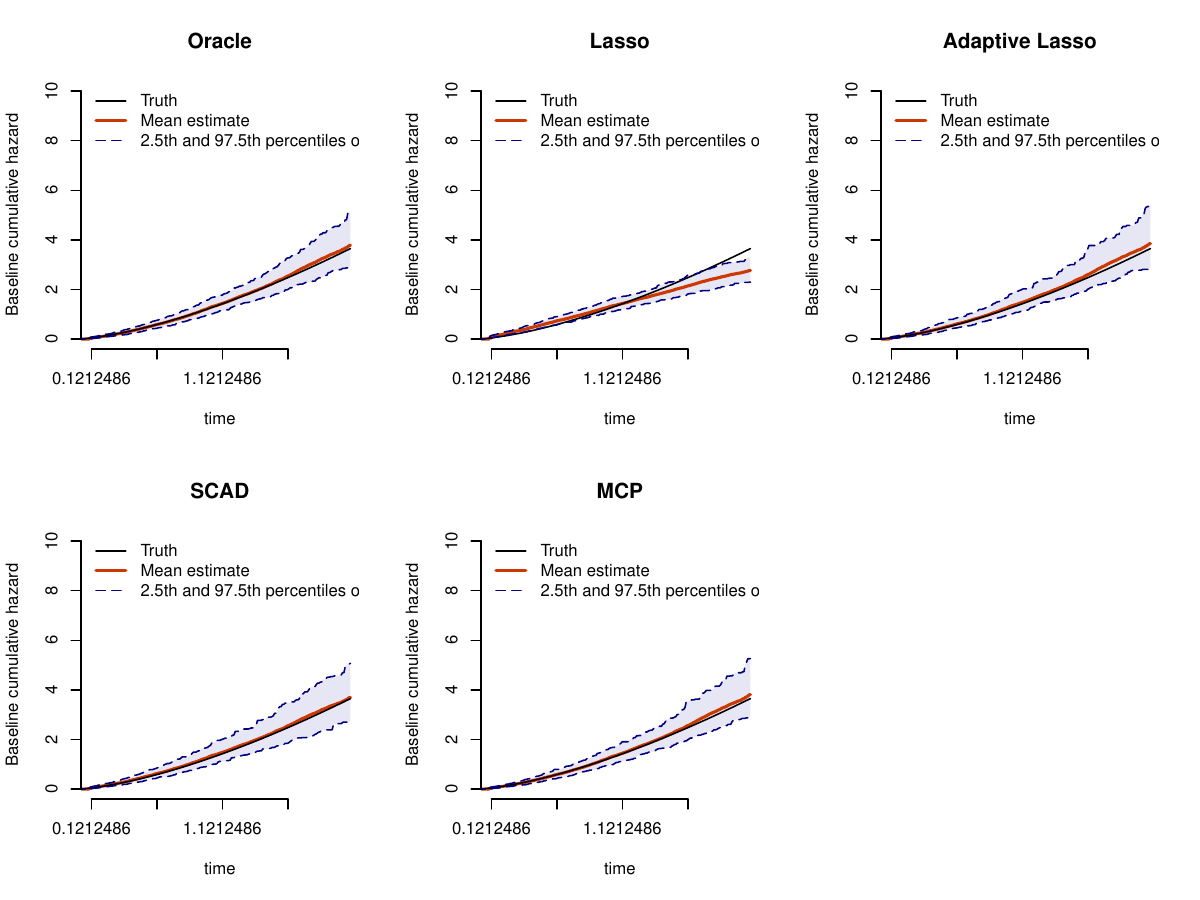}
 \caption{Empirical mean estimates for the baseline cumulative hazard function in the scenario of six nonzero coefficients with $p = 10000$, $n = 500$, and $\rho = 0.8$} \label{fig:2}
 \end{figure}

\begin{figure}[H]
     \centering
 \includegraphics[width=\textwidth]{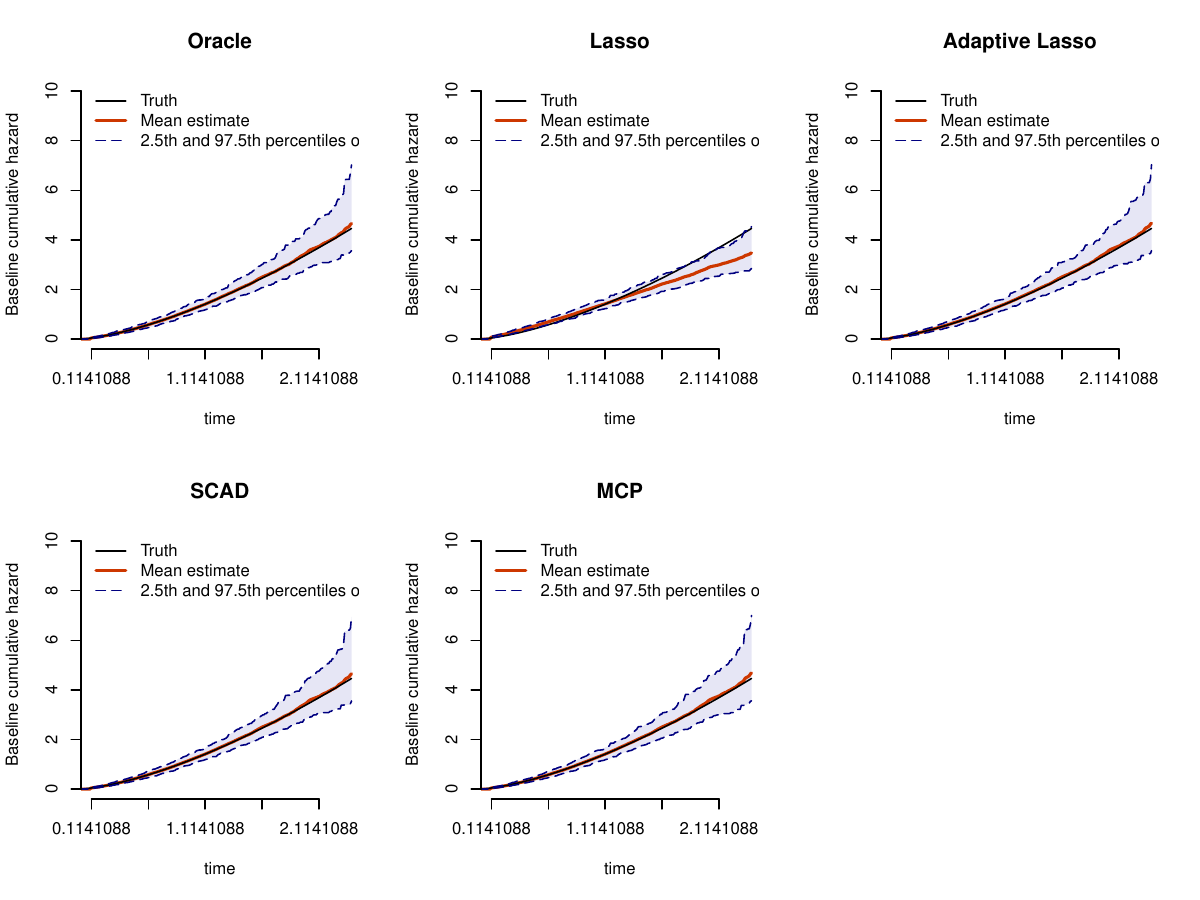}
 \caption{Empirical mean estimates for the baseline cumulative hazard function in the scenario of six nonzero coefficients with $p = 10000$, $n = 1000$, and $\rho = 0.8$} \label{fig:2}
 \end{figure}

\begin{figure}[H]
     \centering
 \includegraphics[width=\textwidth]{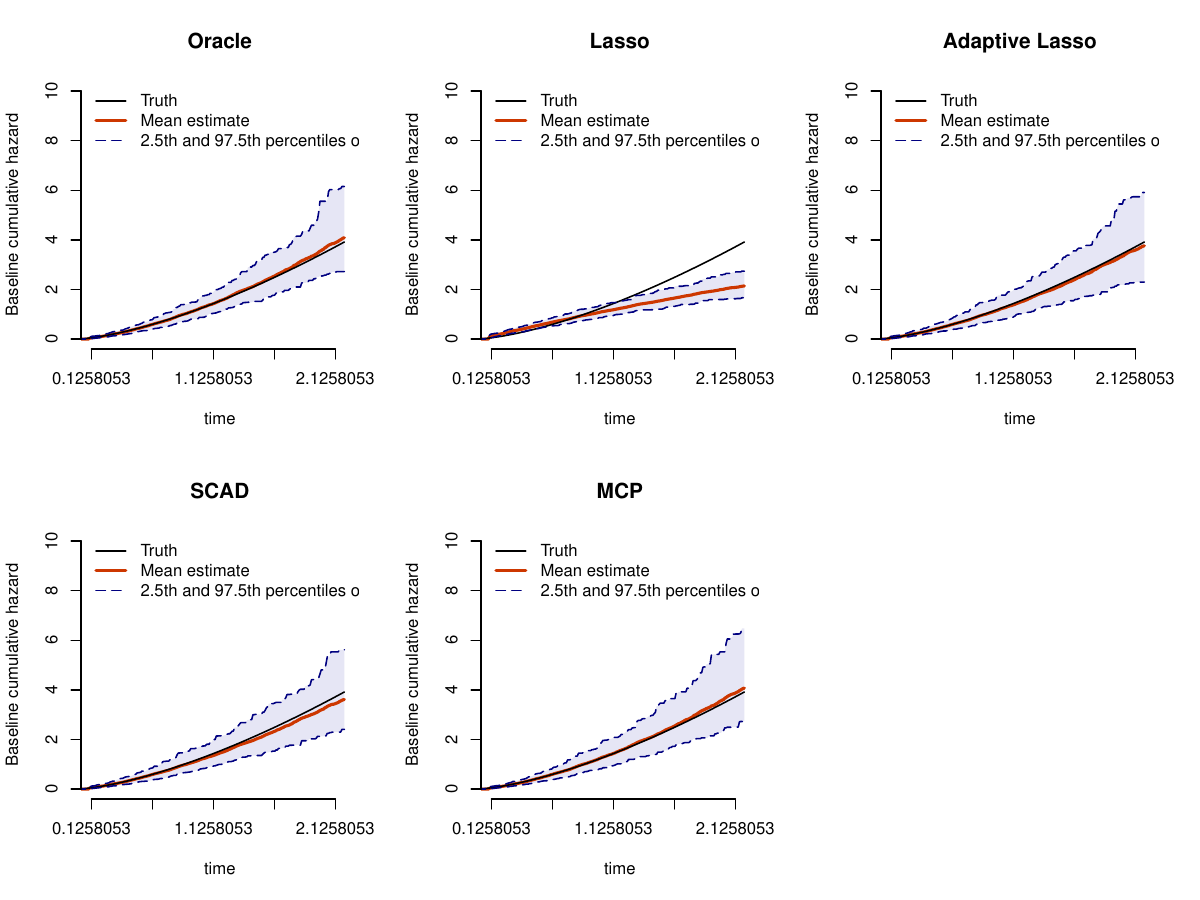}
 \caption{Empirical mean estimates for the baseline cumulative hazard function in the scenario of twelve nonzero coefficients with $p = 3000$, $n = 500$, and $\rho = 0$} \label{fig:2}
 \end{figure}

\begin{figure}[H]
     \centering
 \includegraphics[width=\textwidth]{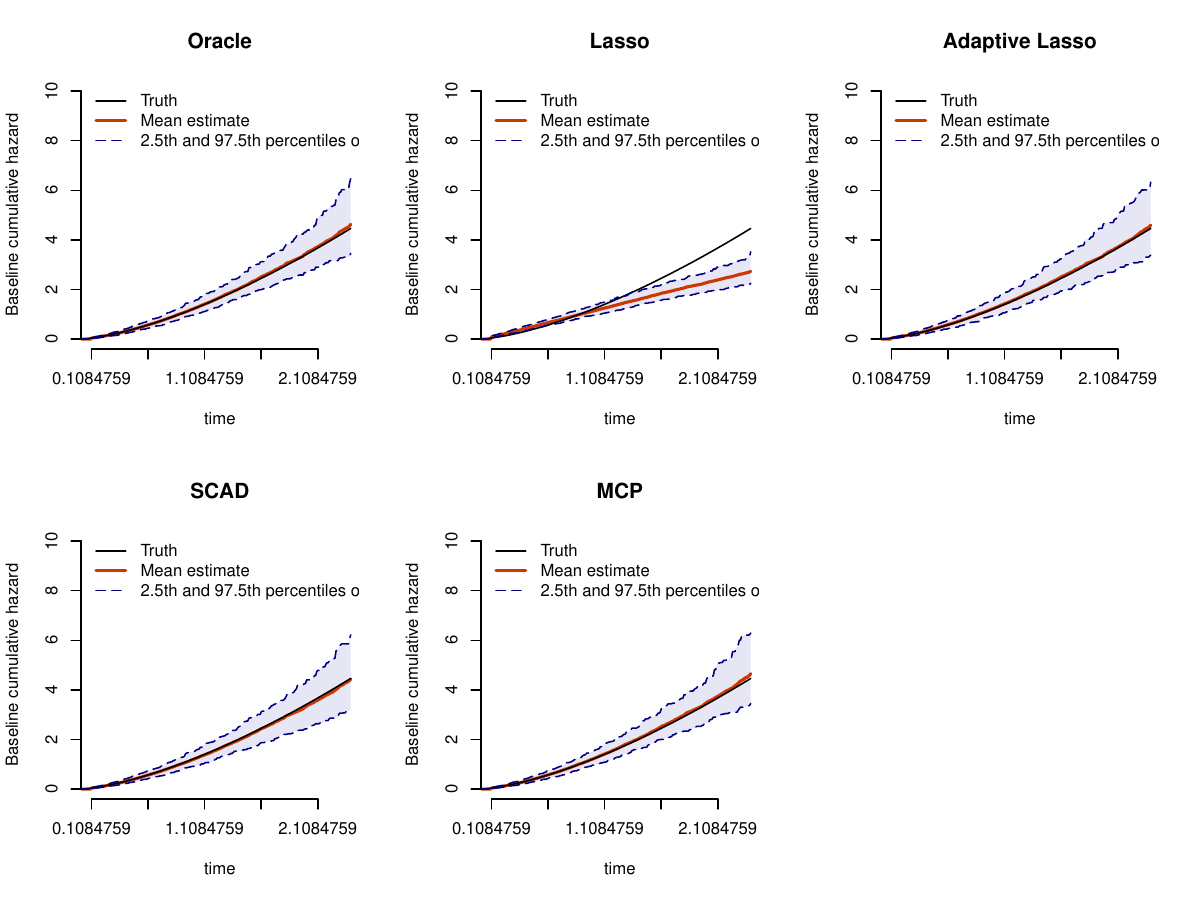}
 \caption{Empirical mean estimates for the baseline cumulative hazard function in the scenario of twelve nonzero coefficients with $p = 3000$, $n = 1000$, and $\rho = 0$} \label{fig:2}
 \end{figure}

\begin{figure}[H]
     \centering
 \includegraphics[width=\textwidth]{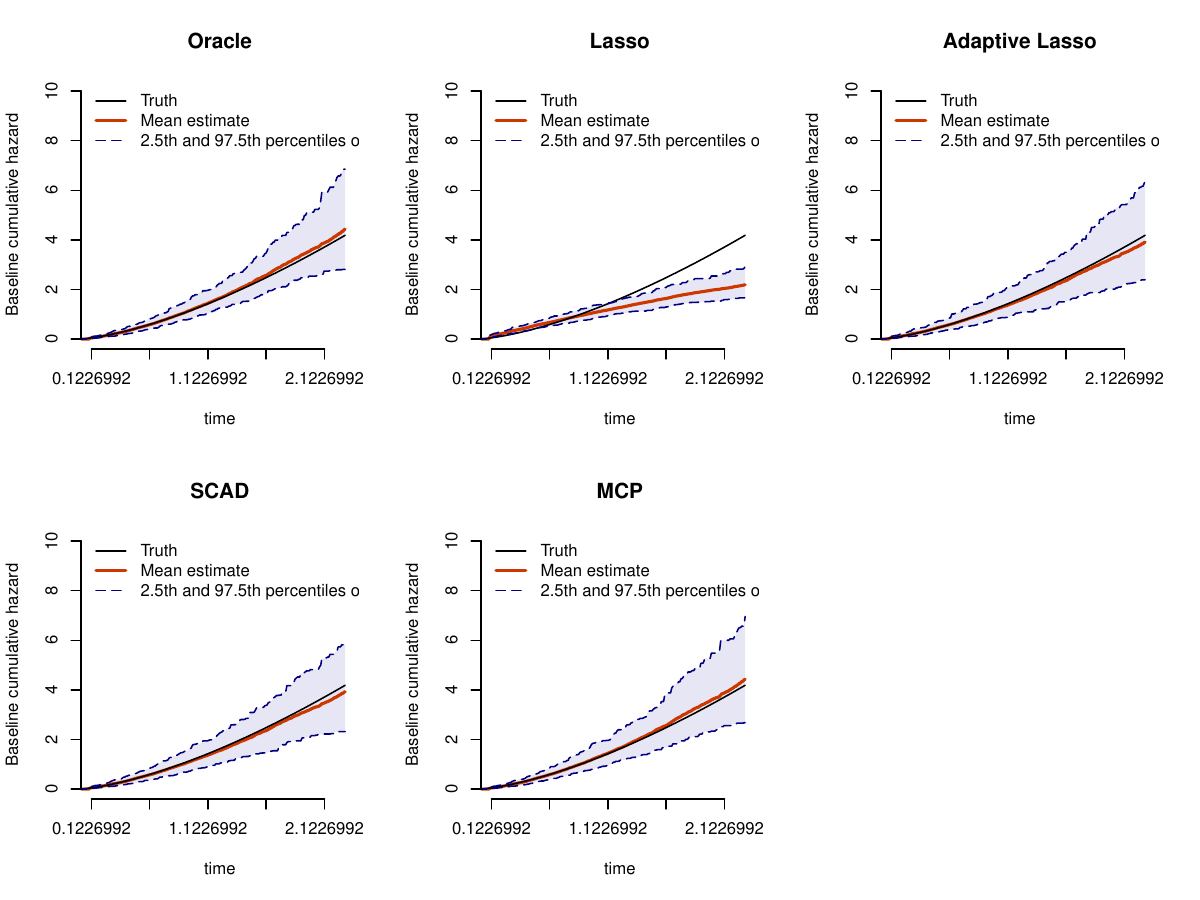}
 \caption{Empirical mean estimates for the baseline cumulative hazard function in the scenario of twelve nonzero coefficients with $p = 3000$, $n = 500$, and $\rho = 0.8$} \label{fig:2}
 \end{figure}

\begin{figure}[H]
     \centering
 \includegraphics[width=\textwidth]{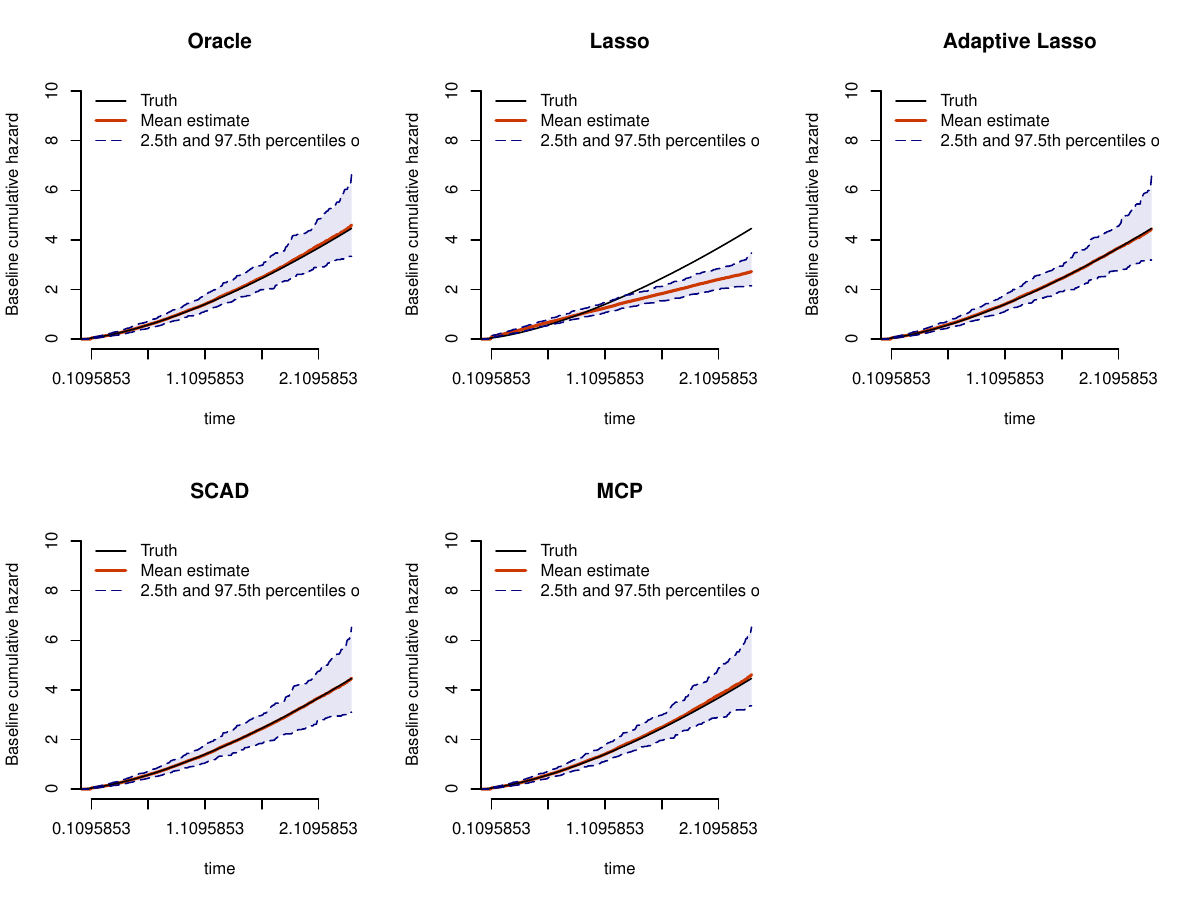}
 \caption{Empirical mean estimates for the baseline cumulative hazard function in the scenario of twelve nonzero coefficients with $p = 3000$, $n = 1000$, and $\rho = 0.8$} \label{fig:2}
 \end{figure}

\begin{figure}[H]
 \includegraphics[width=\textwidth]{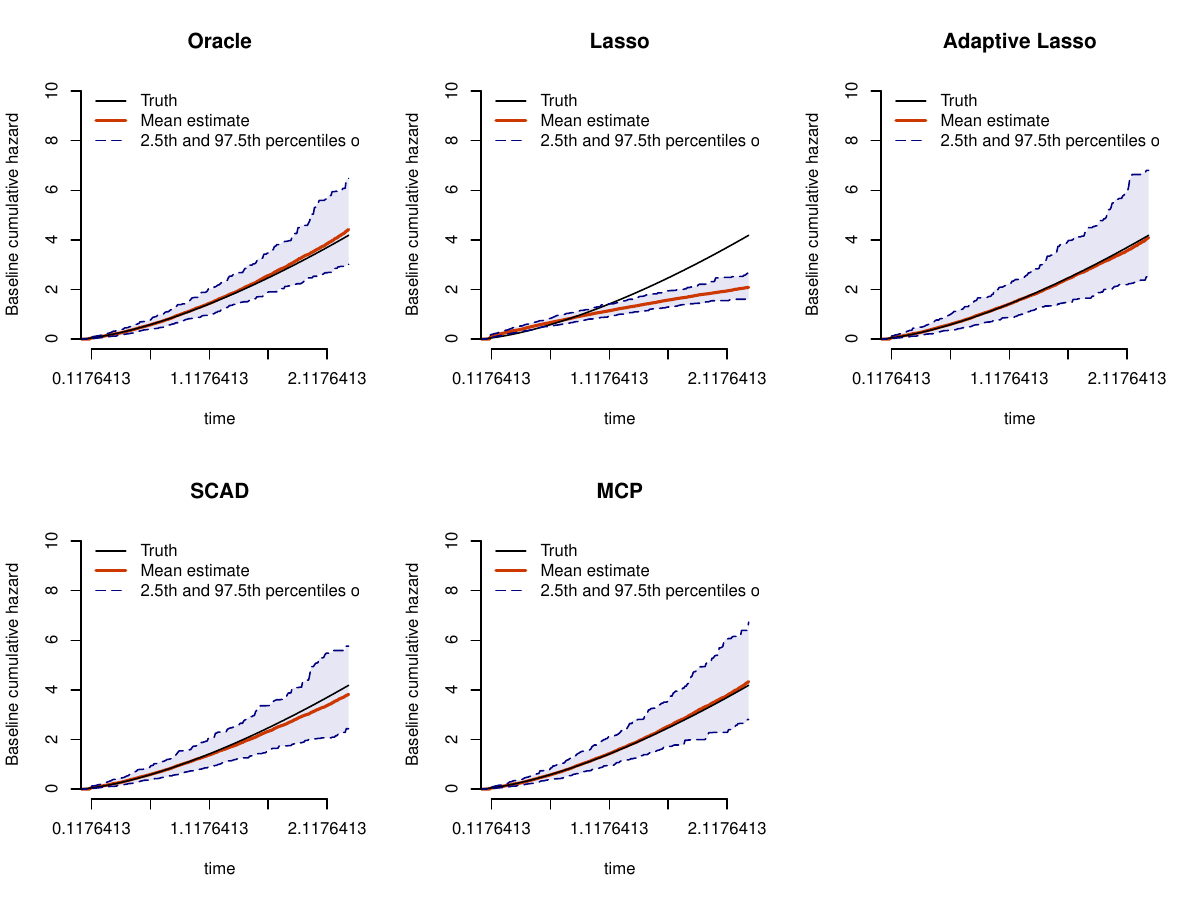}
 \caption{Empirical mean estimates for the baseline cumulative hazard function in the scenario of twelve nonzero coefficients with $p = 10000$, $n = 500$, and $\rho = 0$} \label{fig:2}
 \end{figure}

\begin{figure}[H]
     \centering
 \includegraphics[width=\textwidth]{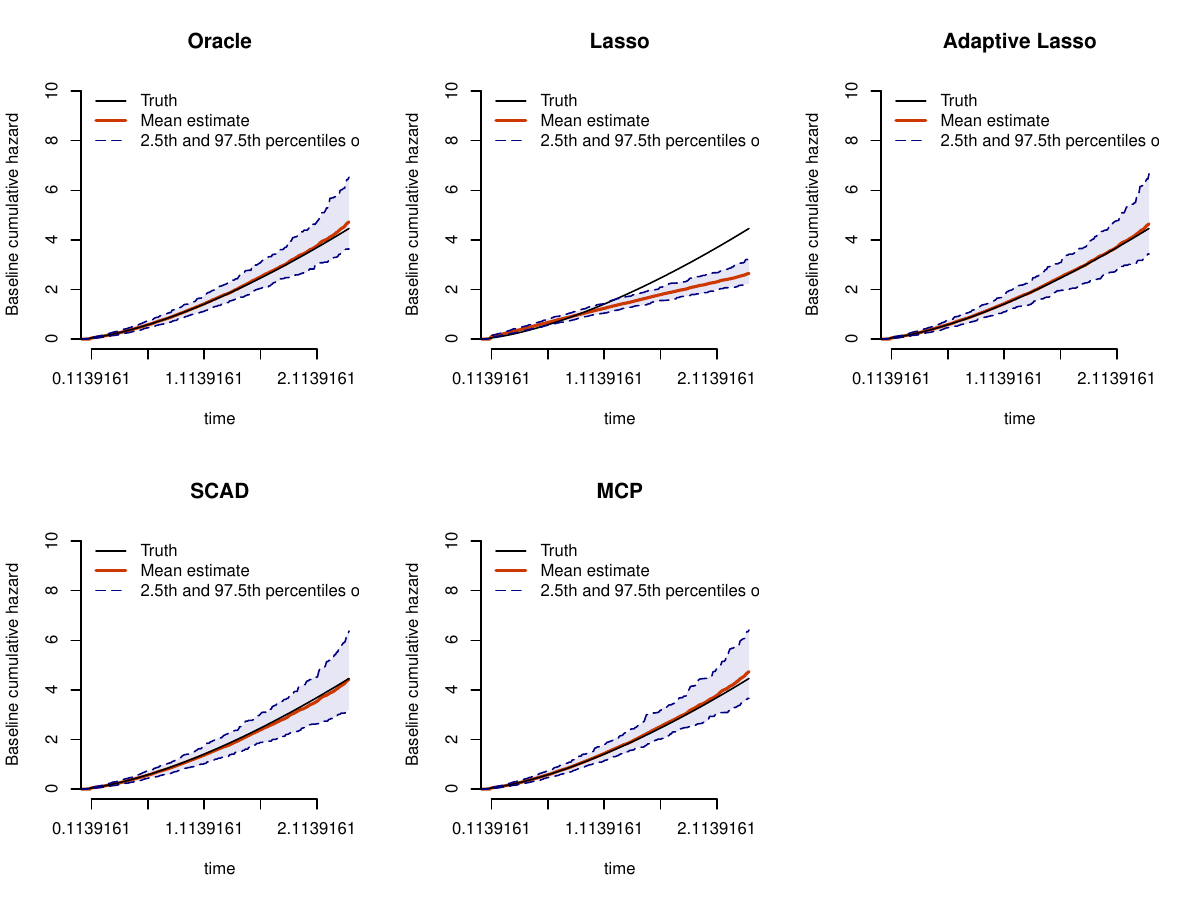}
 \caption{Empirical mean estimates for the baseline cumulative hazard function in the scenario of twelve nonzero coefficients with $p = 10000$, $n = 1000$, and $\rho = 0$} \label{fig:2}
 \end{figure}

\begin{figure}[H]
     \centering
 \includegraphics[width=\textwidth]{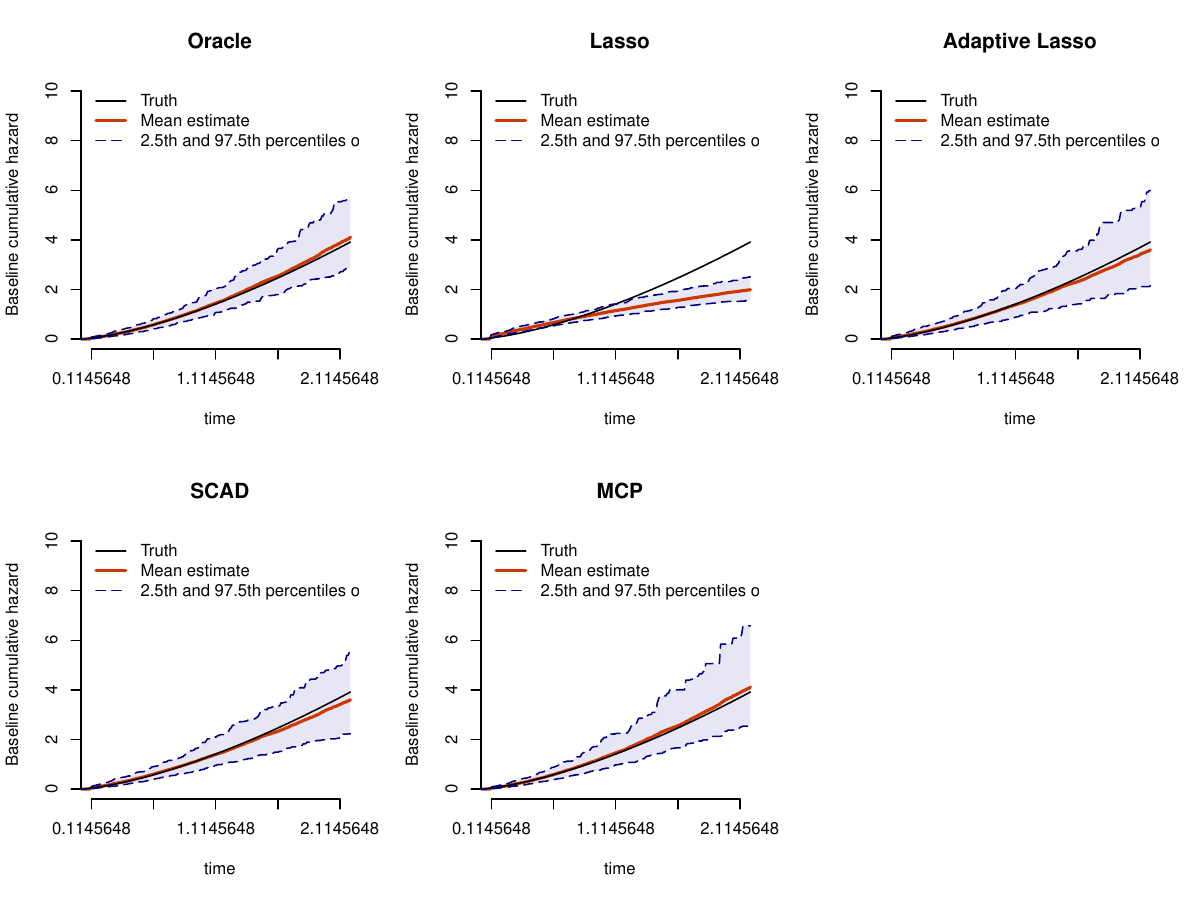}
 \caption{Empirical mean estimates for the baseline cumulative hazard function in the scenario of twelve nonzero coefficients with $p = 10000$, $n = 500$, and $\rho = 0.8$} \label{fig:2}
 \end{figure}

\begin{figure}[H]
     \centering
 \includegraphics[width=\textwidth]{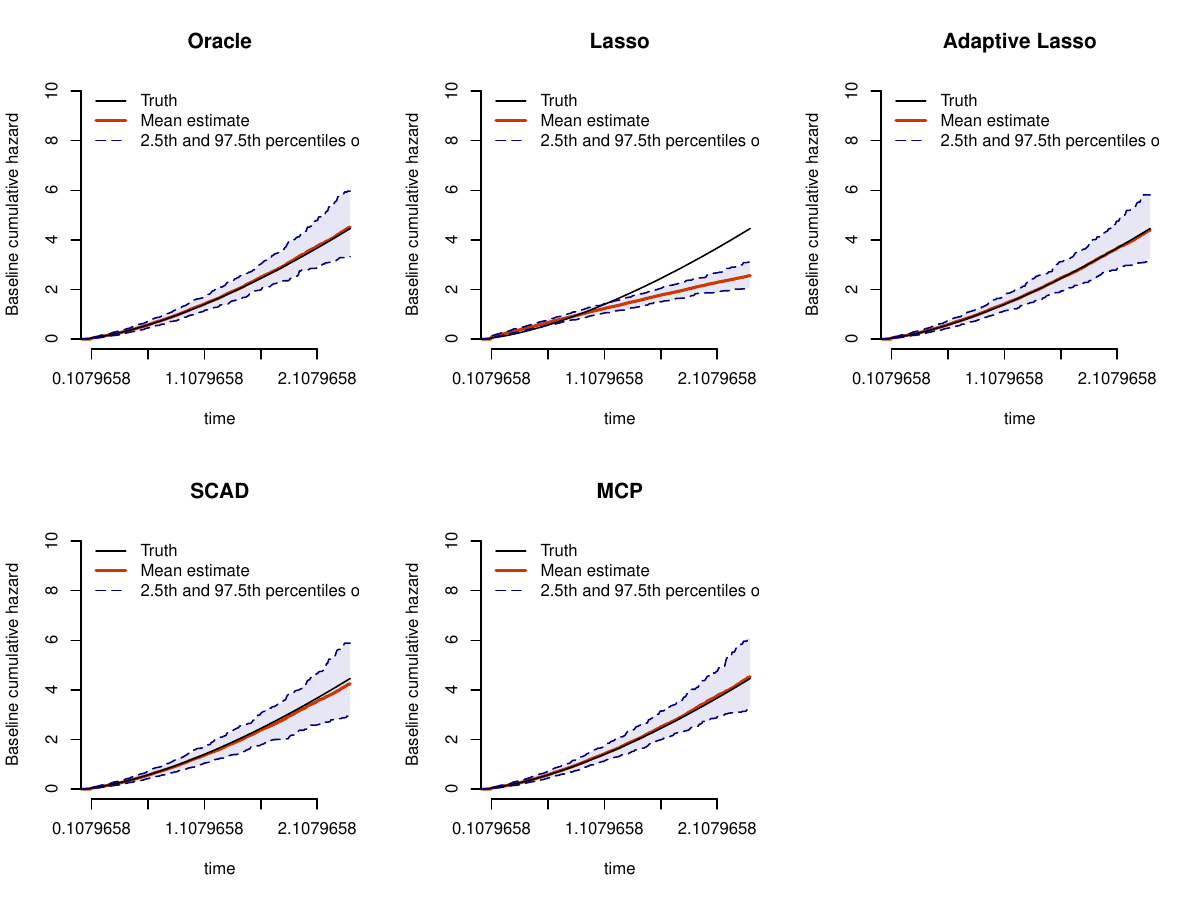}
 \caption{Empirical mean estimates for the baseline cumulative hazard function in the scenario of twelve nonzero coefficients with $p = 10000$, $n = 1000$, and $\rho = 0.8$} \label{fig:2}
 \end{figure}

 \subsection{Normal Q-Q plots of nonzero coefficient estimates}

\begin{figure}[H]\centering
\subfloat[Lasso]
        {\includegraphics[width=0.5\textwidth]{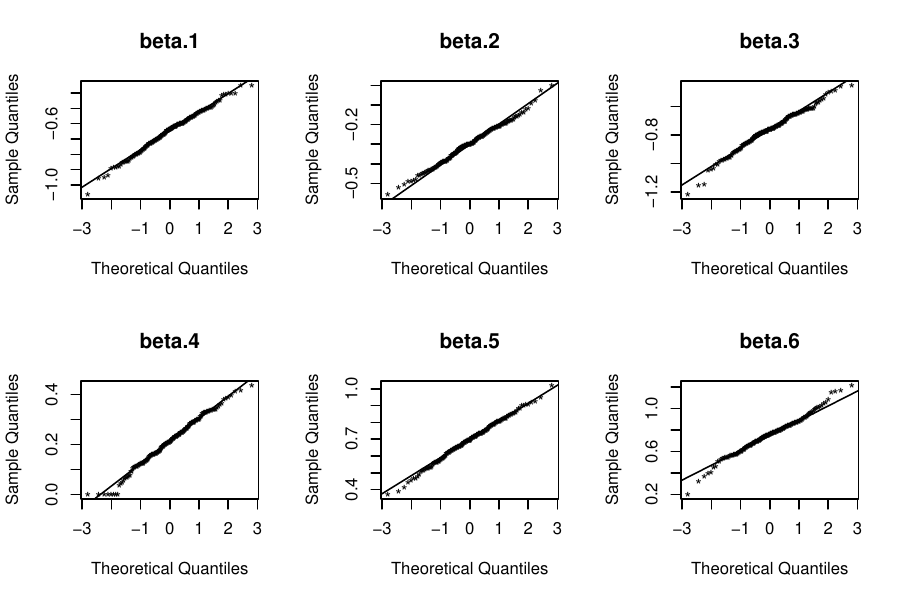}}
\subfloat[Adaptive Lasso \label{fig:mean and std of net24}]
         {\includegraphics[width=0.5\textwidth]{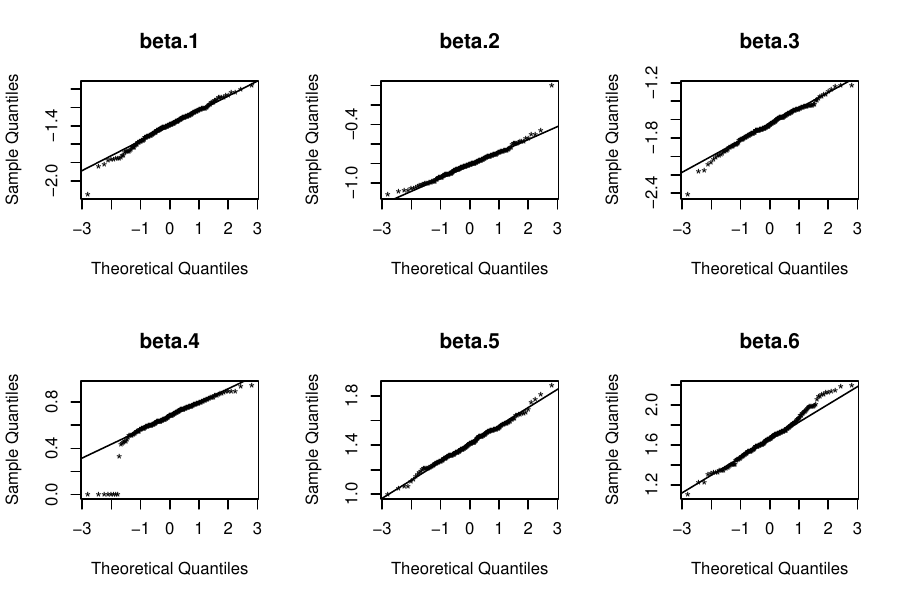}}
             \hfill
\subfloat[MCP]
         {\includegraphics[width=0.5\textwidth]{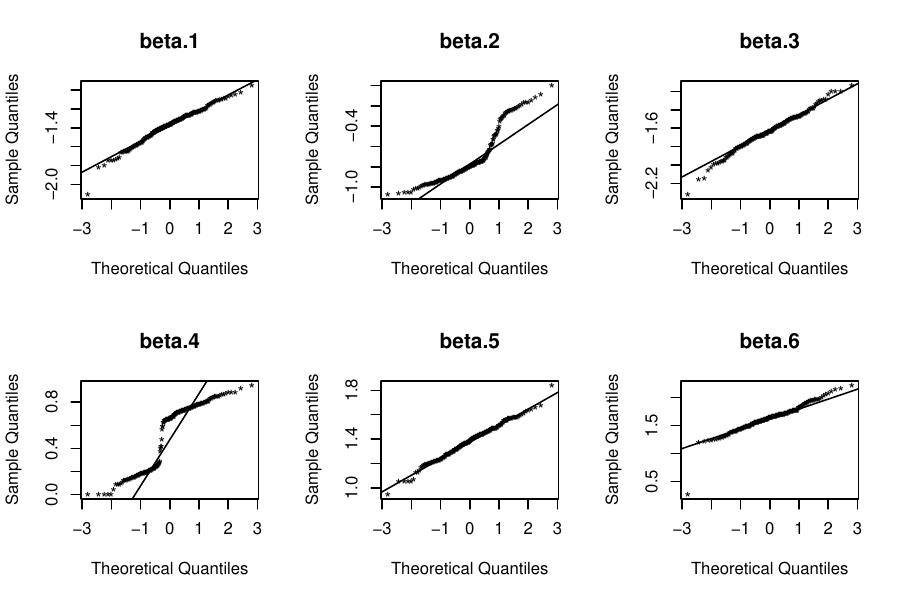}}
\subfloat[SCAD]
        {\includegraphics[width=0.5\textwidth]{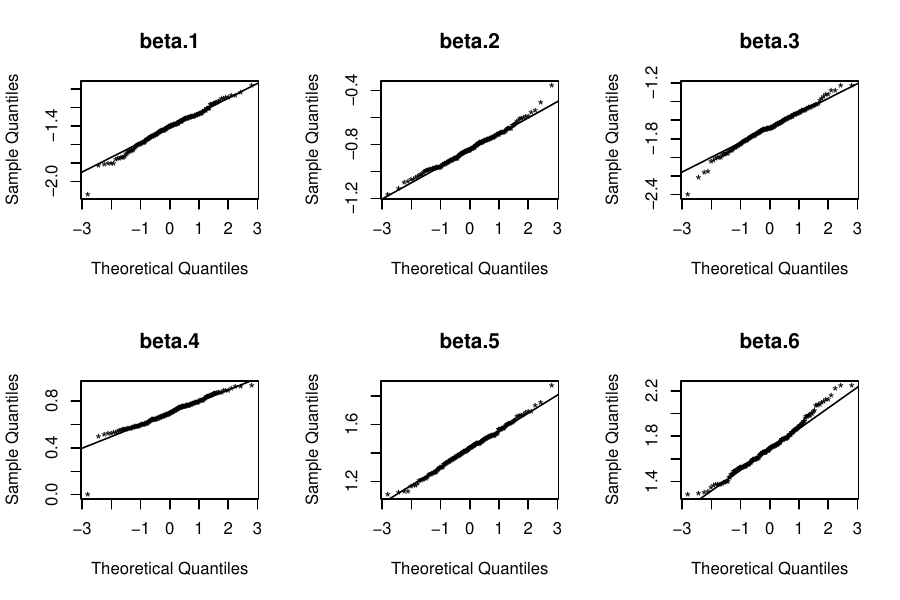}}
\caption[QQplot]
        {Normal Q-Q plots of the estimates of the six nonzero coefficients in the scenario with $p = 3000$, $n = 500$, and $\rho = 0$}
    \end{figure}

\begin{figure*}\centering
\subfloat[Lasso]
        {\includegraphics[width=0.5\textwidth]{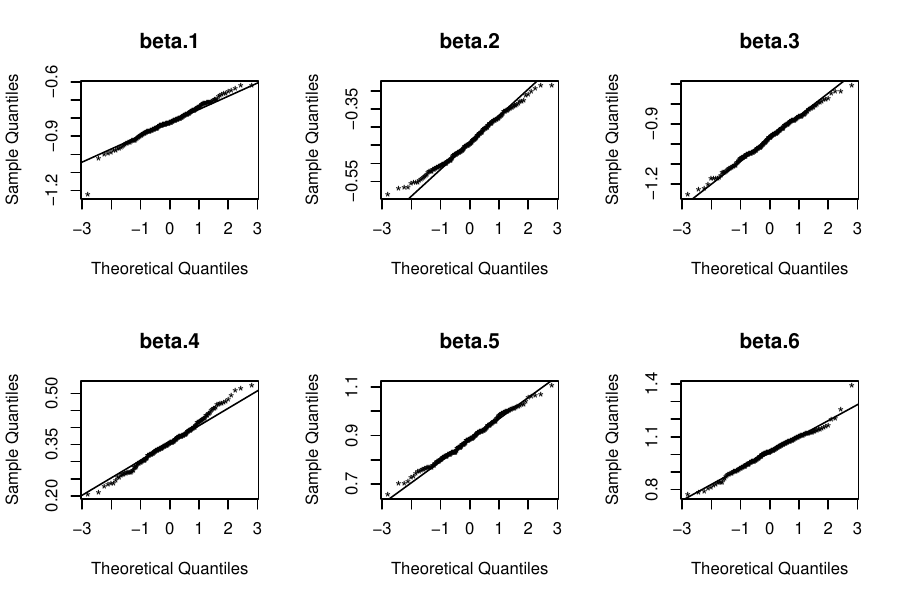}}
\subfloat[Adaptive Lasso \label{fig:mean and std of net24}]
         {\includegraphics[width=0.5\textwidth]{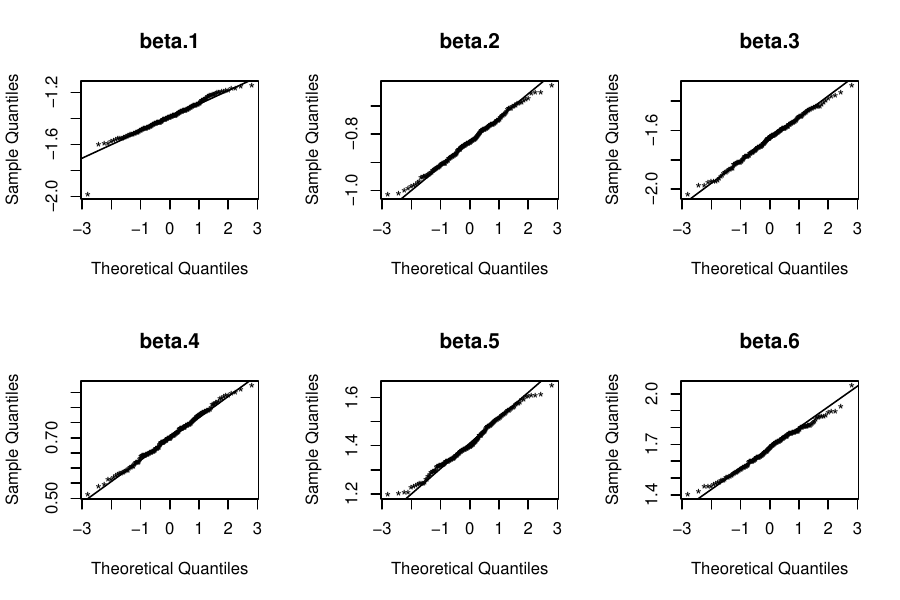}}
             \hfill
\subfloat[MCP]
         {\includegraphics[width=0.5\textwidth]{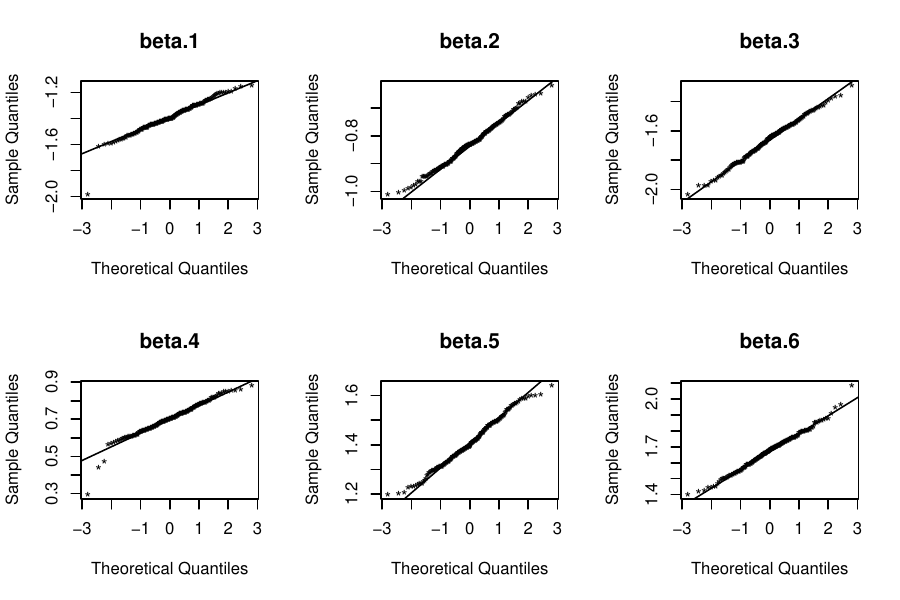}}
\subfloat[SCAD]
        {\includegraphics[width=0.5\textwidth]{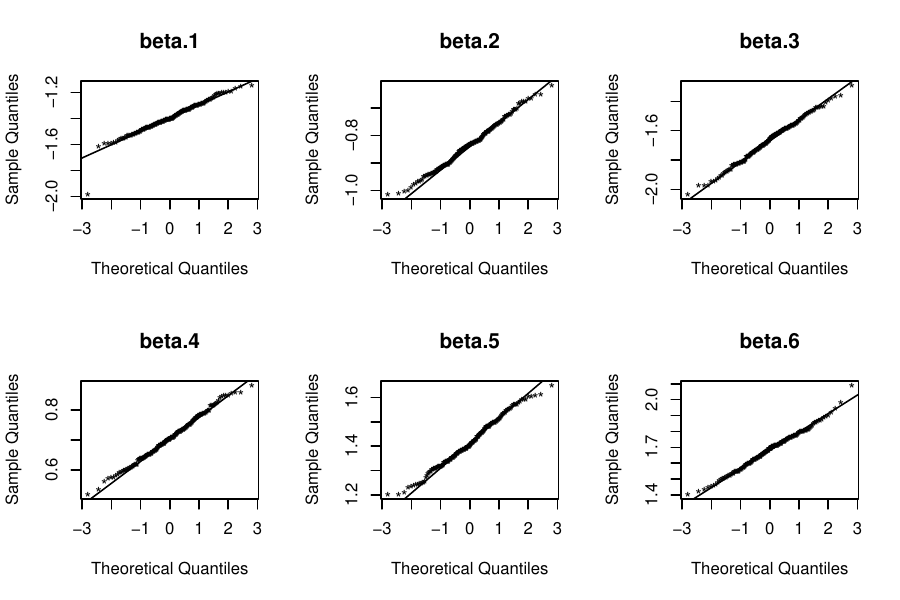}}
\caption[QQplot]
        {Normal Q-Q plots of the estimates of the six nonzero coefficients in the scenario with $p = 3000$, $n = 1000$, and $\rho = 0$}
    \end{figure*}

\begin{figure*}\centering
\subfloat[Lasso]
        {\includegraphics[width=0.5\textwidth]{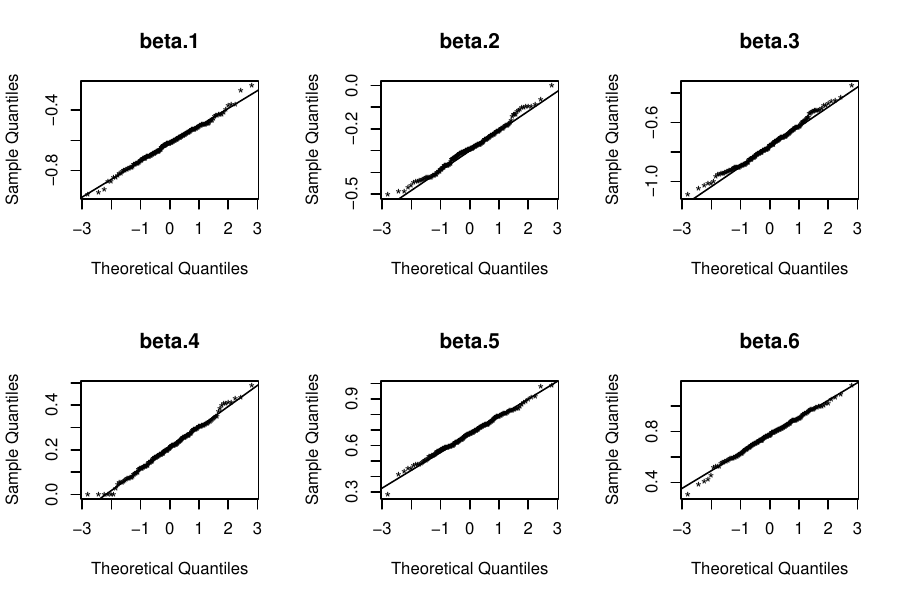}}
\subfloat[Adaptive Lasso \label{fig:mean and std of net24}]
         {\includegraphics[width=0.5\textwidth]{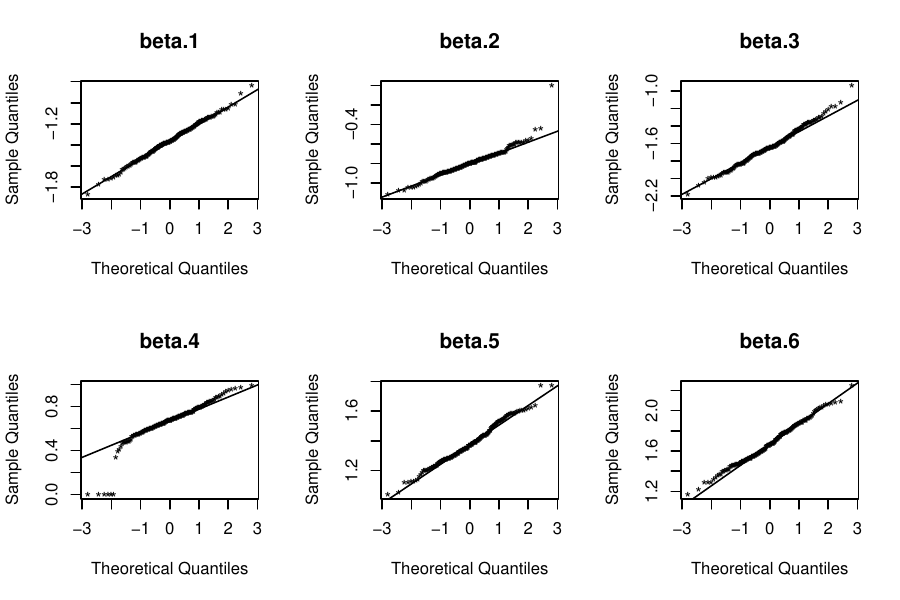}}
             \hfill
\subfloat[MCP]
         {\includegraphics[width=0.5\textwidth]{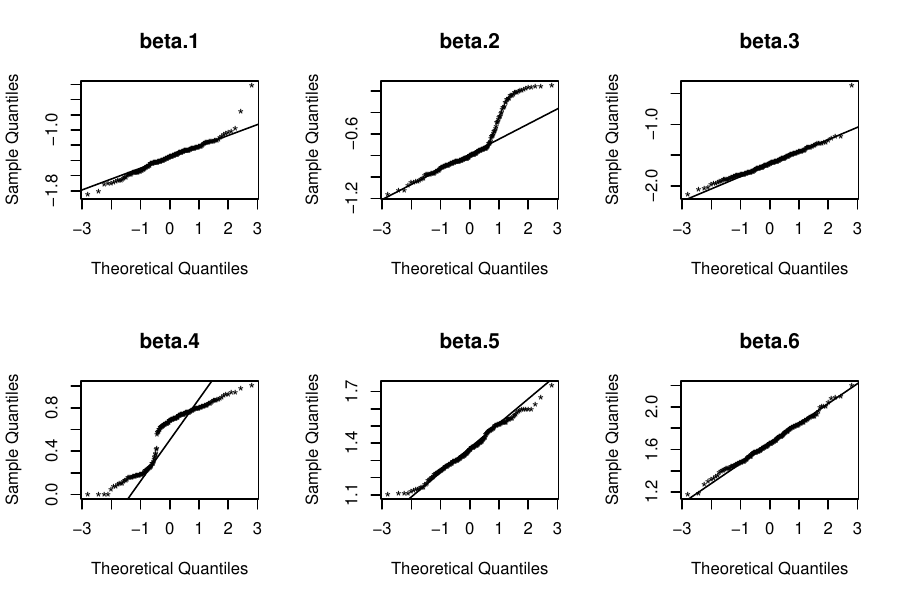}}
\subfloat[SCAD]
        {\includegraphics[width=0.5\textwidth]{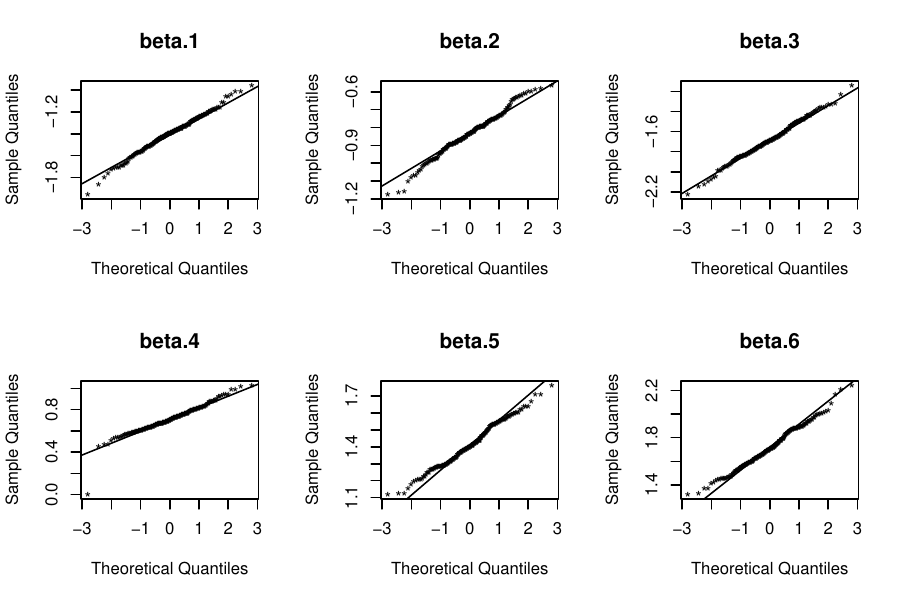}}
\caption[QQplot]
        {Normal Q-Q plots of the estimates of the six nonzero coefficients in the scenario with $p = 3000$, $n = 500$, and $\rho = 0.8$}
    \end{figure*}

\begin{figure*}\centering
\subfloat[Lasso]
        {\includegraphics[width=0.5\textwidth]{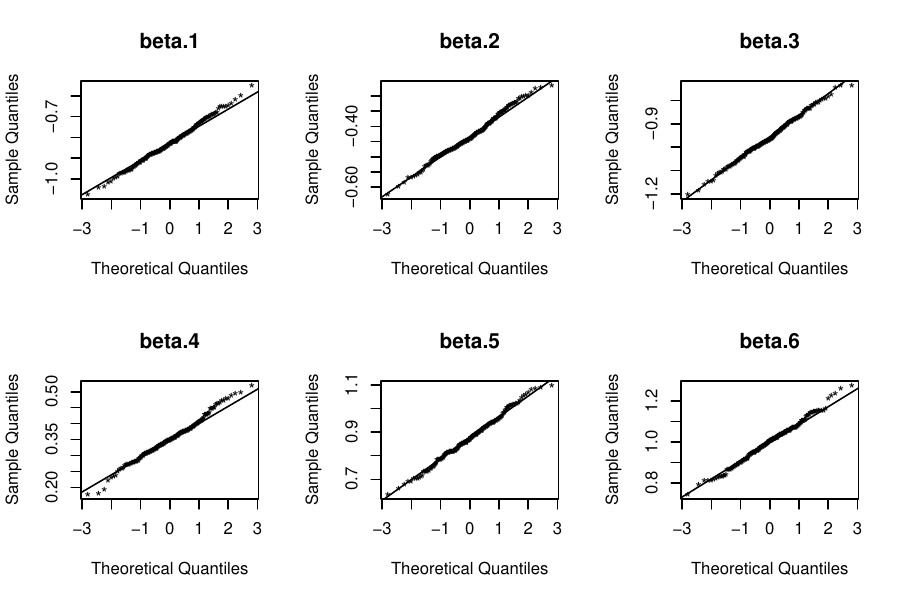}}
\subfloat[Adaptive Lasso \label{fig:mean and std of net24}]
         {\includegraphics[width=0.5\textwidth]{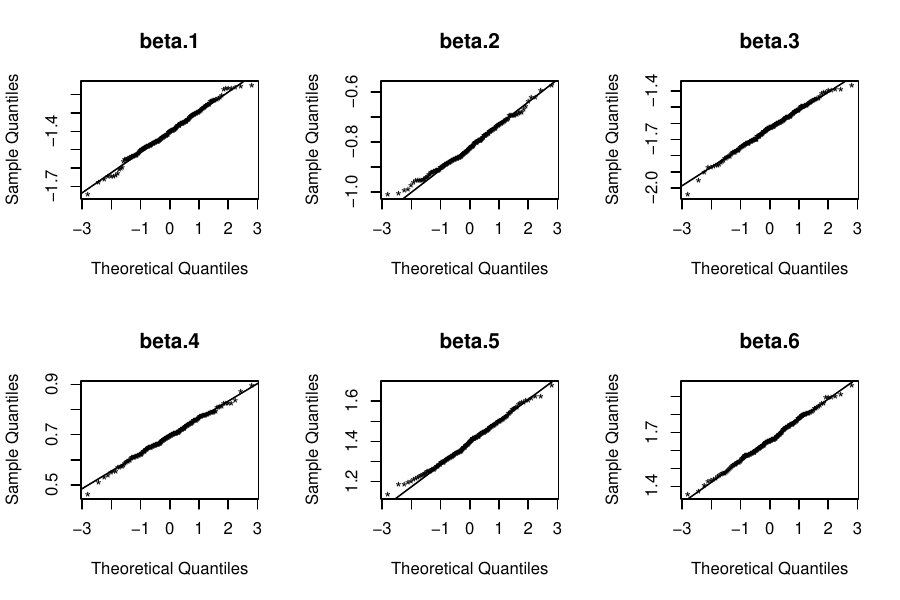}}
             \hfill
\subfloat[MCP]
         {\includegraphics[width=0.5\textwidth]{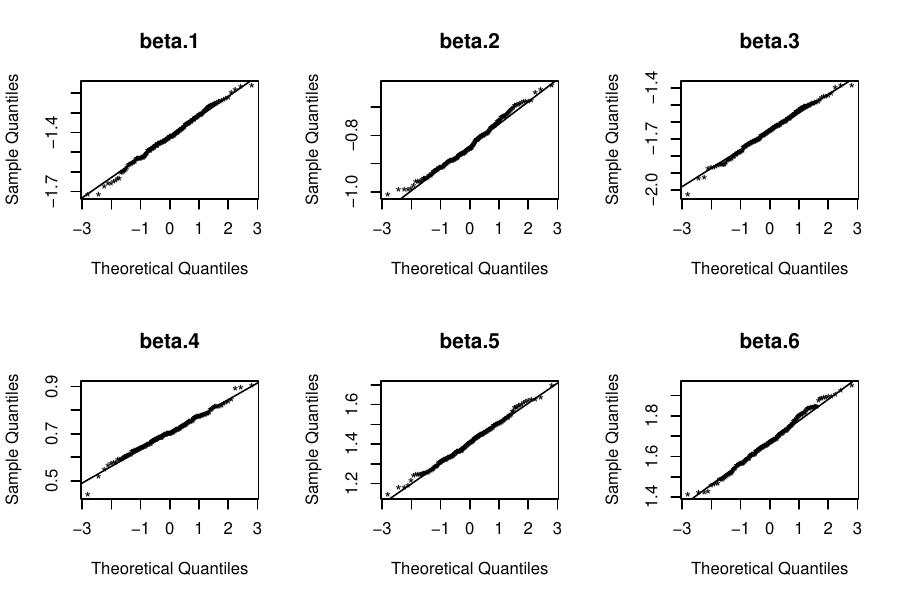}}
\subfloat[SCAD]
        {\includegraphics[width=0.5\textwidth]{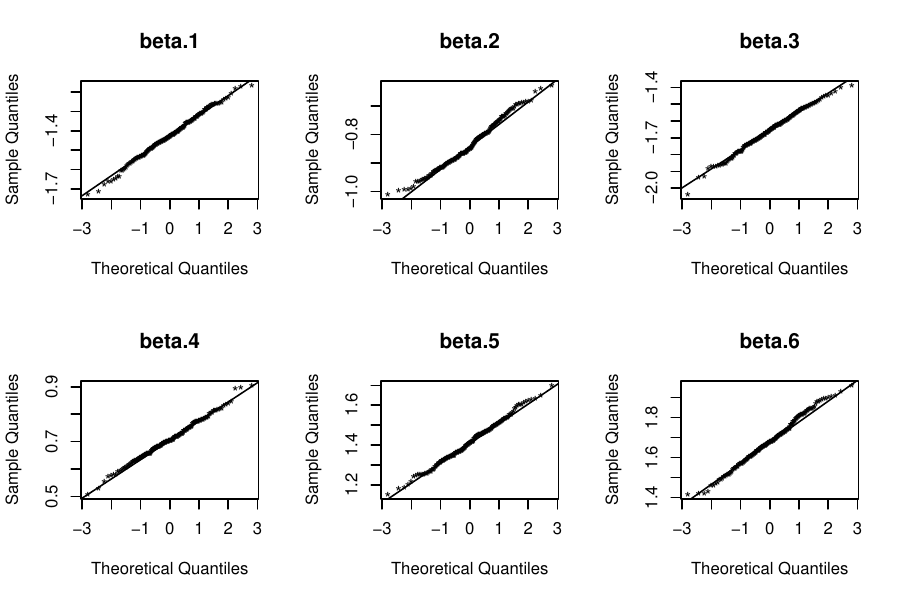}}
\caption[QQplot]
        {Normal Q-Q plots of the estimates of the six nonzero coefficients in the scenario with $p = 3000$, $n = 1000$, and $\rho = 0.8$}
    \end{figure*}

\begin{figure*}\centering
\subfloat[Lasso]
        {\includegraphics[width=0.5\textwidth]{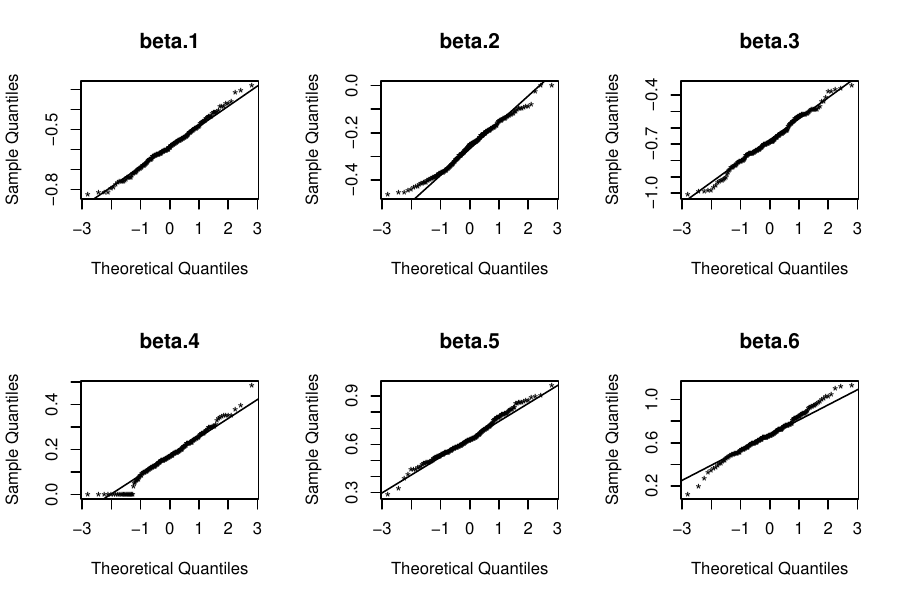}}
\subfloat[Adaptive Lasso \label{fig:mean and std of net24}]
         {\includegraphics[width=0.5\textwidth]{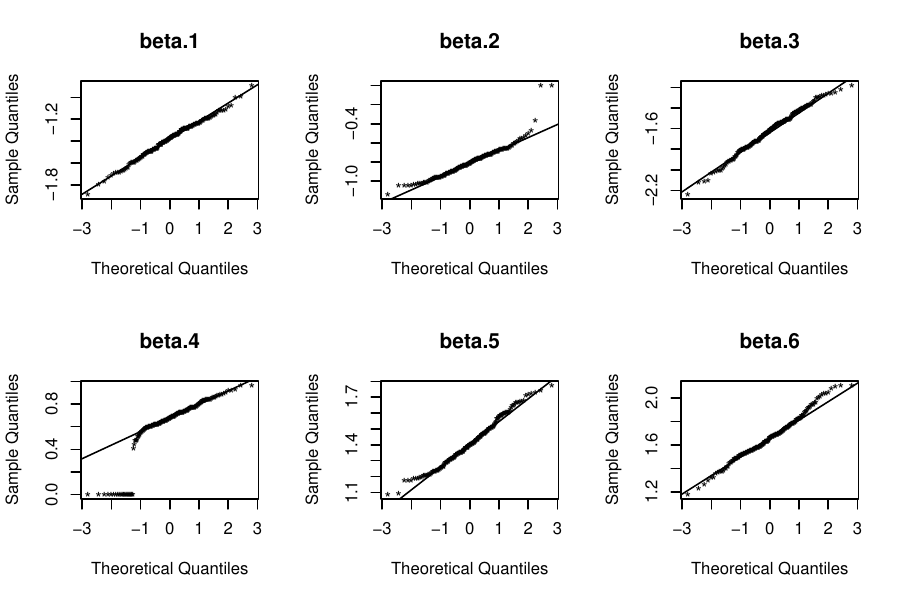}}
             \hfill
\subfloat[MCP]
         {\includegraphics[width=0.5\textwidth]{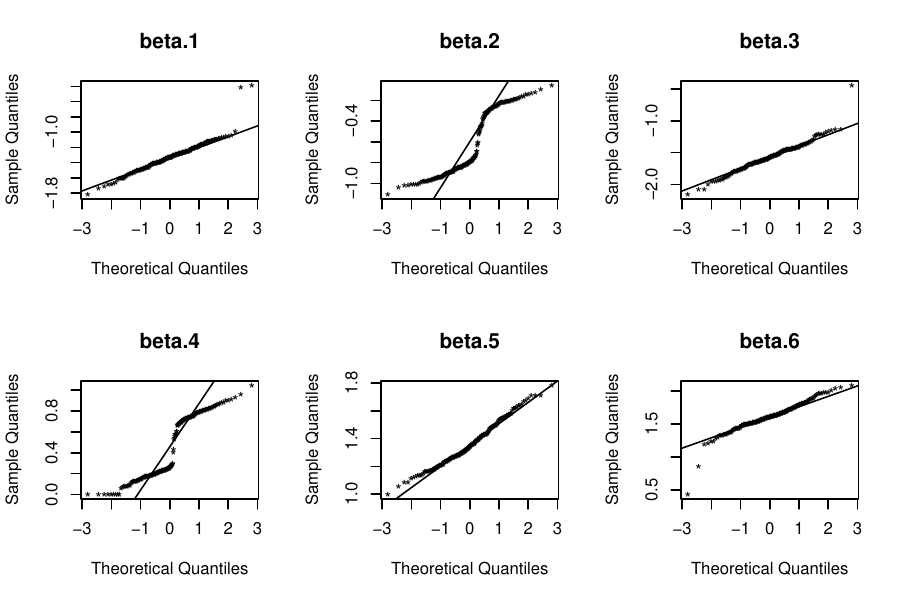}}
\subfloat[SCAD]
        {\includegraphics[width=0.5\textwidth]{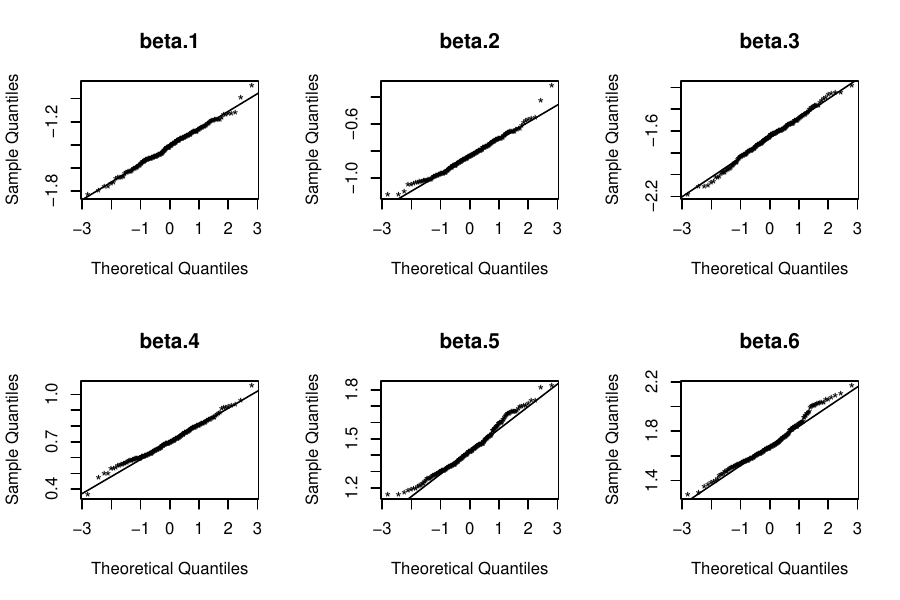}}
\caption[QQplot]
        {Normal Q-Q plots of the estimates of the six nonzero coefficients in the scenario with $p = 10000$, $n = 500$, and $\rho = 0$}
    \end{figure*}

\begin{figure*}\centering
\subfloat[Lasso]
        {\includegraphics[width=0.5\textwidth]{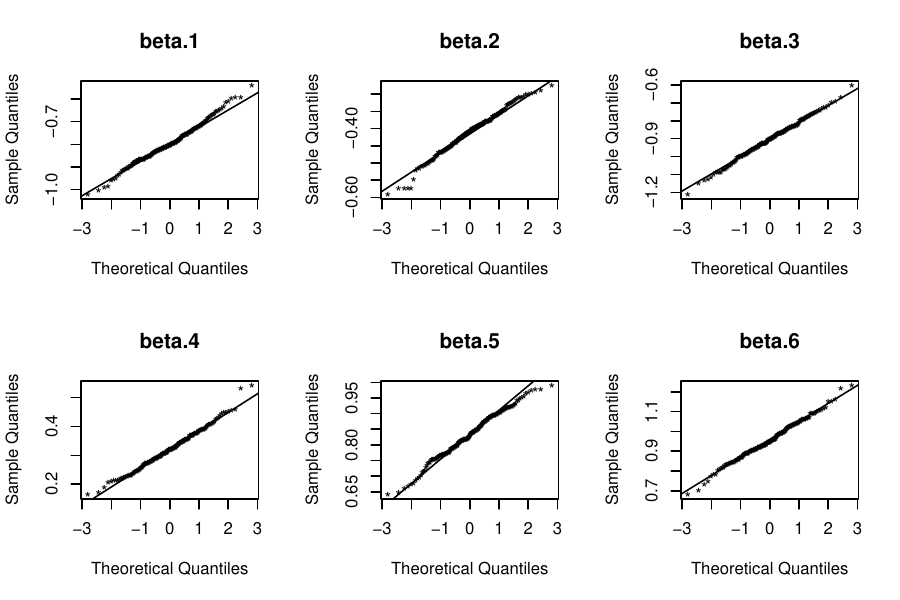}}
\subfloat[Adaptive Lasso \label{fig:mean and std of net24}]
         {\includegraphics[width=0.5\textwidth]{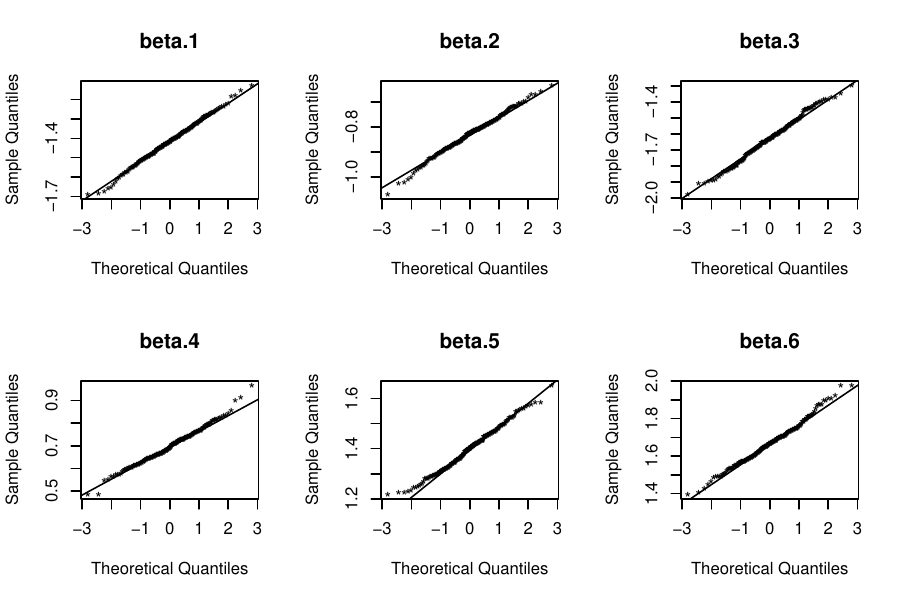}}
             \hfill
\subfloat[MCP]
         {\includegraphics[width=0.5\textwidth]{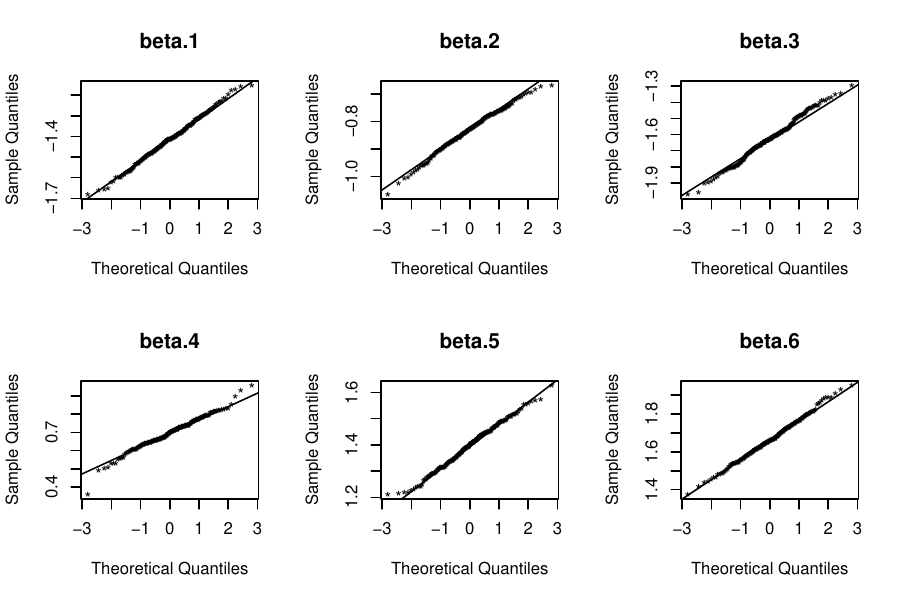}}
\subfloat[SCAD]
        {\includegraphics[width=0.5\textwidth]{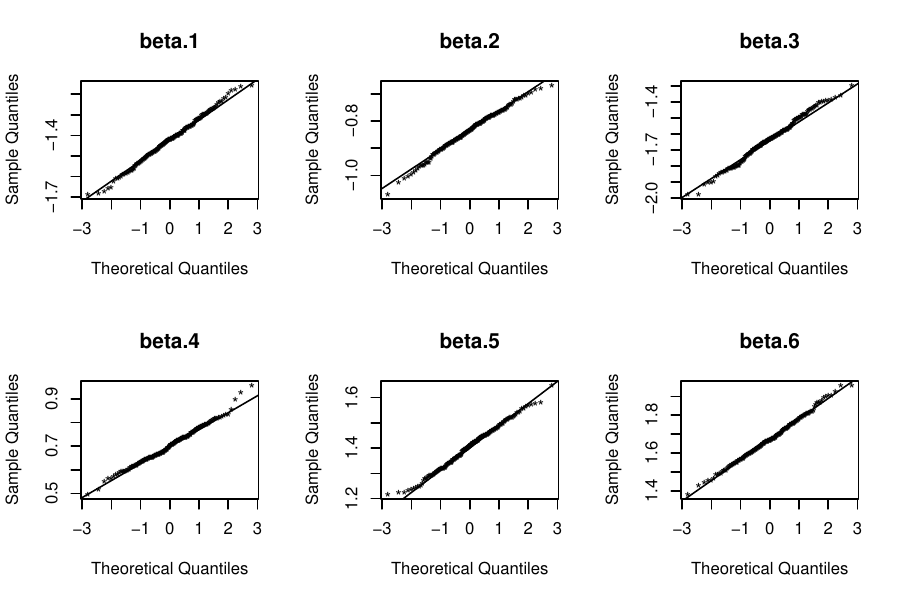}}
\caption[QQplot]
        {Normal Q-Q plots of the estimates of the six nonzero coefficients in the scenario with $p = 10000$, $n = 1000$, and $\rho = 0$}
    \end{figure*}

\begin{figure*}\centering
\subfloat[Lasso]
        {\includegraphics[width=0.5\textwidth]{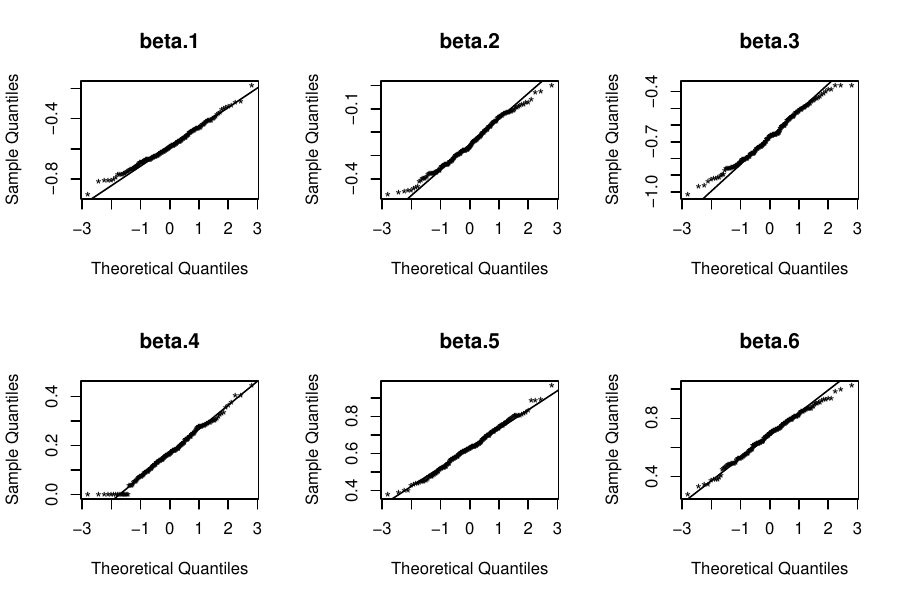}}
\subfloat[Adaptive Lasso \label{fig:mean and std of net24}]
         {\includegraphics[width=0.5\textwidth]{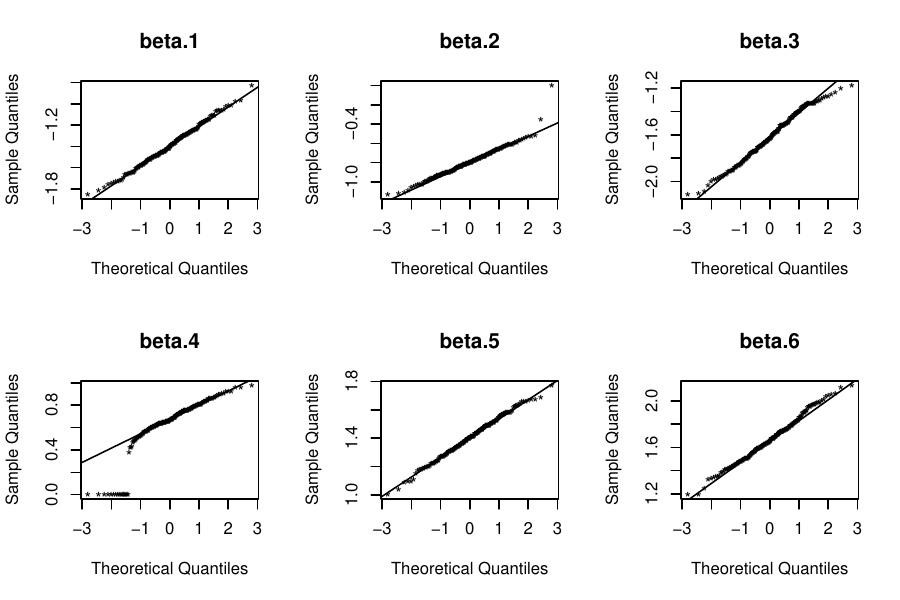}}
             \hfill
\subfloat[MCP]
         {\includegraphics[width=0.5\textwidth]{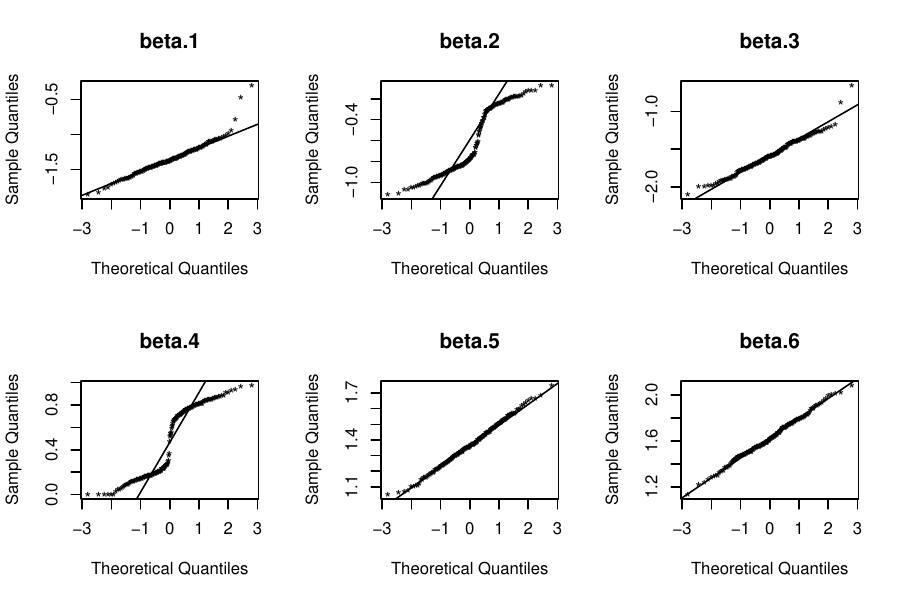}}
\subfloat[SCAD]
        {\includegraphics[width=0.5\textwidth]{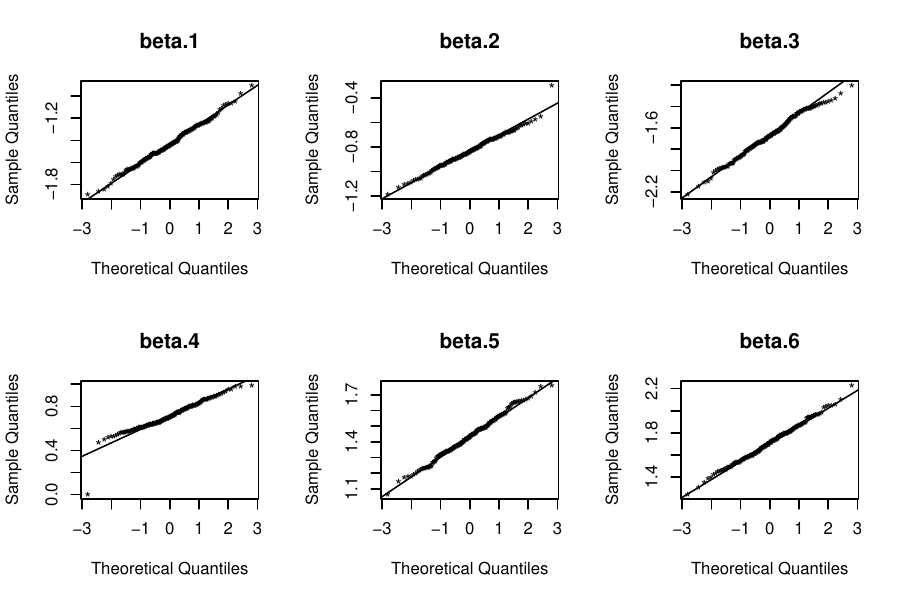}}
\caption[QQplot]
        {Normal Q-Q plots of the estimates of the six nonzero coefficients in the scenario with $p = 10000$, $n = 500$, and $\rho = 0.8$}
    \end{figure*}

\begin{figure*}\centering
\subfloat[Lasso]
        {\includegraphics[width=0.5\textwidth]{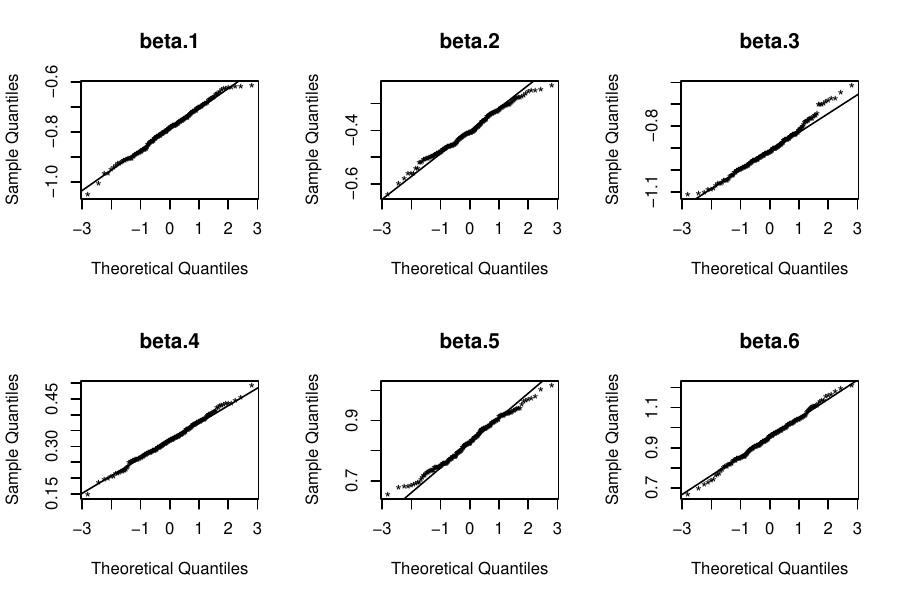}}
\subfloat[Adaptive Lasso \label{fig:mean and std of net24}]
         {\includegraphics[width=0.5\textwidth]{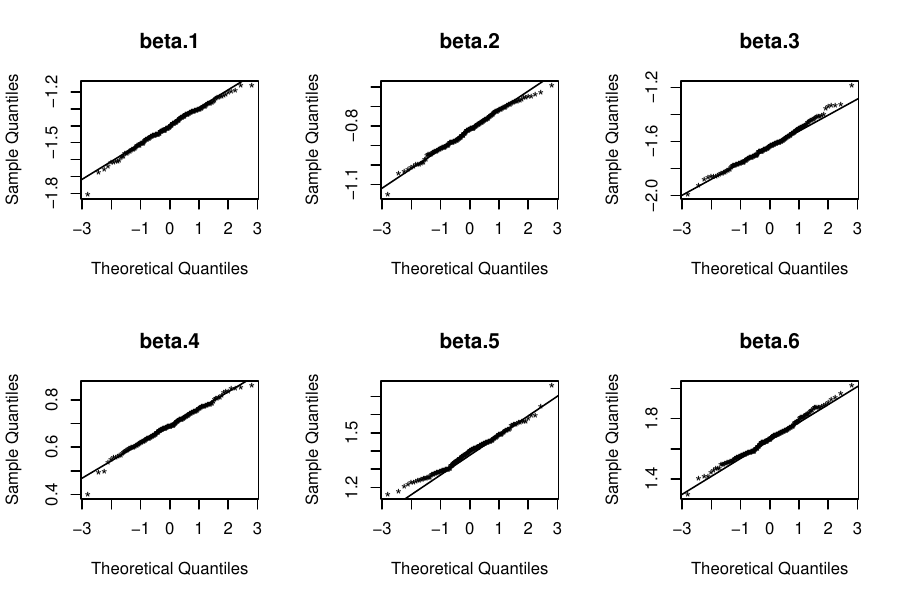}}
             \hfill
\subfloat[MCP]
         {\includegraphics[width=0.5\textwidth]{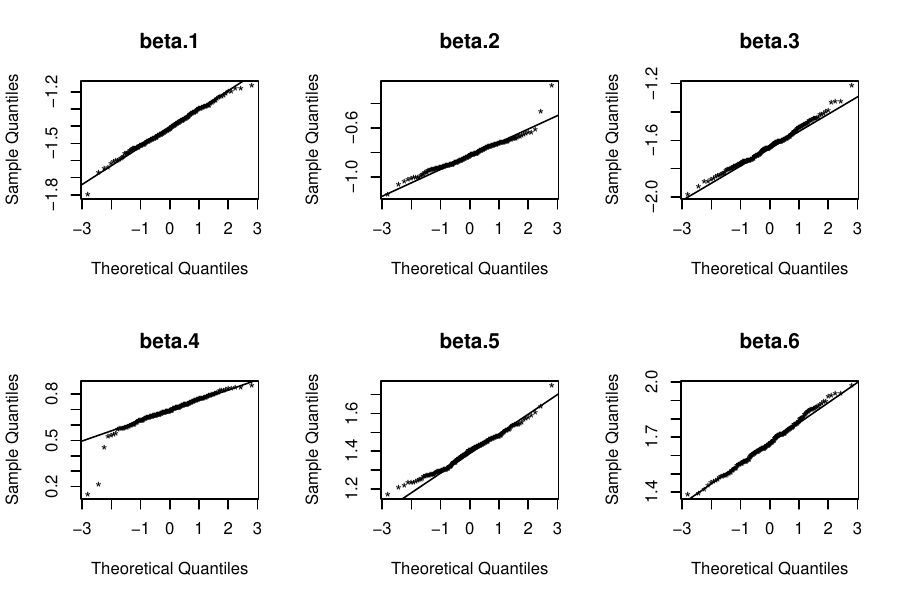}}
\subfloat[SCAD]
        {\includegraphics[width=0.5\textwidth]{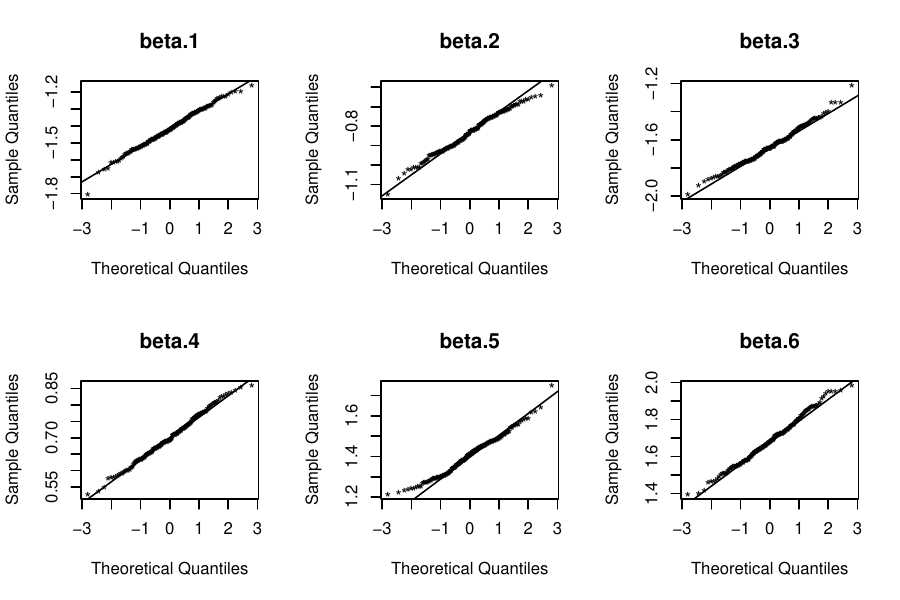}}
\caption[QQplot]
        {Normal Q-Q plots of the estimates of the six nonzero coefficients in the scenario with $p = 10000$, $n = 1000$, and $\rho = 0.8$}
    \end{figure*}

%12
\begin{figure*}\centering
\subfloat[Lasso]
        {\includegraphics[width=0.5\textwidth]{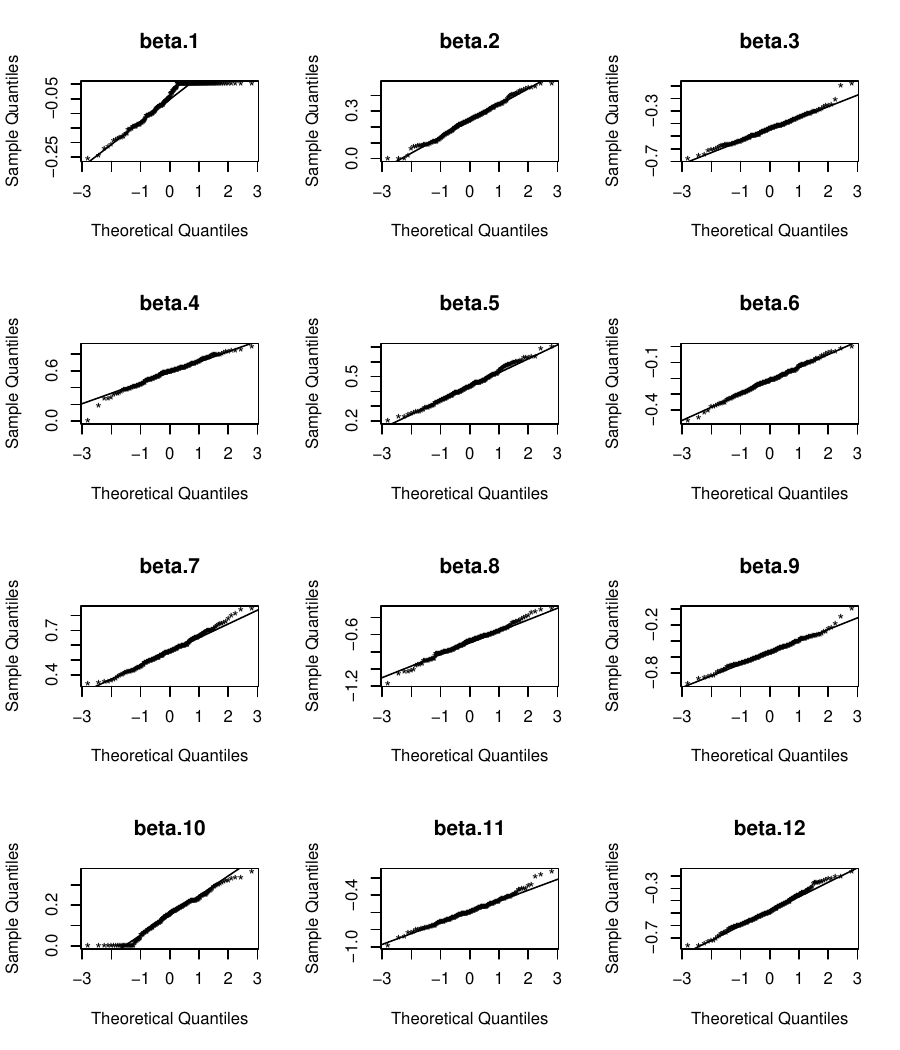}}
\subfloat[Adaptive Lasso \label{fig:mean and std of net24}]
         {\includegraphics[width=0.5\textwidth]{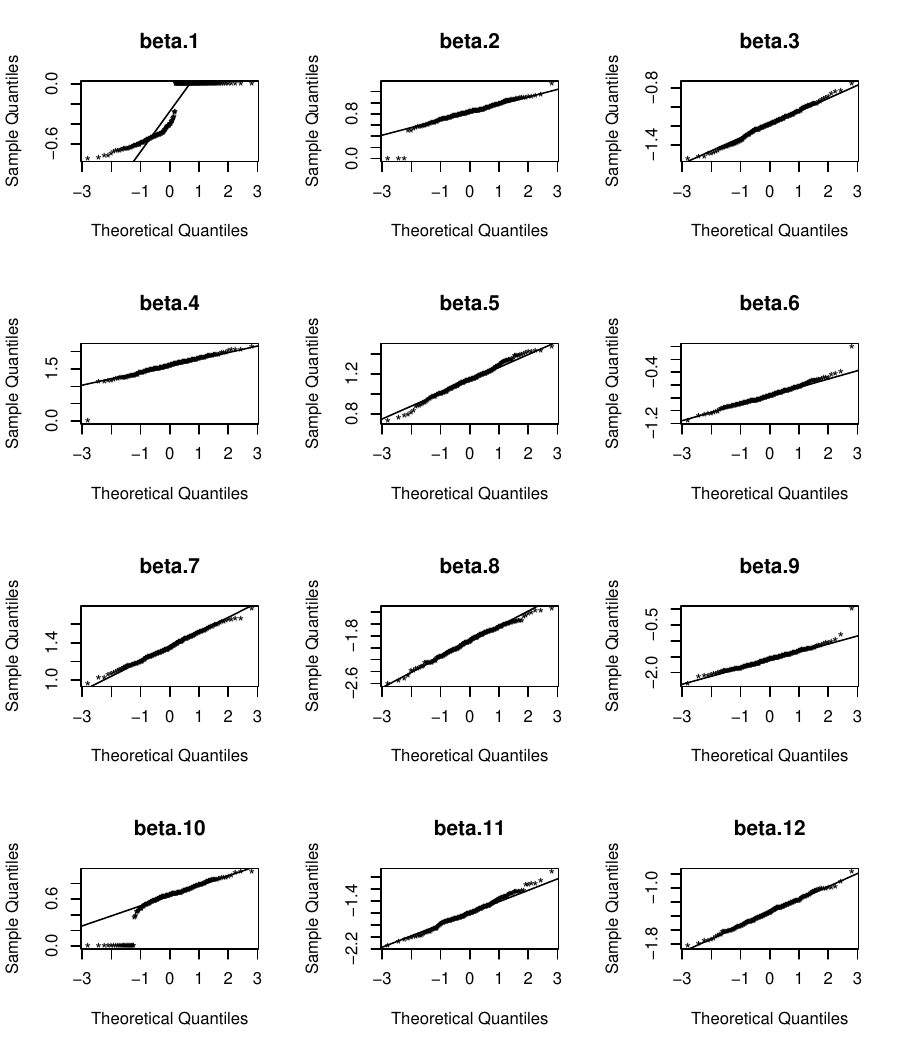}}
             \hfill
\subfloat[MCP]
         {\includegraphics[width=0.5\textwidth]{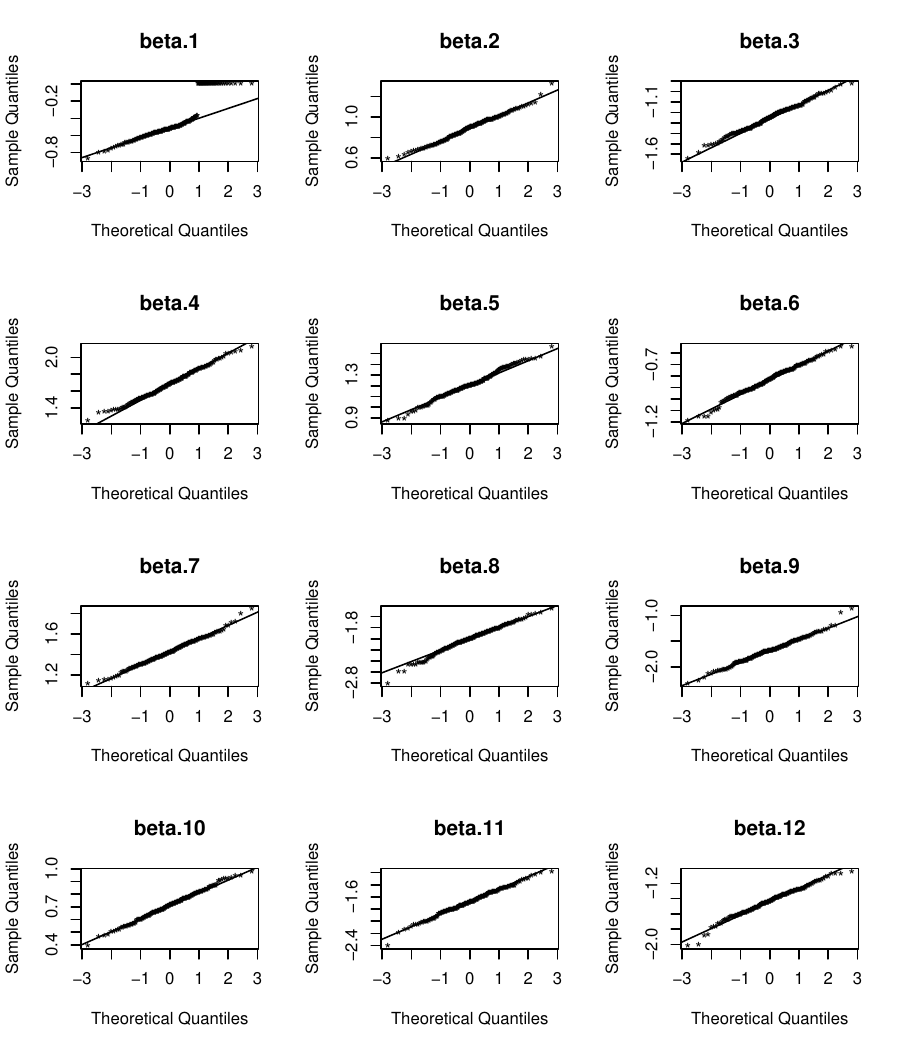}}
\subfloat[SCAD]
        {\includegraphics[width=0.5\textwidth]{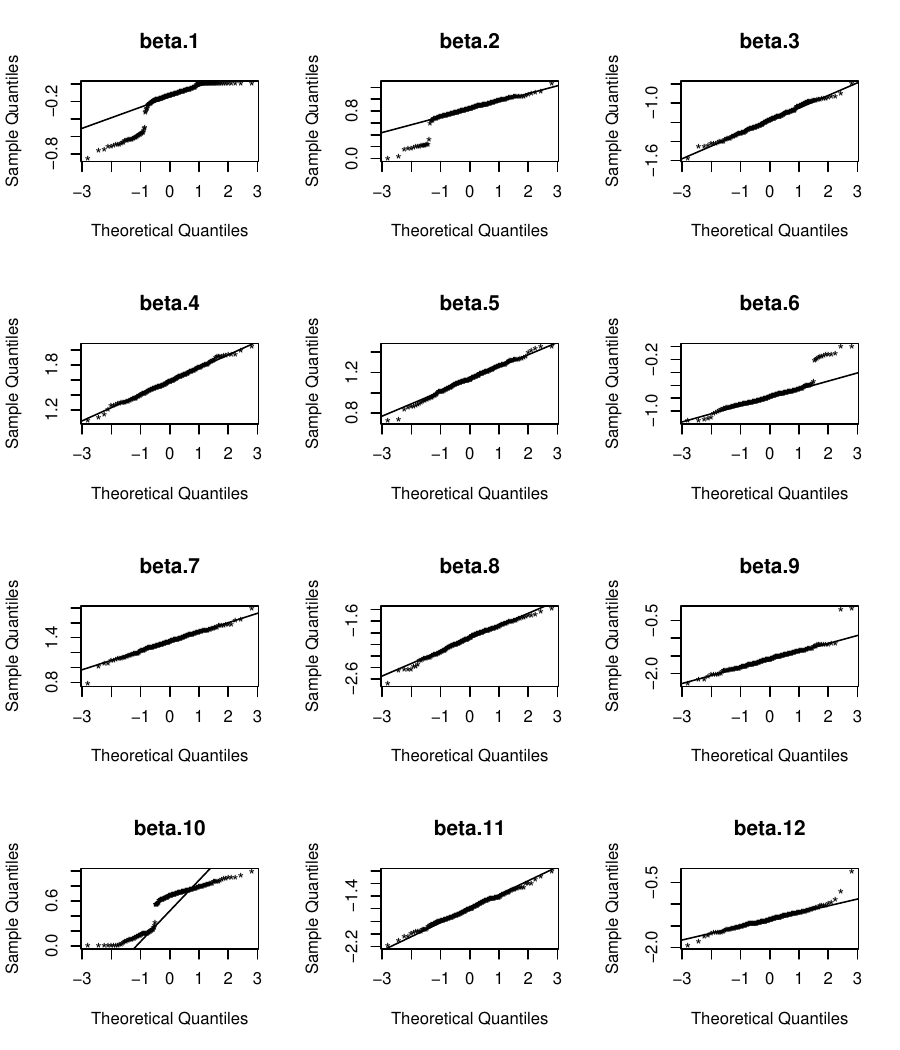}}
\caption[QQplot]
        {Normal Q-Q plots of the estimates of the twelve nonzero coefficients in the scenario with $p = 3000$, $n = 500$, and $\rho = 0$}
    \end{figure*}

\begin{figure*}\centering
\subfloat[Lasso]
        {\includegraphics[width=0.5\textwidth]{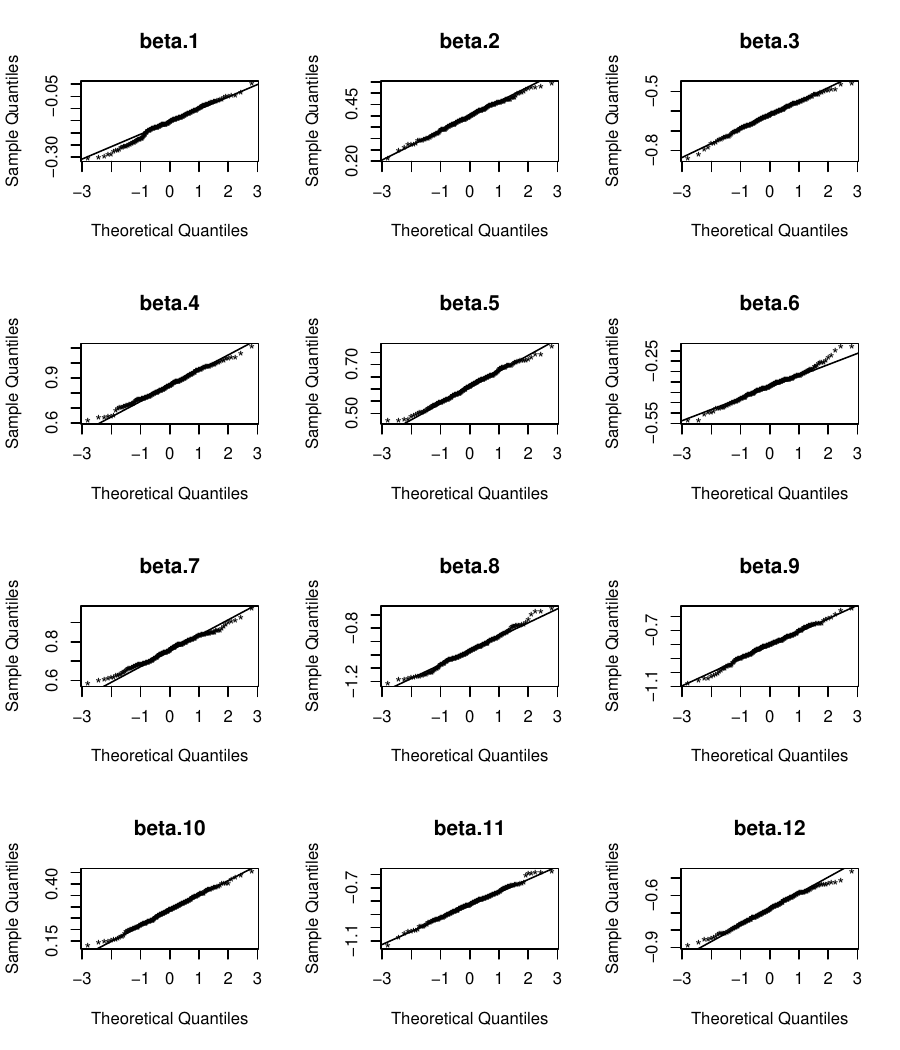}}
\subfloat[Adaptive Lasso \label{fig:mean and std of net24}]
         {\includegraphics[width=0.5\textwidth]{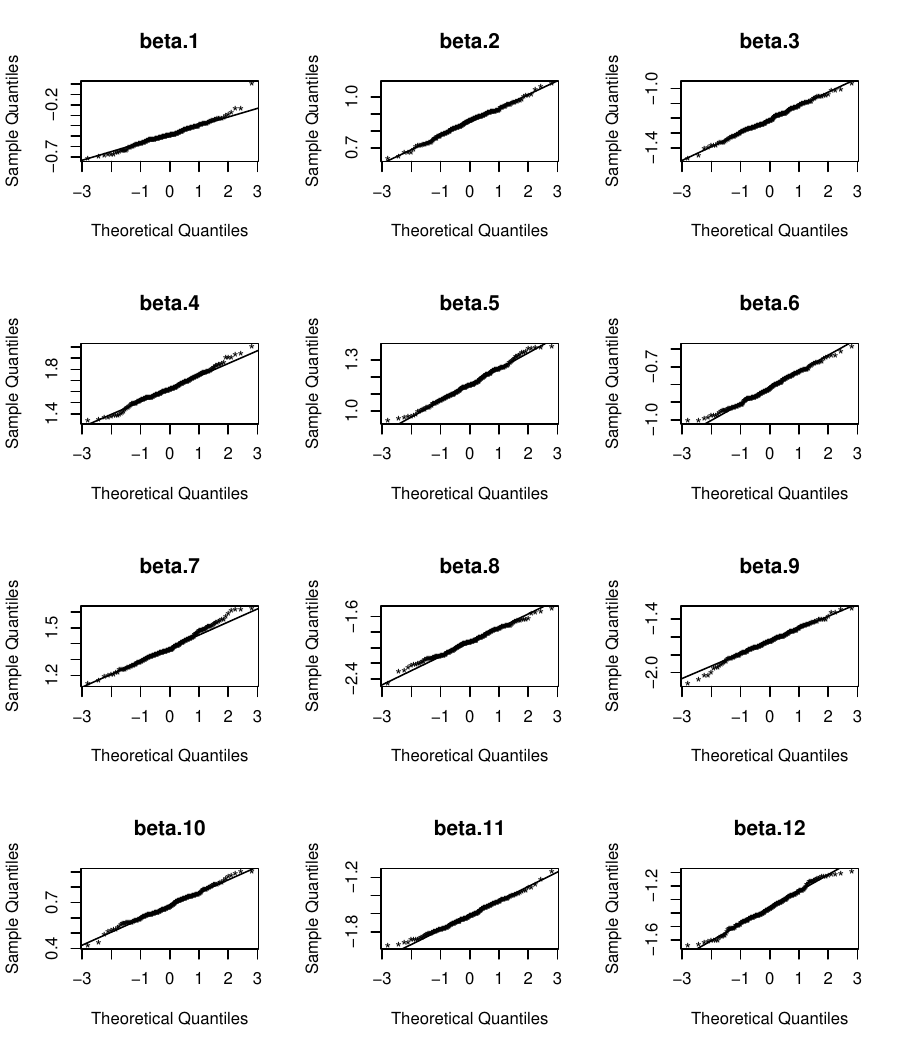}}
             \hfill
\subfloat[MCP]
         {\includegraphics[width=0.5\textwidth]{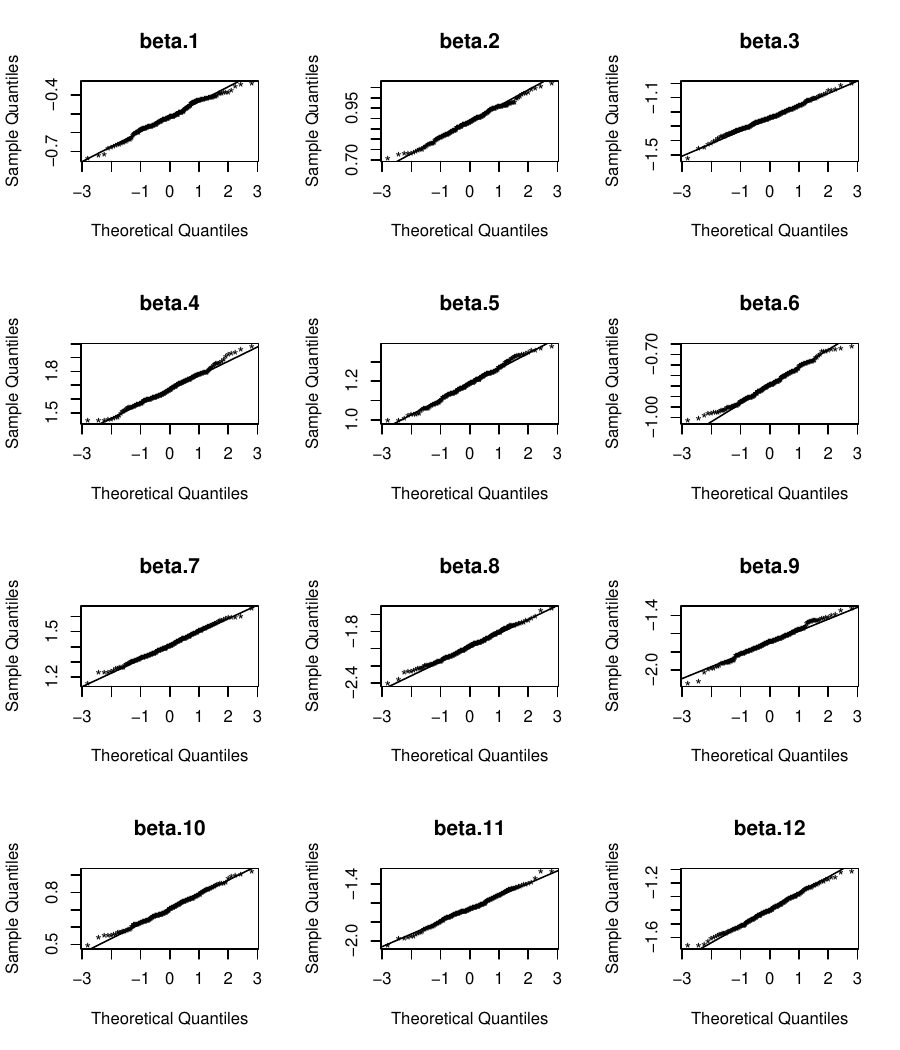}}
\subfloat[SCAD]
        {\includegraphics[width=0.5\textwidth]{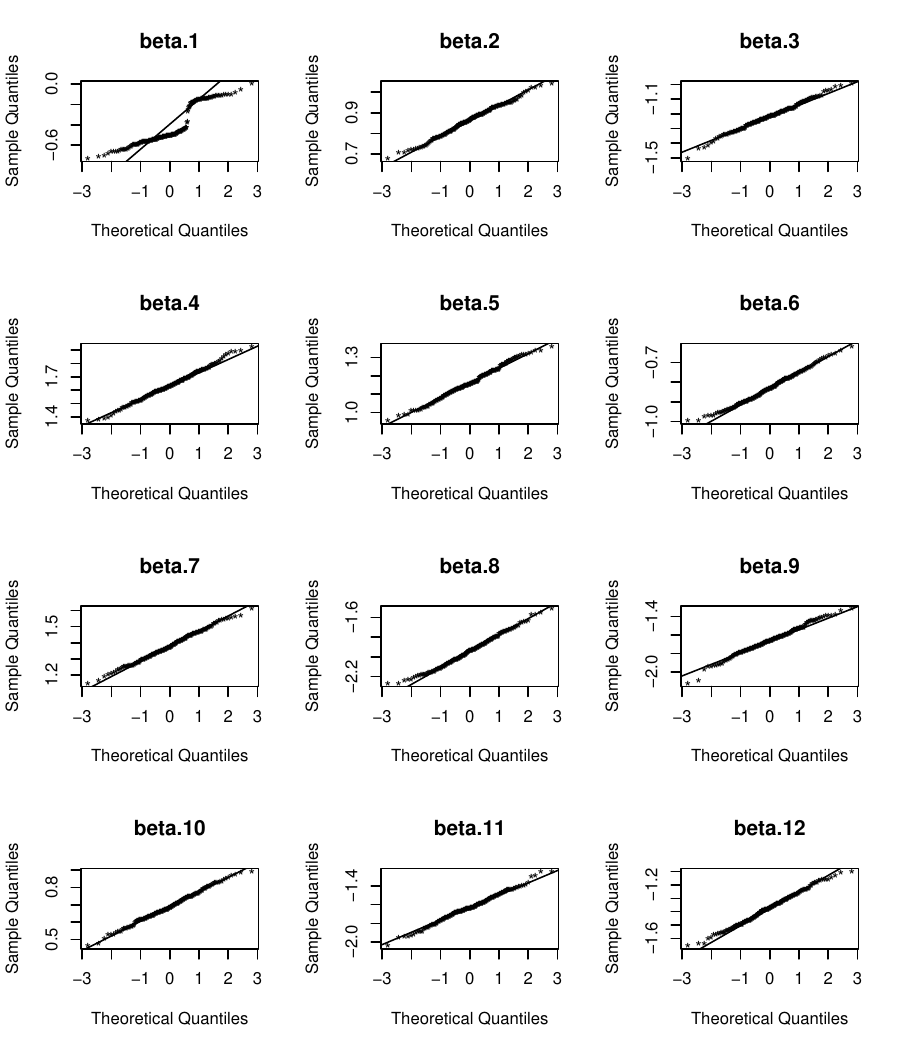}}
\caption[QQplot]
        {Normal Q-Q plots of the estimates of the twelve nonzero coefficients in the scenario with $p = 3000$, $n = 1000$, and $\rho = 0$}
    \end{figure*}

\begin{figure*}\centering
\subfloat[Lasso]
        {\includegraphics[width=0.5\textwidth]{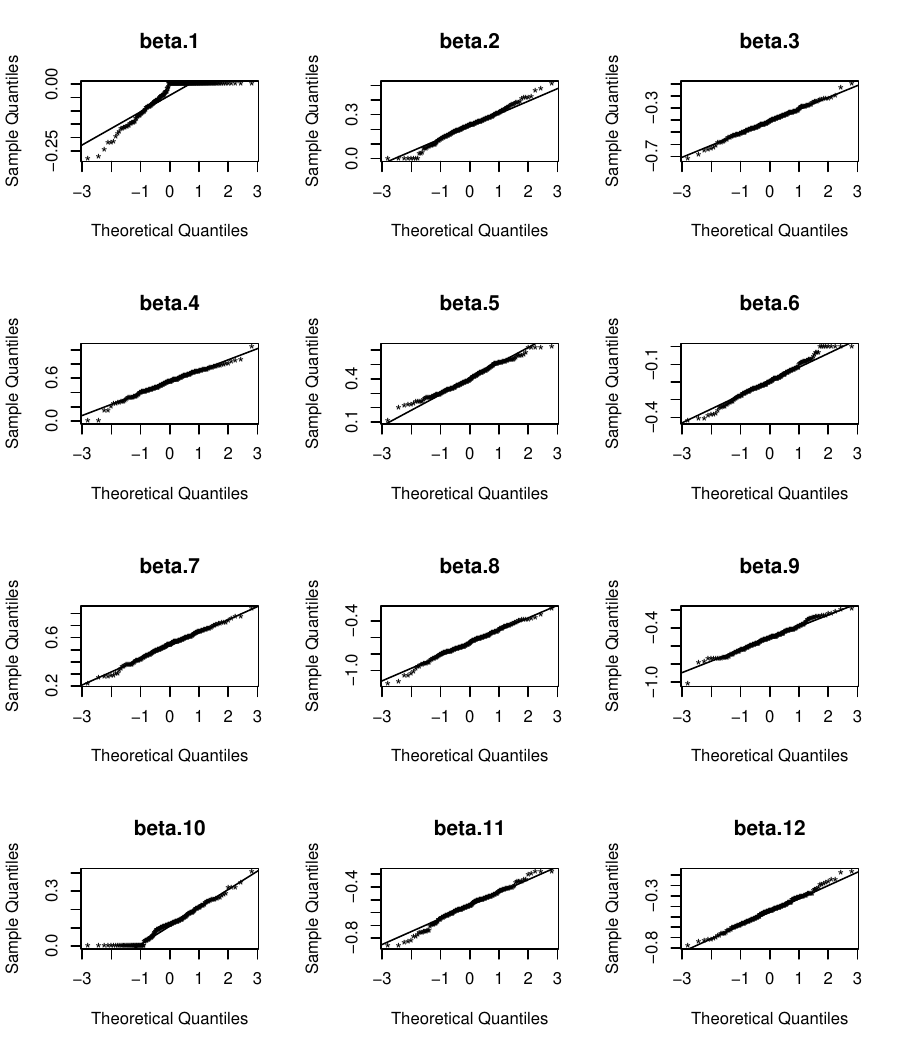}}
\subfloat[Adaptive Lasso \label{fig:mean and std of net24}]
         {\includegraphics[width=0.5\textwidth]{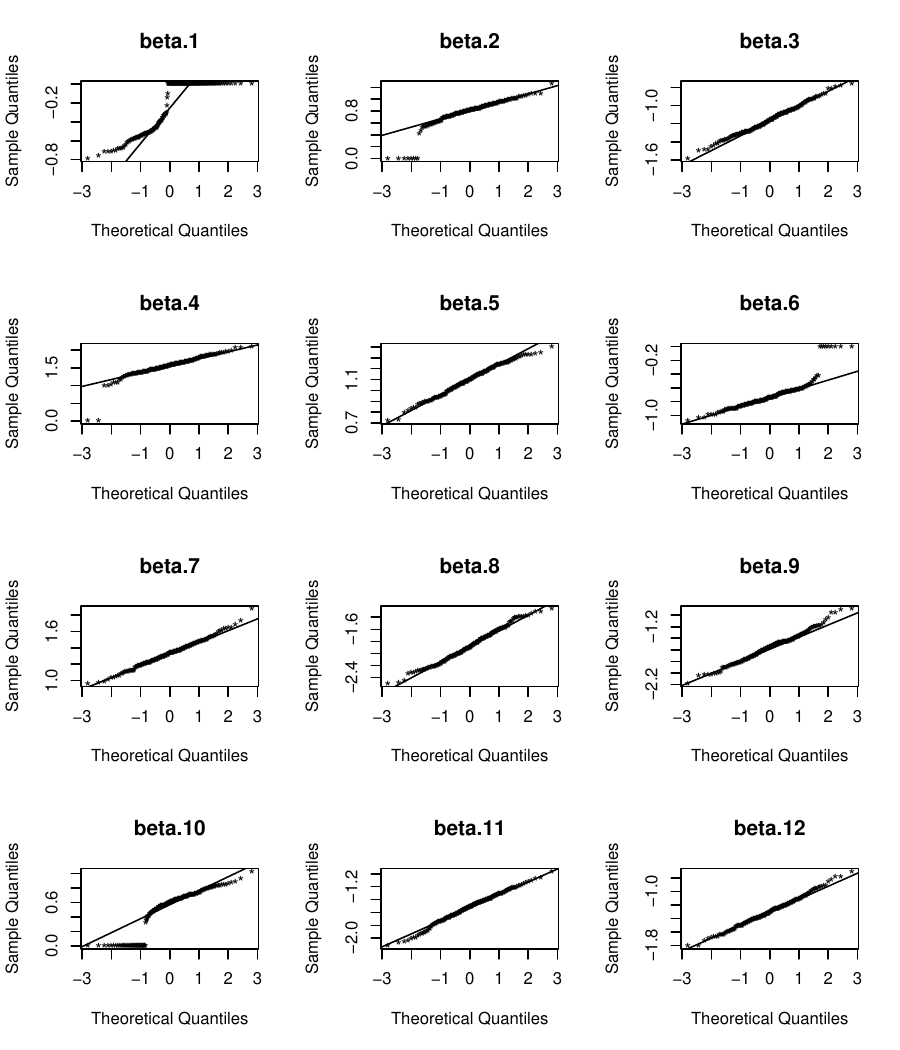}}
             \hfill
\subfloat[MCP]
         {\includegraphics[width=0.5\textwidth]{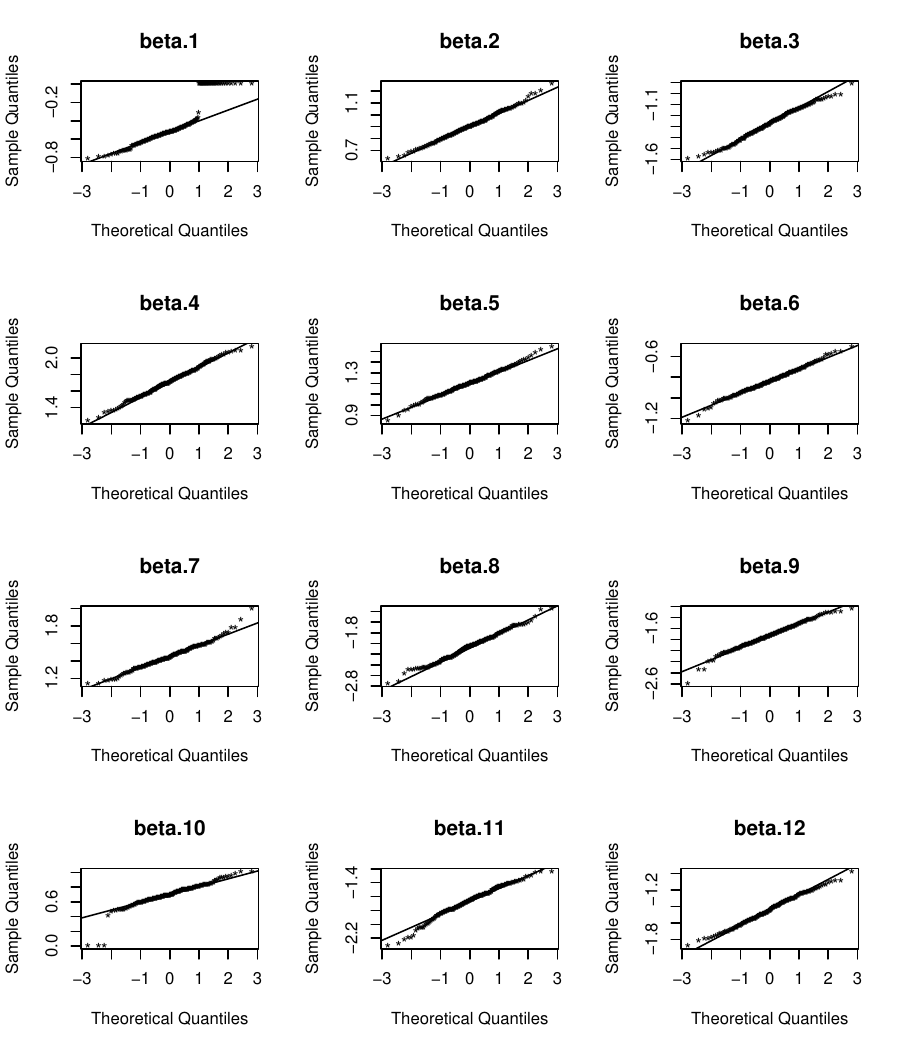}}
\subfloat[SCAD]
        {\includegraphics[width=0.5\textwidth]{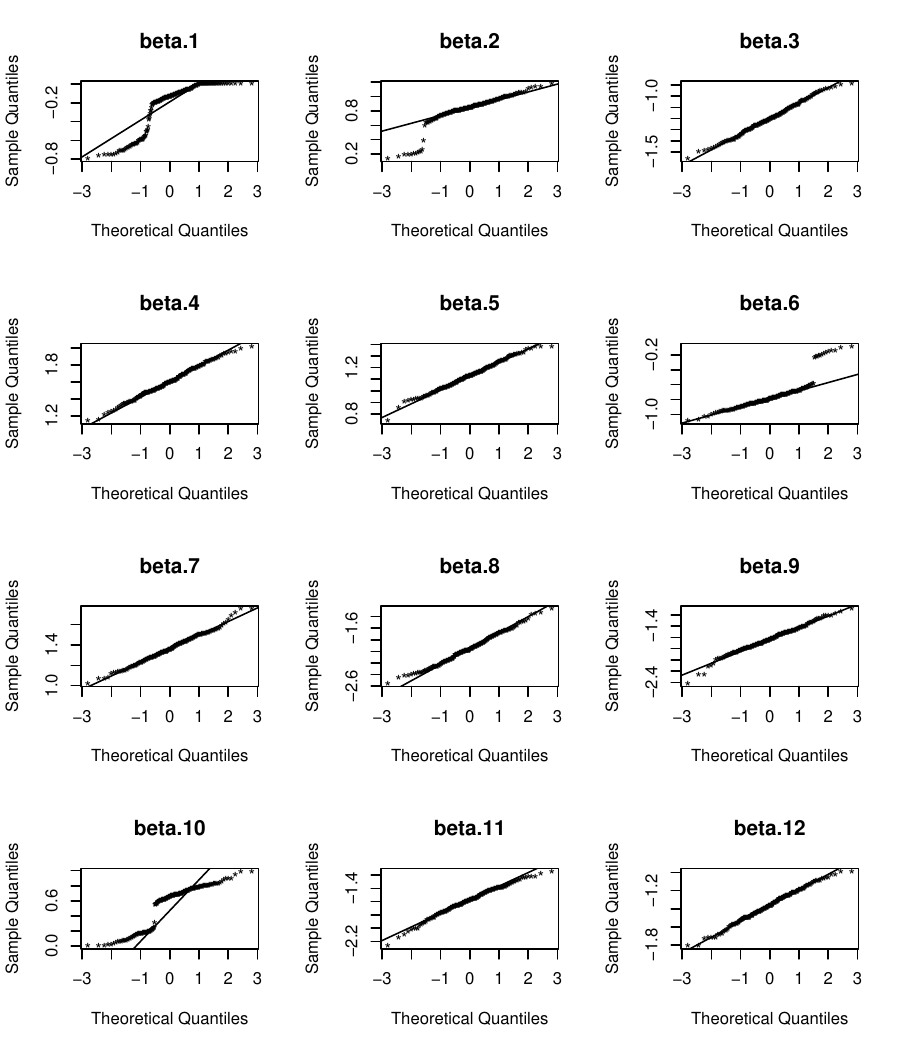}}
\caption[QQplot]
        {Normal Q-Q plots of the estimates of the twelve nonzero coefficients in the scenario with $p = 3000$, $n = 500$, and $\rho = 0.8$}
    \end{figure*}

\begin{figure*}\centering
\subfloat[Lasso]
        {\includegraphics[width=0.5\textwidth]{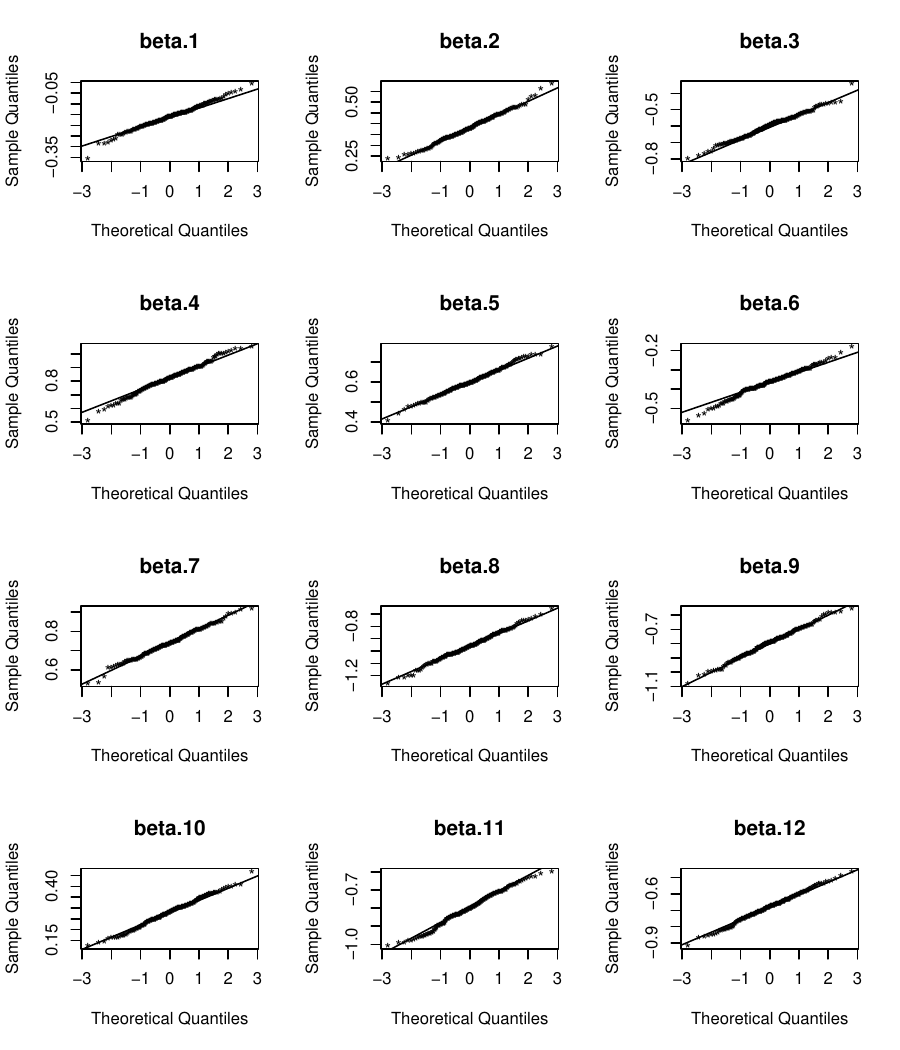}}
\subfloat[Adaptive Lasso \label{fig:mean and std of net24}]
         {\includegraphics[width=0.5\textwidth]{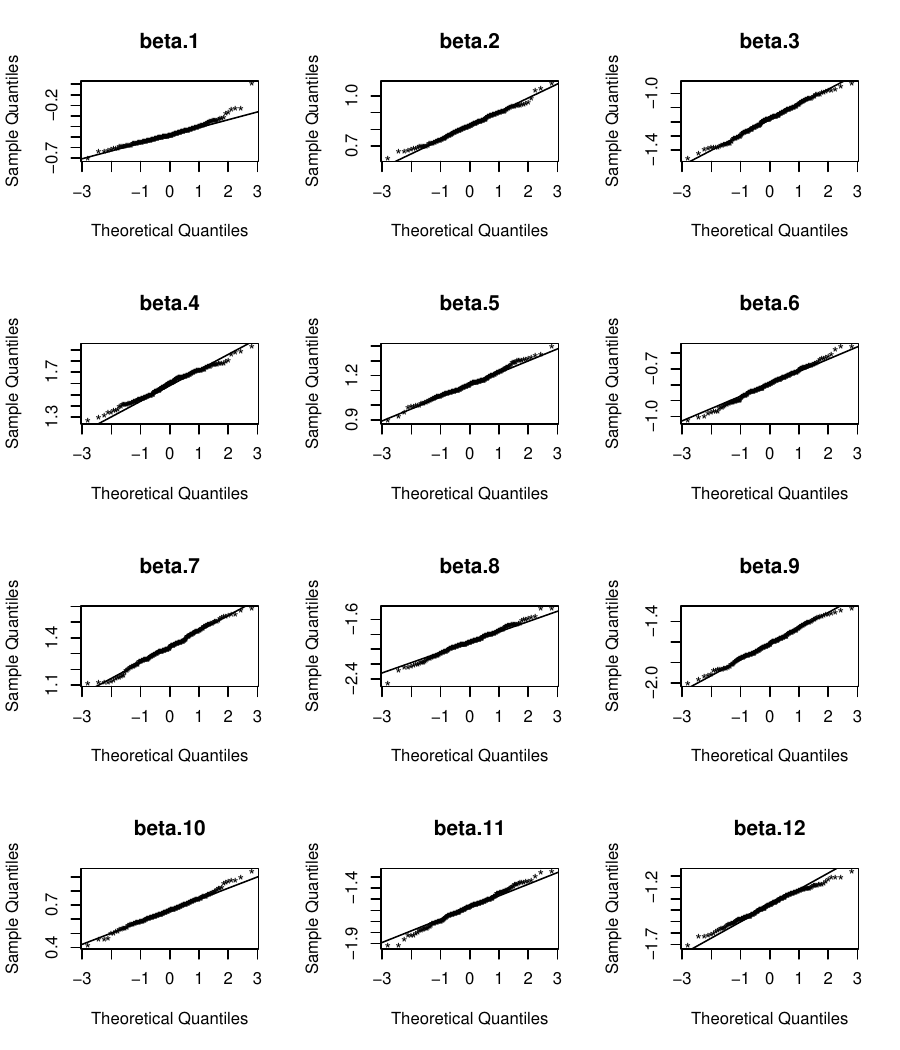}}
             \hfill
\subfloat[MCP]
         {\includegraphics[width=0.5\textwidth]{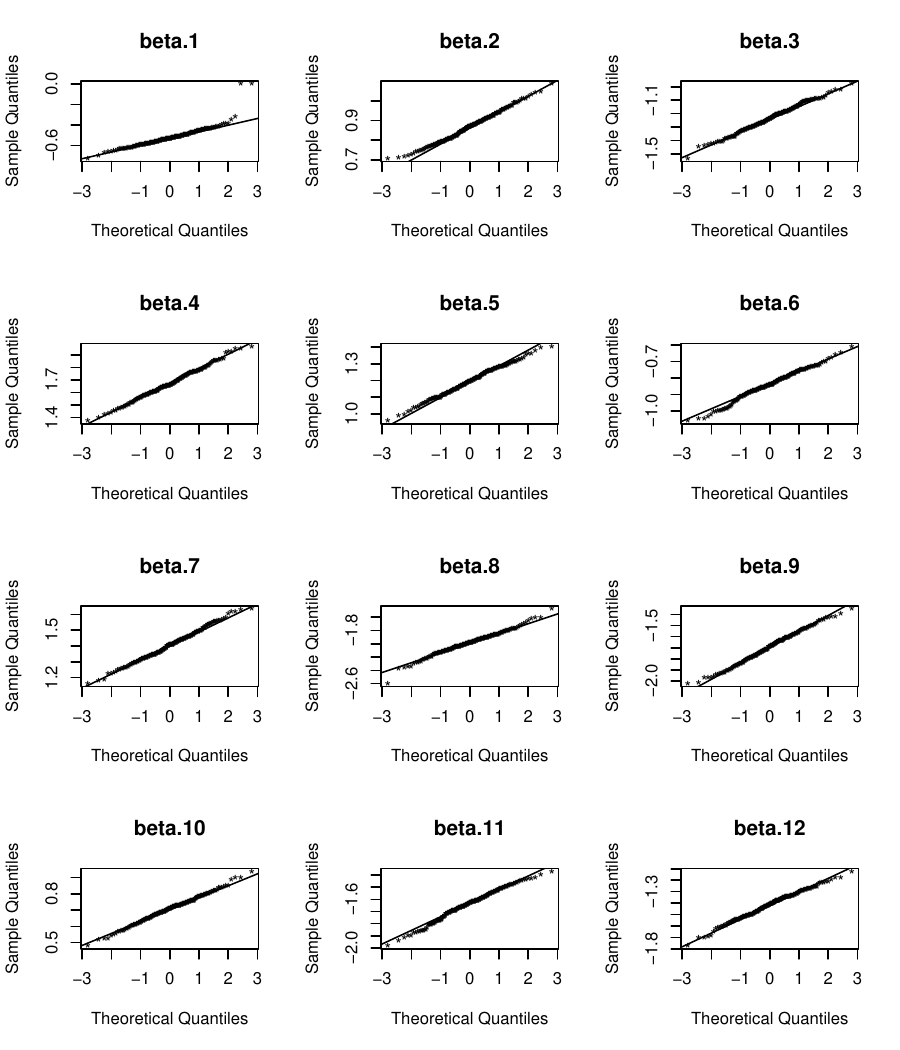}}
\subfloat[SCAD]
        {\includegraphics[width=0.5\textwidth]{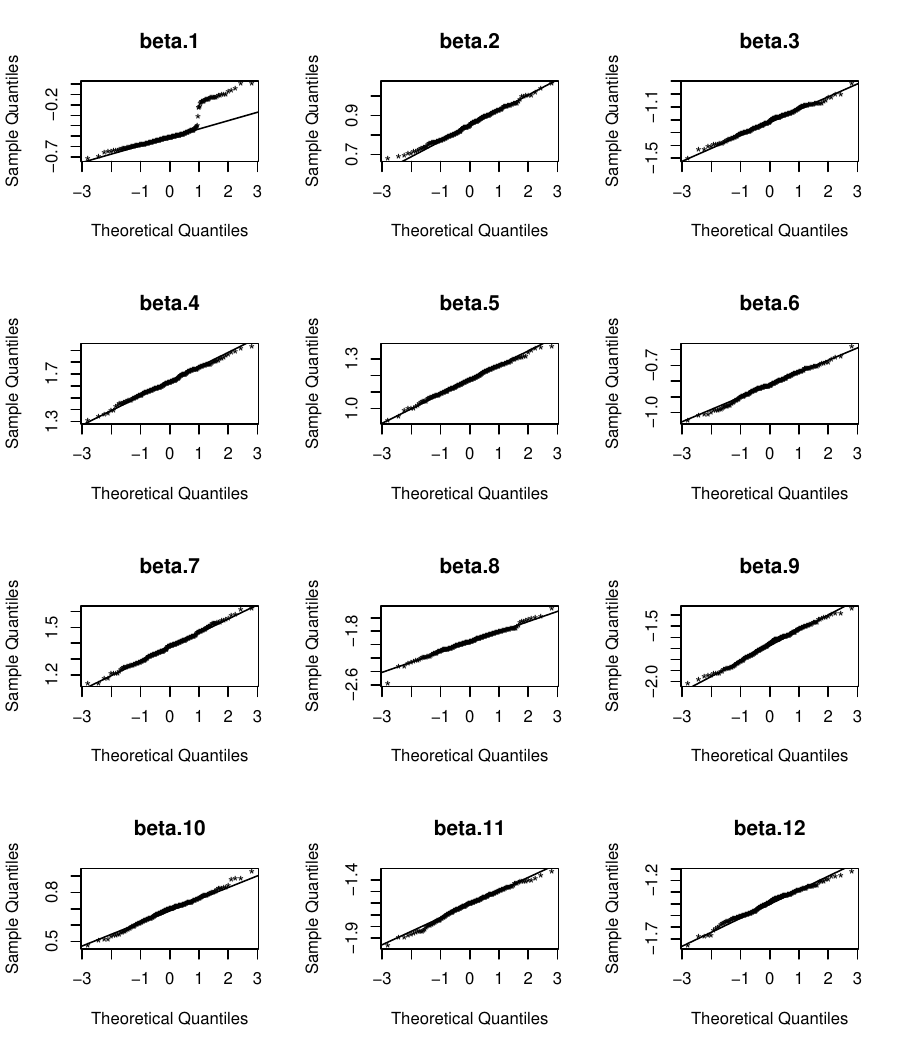}}
\caption[QQplot]
        {Normal Q-Q plots of the estimates of the twelve nonzero coefficients in the scenario with $p = 3000$, $n = 1000$, and $\rho = 0.8$}
    \end{figure*}

\begin{figure*}\centering
\subfloat[Lasso]
        {\includegraphics[width=0.5\textwidth]{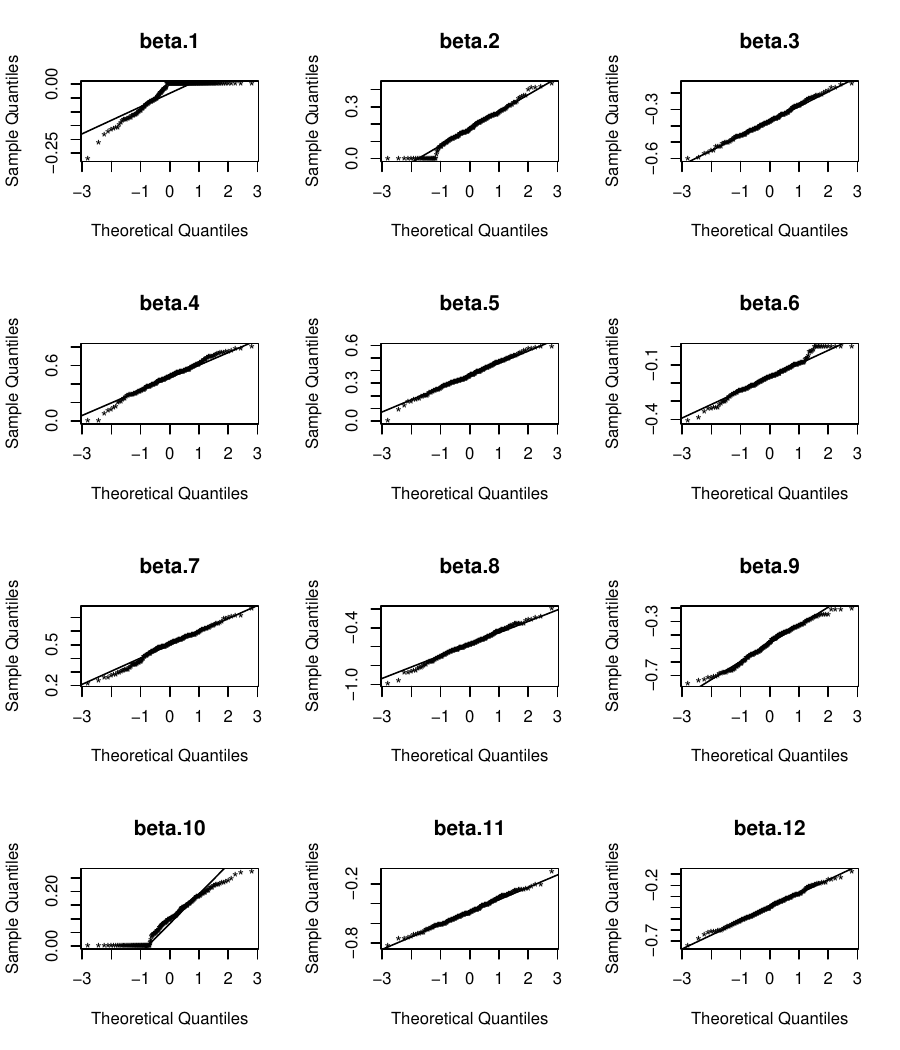}}
\subfloat[Adaptive Lasso \label{fig:mean and std of net24}]
         {\includegraphics[width=0.5\textwidth]{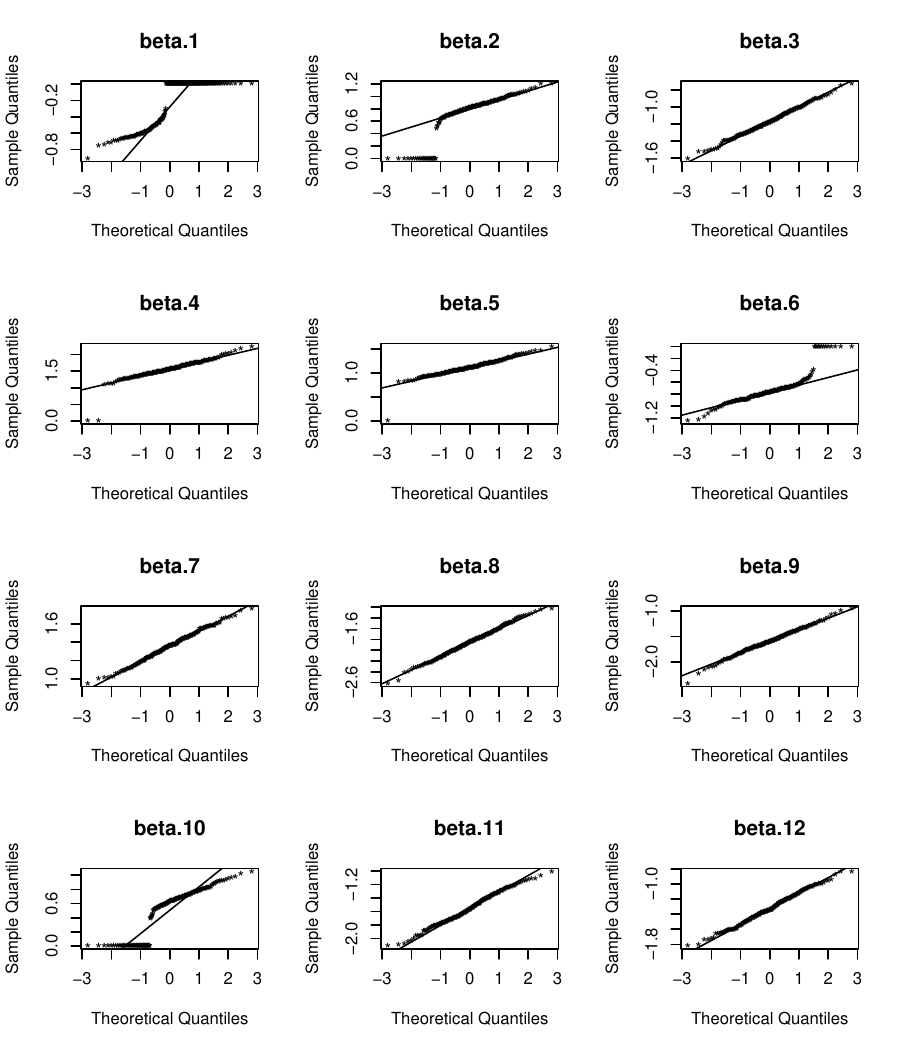}}
             \hfill
\subfloat[MCP]
         {\includegraphics[width=0.5\textwidth]{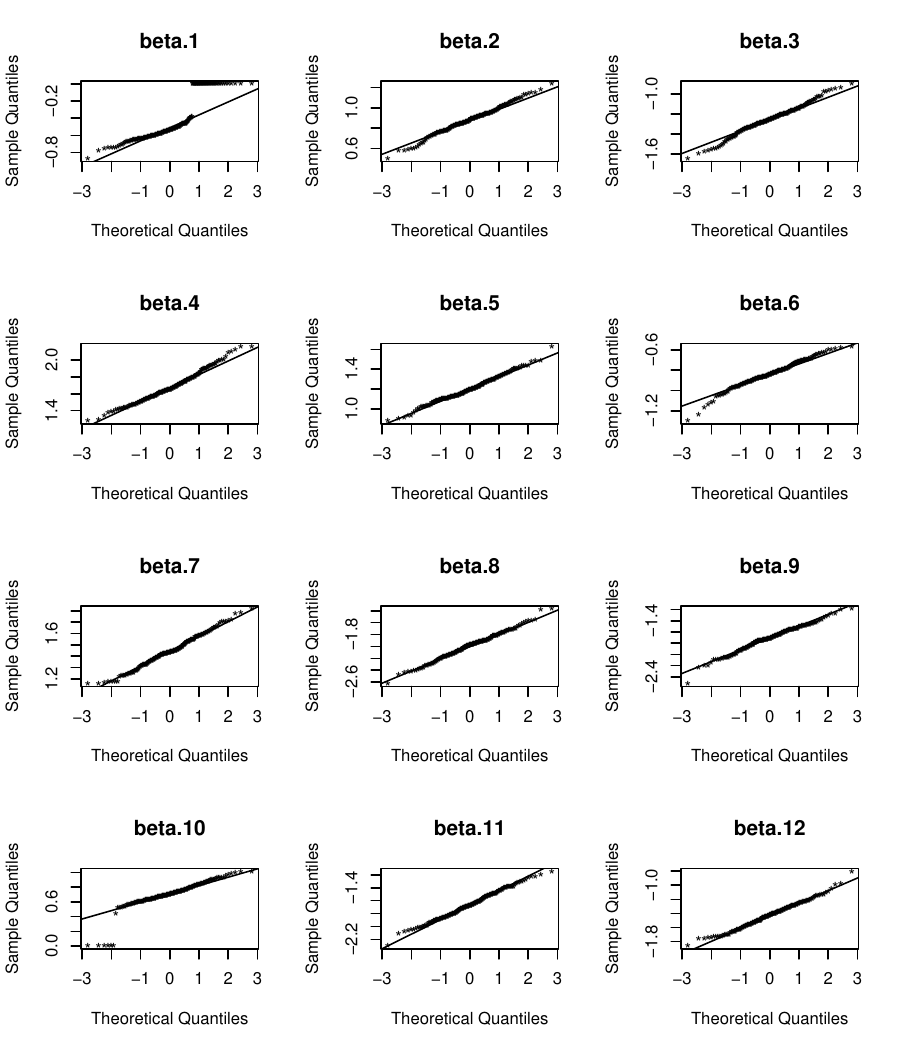}}
\subfloat[SCAD]
        {\includegraphics[width=0.5\textwidth]{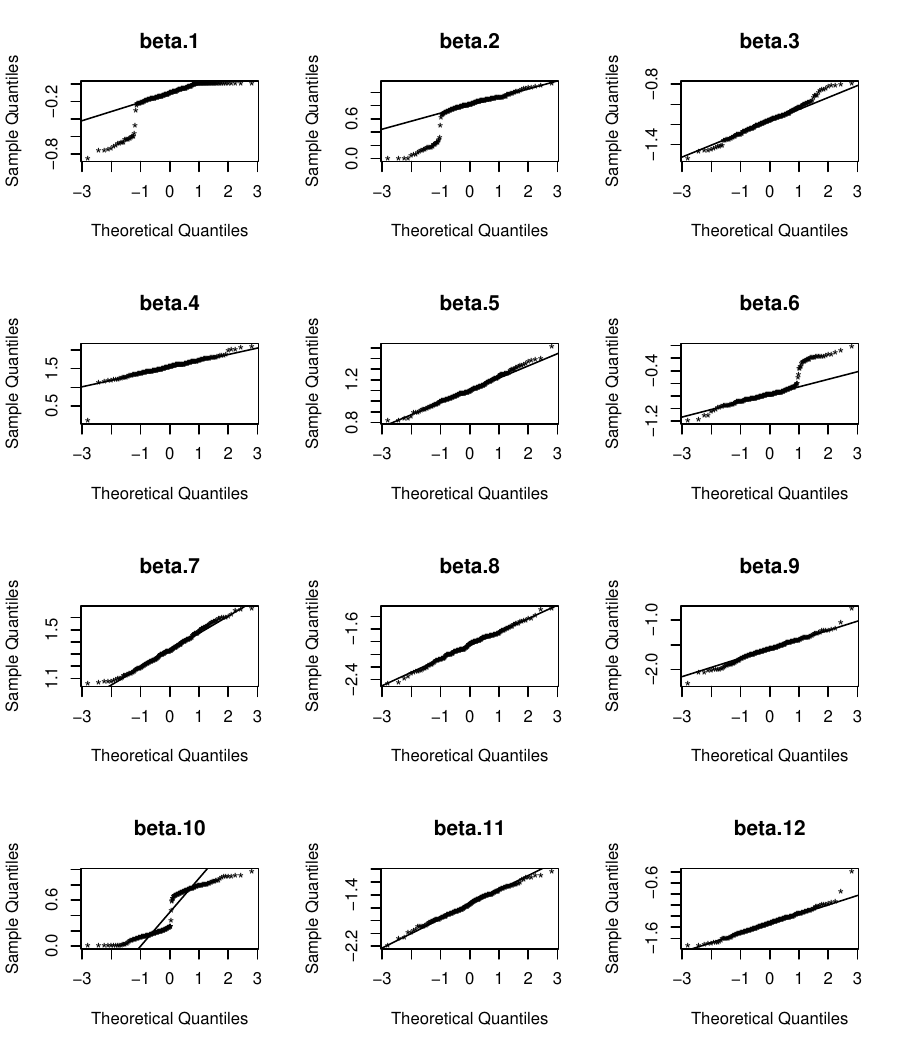}}
\caption[QQplot]
        {Normal Q-Q plots of the estimates of the twelve nonzero coefficients in the scenario with $p = 10000$, $n = 500$, and $\rho = 0$}
    \end{figure*}

\begin{figure*}\centering
\subfloat[Lasso]
        {\includegraphics[width=0.5\textwidth]{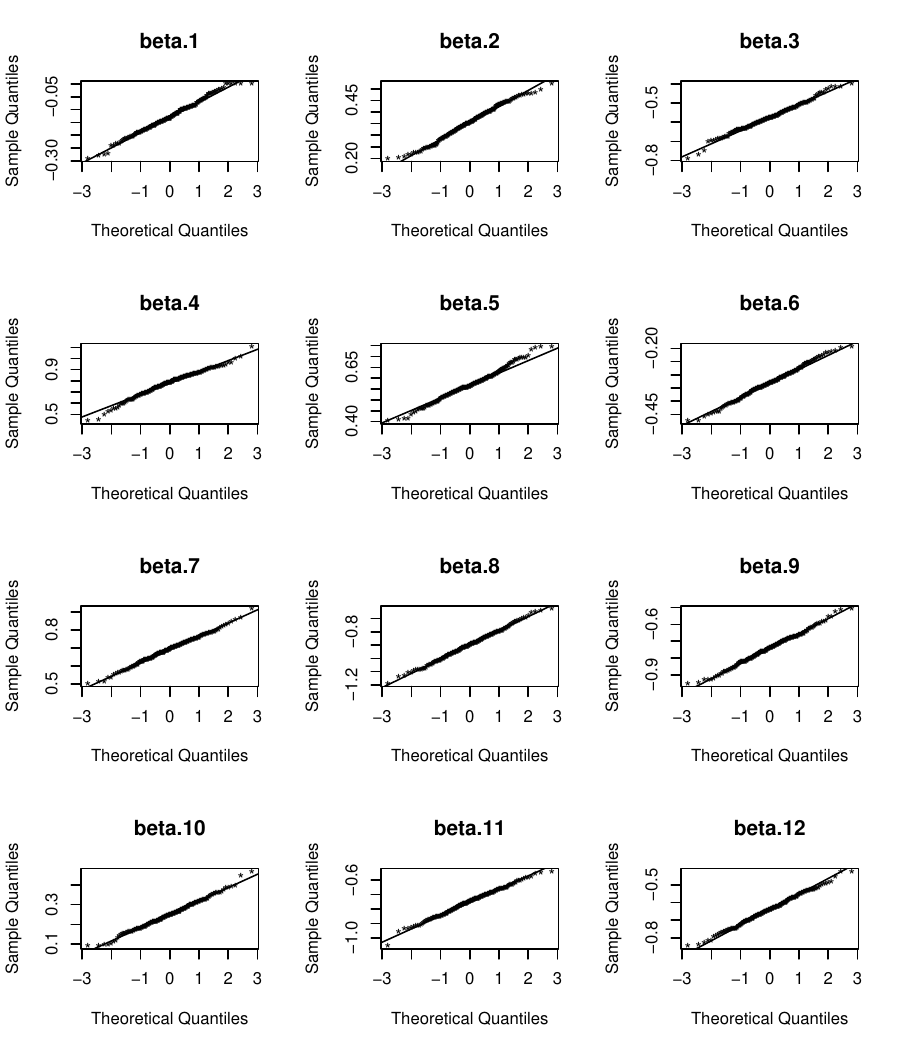}}
\subfloat[Adaptive Lasso \label{fig:mean and std of net24}]
         {\includegraphics[width=0.5\textwidth]{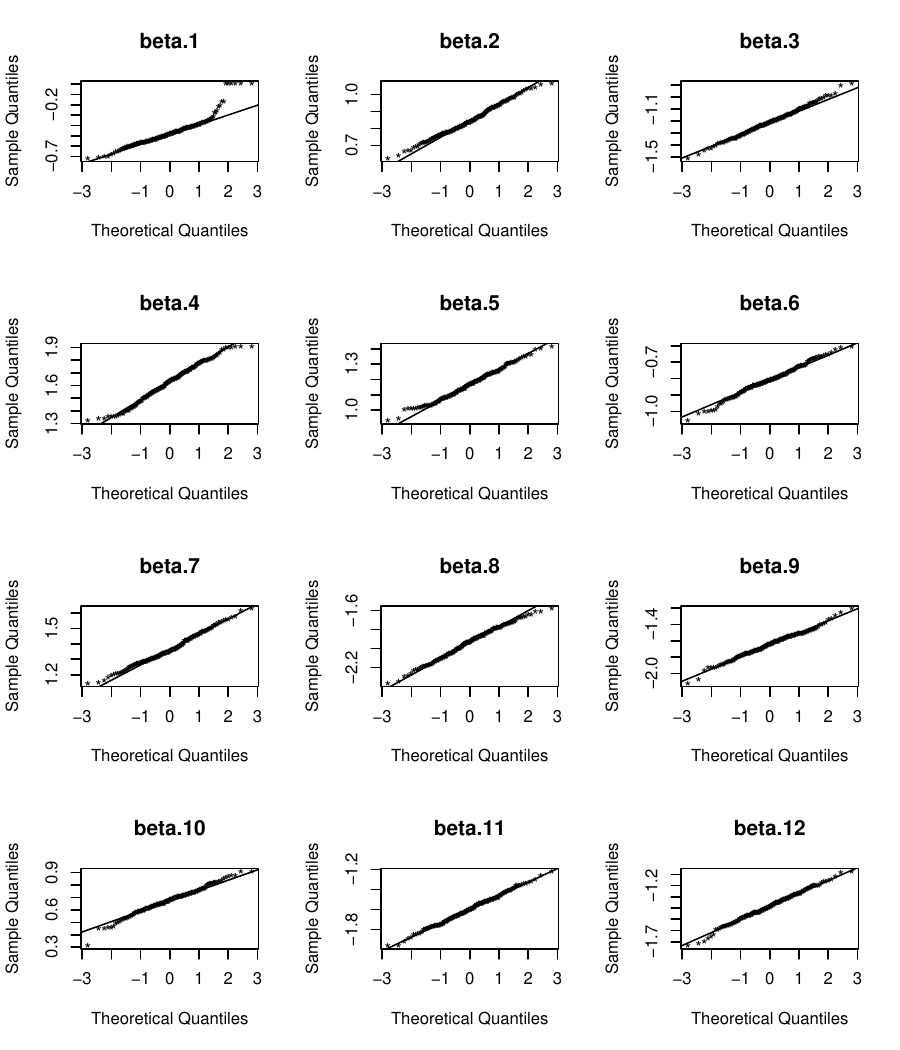}}
             \hfill
\subfloat[MCP]
         {\includegraphics[width=0.5\textwidth]{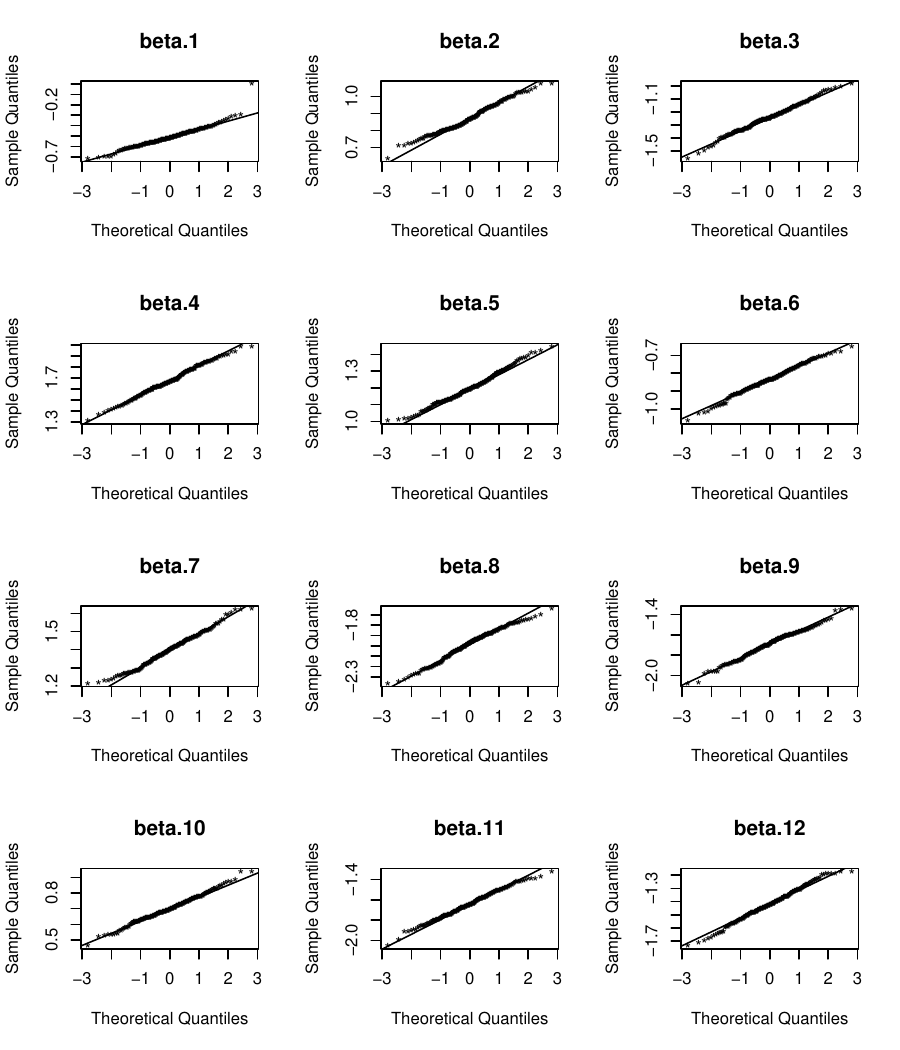}}
\subfloat[SCAD]
        {\includegraphics[width=0.5\textwidth]{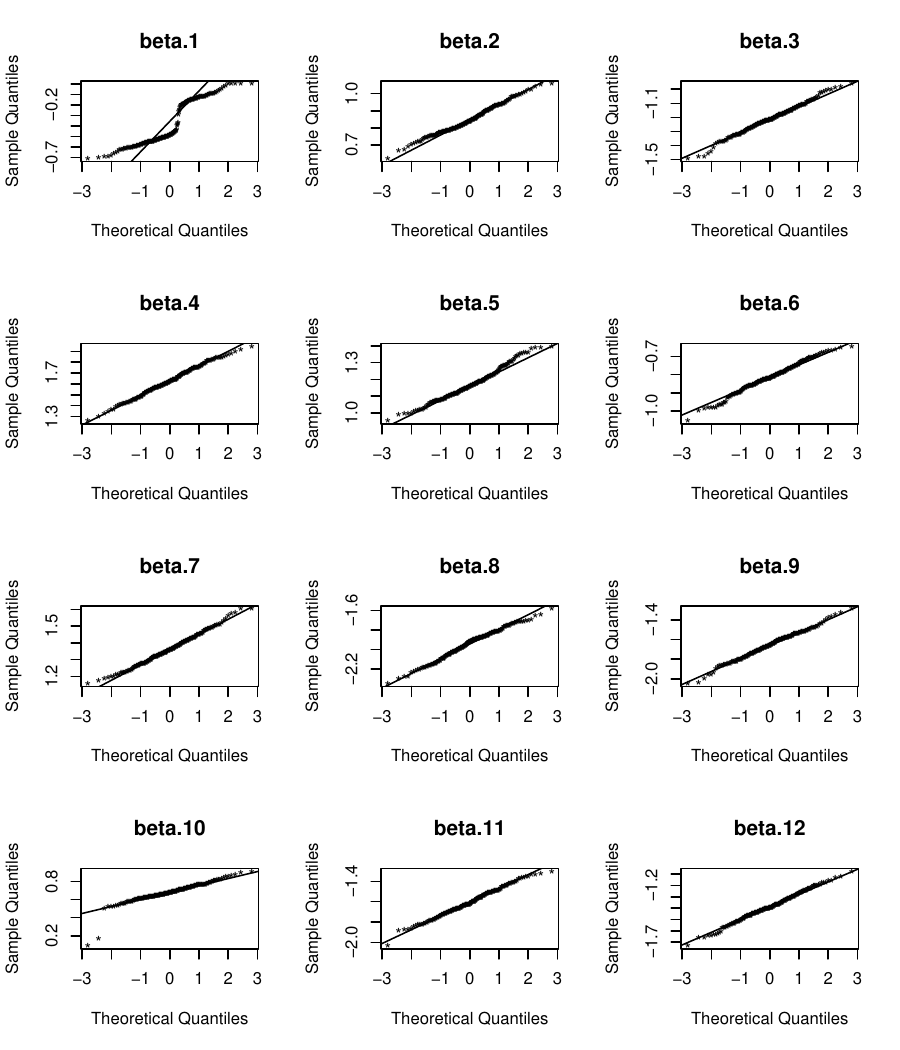}}
\caption[QQplot]
        {Normal Q-Q plots of the estimates of the twelve nonzero coefficients in the scenario with $p = 10000$, $n = 1000$, and $\rho = 0$}
    \end{figure*}

\begin{figure*}\centering
\subfloat[Lasso]
        {\includegraphics[width=0.5\textwidth]{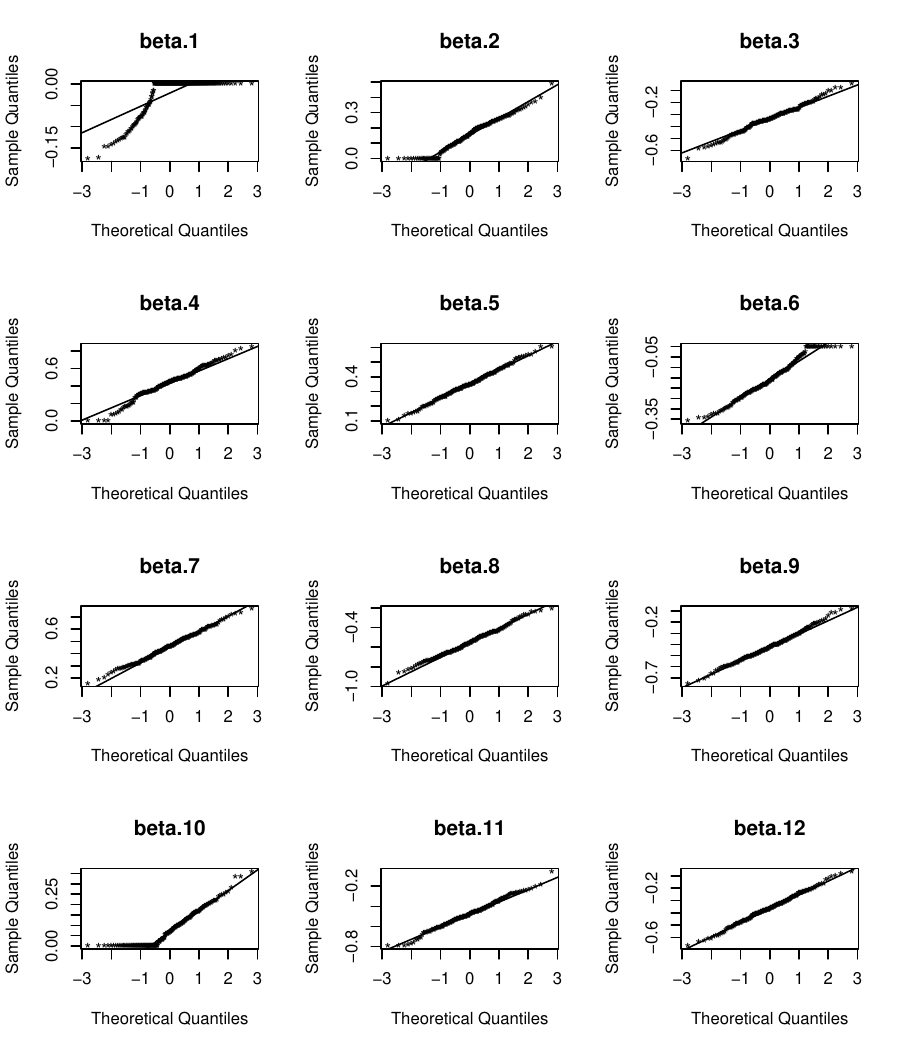}}
\subfloat[Adaptive Lasso \label{fig:mean and std of net24}]
         {\includegraphics[width=0.5\textwidth]{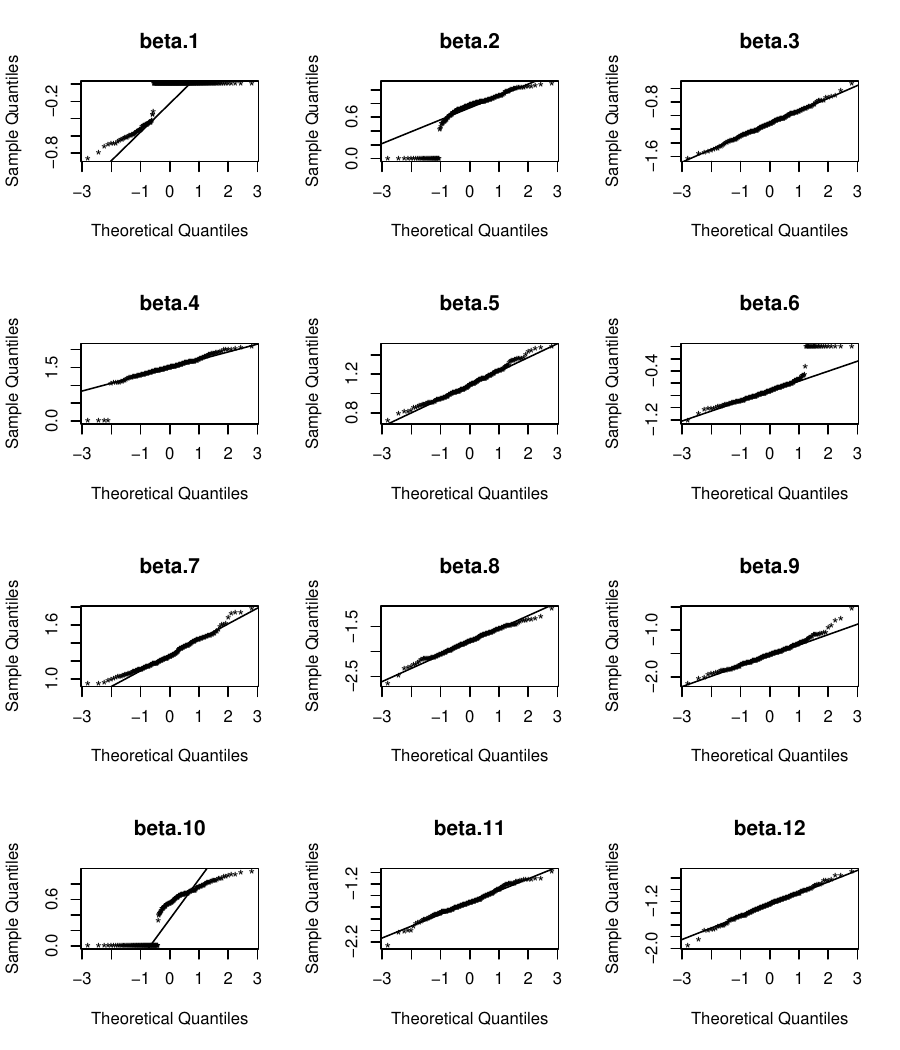}}
             \hfill
\subfloat[MCP]
         {\includegraphics[width=0.5\textwidth]{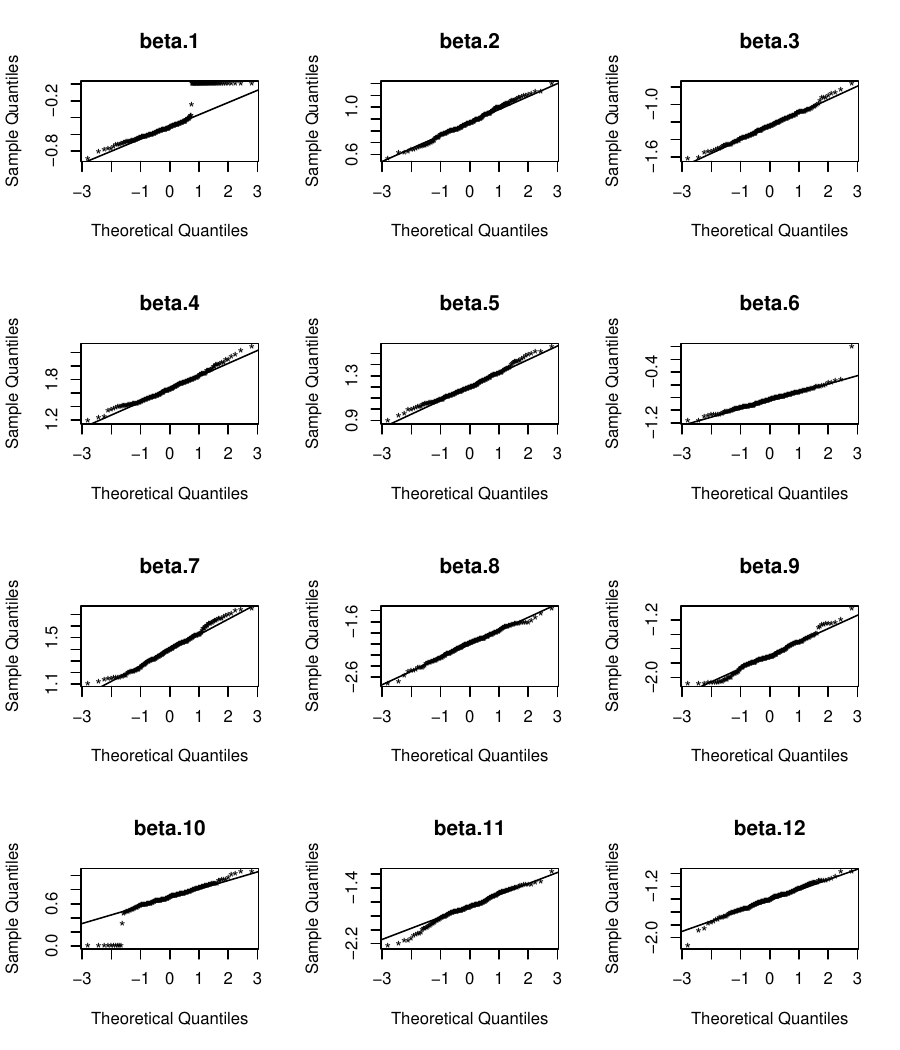}}
\subfloat[SCAD]
        {\includegraphics[width=0.5\textwidth]{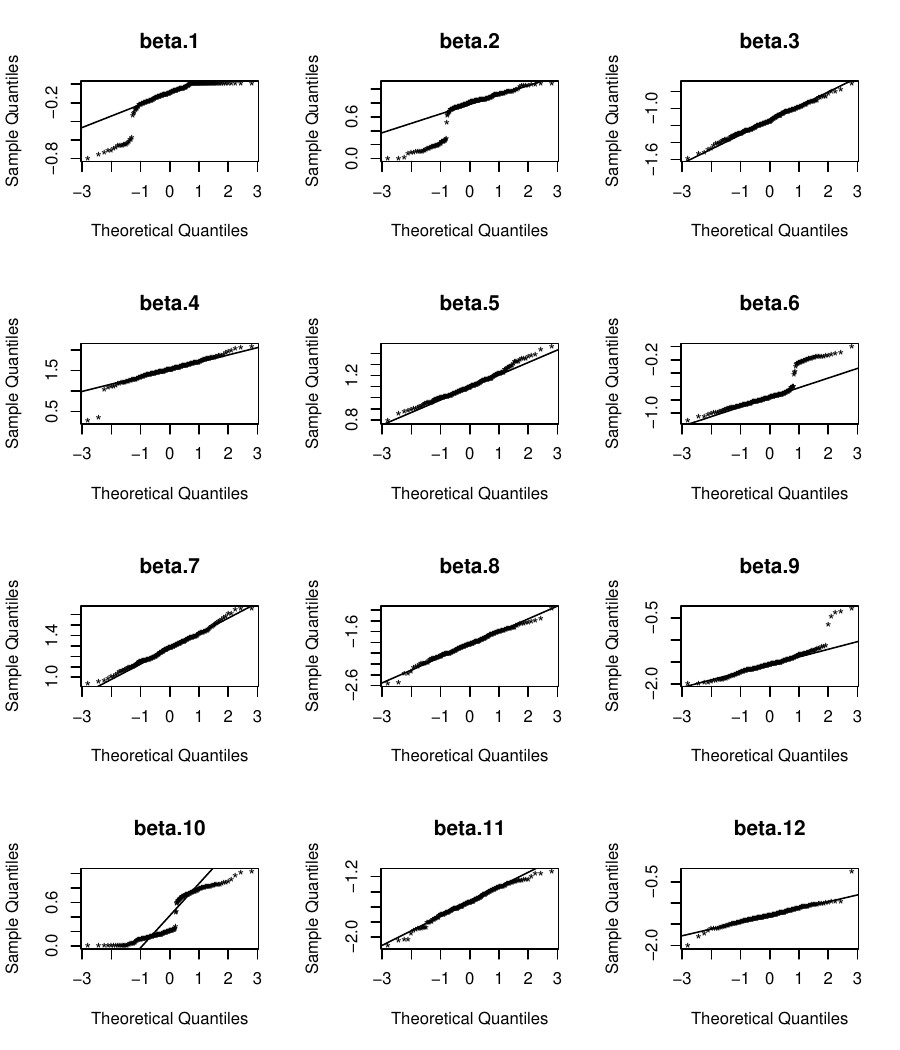}}
\caption[QQplot]
        {Normal Q-Q plots of the estimates of the twelve nonzero coefficients in the scenario with $p = 10000$, $n = 500$, and $\rho = 0.8$}
    \end{figure*}

\begin{figure*}\centering
\subfloat[Lasso]
        {\includegraphics[width=0.5\textwidth]{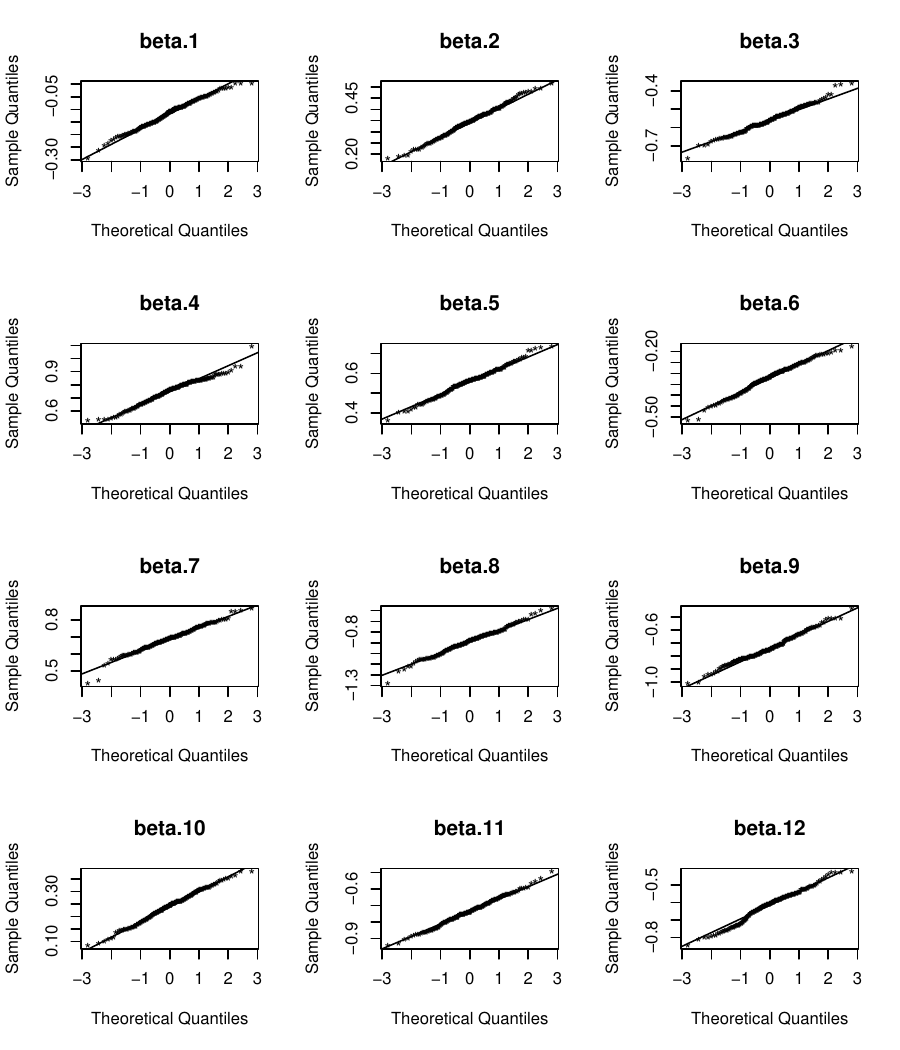}}
\subfloat[Adaptive Lasso \label{fig:mean and std of net24}]
         {\includegraphics[width=0.5\textwidth]{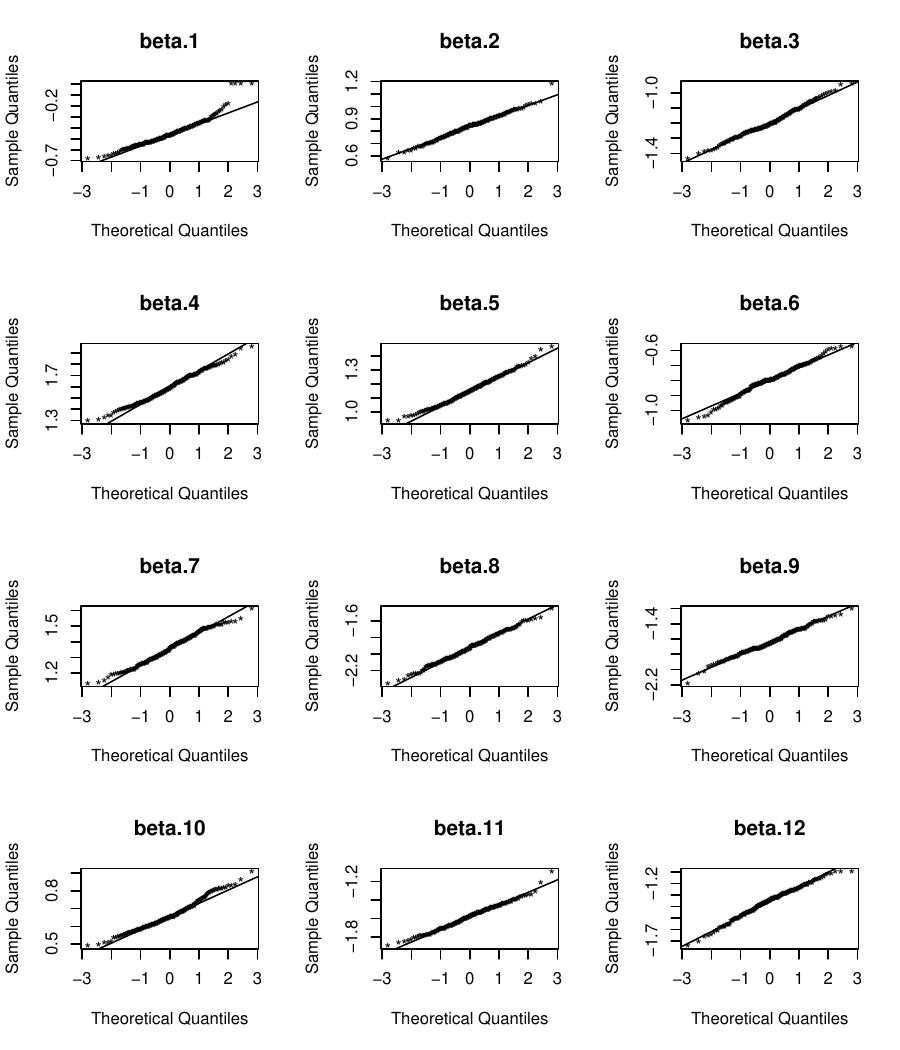}}
             \hfill
\subfloat[MCP]
         {\includegraphics[width=0.5\textwidth]{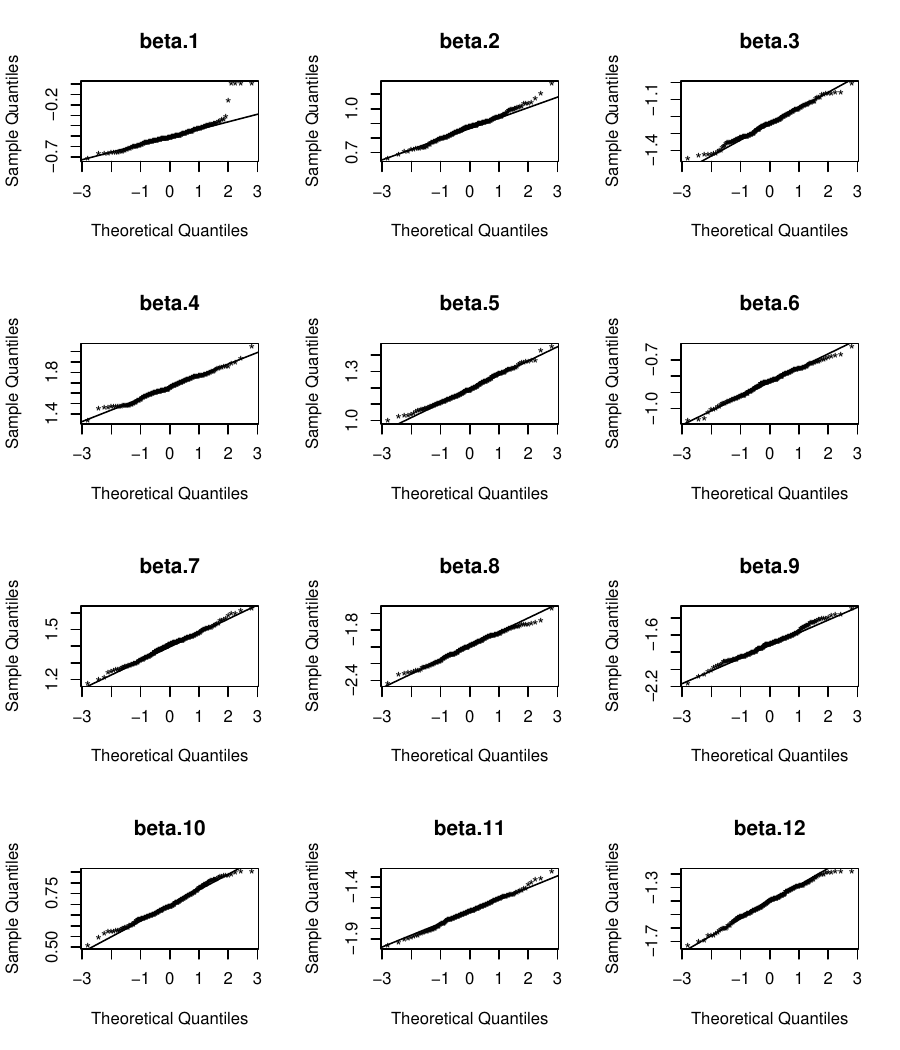}}
\subfloat[SCAD]
        {\includegraphics[width=0.5\textwidth]{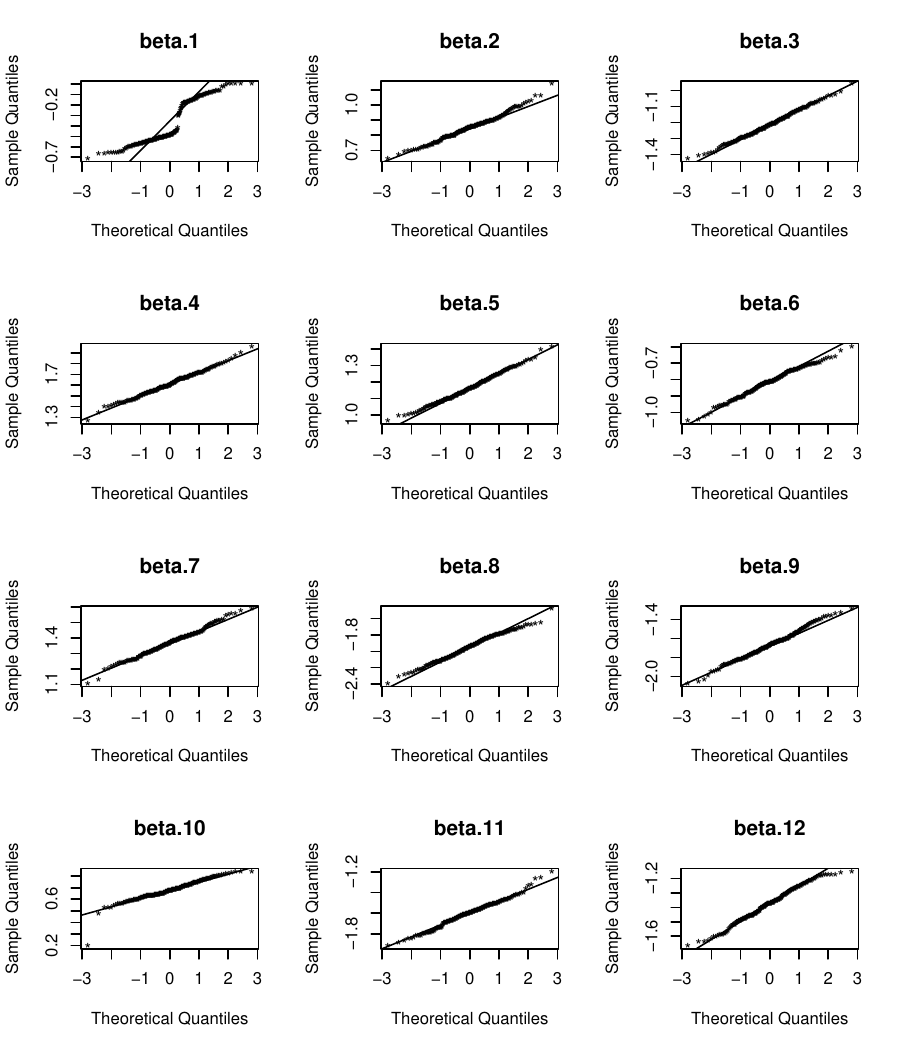}}
\caption[QQplot]
        {Normal Q-Q plots of the estimates of the twelve nonzero coefficients in the scenario with $p = 10000$, $n = 1000$, and $\rho = 0.8$}
    \end{figure*}

\bibliographystyle{chicago}      % Chicago style, author-year citations
\bibliography{ref}   % name your BibTeX data base